%% file: mainfossacsjournal.tex
\documentclass{lmcs}



\usepackage[T1]{fontenc}

\usepackage{graphicx}
\usepackage{booktabs}

\input xy 
\xyoption{all}
\input{preamble}
\input{macros}
\usepackage{multicol}

\begin{document}

\title{Convex Biproducts, Stochastic Matrices and Tape Diagrams}

\author{Filippo Bonchi}
\address{University of Pisa, Pisa, Italy}
\email{filippo.bonchi@unipi.it}

\author{Cipriano Junior Cioffo}
\address{University of Pisa, Pisa, Italy}
\email{ciprianojunior.cioffo@di.unipi.it}







\maketitle

\begin{abstract}
Categories with finite biproducts play a central role in category theory, providing an abstract setting in which additive and linear structures can be studied uniformly. In this paper, we introduce categories with \emph{convex} biproducts, which intuitively restrict the linear structures to convex ones.
We show that, whereas categories with finite biproducts give rise to a matrix calculus based on arbitrary linear combinations, convex biproduct categories instead induce a matrix calculus based on stochastic (more generally, substochastic) matrices. This perspective yields a refined algebraic and compositional framework tailored to probabilistic settings.
We exploit this connection to establish an isomorphism that underpins probabilistic tape diagrams, a graphical formalism for bimonoidal (also known as rig) categories, and we demonstrate its effectiveness by providing a complete axiomatisation of probabilistic Boolean circuits.
\end{abstract}

\input{sections/intro}

\input{sections/pca}
\input{sections/cbproducts}


\input{sections/cmatrix}

\input{sections/syntactic}

\input{sections/probtapes}

\input{sections/probbooleancircuits}

\input{sections/probpartialbooltapes-new}
 \input{sections/finpac}

\input{sections/conclusion}

\section*{Acknowledgements}
This research was partly funded by the Advanced Research + Invention Agency (ARIA) Safeguarded AI Programme. Bonchi is supported by the Ministero dell'Università e della Ricerca of Italy grant PRIN 2022 PNRR No. P2022HXNSC - RAP (Resource Awareness in Programming). This study was carried out within the National Centre on HPC, Big Data and Quantum Computing - SPOKE 10 (Quantum Computing) and received funding from the European Union Next-GenerationEU - National Recovery and Resilience Plan (NRRP) – MISSION 4 COMPONENT 2, INVESTMENT N. 1.4 – CUP N. I53C22000690001.

\bibliographystyle{alphaurl}
\bibliography{references}

\appendix

\input{appendices/coherences}

\input{appendices/appendicerefusiprimaversione.tex}

\input{appendices/appconvbiprodcat.tex}

\input{appendices/appcmatrices.tex}
\input{appendices/apptc.tex}

\input{appendices/appprobtapes.tex}
\input{appendices/appprobboolcircuits.tex}
\input{appendices/appprobpartialbooleantapes-new.tex}

\input{appendices/appfinpac.tex}

\end{document}

%% file: preamble.tex
\usepackage{array}
\usepackage{adjustbox}
\usepackage{tikz-cd}
\usepackage{tikz}
\usetikzlibrary{cd,arrows}
\usepackage{mathpartir}
\usepackage{caption}
\usepackage{subcaption}
\usepackage{amsfonts}
\usepackage{amsmath}
\usepackage[renew-matrix,renew-dots]{nicematrix}
\usepackage{float}
\usepackage{enumitem}
\usepackage{xcolor}
\usepackage{booktabs}
\usepackage{rotating}
\usepackage{multirow}
\usepackage{comment}
\usepackage{graphicx}
\usepackage{braket}
\usepackage{enumitem}
\usepackage{xparse}
\usepackage{xstring}
\usepackage{mathtools}
\usepackage{xifthen}
\usepackage{makecell}
\usepackage{txfonts}
\usepackage{hyperref}
\usepackage{cleveref}
\hypersetup{
    colorlinks,
    linkcolor={black!50!black},
    citecolor={blue!50!black},
    urlcolor={blue!80!black}
}
\usepackage{mathtools}
\usepackage{changepage}

\usepackage{quiver}

\usepackage{todonotes}
\newcommand{\cipr}[1]{\todo[color=green!30,inline,caption={}]{\textbf{C:} #1}}

\usepackage{tabularx}

\usepackage{wrapfig}

%% file: sections/intro.tex
\section{Introduction}\label{sec:intro}

Driven by the growing interest in compositional semantics for probabilistic systems, a substantial body of work~\cite{cho2019disintegration,jacobs2019causal,piedeleu2025boolean,DBLP:journals/corr/abs-2410-10627,DBLP:journals/corr/abs-2502-03477,DBLP:journals/corr/abs-2301-12989,jacobs2021logical,moss2023category,fritz_2020,fritz2021finetti,fritz2023dilations,fritz2018bimonoidal,stein2024probabilistic,perrone2023markov,fritz2023d,fritz2023weakly} makes use of \emph{string diagrams}~\cite{joyal1991geometry,selinger2010survey}, which provide a graphical language for morphisms in the free strict symmetric monoidal category generated by a monoidal signature. These diagrams are typically interpreted in \( \KlD \), the Kleisli category of the subdistribution monad \( \Dis \) (or suitable variants), with monoidal structure given by the cartesian product \( \otimes \).
Much of this research focuses on the role of the \emph{copier} \( \CBcopier \) and the \emph{discharger} \( \CBdischarger \), and their correspondence to marginals and joint distributions.

In contrast, the second monoidal tensor \( \oplus \) in $\KlD$--corresponding to disjoint union--has received comparatively little attention~\cite{fritz2009presentation}, despite its crucial interaction with \( \otimes \) in many probabilistic frameworks~\cite{stein2024probabilistic,introductioneffectus,DBLP:journals/pacmpl/LiellCockS25}. A possible explanation is that standard string diagrams are inherently limited to representing a single monoidal structure at a time. 

To overcome this limitation, we consider \emph{tape diagrams}~\cite{bonchi2023deconstructing}, which can be informally understood as string diagrams of string diagrams: the tensor \( \otimes \) is represented by vertical composition of inner diagrams, while \( \oplus \) is captured by vertical composition of outer diagrams.
\[
\resizebox{!}{3.2em}{$
    \InputIfFileExists{tapes/cipriano/composizione3senzap.tikz}{}{\input{./tikz/tapes/cipriano/composizione3senzap.tikz}}
$}
\]

In the diagram above, the boxes $f,g,h,i$ denote arrows
$f\colon A \to A$, $g\colon A\otimes A \to A\otimes A$, 
$h\colon A\otimes A \to 1$, and $i\colon A \to A$
in a category $\Cat{C}$ equipped with a monoidal product $\otimes$ and unit $1$.
Moreover, $\Cat{C}$ should have a second monoidal product $\oplus$,
which is required to be a \emph{biproduct}, i.e., it serves both as a categorical
product and coproduct: the two \emph{splits} of tapes (on the left of the diagram)
are given by the pairing of the identity on $A\otimes A$, namely
\[
\raisebox{-0.5ex}{$
    \InputIfFileExists{tapes/cipriano/diagapera.tikz}{}{\input{./tikz/tapes/cipriano/diagapera.tikz}}
$} \qquad  \qquad \langle \id{A\otimes A}, \id{A\otimes A}\rangle 
\colon A \otimes A \to (A\otimes A)\oplus (A\otimes A)
\]
while the two \emph{joins} of tapes (on the right) are given by copairings of identities:
\begin{center}
\begin{minipage}[t]{0.48\textwidth}
\centering

    \InputIfFileExists{tapes/cipriano/codiagunoperuno.tikz}{}{\input{./tikz/tapes/cipriano/codiagunoperuno.tikz}}

\vspace{0.5em}
\[
[\id{1},\id{1}] \colon 1\oplus 1 \to 1
\]
\end{minipage}
\hfill
\begin{minipage}[t]{0.48\textwidth}
\centering

    \InputIfFileExists{tapes/cipriano/codiagapera.tikz}{}{\input{./tikz/tapes/cipriano/codiagapera.tikz}}

\vspace{0.5em}
\[
[\id{A\otimes A}, \id{A\otimes A}] \colon (A \otimes A)\oplus (A \otimes A) \to A\otimes A
\]
\end{minipage}
\end{center}

To avoid the restriction to biproducts, tape diagrams have recently been extended
in~\cite{bonchi2025tapediagramsmonoidalmonads} to represent arrows of Kleisli
categories for arbitrary monoidal monads, and thus in particular for $\KlD$.
In tapes for this category, hereafter referred to as probabilistic tapes, joins remain as above, while splits are replaced
by \emph{probabilistic splits}:
\[
    \InputIfFileExists{tapes/cipriano/prob.tikz}{}{\input{./tikz/tapes/cipriano/prob.tikz}}
\]
for $p \in (0,1)$. Intuitively, information flows to the top branch with probability
$p$ and to the bottom branch with probability $1-p$. Formally, the above tape
represents the $\KlD$ arrow $A \to A\oplus A$ that maps any $a \in A$ to $a$ in
the first copy of $A$ with probability $p$, and to $a$ in the second copy of $A$
with probability $1-p$.

\medskip

Our starting observation is that the coproduct \( \oplus \) in \( \KlD \) satisfies an additional universal property, akin to that of a product, which makes it resemble a biproduct. We refer to categories in which \( \oplus \) enjoys this property as \emph{convex biproduct categories} (Definition~\ref{def:convbicat}), and we study their associated monoidal algebras. While in categories with finite biproducts objects are equipped with natural and coherent monoid and \emph{comonoid} structures that provide joins and splits, in convex biproduct categories objects carry instead natural and coherent monoid and \emph{co-pointed convex algebra} structures. These are dual to pointed convex algebras (pca)~\cite{stone1949postulates,bonchi2017power,DBLP:journals/lmcs/BonchiSV22,DBLP:conf/lics/MioSV21} --the Eilenberg-Moore algebras for the monad $\Dis$-- and they provide probabilistic splits.

There is a further striking analogy with the theory of finite biproduct categories: given any category \( \Cat{C} \) enriched over commutative monoids, one can construct the category \( \Cat{Mat}(\Cat{C}) \) of matrices over \( \Cat{C} \)~\cite{mac_lane_categories_1978}, which possesses finite biproducts. Analogously, starting from a category \( \Cat{C} \) enriched over pcas, one can define the convex biproduct category \( \stmat{\Cat{C}} \) of \emph{stochastic matrices} over \( \Cat{C} \) (Proposition~\ref{prop:cmatrixcb}). Composition is defined via matrix multiplication: sums are given by the convex sums of the pca-enrichment and products by composition of arrows in $\Cat{C}$.
We show that this construction yields a functor $\stmat{-}$ from the category $\Cat{PCACat}$ of pca-enriched categories to the category $\Cat{CBCat}$ of convex biproduct categories, which is left adjoint to the forgetful functor \( U \) (Theorem~\ref{thm:matfree}):
\begin{equation}\label{eq:doubleadj}
\xymatrix{
\Cat{Cat} \ar@/^/[rr]^{(-)^+}& \;\;{\tiny{\bot}} & \Cat{PCACat}  \ar@/^/[ll]^{U} \ar@/^/[rr]^{\stmat{-}} &\;{\tiny{\bot}}  & \Cat{CBCat} \ar@/^/[ll]^{U}
}
\end{equation}
The leftmost adjunction is standard (see, e.g.,~\cite{borceux2,villoria2024enriching}): the functor \( (-)^+ \) freely enriches a locally small category \( \Cat{C} \) over pcas. The resulting category \( \Cat{C}^+ \) has arrows given by subdistributions of arrows in \( \Cat{C} \).

Our key result is that the composite of the two adjunctions admits a syntactic presentation in terms of generators and equations. Specifically, we define a functor \( \CatT{-} \colon \Cat{Cat} \to \Cat{CBCat} \) that freely adds to each category \( \Cat{C} \) a monoidal structure with natural monoids, and co-pointed convex algebras. We show that \( \CatT{-} \) is left adjoint to the forgetful functor \( U \colon \Cat{CBCat} \to \Cat{Cat} \) (Theorem~\ref{thm:syntacticadjunction}) and we thus conclude that $\CatT{\Cat{C}}$ is isomorphic to $\stmat{\Cat{C}^+}$ (Corollary \ref{cor:isotapematrices}). 


Importantly, when \( \Cat{C} \) is a category of string diagrams \( \DiagS \), \( \CatT{\DiagS} \) coincides with the category of probabilistic tape diagrams from~\cite{bonchi2025tapediagramsmonoidalmonads}. A result from~\cite{bonchi2025tapediagramsmonoidalmonads} shows that \( \CatT{\DiagS} \) admits two monoidal products, \( \oplus \) and \( \otimes \), forming a \emph{rig category} (also known as a bimonoidal category)~\cite{laplaza_coherence_1972,johnson2021bimonoidal}. The isomorphism \( \CatT{\DiagS} \cong \stmat{\DiagS^+} \) then provides a concrete interpretation of tapes as stochastic matrices whose entries are  subdistributions of string diagrams. Interestingly, $\piu$ is direct sum of matrices and $\otimes$ is (an extended) Kronecker product.

We argue that this isomorphism provides a useful and robust tool for deriving axiomatisations of probabilistic languages based on string diagrams. As a case study, we focus on \emph{probabilistic Boolean circuits with explicit conditioning} introduced in~\cite{piedeleu2025boolean}. As observed in~\cite[Example~30]{bonchi2025tapediagramsmonoidalmonads}, such circuits can be faithfully encoded as tape diagrams over \emph{partial} Boolean circuits, yielding a more expressive and structurally transparent account of probabilistic control. Our development proceeds in two steps. First, we establish a complete axiomatisation of partial Boolean circuits (Theorem~\ref{thm:completenesspartialcircuits}), a result that, to the best of our knowledge, has not appeared in the literature before. This provides a foundational equational theory for the underlying deterministic fragment. Building on this, we extend the axiomatisation by combining the axioms for partial Boolean circuits (Figures~\ref{tab:booleanalgebra} and \ref{tab:partialbooleanalgebra}) with those governing probabilistic tapes (Figure~\ref{fig:tapesax}), together with three additional axioms (Figure~\ref{ax:BooleanTAPES}). We show that the resulting system is complete (Corollary~\ref{cor:completenessPBPtapes}), thereby yielding a full axiomatisation of the probabilistic language under consideration.

\medskip

We conclude this introduction by providing some intuition for the isomorphism 
\(
\CatT{\DiagS} \cong \stmat{\DiagS^+}
\), compactly summarised in Figure~\ref{fig:dictionary}.
Consider the following probabilistic tape diagram and its associated stochastic matrix of subdistributions of string diagrams:
\begin{equation}\label{eq:tapeintro1}
\resizebox{!}{3.2em}{$
    \InputIfFileExists{tapes/cipriano/composizione2.tikz}{}{\input{./tikz/tapes/cipriano/composizione2.tikz}}
$}
\qquad
\qquad
\raisebox{-0.4ex}{$
\begin{pNiceMatrix}[first-col,first-row]
	\rotatebox{90}{$\Lsh$} & AA & A \\
	1 & p\cdot 
    \InputIfFileExists{tapes/cipriano/matcomp2h.tikz}{}{\input{./tikz/tapes/cipriano/matcomp2h.tikz}}
 & 0\cdot \star_{A,1} \\
	AA & (1-p)\cdot 
    \InputIfFileExists{tapes/cipriano/matcomp2idperi.tikz}{}{\input{./tikz/tapes/cipriano/matcomp2idperi.tikz}}
 & 1\cdot (
    \InputIfFileExists{tapes/cipriano/matcomp2j.tikz}{}{\input{./tikz/tapes/cipriano/matcomp2j.tikz}}
+_q 
    \InputIfFileExists{tapes/cipriano/matcomp2k.tikz}{}{\input{./tikz/tapes/cipriano/matcomp2k.tikz}}
)
\end{pNiceMatrix}
$}
\end{equation}
This tape represents an arrow of type $(A\otimes A) \oplus A \to 1\oplus (A\otimes A)$, where information flows from left to right. In the matrix, columns are indexed by the summands $A\otimes A$ and $A$ of the source, while rows are indexed by the summands $1$ and $A\otimes A$ of the target.  Information entering the tape in $A\otimes A$ can reach $1$ with probability $p$ via 
    \InputIfFileExists{tapes/cipriano/matcomp2h.tikz}{}{\input{./tikz/tapes/cipriano/matcomp2h.tikz}}
, or it can reach $A\otimes A$ with probability $1-p$ via 
    \InputIfFileExists{tapes/cipriano/matcomp2idperi.tikz}{}{\input{./tikz/tapes/cipriano/matcomp2idperi.tikz}}
. This explains the two entries in the leftmost column. For the rightmost column, observe that information entering into $A$ cannot reach $1$, as captured by the entry $0\cdot \star_{A,1}$. Instead, with probability $1$, it reaches $A\otimes A$ via 
    \InputIfFileExists{tapes/cipriano/matcomp2j.tikz}{}{\input{./tikz/tapes/cipriano/matcomp2j.tikz}}
 with probability $q$ and 
    \InputIfFileExists{tapes/cipriano/matcomp2k.tikz}{}{\input{./tikz/tapes/cipriano/matcomp2k.tikz}}
 with probability $1-q$, as represented by the entry $1\cdot (
    \InputIfFileExists{tapes/cipriano/matcomp2j.tikz}{}{\input{./tikz/tapes/cipriano/matcomp2j.tikz}}
+_q 
    \InputIfFileExists{tapes/cipriano/matcomp2k.tikz}{}{\input{./tikz/tapes/cipriano/matcomp2k.tikz}}
)$. Each entry of the matrix consists of a scalar $r\in[0,1]$ together with a subdistribution of string diagrams: for instance, in $1\cdot (
    \InputIfFileExists{tapes/cipriano/matcomp2j.tikz}{}{\input{./tikz/tapes/cipriano/matcomp2j.tikz}}
+_q 
    \InputIfFileExists{tapes/cipriano/matcomp2k.tikz}{}{\input{./tikz/tapes/cipriano/matcomp2k.tikz}}
)$ the scalar is $1$ and the distribution is $
    \InputIfFileExists{tapes/cipriano/matcomp2j.tikz}{}{\input{./tikz/tapes/cipriano/matcomp2j.tikz}}
+_q 
    \InputIfFileExists{tapes/cipriano/matcomp2k.tikz}{}{\input{./tikz/tapes/cipriano/matcomp2k.tikz}}
$. Moreover, in each column the scalars sum to $1$, so the matrix is stochastic. In general, however, these sums may be strictly less than $1$, in which case the matrices are more precisely \emph{substochastic}; for brevity, we simply refer to them as stochastic matrices.

In the following tape of type $(A\otimes A)\oplus(A\otimes A) \to (A\otimes A) \oplus A$, we make use of $\Tunit{A}$ which represents the unique arrow from the initial object $0$ to $A$.
\begin{equation}\label{eq:tapeintro2}
\resizebox{!}{3.2em}{$
    \InputIfFileExists{tapes/cipriano/composizione1.tikz}{}{\input{./tikz/tapes/cipriano/composizione1.tikz}}
$}
\qquad 
\qquad
\raisebox{-0.4ex}{$
\begin{pNiceMatrix}[first-col,first-row]
	\rotatebox{90}{$\Lsh$} & AA & AA \\
	AA & 1\cdot 
    \InputIfFileExists{tapes/cipriano/matcomp1idperf.tikz}{}{\input{./tikz/tapes/cipriano/matcomp1idperf.tikz}}
 & 1\cdot 
    \InputIfFileExists{tapes/cipriano/matcomp1g.tikz}{}{\input{./tikz/tapes/cipriano/matcomp1g.tikz}}
  \\
	A & 0\cdot \star_{AA,A} & 0\cdot \star_{AA,A}
\end{pNiceMatrix}
$}
\end{equation}

The composition of the  tape in \eqref{eq:tapeintro2} followed by the one in \eqref{eq:tapeintro1} is drawn --as expected-- as the horizontal composition of the two tapes:
\[
\scalebox{1.3}{
    \InputIfFileExists{tapes/cipriano/composizione4.tikz}{}{\input{./tikz/tapes/cipriano/composizione4.tikz}}
}
\]
By naturality of monoids and co-pcas--more precisely, by the laws \eqref{ax:tapes:diagpnat},\eqref{ax:tapes:codiagnat} and \eqref{ax:tapes:bangnat},\eqref{ax:tapes:cobangnat} in Figure~\ref{fig:tapesax}--the above tape is equivalent to the following one, from which the associated matrix can be more easily derived.
\begin{equation}\label{eq:tapeintro3}
\resizebox{!}{3.2em}{$
    \InputIfFileExists{tapes/cipriano/composizione3.tikz}{}{\input{./tikz/tapes/cipriano/composizione3.tikz}}
$}
\qquad
\raisebox{-0.4ex}{$
\begin{pNiceMatrix}[first-col,first-row]
	\rotatebox{90}{$\Lsh$} & AA & AA \\
	1 & p\cdot 
    \InputIfFileExists{tapes/cipriano/matcomp3fh.tikz}{}{\input{./tikz/tapes/cipriano/matcomp3fh.tikz}}
 & p\cdot 
    \InputIfFileExists{tapes/cipriano/matcomp3gh.tikz}{}{\input{./tikz/tapes/cipriano/matcomp3gh.tikz}}
 \\
	AA & (1-p)\cdot 
    \InputIfFileExists{tapes/cipriano/matcomp3if.tikz}{}{\input{./tikz/tapes/cipriano/matcomp3if.tikz}}
 & (1-p)\cdot 
    \InputIfFileExists{tapes/cipriano/matcomp3gi.tikz}{}{\input{./tikz/tapes/cipriano/matcomp3gi.tikz}}

\end{pNiceMatrix}
$}
\end{equation}
Note that, by multiplying the matrix in \eqref{eq:tapeintro1} with that in \eqref{eq:tapeintro2}, one obtains exactly the matrix in \eqref{eq:tapeintro3}. For instance, the top-left entry of \eqref{eq:tapeintro3} is computed as follows:
\begin{itemize}
\item the product of $p\cdot 
    \InputIfFileExists{tapes/cipriano/matcomp2h.tikz}{}{\input{./tikz/tapes/cipriano/matcomp2h.tikz}}
$ by $1\cdot 
    \InputIfFileExists{tapes/cipriano/matcomp1idperf.tikz}{}{\input{./tikz/tapes/cipriano/matcomp1idperf.tikz}}
 $ yields $p\cdot 
    \InputIfFileExists{tapes/cipriano/matcomp3fh.tikz}{}{\input{./tikz/tapes/cipriano/matcomp3fh.tikz}}
$;
\item the product of $0\cdot \star_{A,1}$ by $0\cdot \star_{AA,A}$ yields $0\cdot \star_{AA,1}$;
\item the sum of $p\cdot 
    \InputIfFileExists{tapes/cipriano/matcomp3fh.tikz}{}{\input{./tikz/tapes/cipriano/matcomp3fh.tikz}}
$ and $0\cdot \star_{AA,1}$ yields $p\cdot 
    \InputIfFileExists{tapes/cipriano/matcomp3fh.tikz}{}{\input{./tikz/tapes/cipriano/matcomp3fh.tikz}}
$.
\end{itemize}
The above products exploit multiplication of scalars and composition of string diagrams, while sums exploit convex sums of subdistributions.

\medskip


\emph{Synopsis.} We begin in Section~\ref{sec:pca} by recalling the notions of pointed convex algebras, pca-enriched categories, and the leftmost adjunction in \eqref{eq:doubleadj}. Section~\ref{sec:cbproducts} introduces convex biproduct categories and their associated monoidal algebraic structures. Furthermore, we illustrate a result that can be almost thought of as an analogue of Fox's theorem for convex biproduct categories. In Section~\ref{sec:cmatrix}, we define stochastic matrices and establish the rightmost adjunction in \eqref{eq:doubleadj}. The syntactic construction \( \CatT{-} \), corresponding to the composite adjunction, is presented in Section~\ref{sec:syntactic}. This construction is then applied to categories of string diagrams in Section~\ref{sec:tapediagrams} to obtain the isomorphism between probabilistic tape diagrams and stochastic matrices. Section~\ref{sec:probboolcircuits} provides a complete axiomatisation for partial Boolean circuits and revisits probabilistic Boolean circuits with explicit conditioning from~\cite{piedeleu2025boolean}. Section~\ref{sec:probbooltapes} illustrates their encoding as probabilistic tapes from~\cite{bonchi2025tapediagramsmonoidalmonads}, and presents an axiomatisation along with the completeness proof. Section~\ref{app:eff} illustrates the relationship of convex biproduct categories with finitely partially additive categories, while Section~\ref{sec:conclusion} discusses further related and future work.

This work is an extended version of the conference papers appearing in~\cite{probbooltapesfossacs,completenessprobbooltaperCONCUR26}. In addition to providing further examples and detailed proofs, it extends~\cite{probbooltapesfossacs,completenessprobbooltaperCONCUR26} in two directions. On the one hand, we present an analogue of Fox's theorem (Proposition~\ref{prop:A}), which is then used to simplify several proofs from~\cite{probbooltapesfossacs}. On the other hand, we provide a comparison between convex biproduct categories and finitely partially additive categories \cite{chophd,manes}, which underpin the theory of \textit{effectuses} ~\cite{introductioneffectus}.

%% file: sections/pca.tex
\section{Background}\label{sec:pca} 
We commence our exposition by recalling several well-known algebraic structures: pointed convex algebras, categories enriched over them and the category of sub-Markov kernels which is the archetypical example of a category with such an enrichment.

\subsection{Pointed Convex Algebras}\label{ssec:pca} A \emph{pointed convex algebra} (pca) \cite{stone1949postulates} consists of a set $X$, a designated element $\star \in X$ and, for all $p$ in the open real interval $(0,1)$, a function $+_p \colon X\times X \to X$ such that, by fixing $\tilde{p}\defeq pq$ and $\tilde{q}\defeq \frac{p(1-q)}{1-pq}$, the following laws hold for all $x_1,x_2,x_3\in X$. 
\begin{equation}\label{eq:pca}(x_1+_q x_2)+_p x_3 =  x_1 +_{\tilde{p}} (x_2+_{\tilde{q}} x_3) \qquad x_1+_px_2=x_2+_{1-p}x_1 \qquad  x_1+_p x_1 = x_1\end{equation}
A morphism of pcas is a function preserving $\star$ and $+_p$. 
We denote by $\Cat{PCA}$ the category of pcas and their morphisms.  

In any pca, $+_p$ can be defined for all $p\in[0,1]$ by setting $x+_1 y \defeq x$ and $x+_0 y \defeq y$. For all $n\in \mathbb{N}$, $p_1, \dots, p_n \in [0,1]$ such that $\sum_i p_i\leq 1$, one defines $\sum_{i=1}^n p_i\cdot (-)_i \colon X^n \to X$ inductively as
%
\begin{equation}\label{eq:def somma n pca}
    \sum_{i=1}^0 p_i\cdot (-)_i \defeq \star \qquad \sum_{i=1}^{n+1} p_i\cdot (-)_i \defeq (-)_{1} +_{p_{1}} \sum_{j=1}^n q_j \cdot (-)_j\text{.}
\end{equation}
where $(-)_j=(-)_{i+1}$ and $q_j$ is $\frac{p_{i+1}}{1-p_1}$ if $p_1 \neq 1$ and $0$ otherwise.
Note that when $n=1$, $\sum_{i=1}^1p_i \cdot (-)_i$ is, by definition, $(-)_1 +_{p_1} \star$. Since this operation will play a crucial role, we name it \emph{multiplication by a scalar} $p\in [0,1]$ and we fix the following notation. \[p\cdot x \defeq x+_p \star\] 

The standard example of a pca is provided by the set $\Dis(X)$ of finitely supported probability subdistributions over some set $X$. Recall that a finitely supported subdistribution over  $X$ is a function $d\colon X \to [0,1]$ such that $\sum_{x\in X}d(x)\leq1$ and $d(x)\neq 0$ for finitely many $x$. The distinguished element $\star$ in $\Dis(X)$ is the null subdistribution, i.e., $\star(x)\defeq 0$ for all $x\in X$; given $d_1,d_2\in \Dis(X)$, for all $p\in(0,1)$, $(d_1+_pd_2)(x) \defeq p\cdot d_1(x)+ (1-p)\cdot d_2(x)$. At this point, it is convenient to fix some extra notation: $\DisF(X)$ is the set of all finitely supported distributions (i.e., $\sum_{x\in X}d(x)=1$) and, for all $x\in X$, $\delta_x$ is the Dirac distribution at $x$ (i.e., $\delta_x(x')=1$ if $x=x'$ and $0$ otherwise).

Interestingly, the pointed convex algebra $\Dis(X)$  enjoys an additional property that will be relevant for our development: cancellativity. A pca $(X,+_p,\star)$ is \emph{cancellative} (or, in the terminology of \cite{sokolova2018termination}, cancellative at $\star$) if the following implication holds for all $x,y\in X$ and $p\in(0,1)$.
\begin{equation*}
p\cdot x = p\cdot y \Rightarrow x=y
\end{equation*}
\begin{remark}\label{rmk:collapsed}
The reader may have noticed that the axioms of a pca resemble the familiar laws of
associativity, commutativity, and idempotency for monoids. A crucial
difference, however, is that the distinguished element $\star$ is \emph{not} required
to act as a unit. Indeed, any pca that additionally satisfies the axiom
\begin{equation}\label{eq:unitstar}
x+_p \star = x
\end{equation}
collapses to a commutative idempotent monoid. More precisely, if one enriches
\eqref{eq:pca} with the condition above, one can easily show that
$
x+_p y = x+_q y$
for all $x,y\in X$ and $p,q\in(0,1)$,
so that the parameter $p$ becomes irrelevant. See, for instance, \cite{DBLP:journals/lmcs/BonchiSV22} for a detailed derivation.
\end{remark}


\subsection{Categories enriched over pcas} A category $\Cat{C}$ is \emph{$\Cat{PCA}$-enriched} if every homset $\Cat{C}[X,Y]$ carries a pca structure and composition of arrows is a pcas morphism, namely that the following equalities hold for all $p\in(0,1)$ and properly typed arrows $e,f,g,h$.
\begin{equation}\label{eq:enr}e; (f+_p g) = (e;f)+_p (e;g) \qquad (f+_pg) ;h= (f;h +_p g;h) \qquad f;\star =\star= \star;f\end{equation}
 It is now convenient to note a few properties of scalar multiplication in $\Cat{PCA}$-enriched categories.
\begin{lemma}\label{lemma:initialproperties2}
In a $\Cat{PCA}$-enriched category the following properties hold:
\begin{enumerate}
\item $(p\cdot f) ; g =p\cdot (f;g)$ and $f;(p\cdot g) = p\cdot ( f;g)$ ; \label{lemma:p;}
\item $p\cdot (q \cdot f) =pq \cdot f$; \label{lemma:pq}
\item $q\cdot\sum_{i=1}^{n}p_i\cdot f_i= \sum_{i=1}^{n} (qp_i)\cdot f_i$;\label{lemma q per somma}
\end{enumerate}
\end{lemma}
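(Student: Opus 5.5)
Each item follows directly from the definition $p\cdot x \defeq x +_p \star$, the enrichment laws \eqref{eq:enr}, and the pca axioms \eqref{eq:pca}. The boundary cases $p\in\{0,1\}$ use the conventions $x+_1 y = x$ and $x+_0 y = y$. I would prove the items in order, since item~3 builds on item~2.

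\textbf{Item 1.} For $p\in(0,1)$,
\[
(p\cdot f);g = (f+_p\star);g = (f;g)+_p(\star;g) = (f;g)+_p\star = p\cdot(f;g),
\]
using the second and third laws of \eqref{eq:enr}. For $p=1$ both sides equal $f;g$. For $p=0$ both sides equal $\star$, again by $\star;g=\star$. The equation $f;(p\cdot g)=p\cdot(f;g)$ is symmetric, using the first law and $f;\star=\star$.

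\textbf{Item 2.} The case where $p$ or $q$ lies in $\{0,1\}$ is immediate. Otherwise I would apply the associativity law of \eqref{eq:pca} with $x_1=f$ and $x_2=x_3=\star$:
\[
(f+_q\star)+_p\star = f+_{pq}(\star+_{\tilde q}\star) = f+_{pq}\star,
\]
where the last step is idempotency. This equals $pq\cdot f$.

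\textbf{Item 3.} I would argue by induction on $n$, following the inductive definition \eqref{eq:def somma n pca}.
\begin{itemize}
\item For $n=0$, we need $q\cdot\star=\star$, which holds by idempotency since $\star+_q\star=\star$.
\item For the inductive step, write $S=\sum_{j=1}^{n}q_j\cdot f_{j+1}$. Then $q\cdot(f_1+_{p_1}S)$ must be rewritten as $f_1+_{qp_1}S'$, where $S'$ is the sum with coefficients renormalised by $1-qp_1$.
\end{itemize}
The plan for the inductive step is to use associativity to turn $(f_1+_{p_1}S)+_q\star$ into $f_1+_{qp_1}(S+_{r}\star)$, with
\[
r=\frac{q(1-p_1)}{1-qp_1}.
\]
This gives $S+_r\star = r\cdot S$. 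By the induction hypothesis, $r\cdot S$ is the sum with coefficients $r\,q_j$. Since $q_j = p_{j+1}/(1-p_1)$, we get
\[
r\,q_j = \frac{q\,p_{j+1}}{1-qp_1},
\]
which is exactly the coefficient required in $\sum_i (qp_i)\cdot f_i$ by the definition. So the two sides agree.

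The main obstacle is this coefficient bookkeeping in item~3, together with the degenerate cases. When $p_1=1$, all $q_j=0$ and both sides reduce to $q\cdot f_1$. When $qp_1=1$, necessarily $p_1=q=1$, and the statement is trivial. I expect these edge cases to be handled separately before the main computation.
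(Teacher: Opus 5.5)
Your proof is correct, and items 1 and 2 are exactly the paper's computations. You additionally treat the boundary scalars $0$ and $1$, which the paper leaves implicit.

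For item 3 both arguments induct on $n$, but the inductive steps differ.

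The paper first distributes the scalar over the binary sum, $q\cdot(f_1+_{p_1}S)=(q\cdot f_1)+_{p_1}(q\cdot S)$, which follows from the pca laws. It then applies the induction hypothesis with the \emph{same} scalar $q$ and appeals to the definition of the $n$-ary sum. Strictly speaking, that last step still requires rebracketing $(f_1+_q\star)+_{p_1}\sum_j (qq_j)\cdot f_{j+1}$ into the defining form $f_1+_{qp_1}\sum_j \frac{qp_{j+1}}{1-qp_1}\cdot f_{j+1}$. This takes another use of associativity and of the induction hypothesis.

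You instead apply associativity once to reach $f_1+_{qp_1}(r\cdot S)$. You then use the induction hypothesis with the \emph{modified} scalar $r=\frac{q(1-p_1)}{1-qp_1}$, so the hypothesis must be quantified over all scalars, which is harmless. After that, your coefficients $rq_j=\frac{qp_{j+1}}{1-qp_1}$ match those of the definition exactly.

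Your route is therefore a bit more economical and fully explicit: it avoids both the distributivity step and the hidden final rebracketing. The paper's version reads more like the familiar identity $q\sum_i p_if_i=\sum_i (qp_i)f_i$, but compresses the coefficient bookkeeping.
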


A functor $F$ is $\Cat{PCA}$-enriched if it preserves the pca structure of each homset. $\Cat{PCA}$-enriched categories and functors form a category, denoted by $\Cat{PCACat}$.
All categories considered hereafter are tacitly assumed to be locally small: $\Cat{Cat}$ stands  for the category of locally small categories. Note that there is a forgetful functor $U\colon \Cat{PCACat} \to \Cat{Cat}$.

Any category $\Cat{C}$ gives rise to the $\Cat{PCA}$-enriched category $\Cat{C}^+$: objects of  $\Cat{C}^+$ are those of $\Cat{C}$; for all objects $X,Y$, the homset is defined as $\Cat{C}^+[X,Y] \defeq \Dis(\Cat{C}[X,Y])$. For $d_1\colon X \to Y$ and $d_2\colon Y \to Z$, their composition $d_1;d_2 \colon X\to Z$ is defined for all $h\in \Cat{C}[X,Z]$ as $d_1;d_2(h) \defeq \sum_{\{(f,g)\mid f;g=h\}}d_1(f) \cdot d_2(g)$; The identity $\id{X}\colon X \to X$ is given by $\delta_{id_X}$. One can easily see that $\Cat{C}^+$ is a $\Cat{PCA}$-enriched category. Moreover, the assignment $\Cat{C} \mapsto \Cat{C}^+$ gives rise to a functor $(-)^+\colon \Cat{Cat} \to \Cat{PCACat}$ which is left adjoint to the forgetful functor $U$. 

\begin{theorem}[\cite{borceux2,villoria2024enriching}]\label{thm:freeenriched}
 $(-)^+\colon \Cat{Cat} \to \Cat{PCACat}$ is left adjoint to $U\colon \Cat{PCACat} \to \Cat{Cat}$.
\end{theorem}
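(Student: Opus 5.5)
The plan is to exhibit the unit of the adjunction and verify its universal property directly. For a locally small category $\Cat{C}$, the unit $\eta_{\Cat{C}}\colon \Cat{C} \to U(\Cat{C}^+)$ is the identity on objects and sends $f$ to the Dirac subdistribution $\delta_f$. It is a functor because $\delta_f;\delta_g = \delta_{f;g}$, by the definition of composition in $\Cat{C}^+$, and $\id{X} = \delta_{\id{X}}$ by definition. Before this, I will briefly check that $\Cat{C}^+$ really is $\Cat{PCA}$-enriched. Composition is the convolution $(d_1;d_2)(h)=\sum_{f;g=h} d_1(f)d_2(g)$. Since this is bilinear in $(d_1,d_2)$, it preserves $+_p$ in each argument and sends the null subdistribution $\star$ to $\star$, which gives the laws \eqref{eq:enr}. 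Associativity and unitality follow from associativity of composition in $\Cat{C}$ together with distributivity of real multiplication. Functoriality of $(-)^+$ on a functor $F$ is given by pushforward of subdistributions, $F^+(d)(g)=\sum_{F f = g} d(f)$.

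For the universal property, let $\Cat{D}$ be $\Cat{PCA}$-enriched and let $F\colon \Cat{C} \to U\Cat{D}$ be a functor. I will define $\hat F\colon \Cat{C}^+ \to \Cat{D}$ on objects as $F$. On a subdistribution $d = \sum_{i=1}^n p_i\cdot \delta_{f_i}$, with $f_1,\dots,f_n$ the distinct elements of the support, I set $\hat F(d) \defeq \sum_{i=1}^n p_i\cdot F(f_i)$, using the $n$-ary convex sum \eqref{eq:def somma n pca} in $\Cat{D}[FX,FY]$. Every subdistribution in $\Dis(\Cat{C}[X,Y])$ is generated from Diracs and $\star$ by the operations $+_p$. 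So any pca-enriched functor agreeing with $F$ on Diracs must coincide with $\hat F$, and uniqueness is immediate once existence is shown.

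Existence is where the work lies. First, $\hat F$ must be well defined, i.e.\ independent of the ordering of the support. Second, it must be a pca morphism on homsets. The cleanest route is to invoke the known fact that $\Dis(S)$ is the free pca on $S$: $\Dis$ is the monad whose Eilenberg--Moore algebras are pcas. Then $\hat F$ on each homset is the unique pca morphism extending $f\mapsto F(f)$, which gives well-definedness and preservation of $+_p$ and $\star$ together. If I had to prove freeness by hand, I would show that the $n$-ary sums \eqref{eq:def somma n pca} are invariant under permutation and satisfy the expected flattening law, using the pca axioms \eqref{eq:pca}. That is a standard but fiddly calculation, and I expect it to be the main obstacle.

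It remains to show that $\hat F$ preserves composition. Write $d_1 = \sum_i p_i\delta_{f_i}$ and $d_2=\sum_j q_j\delta_{g_j}$. Then $d_1;d_2 = \sum_{i,j} p_iq_j\,\delta_{f_i;g_j}$, with terms having the same composite merged. Since $\hat F$ is a pca morphism and composition in $\Cat{D}$ is bilinear by \eqref{eq:enr} and Lemma~\ref{lemma:initialproperties2}, we get
\[
\hat F(d_1);\hat F(d_2)=\sum_{i,j}p_iq_j\cdot F(f_i);F(g_j)=\hat F(d_1;d_2).
\]
Identities are preserved because $\hat F(\delta_{\id{}})=F(\id{})=\id{}$. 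Finally, $\hat F\circ\eta_{\Cat{C}}=F$ holds by construction, which completes the adjunction.
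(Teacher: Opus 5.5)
Your proposal is correct and is essentially the paper's proof: the same unit $f\mapsto\delta_f$, the same pushforward action $F^+$, and the same extension $F^\sharp(d)=\sum_f d(f)\cdot F(f)$. The only difference is that the paper defers the verification to the general enrichment results of Borceux and Villoria et al., whereas you carry it out directly from the freeness of $\Dis(S)$ as a pca, which is what those general results specialise to in this case.
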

\begin{proof}
The result follows from more general results about enriching on arbitrary algebraic theories: see for instance \cite[Prop. 6.4.7]{borceux2} or \cite[Cor. 1]{villoria2024enriching}. For later use, it is convenient anyway to briefly illustrate the involved structures. We first show the definition of $F^+\colon \Cat{C}^+ \to \Cat{D}^+$ for a functor $F\colon \Cat{C} \to \Cat{D}$. For all objects $X$, $F^+$ is defined as $F^+(X)\defeq F(X)$; For all $d\in \Cat{C}^+[X,Y]$, $F^+(d)\in \Cat{D}^+[FX, FY]$ is defined for all $g\colon FX \to FY$ as $F^+(d)(g)\defeq \sum_{\{f\in \Cat{C}[X,Y]| F(f)=g\}}d(f)$.  One can easily check that the functor is $\Cat{PCA}$-enriched.

For all categories $\Cat{C}$, the unit $\eta_{\Cat{C}}\colon \Cat{C} \to \Cat{C}^+$ is the functor acting as identity on objects and mapping any arrow $f\in \Cat{C}[X,Y]$ into $\delta_f \in \Cat{C^+}[X,Y]$.

Given a $\Cat{PCA}$-enriched category $\Cat{D}$ and a functor $F\colon \Cat{C} \to U(\Cat{D})$, one can define a $\Cat{PCA}$-enriched functor $F^\sharp \colon \Cat{C}^+ \to \Cat{D}$ as follows: for all objects $X$, $F^\sharp(X)\defeq F(X)$, for all arrows $d\in \Cat{C}^+[X,Y]$, $F^\sharp(d)\defeq \sum_{f\in \Cat{C}[X,Y]}d(f)\cdot {F(f)}$. One can easily check that $\eta_{\Cat{C}} ; F^\sharp=F$.   

\end{proof}

\begin{example}\label{ex:PCA}
Let $\Cat{1}$ be a category with a single object and a single arrow. The $\Cat{PCA}$-enriched category $\Cat{1}^+$ has a single object, arrows are $p\in[0,1]$ and composition is given by multiplication.

For a set $A$, the set of words  over $A$ (hereafter denoted by $A^*$) carries the structure of a category with a single object.  The $\Cat{PCA}$-enriched category $(A^*)^+$ has a single object and arrows are subdistributions over $A^*$. The composition of $d_1,d_2\in \Dis(A^*)$ is defined for all words $w\in A^*$ as $d_1;d_2(w)\defeq \sum_{u;v=w}d_1(u)\cdot d_2(v)$. 
\end{example}

\subsection{The category of sub-Markov kernels}
Our main example of a $\Cat{PCA}$-enriched category is $\KlD$, the Kleisli category of the monad $\subdistr\colon \Sets \to \Sets$ (see e.g.,~\cite{hasuo2007generic}).
Objects are sets; morphisms \(f \colon X \to Y\) are functions \(X \to \subdistr(Y)\). 
  We often write \(f(y \mid x)\) for \(f(x)(y)\), as this number represents the probability that \(f\) returns \(y\) given the input \(x\).
  Identities \(\id{X} \colon X \to \subdistr(X)\) map each element \(x \in X\) to \(\delta_{x}\).
  For two functions \(f \colon X \to \subdistr(Y)\) and \(g \colon Y \to \subdistr(Z)\), their composition in $\KlD$ is defined as \(f ; g (z \mid x) \defeq \sum_{y \in Y} f(y \mid x) \cdot g(z \mid y)\).

\begin{proposition}
$\KlD$ is $\Cat{PCA}$-enriched.
\end{proposition}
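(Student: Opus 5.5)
The plan is to put a pca structure on each homset $\KlD[X,Y]$ pointwise and then check the enrichment laws \eqref{eq:enr}. For $f,g\colon X\to \Dis(Y)$, set $(f+_p g)(x)\defeq f(x)+_p g(x)$, where the right-hand side is the convex combination in the pca $\Dis(Y)$ recalled in Section~\ref{ssec:pca}. Take as distinguished element the constant map $\star(x)\defeq \star$, the null subdistribution. Concretely, $(f+_pg)(y\mid x)=p\,f(y\mid x)+(1-p)\,g(y\mid x)$.

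Since $\KlD[X,Y]=\Dis(Y)^X$ and the pca laws \eqref{eq:pca} are equations, they hold pointwise because they hold in $\Dis(Y)$. So the first step is simply to note that $\Dis(Y)$ is a pca, which is just the standard arithmetic check of \eqref{eq:pca}. For associativity one verifies $\tilde p=pq$, $(1-\tilde p)\tilde q=p(1-q)$ and $(1-\tilde p)(1-\tilde q)=1-p$. It also has to be checked that the result is a subdistribution: the total mass of $f+_pg$ is at most $p+(1-p)=1$, and its support is finite because it is contained in the union of the supports.

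The second step checks that composition preserves the structure, using the formula $f;g(z\mid x)=\sum_y f(y\mid x)\,g(z\mid y)$.
\begin{itemize}
\item For $e;(f+_pg)$, the value at $(z\mid x)$ is $\sum_y e(y\mid x)\big(p f(z\mid y)+(1-p)g(z\mid y)\big)$. This splits by linearity of the finite sum into $p\,(e;f)(z\mid x)+(1-p)(e;g)(z\mid x)$.
\item Symmetrically, $(f+_pg);h$ expands to $\sum_y (p f(y\mid x)+(1-p)g(y\mid x))\,h(z\mid y)$. This again splits into the two composites.
\item For $f;\star$ and $\star;f$, every summand contains a zero factor, so both composites are $\star$.
\end{itemize}
All sums are finite, since only finitely many $y$ have $f(y\mid x)\neq 0$, so rearranging them is harmless.

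There is no real obstacle here. The only point needing care is the associativity law in \eqref{eq:pca} for $\Dis(Y)$, and in particular making sure the reparametrisation is well defined when $pq=1$. That case cannot occur, because $p,q\in(0,1)$. The remaining checks are direct bilinearity computations.
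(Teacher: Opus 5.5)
Your proposal is correct and follows essentially the same route as the paper: you put the pointwise convex combination and the constant null subdistribution on each homset, then check the enrichment laws \eqref{eq:enr} by linearity of the finite sums in Kleisli composition. You are somewhat more explicit than the paper, which writes out only the case $(f+_pg);h$ and leaves the pca laws and the remaining equations as an easy calculation, but the argument is the same.
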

\begin{proof}
While this result is well known, it is anyway convenient to illustrate its proof. For all objects $X,Y$ of $\KlD$, the hom-set $\KlD[X,Y]$ is a pointed convex algebra: for all $f,g\colon X \to Y$, $x\in X$, $y\in Y$ and $p\in (0,1)$,
\[f +_p g(y|x)\defeq p\cdot f(y|x) + (1-p)\cdot g(y|x) \qquad \star_{X,Y}(y|x) \defeq 0.\] Note that $\star_{X,Y}$ is the arrow which sends every $x$ into the null subdistribution over $Y$. An easy calculation shows that arrow composition is a morphism of pointed convex algebras, i.e., it preserves the operations $+_p$ and $\star_{X,Y}$. For instance $(f+_pg) ;h= (f;h +_p g;h)$ for all $f,g\colon X \to Y$ and $h\colon Y \to Z$ is obtained as follows:
\begin{align}
    ((f+_pg) ;h)(z|x) &= \sum_{y\in Y} (f+_pg)(y|x) \cdot h(z|y) \tag*{} \\
    &= \sum_{y\in Y} (p\cdot f(y|x) + (1-p)\cdot g(y|x)) \cdot h(z|y) \tag*{}\\
    &= p\cdot \sum_{y\in Y} f(y|x) \cdot h(z|y) + (1-p)\cdot \sum_{y\in Y} g(y|x) \cdot h(z|y) \tag*{}\\
    &= p\cdot (f;h)(z|x) + (1-p)\cdot (g;h)(z|x) \tag*{}\\
    &= (f;h +_p g;h)(z|x)\text{.}\tag*{}
    \end{align}
\end{proof}

In our work, it is fundamental that $\KlD$ carries two symmetric monoidal structures: $ (\KlD, \per, \uno)$ and $(\KlD, \piu, \zero)$. The monoidal product $\per$ is defined on objects as the cartesian product of sets, often denoted by $\times$, with unit the singleton $\uno \defeq \{\bullet\}$; $\piu$ is the disjoint union of sets with unit the empty set $\zero \defeq \{\}$. Hereafter we denote the disjoint union of two sets $X$ and $Y$ by $X\oplus Y \defeq \{(x,0) \mid x\in X\} \cup \{(y,1) \mid y \in Y\}$ where $0$ and $1$ are tags used to distinguish the set of provenance. For arrows  \(f \colon X \to Y\) and \(g \colon X' \to Y'\), $f\per g\colon X\otimes X' \to Y \otimes Y'$ and $f \piu g \colon X\oplus X' \to Y\oplus Y'$ are defined, for all $x\in X$, $x'\in X'$, $y\in Y$, $y'\in Y'$, $u\in X\oplus X'$ and $v\in Y\oplus Y'$, as follows.
\begin{equation}\label{ex:products}
\begin{aligned}
f \per g(y,y' \mid x,x') &\defeq f(y \mid x) \cdot g(y' \mid x')\\
f\piu g(v \mid u) &\defeq 
\begin{cases} 
f(y \mid x) & \text{if } u=(x,0) \text{ and } v=(y,0)\\ 
g(y' \mid x') & \text{if } u=(x',1) \text{ and } v=(y',1) \\  
0 & \text{otherwise} 
\end{cases}
\end{aligned}
\end{equation}
Symmetries $\symmt{X}{Y}\colon X \otimes Y \to Y \otimes X$ and $\symmp{X}{Y}\colon X \oplus Y \to Y \oplus X$ are defined as in $\Sets$. 

Actually, $ (\KlD, \piu, \per, \zero, \uno)$ forms a rig (also known as bimonoidal) category in the sense of \cite{laplaza_coherence_1972}. We will come back to rig categories in Section \ref{sec:tapediagrams}. Until then, we will focus only on the monoidal structure provided by $\oplus$. 

%% file: sections/cbproducts.tex
\section{Convex Biproduct Categories}\label{sec:cbproducts}  
In this section we introduce categories with convex biproducts, study the underlying monoidal algebra and illustrate a result which can almost be thought of as an analogue of Fox's theorem.

\subsection{Convex products}\label{ssec:convexproducts} As is the case for all Kleisli categories of monads on $\Sets$, the category $\KlD$ inherits coproducts from $\Sets$: the operation $\oplus$  in \eqref{ex:products} serves as a coproduct in $\KlD$. Our initial observation is that $\oplus$ satisfies an additional universal property, which we refer to as the \emph{convex product}, described below.


\begin{definition}\label{def:convprod}
Let $X_1$, $X_2$  be two objects of a $\Cat{PCA}$-enriched category $\Cat{C}$. The \emph{convex product} of $X_1$ and $X_2$ is an object $Z$ with two arrows $\pi_1\colon Z\to X_1$ and $\pi_2\colon Z\to X_2$ satisfying the following property: for all $p_1,p_2 \in [0,1]$ such that $p_1+p_2\le 1$ and all arrows $f_1\colon A\to X_1$, $f_2\colon A \to X_2$, there exists a unique arrow $h \colon A \to Z$ 
making the following diagram commute: 
\[\begin{tikzcd}
	& A \\
	{X_1} & Z & {X_2}
	\arrow["{p_1\cdot f_1}"', from=1-2, to=2-1]
	\arrow["h", dashed, from=1-2, to=2-2]
	\arrow["{p_2\cdot f_2}", from=1-2, to=2-3]
	\arrow["{\pi_1}", from=2-2, to=2-1]
	\arrow["{\pi_2}"', from=2-2, to=2-3]
\end{tikzcd}\]
\end{definition}

Similarly, the convex product of $n$ objects $X_1,\ldots,X_n$ is an object $Z$ with arrows $\pi_i\colon Z\to X_i$ for $i=1,\ldots,n$ satisfying the following property: for all $p_1, \dots, p_n \in [0,1]$ where $\sum_{i=1}^n p_i \le 1$ and arrows $f_i\colon A\to X_i$, there exists a unique arrow $h\colon A \to Z$ such that $h;\pi_i = p_i\cdot f_i$ for all $i=1,\ldots,n$. Observe that, by definition, the 0-ary convex product is a final object. Hereafter, we will denote the unique arrow $h$ by $\langle f_1, \dots, f_n \rangle_{\vec{p}}$ where $\vec{p}$ is a compact notation for $p_1, \dots , p_n$.

\begin{lemma}\label{lemma:naryconvexproduct}
Let $\Cat{C}$ be a $\Cat{PCA}$-enriched category with final object and binary convex products. Then, for all $n\in \mathbb{N}$, $\Cat{C}$ has $n$-ary convex products.
\end{lemma}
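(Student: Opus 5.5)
We argue by induction on $n$, building the $(n+1)$-ary convex product as a binary convex product of $X_1$ with an $n$-ary convex product of the remaining objects. Existence comes from rescaling the weights. Uniqueness comes from applying the uniqueness clause of the inductive hypothesis \emph{before} the binary one.

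\emph{Base cases.} For $n=0$ the $0$-ary convex product is by definition a final object, which exists by hypothesis. For $n=1$ take $Z\defeq X_1$ with $\pi_1\defeq\id{X_1}$. Given $p_1\in[0,1]$ and $f_1\colon A\to X_1$, the condition $h;\id{X_1}=p_1\cdot f_1$ forces $h=p_1\cdot f_1$, so existence and uniqueness are immediate. The case $n=2$ is the hypothesis.

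\emph{Inductive step.} Assume $Z'$ with projections $\pi'_2,\dots,\pi'_{n+1}$ is an $n$-ary convex product of $X_2,\dots,X_{n+1}$. Let $Z$ with $\rho_1\colon Z\to X_1$ and $\rho_2\colon Z\to Z'$ be their binary convex product, and set $\pi_1\defeq\rho_1$ and $\pi_i\defeq\rho_2;\pi'_i$ for $i\ge 2$. Now fix $p_1,\dots,p_{n+1}$ with $\sum_i p_i\le 1$ and arrows $f_i\colon A\to X_i$, and put $s\defeq\sum_{i\ge2}p_i$.

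\emph{Existence.} If $s>0$, let $q_i\defeq p_i/s$. Then $\sum_{i\ge 2}q_i\le 1$, so $h'\defeq\langle f_2,\dots,f_{n+1}\rangle_{\vec q}$ exists. Since $p_1+s\le 1$, the arrow $h\defeq\langle f_1,h'\rangle_{p_1,s}$ exists, and it satisfies $h;\pi_1=p_1\cdot f_1$. For $i\ge2$, Lemma~\ref{lemma:initialproperties2}(\ref{lemma:p;}) and (\ref{lemma:pq}) give
\[h;\rho_2;\pi'_i=(s\cdot h');\pi'_i=s\cdot(q_i\cdot f_i)=p_i\cdot f_i.\]
If $s=0$, take $h\defeq\langle f_1,\star\rangle_{p_1,0}$. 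Then $h;\rho_2=0\cdot\star=\star$, so $h;\pi_i=\star=0\cdot f_i$ by \eqref{eq:enr}, as required.

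\emph{Uniqueness.} This is the step that needs care. The binary uniqueness clause only applies when the second component $k;\rho_2$ is known to have the form $s\cdot g$, whereas an arbitrary competitor $k$ only tells us the composites $k;\rho_2;\pi'_i$. So we first identify $k;\rho_2$ using the inductive hypothesis, and only then invoke the binary clause.

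Suppose $k\colon A\to Z$ satisfies $k;\pi_i=p_i\cdot f_i$ for all $i$. Then $g\defeq k;\rho_2$ satisfies $g;\pi'_i=p_i\cdot f_i$ for $i\ge 2$. Since $\sum_{i\ge2}p_i\le1$, uniqueness for $Z'$ at the weights $p_2,\dots,p_{n+1}$ gives $g=\langle f_2,\dots,f_{n+1}\rangle_{p_2,\dots,p_{n+1}}$.

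\begin{itemize}
\item If $s>0$: the arrow $s\cdot h'$ satisfies the same equations (computed above), so the same uniqueness gives $g=s\cdot h'$.
\item If $s=0$: the arrow $\star$ satisfies them, so $g=\star=0\cdot\star$.
\end{itemize}

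In either case $k;\rho_1=p_1\cdot f_1$ and $k;\rho_2=s\cdot h'$ (respectively $0\cdot\star$). Uniqueness of the binary convex product then yields $k=h$, completing the induction.
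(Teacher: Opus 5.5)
Your proof is correct and follows the same strategy as the paper. Both argue by induction on $n$, realising the $(n+1)$-ary convex product as a binary convex product of one object with an $n$-ary one and rescaling the weights; you split off $X_1$ and normalise by $s=\sum_{i\ge 2}p_i$, whereas the paper splits off $X_{n+1}$ and normalises by $1-p_{n+1}$. Your uniqueness argument, which first pins down $k;\rho_2$ via the inductive hypothesis and only then applies binary uniqueness, supplies a step the paper leaves implicit, since its proof only exhibits the mediating arrow.
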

%

Recall from Remark \ref{rmk:collapsed} that pcas that additionally satisfy the axiom \eqref{eq:unitstar} collapse to commutative idempotent monoids.
Observe that whenever a category $\Cat{C}$ is enriched over such pcas, it holds that $p\cdot f = f$ for all arrows $f$ and $p\in(0,1)$ and thus convex products collapse to standard products:
\begin{proposition}\label{prop:productvsconvex}
Let $\Cat{C}$ be a category enriched over commutative idempotent monoids. Let $X_1,X_2,Z$ be objects of $\Cat{C}$, and let $\pi_1\colon Z \to X_1$ and $\pi_2\colon Z \to X_2$ be arrows of $\Cat{C}$. 
\begin{center}$(Z,\pi_1,\pi_2)$ is a  product  of $X_1$ and $X_2$ iff it is a convex product of $X_1$ and $X_2$.\end{center}
\end{proposition}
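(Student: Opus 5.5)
The plan is to reduce both directions to one observation: in a category enriched over pcas satisfying \eqref{eq:unitstar}, every scalar multiple of an arrow is either the arrow itself or $\star$. Indeed, $p\cdot f = f+_p\star = f$ for $p\in(0,1)$ by \eqref{eq:unitstar}. Moreover $1\cdot f = f+_1\star = f$ and $0\cdot f = f+_0 \star = \star$ by the conventions for $+_1$ and $+_0$. So for $p\in(0,1]$ we have $p\cdot f = f$, while $0\cdot f=\star$. I would state this as a preliminary step and then treat each direction of the equivalence.

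($\Rightarrow$) Assume $(Z,\pi_1,\pi_2)$ is a product. Take $p_1,p_2\in[0,1]$ with $p_1+p_2\le 1$ and arrows $f_1,f_2$. Each $p_i\cdot f_i$ is a well-defined arrow $A\to X_i$, equal to $f_i$ or to $\star$. The product property then gives a unique $h$ with $h;\pi_i = p_i\cdot f_i$, namely $h=\langle p_1\cdot f_1, p_2\cdot f_2\rangle$. Existence and uniqueness transfer verbatim, since the convex-product condition is just the product condition applied to the pair $(p_1\cdot f_1,p_2\cdot f_2)$. This direction needs nothing about collapse; it only uses that products are universal for all pairs of arrows.

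($\Leftarrow$) Assume $(Z,\pi_1,\pi_2)$ is a convex product. Given arbitrary $g_1\colon A\to X_1$ and $g_2\colon A\to X_2$, I need a unique $h$ with $h;\pi_i=g_i$. Choose $p_1=p_2=\tfrac12$, which is admissible since $p_1+p_2=1\le 1$. By the preliminary observation, $\tfrac12\cdot g_i = g_i$, so the convex-product property yields a unique $h$ with $h;\pi_i=\tfrac12\cdot g_i = g_i$. For uniqueness, any $h'$ with $h';\pi_i=g_i$ also satisfies $h';\pi_i=\tfrac12\cdot g_i$, so $h'=h$ by the uniqueness clause for this choice of scalars.

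The only delicate point is confirming $p\cdot f=f$ for the chosen $p$. I picked $p=\tfrac12\in(0,1)$ deliberately, so that \eqref{eq:unitstar} applies directly and the sum constraint is met. This avoids relying on the $p=0,1$ conventions, and in particular avoids $p_1=p_2=1$, which violates $p_1+p_2\le1$. I expect no real obstacle beyond making this choice of scalars explicit.
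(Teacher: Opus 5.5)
Your proof is correct and follows the paper's argument: for convex product $\Rightarrow$ product, the paper likewise fixes $p_1,p_2\in(0,1)$ with $p_1+p_2\le 1$ and uses $p_i\cdot g_i=g_i$ for both existence and uniqueness. For product $\Rightarrow$ convex product, pairing $p_1\cdot f_1$ and $p_2\cdot f_2$ directly is a small streamlining of the paper's case split on $p_i=0$ (its $f_i'$ is just $p_i\cdot f_i$), and it shows this direction holds in any $\Cat{PCA}$-enriched category.
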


By means of the proposition above, the reader can easily see that $\Cat{Rel}$, the category of sets and relations, has convex products, although these are just ordinary products. A  category with more interesting convex products is $\KlD$.

\begin{proposition}
$\KlD$ has convex products.
\end{proposition}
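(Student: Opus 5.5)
We show that the disjoint union $X_1\oplus X_2$, with suitable projections, is a convex product in $\KlD$. Binary products suffice by Lemma~\ref{lemma:naryconvexproduct}, once we also exhibit a final object.

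\textbf{Final object.} The $0$-ary convex product must be final, and we take $\zero=\{\}$. For every set $A$ there is exactly one function $A\to\subdistr(\zero)$, because $\subdistr(\zero)=\{\star\}$ consists only of the null subdistribution. Hence $\zero$ is final in $\KlD$.

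\textbf{Binary convex products.} Given sets $X_1,X_2$, take $Z\defeq X_1\oplus X_2$ with projections $\pi_i\colon X_1\oplus X_2\to X_i$ defined by
\[
\pi_i(x\mid (x',j)) \defeq \begin{cases}1 & \text{if } j=i-1 \text{ and } x=x'\\ 0 & \text{otherwise.}\end{cases}
\]
Thus $\pi_i$ returns $\delta_{x'}$ on elements from the $i$-th summand and the null subdistribution on the other summand. Now fix $p_1,p_2\in[0,1]$ with $p_1+p_2\le 1$ and arrows $f_i\colon A\to X_i$. Recall that in $\KlD$ we have $(p\cdot f)(y\mid a)=p\, f(y\mid a)$, since $\star$ is the zero kernel. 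We define $h\colon A\to X_1\oplus X_2$ by
\[
h((x,0)\mid a)\defeq p_1 f_1(x\mid a), \qquad h((y,1)\mid a)\defeq p_2 f_2(y\mid a).
\]
This is a finitely supported subdistribution: its support is contained in the union of the two finite supports, and its total mass is $p_1\sum_x f_1(x\mid a)+p_2\sum_y f_2(y\mid a)\le p_1+p_2\le 1$. This mass bound is precisely where the hypothesis $p_1+p_2\le1$ is needed. Computing the composites, $(h;\pi_1)(x\mid a)=\sum_{u}h(u\mid a)\,\pi_1(x\mid u)=h((x,0)\mid a)=p_1f_1(x\mid a)$, and symmetrically $h;\pi_2=p_2\cdot f_2$.

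\textbf{Uniqueness.} Suppose $h'$ satisfies $h';\pi_i=p_i\cdot f_i$ for $i=1,2$. The same computation gives $(h';\pi_1)(x\mid a)=h'((x,0)\mid a)$ and $(h';\pi_2)(y\mid a)=h'((y,1)\mid a)$. So $h'$ is determined pointwise on every element of $X_1\oplus X_2$, and therefore $h'=h$.

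There is no real obstacle in this argument. The only points needing care are checking that $h$ is a genuine subdistribution, which uses $p_1+p_2\le1$, and the observation that the projections $\pi_i$ are partial (they send the other summand to $\star$), which is what makes uniqueness hold. Alternatively, one can verify the $n$-ary case directly with the same formulas, using $\sum_i p_i\le 1$.
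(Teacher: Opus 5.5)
Your proof is correct and follows the paper's argument essentially verbatim. It uses the same projections out of $X_1\oplus X_2$ and the same pointwise definition of the mediating arrow, and it gets uniqueness from the same observation that $h';\pi_i$ reads off $h'$ on the $i$-th summand. Your explicit checks that $h$ has total mass at most $p_1+p_2\le 1$ and that $\zero$ is final are details the paper leaves implicit, so they are useful additions rather than a different route.
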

\begin{proof}
Let $X_1$ and $X_2$ be two objects of $\KlD$.
We show that  $X_1\oplus X_2$ with the projections $\pi_i\colon X_1 \oplus X_2 \to X_i$ for $i=1,2$ given by 
\[ \pi_1(x_1|z)\defeq \begin{cases}
    1 & z=\iota_1(x_1)\\
    0 & otherwise
\end{cases} \qquad \pi_2(x_2|z)\defeq \begin{cases}
     1 & z=\iota_2(x_2)\\
        0 & otherwise
\end{cases}\]
 is a convex product of $X_1$ and $X_2$. Above, the functions $\iota_i\colon X_i \to X_1 \oplus X_2$ denote the obvious embeddings.
 
For all arrows $f_1\colon A\to X_1$, $f_2\colon A\to X_2$ and $p_1,p_2 \in (0,1)$ such that $p_1+p_2\le 1$, the arrow $\langle f_1,f_2\rangle_{p_1,p_2}\colon A \to X_1\oplus X_2$ is defined for all $a\in A$ and $z\in X_1\oplus X_2$ as follows:
\[ \langle f_1,f_2\rangle_{p_1,p_2}(z|a) \defeq \begin{cases}
    p_1\cdot f_1(x_1|a) & z=\iota_1(x_1)\\
    p_2\cdot f_2(x_2|a) & z=\iota_2(x_2) \text{.}
\end{cases}\]
Observe that, for $i\in \{1,2\}$, $x_i\in X_i$ and $a\in A$
\begin{align*}
\langle f_1,f_2\rangle_{p_1,p_2} ; \pi_i (x_i | a) = & \sum_{z\in X_1\oplus X_2} \langle f_1,f_2\rangle_{p_1,p_2}(z|a) \cdot \pi_i (x_i | z) \\
= & \langle f_1,f_2\rangle_{p_1,p_2} (\iota_i(x_i)|a)\\
= & p_i\cdot f_i (x_i|a)
\end{align*}
namely, it holds that $(\langle f_1,f_2\rangle_{p_1,p_2} );\pi_i = p_i\cdot f_i$. In order to prove that this is the unique arrow with such a property, take $h'\colon A \to X_1\oplus X_2$ such that $h';\pi_i = p_i\cdot f_i$. For all $x_1\in X_1$ and $a\in A$, it holds that
\begin{align*}
  h'(\iota_1(x_1)|a) 
    & =\sum_{z\in X_1\oplus X_2} h'(z|a) \cdot \pi_1(x_1|z) \tag*{}\\
   &= h' ; \pi_1(x_1|a) \\
    &= p_1\cdot f_1(x_1|a) \tag*{}
\end{align*}
and similarly for $x_2\in X_2$, one obtains that $h'(\iota_2(x_2)|a) = p_2\cdot f_2(x_2|a)$, i.e., $h' = \langle f_1,f_2\rangle_{p_1,p_2}$. 
\end{proof}

\subsection{Convex biproduct categories and their monoidal algebra}\label{ssec:convexbiproduct} In both examples of categories with convex products, $\Cat{Rel}$ and $\KlD$, convex products are carried by coproducts (recall that in $\Rel$ products and coproducts coincide). In this work, we are interested in those categories where convex product and coproduct coincide, as formalised in  the following definition.  

\begin{definition}\label{def:convbicat}
A \emph{convex biproduct category} is a $\Cat{PCA}$-enriched category $\Cat{C}$ with an object $\zero$ which is both initial and final and, for every pair of objects $X_1,X_2$, an object $X_1\oplus X_2$ and morphisms $\pi_i \colon X_1\oplus X_2 \to X_i$ and $\iota_i \colon X_i \to X_1 \oplus X_2$ such that $(X_1\oplus X_2, \iota_1, \iota_2)$ is a coproduct, $(X_1\oplus X_2, \pi_1, \pi_2)$ is a convex product and
\begin{equation}\label{eq:delta}\iota_i; \pi_j = 
\begin{cases}  \id{X_i} & \text{if }i=j\text{,}\\
\star_{X_i,X_j} & \text{otherwise.}
\end{cases} \end{equation}
A morphism of convex biproduct categories is a $\Cat{PCA}$-enriched functor $F\colon \Cat{C} \to \Cat{D}$ preserving finite coproducts. We write $\Cat{CBCat}$ for the category of convex biproduct categories and their morphisms.
\end{definition}

The above definition is obtained from that of \emph{category with finite biproducts} by just replacing products by convex products.  As for categories with finite biproducts, morphisms of convex biproduct categories preserve (convex) products. 

 \begin{proposition}\label{prop:functor 1 e mezzo}
     Let $F\colon \Cat{C}\to \Cat{D}$ be a morphism between convex biproduct categories. 
     $F$ preserves n-ary convex products.
\end{proposition}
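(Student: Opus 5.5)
The plan is to show that the $n$-ary convex product in $\Cat{C}$ is carried by the $n$-ary coproduct, and then transport this property along $F$. Concretely, I will use the fact that $F$ preserves finite coproducts and is $\Cat{PCA}$-enriched to show that $F$ sends the canonical projections to maps satisfying \eqref{eq:delta}. I will then show that these maps form a convex product in $\Cat{D}$, by comparing with the convex product already present in $\Cat{D}$.

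\textbf{Step 1 (intrinsic characterisation).} I first prove a lemma. In a convex biproduct category, $X_1\oplus\cdots\oplus X_n$ with the iterated injections $\iota_i$ and projections $\pi_i$ is an $n$-ary convex product, and the projections are determined by the injections via $\iota_i;\pi_j=\id{}$ if $i=j$ and $\star$ otherwise. Indeed, if $\pi_j'$ is another family satisfying these equations, then $\iota_i;\pi_j=\iota_i;\pi'_j$ for all $i$, so $\pi_j=\pi'_j$ by the coproduct property. For $n=0$, the object $\zero$ is both initial and final. For general $n$, I proceed by induction using Lemma~\ref{lemma:naryconvexproduct}, or rather the construction in its proof. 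The pairing is $\langle f_1,\dots,f_n\rangle_{\vec p}$, and the projections of the iterated sum are composites of binary projections. For the $\star$-cases, enrichment gives $f;\star=\star$.

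\textbf{Step 2 (transport).} Let $F\colon\Cat{C}\to\Cat{D}$ be a morphism. Since $F$ preserves finite coproducts, $(F(X_1\oplus\cdots\oplus X_n),F\iota_i)$ is a coproduct in $\Cat{D}$. Because $F$ is $\Cat{PCA}$-enriched, it preserves $\star$ and identities, so $F\iota_i;F\pi_j$ equals $\id{}$ or $\star$ according to \eqref{eq:delta}. Let $\phi\colon FX_1\oplus\cdots\oplus FX_n\to F(X_1\oplus\cdots\oplus X_n)$ be the canonical coproduct isomorphism, so that $\iota_i^{\Cat{D}};\phi=F\iota_i$. Then the family $\phi;F\pi_j$ satisfies $\iota^{\Cat{D}}_i;\phi;F\pi_j=F\iota_i;F\pi_j$, which is $\delta_{ij}$ (with $\star$ for $i\neq j$). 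By the uniqueness in Step 1 applied in $\Cat{D}$, we get $\phi;F\pi_j=\pi^{\Cat{D}}_j$. Convex products are invariant under isomorphism: if $h$ is a mediating map into $\oplus FX_i$, then $h;\phi$ mediates into $F(\oplus X_i)$, and uniqueness transfers via $\phi^{-1}$. Hence $(F(\oplus X_i),F\pi_j)$ is a convex product of the $FX_j$.

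\textbf{Expected obstacle.} The main difficulty is Step 1 for $n$-ary products: one must check that the iterated binary convex product really has the $n$-ary universal property with the composite projections. This requires rescaling the weights, for instance using $\langle f_1, \langle f_2,\dots\rangle_{\vec q}\rangle_{p_1,1-p_1}$ with $q_j=p_{j+1}/(1-p_1)$, and handling the degenerate case $p_1=1$ separately. This computation relies on Lemma~\ref{lemma:initialproperties2}(\ref{lemma:pq}). If Lemma~\ref{lemma:naryconvexproduct} is taken to give exactly this construction, the obstacle reduces to the bookkeeping above. The transport argument itself is routine.
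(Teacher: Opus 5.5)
Your proof is correct, but it follows a somewhat different route from the paper's.

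\textbf{The paper's route.} It first proves the binary case (Proposition~\ref{prop: functor1}). There it writes $\pi_1$ via the right unitor as in Lemma~\ref{lemma:initialproperties}, and takes the inverse $k$ of the comparison map $[F\iota_1,F\iota_2]$. Using that $F(\zero)$ is a zero object of $\Cat{D}$, it computes $F(\pi_1)=k;\pi_1^{FX\oplus FY}$ and transports the universal property along $k$. The $n$-ary statement is then obtained from Lemma~\ref{lemma:naryconvexproduct}.

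\textbf{Your route.} You characterise the iterated projections intrinsically as the unique family with $\iota_i;\pi_j$ equal to $\id{}$ or $\star$; this follows from \eqref{eq:delta} and the joint epicity of the injections. You then use that $F$ preserves $\star$, by $\Cat{PCA}$-enrichment, to get $\phi;F\pi_j=\pi_j^{\Cat{D}}$ at arbitrary arity in one step.

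\textbf{What each buys.} Your version avoids the unitor computation. It also avoids a separate argument that $F$ commutes with the iterated binary construction. Your transport step covers every weight vector $\vec p$ at once, whereas the paper only writes out weights $(p,1-p)$. On the other hand, the paper's identification of $F\pi_1$ uses only preservation of coproducts and of the zero object, not the enrichment.

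\textbf{Shared dependency.} Both routes rely on the iterated binary convex product having the $n$-ary universal property, including uniqueness. The proof of Lemma~\ref{lemma:naryconvexproduct} only sketches this, and you are right that the bookkeeping lies here. For uniqueness, take any candidate $h$. Then $h;\pi'_1$ is the mediating map for the weights $p_1,\dots,p_n$, so it equals $(1-p_{n+1})\cdot\langle f_1,\dots,f_n\rangle_{\vec q}$. Binary uniqueness then finishes the argument.
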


One can easily see that both $\Cat{Rel}$ and $\KlD$ are convex biproduct categories by checking \eqref{eq:delta}. In Appendix~\ref{app:examples}, we show that the continuous analogue of $\KlD$, namely substochastic Markov kernels on standard Borel spaces, is also a convex biproduct category.

\medskip

Since any convex biproduct category $\Cat{C}$ has finite coproducts, it carries a symmetric monoidal category $(\Cat{C}, \oplus, \zero)$. Hereafter we will denote the structural isomorphisms by $\alpha_{X,Y,Z}\colon (X\oplus Y)\oplus Z \to X \oplus (Y \oplus Z)$, $\lambda_X\colon 0 \oplus X \to X$, $\rho_X \colon 0 \oplus X \to X$ and the symmetries by $\sigma_{X,Y}\colon X\oplus Y \to Y \oplus X$.

For all objects $X$, we define $\codiag{X}\colon X \oplus X \to X$ as the copairing of the identities $[\id{X},\id{X}]$ and $\cobang{X} \colon 0 \to X$ as the unique arrow from the initial object. Similarly, 
 for all $p\in (0,1)$, we define  $\diagp{X} \colon X \to X \oplus X$ as $\langle id_X, id_X \rangle_{p,1-p}$, and $\bang{X}\colon X \to \zero$ as the unique map to the final object $\zero$. We extend $\diagp{X}$ to arbitrary $p\in [0,1]$: $\diagpX{0\;\;}{X}\defeq \cobang{X}\oplus \id{X}$ and $\diagpX{1\;\;}{X}\defeq \id{X} \oplus \cobang{X}$.

By Fox's theorem \cite{fox1976coalgebras}, $(X, \codiag{X}, \cobang{X})$ forms a \emph{natural and coherent commutative monoid} in the monoidal category  $(\Cat{C}, \oplus, \zero)$, namely the laws in Figures~\ref{fig:monoidax} and \ref{fig:fccoherence} in Appendix~\ref{app:coherence axioms} hold. Similarly $(X,\diagp{X}, \bang{X} )$ forms a \emph{natural and coherent co-pca} in $(\Cat{C}, \oplus, \zero)$, that is the laws in Figures~\ref{fig:co-pca axioms} and \ref{fig:copcacoherence} hold. To prove the latter fact, it is convenient to first illustrate some properties of convex biproduct categories.

\begin{lemma}\label{lemma:initialproperties}
Let $\Cat{C}$ be a convex biproduct category. The following equalities hold for all objects $Z,X,Y$ in $\Cat{C}$, arrows $f,g\colon X \to Y$ and $p\in [0,1]$:
\begin{enumerate}
\begin{multicols}{2}
\item[(1)]\mylabelbis{lemma:starxy}{1} $\star_{X,Y} = \bang{X};\cobang{Y}$
\item[(3)]\mylabelbis{lemma:pi1}{3}  $\pi_1=(\id{X_1}\oplus \bang{X_2});\runit{X_1}$;
\item[(5)]\mylabelbis{lemma:iotai1}{5}  $\iota_1=\Irunit{X_1}; (\id{X_1}\oplus \cobang{X_2})$;
\item[(7)]\mylabelbis{lemma:pf}{7} $p \cdot f =\, \diagp{X}; (f \oplus \bang{X} ) ; \runit{Y} $;
\item[(2)]\mylabelbis{eq: assioma aggiuntivo}{2} $f+_p g = \langle f,g \rangle_{p,1-p};[\id{Y},\id{Y}]$;
\item[(4)]\mylabelbis{lemma:pi2}{4} $\pi_2 =(\bang{X_1}\oplus \id{X_2});\lunit{X_2}$;
\item[(6)]\mylabelbis{lemma:iota2}{6} $\iota_2 = \Ilunit{X_2} ;(\cobang{X_1}\oplus \id{X_2})$;
 \item[(8)]\mylabelbis{lemma:1-pf}{8} $(1-p)\cdot  f=\, \diagp{X} ;  \, (\bang{X}  \oplus f \,)  ; \lunit{Y}$; 
 \end{multicols}
\end{enumerate}
\end{lemma}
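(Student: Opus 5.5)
The eight equalities all follow from the same principle. An arrow into $X_1\oplus X_2$ is determined by its composites with $\pi_1,\pi_2$: this is the uniqueness part of the convex product, applied with $p_1=p_2=1$, or with $p_1=p,p_2=1-p$ where convenient. Dually, an arrow out of $X_1\oplus X_2$ is determined by its precomposites with $\iota_1,\iota_2$, since $\oplus$ is a coproduct. Throughout I use the enrichment laws \eqref{eq:enr}, Lemma~\ref{lemma:initialproperties2}, and \eqref{eq:delta}. The tensor $f\oplus g$ is the coproduct functor, so $\iota_i;(f_1\oplus f_2)=f_i;\iota_i$.

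For (1), since $\zero$ is final, $\bang{X}$ is the unique arrow $X\to\zero$. Now $\star_{X,\zero}$ and $\star_{X,\zero};\cobang{Y}$ are arrows into and out of $\zero$, so uniqueness gives $\bang{X}=\star_{X,\zero}$. Hence $\bang{X};\cobang{Y}=\star_{X,\zero};\cobang{Y}=\star_{X,Y}$ by \eqref{eq:enr}.

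Items (5) and (6) follow by precomposing with the injections. The right-hand side of (5) is the coprojection $X_1\to X_1\oplus\zero\to X_1\oplus X_2$. With $\runit{}$ understood as the coproduct unitor, this equals $\iota_1$ by the naturality of $\iota_1$ with respect to $\oplus$ on arrows. Item (6) is symmetric.

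For (3), I compare composites with $\iota_1$ and $\iota_2$. First, $\iota_1;(\id{}\oplus\bang{});\runit{}=\iota_1;\runit{}=\id{X_1}=\iota_1;\pi_1$. Second, $\iota_2;(\id{}\oplus\bang{});\runit{}=\bang{X_2};\iota_2;\runit{}$, which is an arrow factoring through $\zero$. By (1) it equals $\star_{X_2,X_1}=\iota_2;\pi_1$. Item (4) is symmetric.

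For (2), I compute by \eqref{eq:enr}: $\langle f,g\rangle_{p,1-p};[\id{},\id{}]$ is $h;(\pi_1\text{-part}+\pi_2\text{-part})$. To make this precise, I first prove $[\id{},\id{}] = \pi_1 +_{p} \pi_2$ fails in general, so instead I use a different route. I prove that $\id{Y\oplus Y}$ satisfies $\id{} = \langle\pi_1,\pi_2\rangle_{1,1}$, and more usefully that $\pi_1;\iota_1 +_{?}\ldots$ does not directly work because the weights are not convex. The cleaner route is as follows. Let $h=\langle f,g\rangle_{p,1-p}$ and $k=\langle f,g\rangle_{1,1}$; the latter exists because in this setting one needs the weights $1,1$. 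Since the weights $1,1$ are not allowed, I instead compare with $u\defeq (p\cdot f;\iota_1)+_{1/2}\ldots$.

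Honestly, the main obstacle is here. My concrete attempt is to show that $h=(f;\iota_1)+_p(g;\iota_2)$ by uniqueness. Composing with $\pi_1$ and using \eqref{eq:enr} and \eqref{eq:delta} gives $(f;\id{})+_p(g;\star)=f+_p\star=p\cdot f$. Similarly, composing with $\pi_2$ gives $\star+_p g=(1-p)\cdot g$. Postcomposing $h$ with $[\id{},\id{}]$ and distributing then yields $f+_p g$. This settles (2).

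For (7), I postcompose $\diagp{X}$, which equals $(\iota_1)+_p(\iota_2)$ by the same argument with $f=g=\id{}$, with $(f\oplus\bang{});\runit{}$. By the injection computations above, $\iota_1;(f\oplus\bang{});\runit{}=f$ and $\iota_2;(f\oplus\bang{});\runit{}=\bang{};(\ldots)=\star$ by (1). Distributing gives $f+_p\star=p\cdot f$. Item (8) is analogous, giving $\star+_p f=(1-p)\cdot f$. The boundary cases $p\in\{0,1\}$ follow by unfolding the definitions of $\diagpX{0\;\;}{X}$ and $\diagpX{1\;\;}{X}$ together with (1).
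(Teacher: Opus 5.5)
Your arguments are correct and essentially the paper's. Item (1) follows from finality of $\zero$ and \eqref{eq:enr}; items (3)--(6) follow by testing against the coproduct injections together with \eqref{eq:delta}; and (2) follows by checking that $(f;\iota_1)+_p(g;\iota_2)$ satisfies the defining equations of $\langle f,g\rangle_{p,1-p}$ and then postcomposing with $[\id{Y},\id{Y}]$. For (7)--(8) you distribute $\diagp{X}=\iota_1+_p\iota_2$ over $(f\oplus\bang{X});\runit{Y}$, whereas the paper rewrites $f+_p\star$ as $\diagp{X};(f\oplus\star);\codiag{Y}$ and uses the unit law; this is only a cosmetic difference.

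You should, however, delete the opening claim that an arrow into $X_1\oplus X_2$ is determined by its projections ``with $p_1=p_2=1$'', and also the abandoned attempts in (2). Those weights violate $p_1+p_2\le 1$, and the paper notes after Proposition~\ref{prop:A} that projections need not be jointly monic in a convex biproduct category. None of your actual steps depend on this claim, so the proof itself is unaffected.
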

\begin{proposition}\label{lemma: copca objects in convbicat}
Let $\Cat{C}$ be a convex biproduct category. Then for every object $X$,  
\begin{enumerate}
 \item the triple  $(X, \codiag{X}, \cobang{X})$ is a natural and coherent monoid, i.e., the axioms in Figures~\ref{fig:monoidax} and \ref{fig:fccoherence} hold;
\item the triple $(X,\diagp{X},\bangp{X})$ is a natural and coherent co-pca, i.e., the axioms in Figures~\ref{fig:co-pca axioms} and \ref{fig:copcacoherence} hold. 
 \end{enumerate}
\end{proposition}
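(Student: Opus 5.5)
Part (1) is Fox's theorem for finite coproducts, so we only sketch it: $\codiag{X}=[\id{X},\id{X}]$ and $\cobang{X}$ are defined by universal properties of coproducts and the initial object. Associativity, commutativity and unitality of the monoid follow because both sides of each equation are copairings agreeing on each injection. Naturality $ (f\oplus f);\codiag{Y}=\codiag{X};f$ and $\cobang{X};f=\cobang{Y}$ also follow this way, as does coherence, e.g.\ $\codiag{X\oplus Y}$ expressed via $\codiag{X}\oplus\codiag{Y}$ and the middle-four interchange built from $\alpha$ and $\sigma$. In each case we check equality after precomposing with every coprojection $\iota_i$.

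The substantive part is (2). The strategy is dual: every arrow into an $n$-fold $\oplus$ is determined by its composites with the projections $\pi_i$. This holds only up to the scalar parameters, via uniqueness in Definition~\ref{def:convprod} and its $n$-ary version from Lemma~\ref{lemma:naryconvexproduct}. So for each co-pca axiom we show that both sides, postcomposed with each $\pi_i$, give $p_i\cdot f_i$ for the \emph{same} weights $\vec p$ and arrows $f_i$; uniqueness then gives equality.

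Before this we need projection computations. From \eqref{eq:delta} and Lemma~\ref{lemma:initialproperties}(3),(4), we derive $(f\oplus g);\pi_1=\pi_1;f$ and $(f\oplus g);\pi_2=\pi_2;g$, and that $\alpha$ and $\sigma$ permute projections appropriately. For this we write $\oplus$ on arrows as a copairing and use that copairing distributes over precomposition with $\iota_i$, together with $\iota_i;\pi_j$.

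With these facts, co-commutativity $\diagp{X};\sigma=\diagpX{1-p}{X}$ is immediate by comparing projections. For co-associativity, $\diagp{X};(\diagpX{q}{X}\oplus\id{X});\alpha$ against $\diagpX{\tilde p}{X};(\id{X}\oplus\diagpX{\tilde q}{X})$, we compose with the three ternary projections. Using Lemma~\ref{lemma:initialproperties2}(\ref{lemma:pq}) we get the weights $pq$, $p(1-q)$, $1-p$ on both sides, by the arithmetic identities behind \eqref{eq:pca}. Idempotency is treated similarly: $\diagp{X};\codiag{X}=\id{X}+_p\id{X}=\id{X}$, by Lemma~\ref{lemma:initialproperties}(2) and \eqref{eq:pca}. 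The counit-type law for $\bang{X}$ is obtained from Lemma~\ref{lemma:initialproperties}(7), giving $p\cdot\id{X}$, which is exactly the intended co-pca axiom. Naturality $f;\diagp{Y}=\diagp{X};(f\oplus f)$ holds since both sides project to $p\cdot f$ and $(1-p)\cdot f$, using Lemma~\ref{lemma:initialproperties2}(\ref{lemma:p;}); naturality of $\bang{}$ is finality of $\zero$.

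The coherence laws come next. These are $\diagp{X\oplus Y}=(\diagp{X}\oplus\diagp{Y});(\text{middle interchange})$ and $\diagp{\zero}=\bang{\zero};\ldots$, with the unit cases trivial since $\zero$ is terminal. They are handled by the same projection comparison, now with four-fold projections.

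The main obstacle is the boundary cases $p\in\{0,1\}$. There $\diagp{X}$ is defined ad hoc as $\cobang{X}\oplus\id{X}$, so we must check that it agrees with $\langle\id{X},\id{X}\rangle_{0,1}$, using $0\cdot f=\star$ and Lemma~\ref{lemma:initialproperties}(1). After that, the derived-weight bookkeeping $\tilde q=p(1-q)/(1-pq)$ needs care when $pq=1$. We would dispatch these degenerate cases separately first, so that the generic argument only uses $p,q\in(0,1)$.
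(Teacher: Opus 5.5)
Most of your plan coincides with the paper's proof. Part~1 is Fox's theorem. Naturality, co-associativity and co-commutativity of $\diagp{X}$ are obtained by computing projections of arrows out of $X$ and invoking uniqueness of the mediating arrow. Idempotency comes from Lemma~\ref{lemma:initialproperties}(2), and the nullary coherences come from the zero object.

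The step that fails as stated is the coherence law $\diagp{X\oplus Y}=(\diagp{X}\oplus\diagp{Y});s$ (with $s$ the middle-four interchange), which you want to settle by comparing the \emph{four-fold} projections. Uniqueness in Definition~\ref{def:convprod}, and in its $n$-ary version, only applies to cones $p_i\cdot f_i$ with $\sum_i p_i\le 1$. Both sides here have four-fold projections $p\cdot\pi_1$, $p\cdot\pi_2$, $(1-p)\cdot\pi_1$, $(1-p)\cdot\pi_2$, where $\pi_1,\pi_2$ are the projections of $X\oplus Y$. These weights sum to $2$, and they cannot be rescaled: $\pi_1$ is in general not of the form $r\cdot g$ with $r<1$, already in $\KlD$. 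Because the domain $X\oplus Y$ is itself a sum, the weight constraint only holds columnwise. Projections of a convex biproduct category need not be jointly monic (as remarked right after Proposition~\ref{prop:A}), so agreement of these four composites does not by itself yield the equation.

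The repair is easy. One option is to split the domain first, as the paper does. By naturality of $\diagp{}$ (Lemma~\ref{lemma:naturality}), $\iota_1;\diagp{X\oplus Y}=\diagp{X};(\iota_1\oplus\iota_1)$. This equals $\iota_1;(\diagp{X}\oplus\diagp{Y});s$, because $s$ sends the first coprojection of $(X\oplus X)\oplus(Y\oplus Y)$ to $\iota_1\oplus\iota_1$. The same holds symmetrically for $\iota_2$, and the coproduct property concludes. The other option is to compare only the two binary projections onto the copies of $X\oplus Y$. Both sides give $p\cdot\id{X\oplus Y}$ and $(1-p)\cdot\id{X\oplus Y}$, using $s;\pi_1=\pi_1\oplus\pi_1$ and $(p\cdot f)\oplus(p\cdot g)=p\cdot(f\oplus g)$. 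These weights sum to $1$, so uniqueness applies.

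Two minor remarks. The co-pca axioms contain no counit-type law for $\bang{X}$, since the point of a pca is not a unit. The identity of Lemma~\ref{lemma:initialproperties}(7) is true, but it is not something to verify. Also, the boundary cases are not an obstacle: the axioms only involve $p,q\in(0,1)$, for which $pq<1$ and $\tilde q=\frac{p(1-q)}{1-pq}\in(0,1)$ automatically.
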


For all sets $X$, co-pcas and monoids in $\KlD$ are illustrated below.
\begin{equation}\label{ex:comonoids}
\arraycolsep=2pt
\begin{array}{cccc}
\begin{array}{rcl}
\diagp{X} \colon  X & \to & X \oplus X \\
x & \mapsto & \delta_{(x,0)}+_p\delta_{(x,1)}
\end{array}
&
\begin{array}{rcl}
\bang{X}  \colon X & \to & \zero\\
x & \mapsto& \star
\end{array}
&
\begin{array}{rcl}
\cobang{X}\colon  \zero &\to& X\\
\text{ }
\end{array}
&
\begin{array}{rcl}
\codiag{X}  \colon  X\oplus X & \to & X\\
(x,i) & \mapsto& \delta_{x}
\end{array}
\end{array}
\end{equation}
Monoids and co-pcas will be useful later to freely generate convex biproduct categories. In particular, morphisms of convex biproduct categories  can be characterised as follows.

\begin{proposition}\label{prop: monoidal functors}
A functor $F\colon\Cat{C}\to \Cat{D}$ is a morphism of convex biproduct categories if and only if it is a strong monoidal functor preserving monoids and co-pcas.
\end{proposition}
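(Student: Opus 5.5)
We prove the two implications separately, taking the structure of $\Cat{C}$ and $\Cat{D}$ to be given by Definition~\ref{def:convbicat}. Throughout, "preserving monoids and co-pcas" means that $F$, up to the coherence isomorphisms of the strong monoidal structure, sends $\codiag{X},\cobang{X},\diagp{X},\bang{X}$ to $\codiag{FX},\cobang{FX},\diagp{FX},\bang{FX}$.

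\emph{($\Rightarrow$).} Let $F$ be a morphism of convex biproduct categories. It preserves finite coproducts, so the canonical comparison maps $[F\iota_1,F\iota_2]\colon FX\oplus FY\to F(X\oplus Y)$ and $\cobang{F\zero}\colon \zero\to F\zero$ are isomorphisms. With these as structure maps, $F$ is a strong monoidal functor for the coproduct monoidal structures; this is the standard argument for coproduct-preserving functors.

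\begin{itemize}
\item \emph{Monoids.} Since $\codiag{X}=[\id{X},\id{X}]$ and $\cobang{X}$ is the arrow out of the initial object, $F$ preserves both by preservation of copairings and of the initial object.
\item \emph{Co-pcas.} By Proposition~\ref{prop:functor 1 e mezzo}, $F$ also preserves binary convex products. The comparison iso is therefore compatible with the projections, i.e. $F\pi_i$ corresponds to $\pi_i$. Since $F$ is $\Cat{PCA}$-enriched, $F(p\cdot f)=p\cdot Ff$. By uniqueness in the universal property,
\[
F\langle \id{},\id{}\rangle_{p,1-p}=\langle \id{},\id{}\rangle_{p,1-p}
\]
modulo the iso, so $F$ preserves $\diagp{X}$.
\item \emph{Counit.} $\bang{X}$ is preserved because $F\zero\cong\zero$ is final.
\end{itemize}

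\emph{($\Leftarrow$).} Let $F$ be strong monoidal, preserving monoids and co-pcas. There are two things to check.

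\emph{Step 1: coproducts.} We first show that $F$ preserves $\iota_i$. By Lemma~\ref{lemma:initialproperties}(5,6),
\[
\iota_1=\Irunit{X_1};(\id{X_1}\oplus \cobang{X_2}),
\]
and $F$ preserves unitors, $\oplus$ and $\cobang{}$. Copairing is expressed monoidally as
\[
[f,g]=(f\oplus g);\codiag{Y},
\]
which $F$ also preserves. It follows that $(FX\oplus FY, F\iota_1,F\iota_2)$, transported along the strong monoidal iso, is a coproduct. Finally, $F\zero\cong\zero$ by strongness.

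\emph{Step 2: enrichment.} We show that $F$ is $\Cat{PCA}$-enriched, which is the main obstacle, since the enrichment is not a priori part of a monoidal functor.
\begin{itemize}
\item For $\star$, Lemma~\ref{lemma:initialproperties}(1) gives $\star_{X,Y}=\bang{X};\cobang{Y}$, which is preserved.
\item For $+_p$, Lemma~\ref{lemma:initialproperties}(2) gives $f+_pg=\langle f,g\rangle_{p,1-p};\codiag{Y}$. We rewrite the pairing as
\[
\langle f,g\rangle_{p,1-p}=\diagp{X};(f\oplus g).
\]
This holds by uniqueness of the convex-product mediator: post-composing with $\pi_1$ and using Lemma~\ref{lemma:initialproperties}(3) together with naturality of the unitor and of $\bang{}$ (Proposition~\ref{lemma: copca objects in convbicat}) gives $\diagp{X};(\id{}\oplus\bang{});\runit{};f=p\cdot f$ by Lemma~\ref{lemma:initialproperties}(7). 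The same computation with $\pi_2$ uses (4) and (8).
\end{itemize}
Hence
\[
f+_pg=\diagp{X};(f\oplus g);\codiag{Y}
\]
is built entirely from monoidal and co-pca/monoid data, so $F(f+_pg)=Ff+_pFg$. The only delicate point is carrying the strong monoidal coherence isos through these rewritings, which is routine bookkeeping.
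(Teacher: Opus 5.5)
Your proof is correct and follows essentially the paper's route. The monoid and coproduct part is Fox's theorem. Preservation of $\diagp{X}$ comes from uniqueness of the convex-product mediator in $\Cat{D}$ together with $F(p\cdot f)=p\cdot F(f)$. Conversely, the enrichment is recovered from $f+_pg=\diagp{X};(f\oplus g);\codiag{Y}$ and $\star_{X,Y}=\bang{X};\cobang{Y}$.

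One small wording point: in the forward direction, the compatibility of $F(\pi_i)$ with $\pi_i$ does not follow from the bare fact that $F$ preserves convex products. It follows from preservation of coproducts and of the zero object, since $\pi_1=[\id{X_1},\star]$ by \eqref{eq:delta}; this is exactly the step carried out inside the proof that $F$ preserves convex products.
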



\subsection{Almost a convex Fox theorem}\label{ssec:almost}
So far, we have shown that in every convex biproduct category each object carries natural and coherent structures of monoid and co-pca (Proposition~\ref{lemma: copca objects in convbicat}). Moreover, morphisms of convex biproduct categories can be characterised as precisely those monoidal functors that preserve these algebraic structures (Proposition~\ref{prop: monoidal functors}).
At this point, one might ask whether the converse holds: does any monoidal category equipped with natural and coherent monoid and co-pca structures give rise to a convex biproduct category? An affirmative answer would yield a characterisation of convex biproduct categories in the spirit of Fox's theorem~\cite{fox1976coalgebras}.
Unfortunately, this is not the case. As we shall see, an additional condition is required: the projections must be \emph{jointly monic}, i.e., for all $f,g\colon A \to X_1\oplus X_2$ the following implication holds.
\begin{center}
if $f;\pi_1=g;\pi_1$ and $f;\pi_2=g;\pi_2$ then $f=g$.
\end{center}
This condition, established in Proposition~\ref{prop:A}, will nevertheless serve as a useful principle throughout the development of our results.

We begin with the following definition, which fixes coprojections, projections, and the enrichment structure as described in Lemma~\ref{lemma:initialproperties}.
\begin{definition}\label{definition:structural}
Let $(\Cat{C}, \oplus, \zero)$ be a symmetric monoidal category such that every object $X\in\Cat{C}$ carries a natural and coherent monoid $(X,\codiag{X},\cobang{X})$ and co-pca $(X,\diagp{X},\,\bangp{X})$,  i.e., the axioms in Figures~\ref{fig:monoidax} -\ref{fig:copcacoherence} hold. 
For all objects $X_1,X_2$, $\iota_i\colon X_i \to X_1\oplus X_2$ and $\pi_i\colon X_1\oplus X_2 \to X_i$ are defined as
\begin{equation} \iota_1 \defeq \Irunit{X_1};(\id{X_1}\oplus \cobang{X_2})\text{ and } \iota_2 \defeq \Ilunit{X_2};(\,\cobang{X_1}\oplus \id{X_2})\end{equation}
\begin{equation}\label{eq:defpi}\pi_1 \defeq (\id{X_1}\oplus \bang{X_2});\runit{X_1} \quad \text{ and } \quad \pi_2 \defeq (\bang{X_1}\oplus \id{X_2});\lunit{X_2}\end{equation}
and, for all $f,g\colon X\to Y$,  $f +_p g\colon X \to Y$ and $\star_{X,Y}\colon X \to Y$ are defined as
\begin{equation}\label{eq:enrichment} f +_p g \defeq\, \diagp{X};(f \oplus g);\codiag{Y} \quad \text{ and } \quad \star_{X,Y} \defeq \bangp{X};\cobang{Y} \end{equation}
\end{definition}

Using the axioms of monoids and co-pcas (Figures~\ref{fig:monoidax}-\ref{fig:copcacoherence}), one verifies that the operations in~\eqref{eq:enrichment} endow $\Cat{C}$ with a $\Cat{PCA}$-enrichment.

\begin{lemma}\label{lemma:enrichmentpca}
Let $(\Cat{C}, \oplus, \zero)$ be a category as in Definition \ref{definition:structural}.  
Then $\Cat{C}$ is $\Cat{PCA}$-enriched. 
\end{lemma}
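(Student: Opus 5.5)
We need to show that each homset $\Cat{C}[X,Y]$, with $+_p$ and $\star_{X,Y}$ as in \eqref{eq:enrichment}, is a pca, and that composition preserves this structure as required by \eqref{eq:enr}. We do this by reading each pca law off the corresponding co-pca law (for $\diagp{X}$) and monoid law (for $\codiag{Y}$), using naturality to push $f$ and $g$ through the structure maps.

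\textbf{Pca laws on homsets.} We check the three laws of \eqref{eq:pca}.
\begin{itemize}
\item \emph{Idempotency.} We have $f+_p f = \diagp{X};(f\oplus f);\codiag{Y}$. Naturality of $\codiag{}$ rewrites this as $\diagp{X};\codiag{X};f$. The co-pca/monoid compatibility law in Figures~\ref{fig:co-pca axioms}--\ref{fig:copcacoherence} (the analogue of the ``split then join is identity'' axiom, $\diagp{X};\codiag{X}=\id{X}$) then gives $f$.
\item \emph{Commutativity.} We have $f+_p g = \diagp{X};\sigma;(g\oplus f);\sigma;\codiag{Y}$. Co-commutativity of the co-pca ($\diagp{X};\sigma=\diagpX{1-p}{X}$) and commutativity of the monoid ($\sigma;\codiag{Y}=\codiag{Y}$) turn this into $g+_{1-p} f$.
\item \emph{Skew associativity.} We expand $(f_1+_q f_2)+_p f_3$ to
\[\diagp{X};(\diagpX{q}{X}\oplus \id{X});((f_1\oplus f_2)\oplus f_3);(\codiag{Y}\oplus\id{Y});\codiag{Y}.\]
The co-pca associativity axiom, with the reparametrisation $\tilde p,\tilde q$, rewrites the first two factors as $\diagpX{\tilde p}{X};(\id{X}\oplus\diagpX{\tilde q}{X});\alpha^{-1}$. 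Naturality of $\alpha$ moves $\alpha^{-1}$ past $(f_1\oplus f_2)\oplus f_3$. Monoid associativity then absorbs it on the right, which yields $f_1+_{\tilde p}(f_2+_{\tilde q} f_3)$.
\end{itemize}

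\textbf{Enrichment laws \eqref{eq:enr}.} Precomposition is immediate: $e;(f+_pg)=(e;f)+_p(e;g)$ holds because naturality of $\diagp{}$ gives $e;\diagp{X}=\diagp{W};(e\oplus e)$, and functoriality of $\oplus$ finishes. Postcomposition by $h$ follows in the same way from naturality of $\codiag{}$, since $\codiag{Y};h=(h\oplus h);\codiag{Z}$. For $\star$, we have $f;\star_{Y,Z} = f;\bangp{Y};\cobang{Z}=\bangp{X};\cobang{Z}$ by naturality of $\bangp{}$. Dually, $\star;f=\star$ by naturality of $\cobang{}$.

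\textbf{Main obstacle.} The only delicate point is the associativity law. There we must make sure that the co-pca associativity axiom in Figure~\ref{fig:co-pca axioms} has exactly the reparametrisation $\tilde p = pq$, $\tilde q=\frac{p(1-q)}{1-pq}$ used in \eqref{eq:pca}, and that the associators are correctly threaded through by naturality. We must also treat the boundary cases $p\in\{0,1\}$ (and $pq=1$) using the extended definitions of $\diagpX{0\;\;}{X}$ and $\diagpX{1\;\;}{X}$. Since the pca structure on homsets only requires $p\in(0,1)$, and $\tilde p,\tilde q\in(0,1)$ whenever $p,q\in(0,1)$, these edge cases do not arise at the level of the enrichment itself.
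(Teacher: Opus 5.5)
Your proposal is correct and matches the paper's proof essentially step by step. Each of the pca laws \eqref{eq:pca} is read off from the corresponding co-pca axiom ($\diagp{}$-idem, $\diagp{}$-sym, $\diagp{}$-as) together with the matching monoid axiom and naturality, and the laws \eqref{eq:enr} come from naturality of $\diagp{}$, $\codiag{}$, $\bangp{}$ and $\cobang{}$. You are, if anything, slightly more explicit than the paper: you spell out the use of $\codiag{}$-symmetry in the commutativity step and you also check $\star;f=\star$, both of which the paper leaves implicit.
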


Furthermore, by Fox's theorem~\cite{fox1976coalgebras}, $(\Cat{C}, \oplus, 0)$ has finite coproducts; naturality of $\bangp{X}$ implies that $0$ is also a final object. Straightforward algebraic manipulations also establish~\eqref{eq:delta}.

\begin{lemma}\label{lemma:coproducts}
Let $(\Cat{C},\oplus,0)$ be a symmetric monoidal category as in Definition~\ref{definition:structural}. 
Then:
\begin{enumerate} 
\item for all objects $X_1,X_2$, $(X_1\oplus X_2, \iota_1,\iota_2)$ is a coproduct;
\item $0$ is initial object;
\item $0$ is a final object;
\item the equality \eqref{eq:delta} holds.
\end{enumerate} 
\end{lemma}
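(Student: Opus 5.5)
We prove the four items in turn, using only the monoid and co-pca axioms of Figures~\ref{fig:monoidax}--\ref{fig:copcacoherence} and the definitions of Definition~\ref{definition:structural}.

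For items (1) and (2), we invoke Fox's theorem in its dual form. A symmetric monoidal category $(\Cat{C},\oplus,\zero)$ in which every object carries a commutative monoid $(X,\codiag{X},\cobang{X})$ is cocartesian, with $\oplus$ the coproduct and $\zero$ initial, provided these monoids are natural in $X$ and coherent with respect to $\oplus$ and $\zero$. These are exactly the axioms of Figures~\ref{fig:monoidax} and \ref{fig:fccoherence}.

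We then check that the coprojections given by Fox's construction are the $\iota_i$ of Definition~\ref{definition:structural}. Fox's coprojections are $\Irunit{X_1};(\id{X_1}\oplus\cobang{X_2})$ and its symmetric counterpart, so they agree. The copairing of $f\colon X_1\to Y$ and $g\colon X_2\to Y$ is $(f\oplus g);\codiag{Y}$. Existence follows from naturality of $\cobang{}$ together with the unit law of the monoid. Uniqueness follows from the identity $h=(\iota_1\oplus\iota_2);\codiag{X_1\oplus X_2};h$, which uses the coherence axiom relating $\codiag{X_1\oplus X_2}$ to $\codiag{X_1}\oplus\codiag{X_2}$, followed by naturality of $\codiag{}$. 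For initiality of $\zero$, the arrow $\cobang{X}\colon\zero\to X$ is unique: if $h\colon\zero\to X$, naturality of $\cobang{}$ gives $\cobang{\zero};h=\cobang{X}$, and coherence gives $\cobang{\zero}=\id{\zero}$.

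Item (3) is the dual argument applied to $\bangp{}$. For any $h\colon X\to\zero$, naturality of $\bang{}$ gives $h;\bang{\zero}=\bang{X}$. The coherence axiom for the co-pca on $\zero$ gives $\bang{\zero}=\id{\zero}$, so $h=\bang{X}$. Existence is witnessed by $\bang{X}$ itself.

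For item (4), we compute directly from the definitions. We have
\[
\iota_1;\pi_1=\Irunit{X_1};(\id{X_1}\oplus\cobang{X_2});(\id{X_1}\oplus\bang{X_2});\runit{X_1}.
\]
By functoriality of $\oplus$ and $\cobang{X_2};\bang{X_2}=\id{\zero}$ (a consequence of (2) or (3)), this reduces to $\Irunit{X_1};\runit{X_1}=\id{X_1}$. For $i\neq j$, for instance
\[
\iota_1;\pi_2=\Irunit{X_1};(\id{X_1}\oplus\cobang{X_2});(\bang{X_1}\oplus\id{X_2});\lunit{X_2}=\Irunit{X_1};(\bang{X_1}\oplus\cobang{X_2});\lunit{X_2}.
\]
Naturality of $\runit{}$ and $\lunit{}$ rewrites this as $\bang{X_1};\Irunit{\zero};\lunit{\zero};\cobang{X_2}$. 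The coherence $\runit{\zero}=\lunit{\zero}$ collapses the middle to the identity, leaving $\bang{X_1};\cobang{X_2}$, which is $\star_{X_1,X_2}$ by \eqref{eq:enrichment}.

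The main obstacle is in item (1): making sure the particular coherence axioms listed in the appendix figures are exactly those Fox's theorem needs, rather than a variant. If they do not match directly, we will prove the uniqueness step of the copairing by hand, by expanding $\codiag{X_1\oplus X_2}$ through its coherence axiom.
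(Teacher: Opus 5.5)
Your proposal is correct and follows the paper's proof essentially step for step. Items (1) and (2) come from Fox's theorem, item (3) from naturality of $\bangp{X}$ together with $\bangp{\zero}=\id{\zero}$, and \eqref{eq:delta} from the same direct computation, which collapses $\iota_1;\pi_2$ to $\bang{X_1};\cobang{X_2}$ via naturality of the unitors and $\runit{\zero}=\lunit{\zero}$. The only difference is that you unpack Fox's theorem, namely the copairing $(f\oplus g);\codiag{Y}$ and its uniqueness via the coherence of $\codiag{X_1\oplus X_2}$, which the paper simply cites.
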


Now, to conclude that a monoidal category $(\Cat{C}, \oplus, \zero)$ as in Definition \ref{lemma:enrichmentpca} is a convex biproduct category, one should only prove that $(X_1\oplus X_2,\pi_1,\pi_2)$ is a convex product. 

Given $p,q\in [0,1]$ such that $p+q\le 1$,  define $ \diaggen{p,q\,}{X}\colon X \to X\oplus X $  as  
\begin{equation}\label{diagpq generalizzata}
    \diaggen{p,q\,}{X} \defeq\,\diagp{X};(\id{X}\oplus\, \diaggen{q'}{X});(\id{X}\oplus (\id{X} \oplus\bang{X}));(\id{X}\oplus \runit{X})
\end{equation}
where $q'=\frac{q}{1-p}$ if $p\not= 1$ and $q'=0$ otherwise.

\begin{lemma}\label{lemma:diagpq}
Let $(\Cat{C}, \oplus, \zero)$ be a symmetric monoidal category as in Definition~\ref{definition:structural}. 
Then: 
    \begin{enumerate}
        \item\label{lemma:diagpq1} $\diaggen{p,q\,}{X};\pi_1 = p\cdot \id{X}$;
        \item\label{lemma:diagpq2} $\diaggen{p,q\,}{X};\pi_2 = q\cdot \id{X}$;
       \item\label{lemma:diagpq4} $\diaggen{p,q\,}{X};(f_1\oplus f_2); \pi_1= p\cdot f_1$ for all $f_i\colon X \to Y_i$;
       \item\label{lemma:diagpq5} $\diaggen{p,q\,}{X};(f_1\oplus f_2); \pi_2= q\cdot f_2$ for all $f_i\colon X \to Y_i$.
    \end{enumerate}
\end{lemma}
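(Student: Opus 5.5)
The plan is to prove items (3) and (4) first, since (1) and (2) then follow by taking $f_1 = f_2 = \id{X}$.

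For (3), unfold \eqref{diagpq generalizzata} and postcompose with $(f_1\oplus f_2);\pi_1$. By bifunctoriality of $\oplus$ and the definition \eqref{eq:defpi}, the tail $(\id{X}\oplus (\ldots));(\id{X}\oplus\runit{X});(f_1\oplus f_2);(\id{Y_1}\oplus \bang{Y_2});\runit{Y_1}$ reduces to $(f_1\oplus g);\runit{Y_1}$ for some $g\colon X\to \zero$. By the naturality of $\bang{}$ (Figure~\ref{fig:co-pca axioms}), every arrow into $\zero$ equals $\bang{X}$, since $\zero$ is final by Lemma~\ref{lemma:coproducts}. Hence we get
\[\diaggen{p,q\,}{X};(f_1\oplus f_2);\pi_1 = \diagp{X};(f_1\oplus \bang{X});\runit{Y_1}.\]
The right-hand side should equal $p\cdot f_1 = f_1 +_p \star$. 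This follows by unfolding \eqref{eq:enrichment}: $f_1+_p\star = \diagp{X};(f_1\oplus (\bang{X};\cobang{Y_1}));\codiag{Y_1}$. We then rewrite $(\id{}\oplus\cobang{Y_1});\codiag{Y_1} = \runit{Y_1}$ using the unit law of the monoid, and split $(f_1\oplus \bang{X});(\id{}\oplus \cobang{})$. The degenerate cases $p\in\{0,1\}$ are handled separately using the extended definitions of $\diagpX{0\;\;}{X}$ and $\diagpX{1\;\;}{X}$.

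For (4), the same reduction leaves
\[\diagp{X};(\bang{X}\oplus (\diaggen{q'}{X};(f_2\oplus \bang{X});\runit{Y_2}));\lunit{Y_2},\]
using finality of $\zero$ to collapse the first component into $\bang{X}$. By the argument for (3), the inner expression is $q'\cdot f_2$. The outer shape $\diagp{X};(\bang{X}\oplus h);\lunit{}$ equals $(1-p)\cdot h$, by the symmetric counterpart of the argument above, i.e.\ $\star+_p h = h+_{1-p}\star$ by the commutativity axiom of \eqref{eq:pca}, which holds in the enrichment given by Lemma~\ref{lemma:enrichmentpca}. So the result is $(1-p)\cdot(q'\cdot f_2)$.

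By Lemma~\ref{lemma:initialproperties2}(\ref{lemma:pq}), valid since $\Cat{C}$ is $\Cat{PCA}$-enriched, this is $(1-p)q'\cdot f_2 = q\cdot f_2$ when $p\neq 1$. When $p=1$, we have $q=0$ and $q'=0$, and both sides are $\star$: the left side because $\diagpX{1\;\;}{X}$ routes everything to the first summand, and the right side because $0\cdot f = \star$.

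The main obstacle is the bookkeeping of the unitors and the boundary cases $p\in\{0,1\}$ and $q'\in\{0,1\}$. In those cases the co-pca structure is defined ad hoc via $\cobang{}$, so identities like $\diagpX{0\;\;}{X};(f\oplus\bang{X});\runit{} = \star$ have to be checked directly against the definitions rather than via the axioms.
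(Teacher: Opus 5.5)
Your proposal is correct and follows essentially the paper's argument: both collapse the second branch of $\diaggen{p,q\,}{X}$ to $\bang{X}$, then recognise $\diagp{X};(f\oplus \bang{X});\runit{Y}$ as $f+_p\star$ via \eqref{eq:enrichment} and the unit law of the monoid. The differences are cosmetic. You prove items (3)--(4) first and specialise to identities, whereas the paper proves (1) and derives (3) from it by naturality of $\rho$ and pca-enrichment. You use finality of $\zero$ (Lemma~\ref{lemma:coproducts}) in place of the paper's longer chain of coherence and naturality axioms. You also spell out item (4) and the boundary cases, which the paper leaves to the reader.
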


By the last two items of the lemma above, we know that there exists \emph{at least one} arrow $h$, namely $\diaggen{p,q\,}{X};(f_1\oplus f_2)$, such that $h;\pi_1=p\cdot f_1$ and $h;\pi_2=q\cdot f_2$. However, uniqueness is not guaranteed. To obtain convex products, we need to assume that the projections are jointly monic.

\begin{proposition}\label{prop:A}
Let $(\Cat{C},\oplus,0)$ be a symmetric monoidal category as in Definition~\ref{definition:structural}. 
\begin{center}If the projections $\pi_1$ and $\pi_2$ are jointly monic, then $\Cat{C}$ is a convex biproduct category.\end{center}
\end{proposition}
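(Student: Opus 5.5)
By Lemma~\ref{lemma:enrichmentpca}, $\Cat{C}$ is $\Cat{PCA}$-enriched with the operations in \eqref{eq:enrichment}. By Lemma~\ref{lemma:coproducts}:
\begin{itemize}
\item $\zero$ is both initial and final;
\item $(X_1\oplus X_2,\iota_1,\iota_2)$ is a coproduct;
\item the equation \eqref{eq:delta} holds.
\end{itemize}
Comparing with Definition~\ref{def:convbicat}, the only remaining obligation is that $(X_1\oplus X_2,\pi_1,\pi_2)$ is a convex product in the sense of Definition~\ref{def:convprod}. Here $\pi_1,\pi_2$ are the arrows fixed in \eqref{eq:defpi}, and the convexity structure is the enrichment just obtained. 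So the plan is to prove existence and uniqueness of mediating arrows separately.

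\emph{Existence.} Fix $p_1,p_2\in[0,1]$ with $p_1+p_2\le 1$ and arrows $f_i\colon A\to X_i$. I take the candidate
\[h\defeq \diaggen{p_1,p_2\,}{A};(f_1\oplus f_2)\colon A\to X_1\oplus X_2,\]
using the generalised split \eqref{diagpq generalizzata} on the object $A$. Items~\ref{lemma:diagpq4} and~\ref{lemma:diagpq5} of Lemma~\ref{lemma:diagpq} give $h;\pi_1=p_1\cdot f_1$ and $h;\pi_2=p_2\cdot f_2$ directly. One point needs care: the scalar multiplication $p\cdot f$ appearing there is the one induced by the enrichment of Lemma~\ref{lemma:enrichmentpca}, namely $f+_p\star$. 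This is the same notion used in Definition~\ref{def:convprod}, so no translation is needed. The boundary cases $p_1\in\{0,1\}$ are also covered by Lemma~\ref{lemma:diagpq}, since $\diagp{X}$ was extended to $p\in[0,1]$ and the convention $q'=0$ handles $p=1$.

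\emph{Uniqueness.} Suppose $h'\colon A\to X_1\oplus X_2$ also satisfies $h';\pi_i=p_i\cdot f_i$ for $i=1,2$. Then $h;\pi_i=h';\pi_i$ for both $i$. Since $\pi_1,\pi_2$ are jointly monic by hypothesis, $h=h'$. This step is immediate, and it is exactly why the hypothesis was introduced.

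Convex products are defined for pairs, with a $0$-ary case given by the final object. Lemma~\ref{lemma:naryconvexproduct} then supplies the $n$-ary convex products if they are needed, although Definition~\ref{def:convbicat} only asks for binary ones plus $\zero$. Hence $\Cat{C}$ is a convex biproduct category. Given the earlier lemmas, I expect no real obstacle. The only subtle point is confirming that the enrichment-induced scalar multiplication coincides with the one Lemma~\ref{lemma:diagpq} refers to, including at the endpoints $0$ and $1$; I would check this by unfolding $p\cdot f = f+_p\star$ through \eqref{eq:enrichment}.
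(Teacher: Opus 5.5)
Your proposal is correct and follows the paper's proof essentially verbatim. You reduce to the convex-product property via Lemmas~\ref{lemma:enrichmentpca} and~\ref{lemma:coproducts}, take $\diaggen{p_1,p_2\,}{A};(f_1\oplus f_2)$ as the mediating arrow using the last two items of Lemma~\ref{lemma:diagpq}, and obtain uniqueness directly from joint monicity of $\pi_1,\pi_2$. The appeal to Lemma~\ref{lemma:naryconvexproduct} is harmless but unnecessary, since Definition~\ref{def:convbicat} only asks for binary convex products together with the zero object.
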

\begin{proof}
By Lemmas~\ref{lemma:enrichmentpca} and \ref{lemma:coproducts}, we only need to prove that $(X_1\oplus X_2, \pi_1,\pi_2)$ is a convex product. 

Let $p,q\in [0,1]$ such that $p+q\le 1$ and $f_1\colon Z\to X_1$ and $f_2\colon Z\to X_2$.
By the last two items of Lemma \ref{lemma:diagpq}, $\diaggen{p,q\,}{Z};(f_1\oplus f_2); \pi_1= p\cdot f_1$ and $\diaggen{p,q\,}{Z};(f_1\oplus f_2); \pi_2= q\cdot f_2$. To prove that this is the unique arrow with such property, suppose that there is an arrow $h\colon Z \to X_1\oplus X_2$ such that $h;\pi_1 = p\cdot f_1$ and $h;\pi_2 = q\cdot f_2$. Then, since $\pi_1$ and $\pi_2$ are jointly monic, it follows that $h =\, \diaggen{p,q\,}{Z};(f_1\oplus f_2)$.
\end{proof}
The converse does not hold: in a convex biproduct category, projections need not be jointly monic.
Finally, Lemma~\ref{lemma:diagpq} yields the following characterisation of mediating morphisms in any convex biproduct category.

\begin{lemma}\label{lemma:diagpq3} 
Let $\Cat{C}$ be a convex biproduct category. For all  $p,q\in [0,1]$ such that $p+q\le 1$ and $f_1\colon Z\to X_1$ and $f_2\colon Z\to X_2$, it holds that
 \[\langle f_1, f_2\rangle_{p,q} =\, \diaggen{p,q\,}{Z}; (f_1 \oplus f_2)\text{.}\]
\end{lemma}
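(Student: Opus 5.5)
The plan is to use the uniqueness part of the convex product universal property directly. In a convex biproduct category, $\langle f_1, f_2\rangle_{p,q}$ is by definition the unique arrow $h\colon Z\to X_1\oplus X_2$ with $h;\pi_1 = p\cdot f_1$ and $h;\pi_2 = q\cdot f_2$. So it suffices to show that $\diaggen{p,q\,}{Z};(f_1\oplus f_2)$ has these two composites with the projections.

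This is essentially items~\ref{lemma:diagpq4} and~\ref{lemma:diagpq5} of Lemma~\ref{lemma:diagpq}. The only point to check is that Lemma~\ref{lemma:diagpq} applies in the present setting. That lemma is stated for a symmetric monoidal category as in Definition~\ref{definition:structural}, where $\pi_i$, $\iota_i$ and the enrichment are \emph{defined} from the monoid and co-pca structures. Two facts make this legitimate.
\begin{enumerate}
\item By Proposition~\ref{lemma: copca objects in convbicat}, every object of a convex biproduct category carries a natural and coherent monoid $(X,\codiag{X},\cobang{X})$ and co-pca $(X,\diagp{X},\bangp{X})$. Hence $(\Cat{C},\oplus,\zero)$ satisfies the hypotheses of Definition~\ref{definition:structural}.
\item The structure defined in Definition~\ref{definition:structural} agrees with the given structure of $\Cat{C}$. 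Items (3)--(6) of Lemma~\ref{lemma:initialproperties} identify the given projections and coprojections with those of~\eqref{eq:defpi}. Items (1) and (2) identify $\star_{X,Y}$ and $+_p$ with~\eqref{eq:enrichment}, since $\langle f,g\rangle_{p,1-p} = \diagp{X};(f\oplus g)$ by naturality of the pairing, i.e.\ $\langle \id{},\id{}\rangle_{p,1-p};(f\oplus g) = \langle f,g\rangle_{p,1-p}$.
\end{enumerate}
Consequently the scalar multiplication $p\cdot f_1$ in Lemma~\ref{lemma:diagpq} coincides with the one in $\Cat{C}$.

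With this identification, Lemma~\ref{lemma:diagpq} gives $\diaggen{p,q\,}{Z};(f_1\oplus f_2);\pi_1 = p\cdot f_1$ and $\diaggen{p,q\,}{Z};(f_1\oplus f_2);\pi_2 = q\cdot f_2$. Uniqueness in Definition~\ref{def:convprod} then yields the claimed equality.

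The main obstacle is the naturality step used to match the enrichments, namely $\langle \id{},\id{}\rangle_{p,1-p};(f\oplus g)=\langle f,g\rangle_{p,1-p}$. I would prove it again by uniqueness of the mediating arrow. Compose both sides with $\pi_i$ and use $(f\oplus g);\pi_1 = \pi_1;f$, which follows from $\pi_1$ being given by~\eqref{eq:defpi} and naturality of $\runit{}$ and $\bang{}$. Then apply Lemma~\ref{lemma:initialproperties2}\eqref{lemma:p;} to get $p\cdot \id{};f = p\cdot f$.
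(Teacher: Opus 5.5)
Your proposal is correct and follows the paper's own proof: existence via the last two items of Lemma~\ref{lemma:diagpq}, then uniqueness of the mediating arrow for convex products. Your extra check is a detail the paper leaves implicit, and it is sound. That check shows that the projections, $\star$ and $+_p$ obtained from Definition~\ref{definition:structural} agree with those of the given convex biproduct category, using Lemma~\ref{lemma:initialproperties} and the identity $\langle \mathrm{id},\mathrm{id}\rangle_{p,1-p};(f\oplus g)=\langle f,g\rangle_{p,1-p}$.
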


\begin{remark}
The axioms in Figures~\ref{fig:monoidax}, \ref{fig:fccoherence}, \ref{fig:co-pca axioms}, and \ref{fig:copcacoherence} have been placed in the Appendix, since their \emph{strict} counterparts already appear in Tables \ref{fig:freestrictfccat} and \ref{fig:freecopcacat}. From now on, we assume monoidal categories to be strict, meaning that the structural isomorphisms  $\assoc{X}{Y}{Z}$, $\lunit{X}$, $\runit{X}$ are identities. This assumption is made without loss of generality \cite{mac_lane_categories_1978}, it greatly simplifies calculations and allows for diagrammatic notations: see the last three rows of Figure~\ref{fig:tapesax} for a diagrammatic representation of the axioms of natural monoids and co-pcas.
\end{remark}

\begin{table}[t]
\resizebox{\textwidth}{!}{%
\begin{tabular}{cc cc}
    \toprule
    $(\id{ P}\piu \codiag{ P}) ; \codiag{ P} = (\codiag{ P}\piu \id{ P}) ; \codiag{ P}$ & (\newtag{$\codiag{}$-as}{eq:codiag assoc}) &  $(\cobang{ P}\piu \id{ P}) ; \codiag{ P}  = \id{ P}  $ & (\newtag{$\codiag{}$-un}{eq:codiag unital}) \\[0.3em]
    $\cobang{\zero} = \id{\zero} \qquad\codiag{\zero} = \id{\zero}$ & (\newtag{$\cobang{\zero},\codiag{\zero}$-coh}{eq:codiag zero coherence}) & $\sigma_{ P, P};\codiag{ P}=\codiag{ P}$ & (\newtag{$\codiag{}$-sym}{eq: codiag symmetry}) \\[0.3em]
    $\cobang{P \piu Q} = \cobang{P} \piu \cobang{Q}$ & (\newtag{\,$\cobang{}$-coh}{eq:cobang coherence}) & $\codiag{P \piu Q} = (\id{P} \piu \sigma_{Q,P} \piu \id{Q}) ; (\codiag{P}\piu \codiag{Q}) $ & (\newtag{$\codiag{}$-coh}{eq:codiag coherence}) \\[0.3em]
    $\cobang{P};f =\cobang{Q}$ & (\newtag{\,$\cobang{}$-nat}{eq:cobang nat}) & $\codiag{P};f =(f\piu f); \codiag{Q}$ & (\newtag{$\codiag{}$-nat}{eq:codiag nat}) \\
    \bottomrule
\end{tabular}}
\caption{Axioms for natural and coherent monoids in a strict monoidal category.}\label{fig:freestrictfccat}
\end{table}

\begin{table}[t]
\resizebox{\textwidth}{!}{%
\begin{tabular}{cc cc}
    \toprule
    $\diagp{P};(\diagq{P}\piu \id{P})=\, \diagptilde{P};(\id{P}\piu \,\diagqtilde{})$ & (\newtag{$\diagp{}$-as}{eq:diagp assoc}) &  $\tilde{p}= pq\qquad \tilde{q}= \frac{p(1-q)}{1-pq}$ &  \\[0.3em]
    $\diagp{P};\codiag{P}  = \id{ P}$ & (\newtag{$\diagp{}$-idem}{eq:diagp idempotency}) & $\diagp{P}\sigma_{ P, P}=\,\diagpbar{P}$ & (\newtag{$\diagp{}$-sym}{eq:diagp symmetry}) \\[0.3em]
    $\diagp{\zero} = \id{\zero}$ & (\newtag{$\diagp{\zero}$-coh}{eq:diagp zero coherence}) & $\bangp{\zero} = \id{\zero}$ & (\newtag{\,$\bangp{\zero}$-coh}{eq:bangp0 coherence}) \\[0.3em]
    $\bangp{P \piu Q} = \bangp{P} \piu\, \bangp{Q}$ & (\newtag{\,$\bangp{}$-coh}{eq:bangp coherence}) & $\diagp{P \piu Q} = (\diagp{P}\piu\, \diagp{Q});(\id{P} \piu \sigma_{P,Q} \piu \id{Q}) $ & (\newtag{$\diagp{}$-coh}{eq:diagp coherence}) \\[0.3em]
    $f; \bangp{Q}=\bangp{P}$ & (\newtag{\,$\bangp{}$-nat}{eq:bangp nat}) & $f; \diagp{Q}=\,\diagp{P}; (f \piu f)$ & (\newtag{$\diagp{}$-nat}{eq:diagp nat}) \\
    \bottomrule
\end{tabular}}
\caption{Axioms for natural and coherent co-pcas in a strict monoidal category.}\label{fig:freecopcacat}
\end{table}

%
%
%

%% file: sections/cmatrix.tex
%
%
\section{Stochastic Matrices over PCA-enriched categories}\label{sec:cmatrix}    
It is well-known (see, e.g., \cite[Exercises VIII.2.5-6]{mac_lane_categories_1978}) that, from a category enriched over commutative monoids, one can freely generate the finite biproduct category of its matrices. In this section, we show a similar construction for $\Cat{PCA}$-enriched categories: given such a category $\Cat{C}$, one can construct the convex biproduct category $\stmat{\Cat{C}}$ of \emph{stochastic matrices} over $\Cat{C}$.

Objects of $\stmat{\Cat{C}}$ are words in $Ob(\Cat{C})^*$. We will write $\bigoplus_{k=1}^m U_k$ for the word $U_1\dots U_m$ and $\zero$ for the empty word. We will denote objects of $\Cat{C}$ by $U,V$ and those of $\stmat{\Cat{C}}$ by $P,Q$. In $\stmat{\Cat{C}}$, an arrow $M\colon\bigoplus_{k=1}^n U_k\to \bigoplus_{k=1}^m V_k$ is an equivalence class of  $m\times n$ matrices with $(j,i)$-entries given by pairs $ (p_{ji}, f_{ji})$ where $f_{ji}\in\Cat{C}[U_i,V_j]$ and $p_{ji}\in[0,1]$ satisfy $\sum_{j=1}^{m}p_{ji}\le 1$. Two matrices $M$ and $M'$ are equivalent, in symbols $M\equiv M'$, if $p_{ji}\cdot f_{ji}= p'_{ji}\cdot f'_{ji}$ in $\Cat{C}[U_i,V_j]$ for all $i,j$. 
\begin{remark}
The use of pairs $ (p_{ji}, f_{ji})$ as entries of matrices is necessary to specify the constraints $\sum_{j=1}^{m}p_{ji}\le 1$. However, $\equiv$ ensures that one can safely write $M_{ji}= p_{ji}\cdot f_{ji}$.
\end{remark}

The composition of two morphisms $M\colon\bigoplus_{k=1}^n U_k\to \bigoplus_{k=1}^m V_k$ and $M'\colon\bigoplus_{k=1}^m V_k\to \bigoplus_{k=1}^l W_k$ 
is obtained by matrix multiplication $M'M$: for all $u\in \{1,\dots,l\}$ and $i\in \{1,\dots,n\}$ the entry at $(u,i)$ is 
\begin{equation}\label{eq:matrixmult}(M'M)_{ui} \defeq  r_{ui}\cdot \left(\sum_{j=1}^{m} \frac{p'_{uj}p_{ji}}{r_{ui}}\cdot (f_{ji};f'_{uj})\right) = \sum_{j=1}^{m} p'_{uj}p_{ji}\cdot (f_{ji};f'_{uj})\end{equation} 
where the convex sums $\sum_k p_k\cdot (-)_k$ are those provided by the $\Cat{PCA}$ enrichment of $\Cat{C}$, the composition $;$ is in $\Cat{C}$ and
$r_{ui}= (\sum_{j=1}^{m} p'_{uj}p_{ji})$. Simple computations confirm that $\sum_{u=1}^{l}r_{ui}\leq 1$ and $\frac{p'_{uj}p_{ji}}{r_{ui}}\in[0,1]$. Note that the rightmost equality in \eqref{eq:matrixmult} follows from Lemma~\ref{lemma:initialproperties2}.\ref{lemma q per somma}.
For all objects $P=\bigoplus_{k=1}^n U_k$, $\id{P}$ is the $n\times n$ matrix with entries $(\id{P})_{jj}=1\cdot \id{U_j}$ and, for $i\neq j$, $(\id{P})_{ji}=0 \cdot \star$.

\begin{example}\label{ex:matrices}
Recall $\Cat{1}^+$ and $(A^*)^+$ from Example~\ref{ex:PCA}.
In $\stmat{\Cat{1}^+}$ objects are natural numbers and arrows $n \to m$ are the usual $m\times n$ sub-stochastic matrices (i.e., $\sum_{j=1}^mp_{ji}\leq 1$).
In $\stmat{(A^*)^+}$ objects are natural numbers and arrows $n \to m$ are $m\times n$ matrices with entries in $\Dis(A^*)$. Consider for instance the matrices $N\colon 2\to 3$ and $M\colon 2 \to 2$ on the left below.  The composition $M;N\colon 2 \to 3$ is the matrix on the right.
\begin{equation*}
\begin{pmatrix}
    \frac{1}{2} \cdot a & \frac{1}{2} \cdot c\\
    \frac{1}{3} \cdot ab & 0 \cdot \star \\
    0 \cdot \star & \frac{1}{3} \cdot \id{}
\end{pmatrix} 
\begin{pmatrix}
    \frac{1}{2} \cdot a & 1 \cdot c\\
    \frac{1}{2} \cdot ab & 0 \cdot \star \\
\end{pmatrix} 
=
\begin{pmatrix}
    \frac{1}{2} \cdot (\frac{1}{2} \cdot aa +\frac{1}{2}abc) & \frac{1}{2} \cdot ca\\
    \frac{1}{6} \cdot aab & \frac{1}{3} \cdot cab \\
    \frac{1}{6} \cdot ab  & 0 \cdot \star
\end{pmatrix} 
\end{equation*}
\end{example}

%
%
For arbitrary matrices $M$ and $N$, $M\oplus N$ is defined as below on the left
\begin{equation}\label{eq:matsmc}
M \piu N \defeq \begin{pmatrix}
    M & \emptyset\\
    \emptyset & N
\end{pmatrix}
\qquad
\symm{P}{Q}\defeq \begin{pmatrix}
    \emptyset & \id{Q}\\
    \id{P} & \emptyset
\end{pmatrix}
\end{equation}
where $\emptyset$ is the matrix with all entries $ 0\cdot \star$. The symmetry $\symmp{P}{Q}\colon P\oplus Q \to Q \oplus P$ is defined as on the right above. 
%
%
%
%
%
%
%
Every object $P$ carries co-pca and monoid structures. These are defined for all objects $U\in \Cat{C}$ as follows, where $!_U$ (respectively $?_U$) is the unique matrix with $0$ rows (columns).
\begin{equation}\label{eq:matmonpca}
\diagp{U}\defeq\begin{pmatrix}
    p\cdot \id{U} \\
    (1-p)\cdot \id{U}
\end{pmatrix}
\qquad
\bang{U}\defeq !_U
\qquad 
\cobang{U}\defeq ?_U
\qquad
\codiag{U}\defeq \begin{pmatrix}
    1\cdot \id{U} & 1\cdot \id{U}
\end{pmatrix} 
\end{equation}
 For arbitrary objects $P$ of $\stmat{\Cat{C}}$, co-pcas and monoids are defined inductively as:
\begin{equation}\label{eq:indmoncopca}
\begin{array}{c|c}
\begin{array}{ccc}
 \diagp{\zero}\defeq \id{\zero}\; &\; \bang{\zero}\defeq \id{\zero}\; &\; \bang{U \oplus P}\defeq \bang{U}\oplus \bang{P}
 \end{array}
 &
 \begin{array}{ccc}
\codiag{\zero}\defeq \id{\zero}\; &\; \cobang{\zero}\defeq \id{\zero} \;& \;\cobang{U \oplus P}\defeq \cobang{U}\oplus \cobang{P}
\end{array} \\
\diagp{U \oplus P}\defeq (\diagp{U} \piu\, \diagp{P});(\id{U} \piu \symm{U}{P} \piu \id{P})&
\codiag{U\oplus P}\defeq (\id{U} \piu \symm{P}{U} \piu \id{P});(\codiag{U}\piu \codiag{P})
\end{array}
\end{equation}

%
%
%
%
\begin{lemma}\label{lemma:stmat coproduct}
Let $\Cat{C}$ be a $\Cat{PCA}$-enriched category. Every object of $\stmat{\Cat{C}}$ is a natural and coherent monoid object.
\end{lemma}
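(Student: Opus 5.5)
The plan is to check directly the axioms of Table~\ref{fig:freestrictfccat} for the inductively defined $\codiag{P}$ and $\cobang{P}$ in \eqref{eq:indmoncopca}, by computing the matrices they denote. A first step, proved by induction on the length of $P=\bigoplus_{k=1}^n U_k$, is a closed form for both structures. The first is $\cobang{P}=?_P$, the unique $n\times 0$ matrix. The second, $\codiag{P}$, is the $n\times 2n$ block matrix $\begin{pmatrix}\id{P} & \id{P}\end{pmatrix}$, whose entries are $1\cdot\id{U_j}$ at positions $(j,j)$ and $(j,n+j)$ and $0\cdot\star$ elsewhere. For the inductive step, one multiplies the permutation matrix $\id{U}\piu\symm{P}{U}\piu\id{P}$ by the block-diagonal $\codiag{U}\piu\codiag{P}$, using \eqref{eq:matrixmult}. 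Here every sum has at most one nonzero term, so the only enrichment facts needed are $f;\star=\star$, $1\cdot f=f$ and $0\cdot f=\star$ from \eqref{eq:enr} and Lemma~\ref{lemma:initialproperties2}. Before this, I will record that $\stmat{\Cat{C}}$ is a strict symmetric monoidal category under $\piu$ (block sums), that concatenation of words is strictly associative with unit the empty word $\zero$, and that composition is well defined on $\equiv$-classes and associative. I would state these as routine preliminary facts, the last using Lemma~\ref{lemma:initialproperties2} and the pca laws \eqref{eq:pca}.

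With the closed forms, each axiom becomes a matrix identity whose entries are either $0\cdot\star$ or $1\cdot$(an identity or a single $f_{ji}$). Associativity, unitality, symmetry and the coherence axioms \eqref{eq:codiag zero coherence}, \eqref{eq:cobang coherence} and \eqref{eq:codiag coherence} are then comparisons of $0/1$ block patterns. In fact \eqref{eq:codiag coherence} and \eqref{eq:cobang coherence} for composite words $P\piu Q$ follow from the inductive definition together with associativity of the symmetry; for general $P$ rather than $U\piu P$, I would use the closed form instead. The axiom \eqref{eq:cobang nat} is immediate, since there is only one $\cdot\times 0$ matrix.

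The step I expect to be the main obstacle is naturality \eqref{eq:codiag nat}, $\codiag{P};M=(M\piu M);\codiag{Q}$ for an arbitrary $M\colon P\to Q$ with entries $p_{ji}\cdot f_{ji}$. The difficulty is that the sums in \eqref{eq:matrixmult} must be evaluated in the pca rather than in a monoid. On the left, the entry at $(j,i)$ for $i\le n$ is $\sum_k p_{jk}\delta_{ki}\cdot(\id{};f_{jk})$, which has a single term with coefficient $p_{ji}\cdot 1$ and the others with coefficient $0$. I need the lemma that $\sum_k q_k\cdot g_k$ with $q_k=0$ for $k\neq k_0$ equals $q_{k_0}\cdot g_{k_0}$, which I will prove by induction on the definition \eqref{eq:def somma n pca} using $x+_0y=y$ and $p\cdot x=x+_p\star$. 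The right-hand side reduces by the same lemma to $p_{ji}\cdot f_{ji}$ at both $(j,i)$ and $(j,n+i)$. One subtlety is that the matrices must still satisfy the column constraint $\sum_j\le1$; this holds because each column of $\codiag{P}$ contains a single $1$. The same single-term lemma handles all the permutation computations, so it is the one genuinely algebraic ingredient of the argument.
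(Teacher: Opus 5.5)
Your verification of the monoid axioms is the paper's proof. The paper likewise reads off $\codiag{P}=\begin{pmatrix}\id{P}&\id{P}\end{pmatrix}$ and $\cobang{P}$ as the empty $n\times 0$ matrix, and checks Table~\ref{fig:freestrictfccat} by matrix multiplication, displaying associativity and the two naturality laws. Your induction for the closed form, and your lemma that an $n$-ary sum as in \eqref{eq:def somma n pca} with a single nonzero coefficient reduces to $q_{k_0}\cdot g_{k_0}$, are exactly the details the paper leaves implicit, and they are correct.

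One claim in your plan is wrong, although it lies outside what the lemma itself has to establish. You list as routine that composition \eqref{eq:matrixmult} is well defined on $\equiv$-classes. For a general $\Cat{PCA}$-enriched $\Cat{C}$ this fails, because $\sum_j q_j\cdot g_j$ need not be determined by the products $q_j\cdot g_j$. Take a single object $U$ with $\Cat{C}[U,U]=\{\id{U},\star\}$, $\star$ absorbing for composition, and, for $p\in(0,1)$, $x+_p y=\star$ unless $x=y=\id{U}$. This is a semilattice with top $\star$, hence a pca, and composition preserves it. Then $\tfrac12\cdot\id{U}=\star=\tfrac12\cdot\star$, so the $2\times 1$ matrices $U\to U\piu U$ with entries $(\tfrac12,\id{U}),(\tfrac12,\id{U})$ and $(\tfrac12,\star),(\tfrac12,\id{U})$ are $\equiv$-equivalent. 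Yet composing each with $\codiag{U}$ gives the $1\times 1$ entries $\id{U}+_{\frac12}\id{U}=\id{U}$ and $\star+_{\frac12}\id{U}=\star$, which are not equivalent.

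The paper never proves well-definedness: it is built into its definition of $\stmat{\Cat{C}}$ and presupposed by the statement of this lemma. So do not claim it as routine. Either take the category structure of $\stmat{\Cat{C}}$ as given, as the paper does, or restrict to homsets in which sums are determined by their scaled terms. Free pcas $\Dis(X)$, as in $\Cat{C}^+$, qualify, since there $\sum_j q_j\cdot g_j$ is the pointwise sum. The entrywise identities you verify hold on representatives, so the rest of your argument is unaffected.
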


\begin{lemma}\label{lemma:stmat copca}
Let $\Cat{C}$ be a $\Cat{PCA}$-enriched category. Every object of $\stmat{\Cat{C}}$ is a natural and coherent co-pca object.
\end{lemma}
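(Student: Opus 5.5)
We must show that every object $P=\bigoplus_{k=1}^n U_k$ of $\stmat{\Cat{C}}$, with $\diagp{P}$ and $\bang{P}$ as in \eqref{eq:matmonpca}--\eqref{eq:indmoncopca}, satisfies the axioms of Table~\ref{fig:freecopcacat}. The strategy is to first compute the matrices $\diagp{P}$ and $\bang{P}$ in closed form, and then check each axiom entrywise using the equivalence $\equiv$ and Lemma~\ref{lemma:initialproperties2}.

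\textbf{Closed forms.} By induction on $n$, using the definitions of $\piu$ and $\symm{}{}$ in \eqref{eq:matsmc}, I would show:
\begin{itemize}
\item $\bang{P}$ is the unique $0\times n$ matrix $!_P$;
\item $\diagp{P}$ is the $2n\times n$ matrix $\begin{pmatrix} p\cdot \id{P}\\ (1-p)\cdot \id{P}\end{pmatrix}$, meaning entry $(j,j)$ is $p\cdot\id{U_j}$, entry $(n+j,j)$ is $(1-p)\cdot \id{U_j}$, and all other entries are $0\cdot\star$.
\end{itemize}
The inductive step is a direct block-matrix computation: the symmetry in $(\id{U}\piu\symm{U}{P}\piu\id{P})$ reorders rows so that the two $U$-blocks and the two $P$-blocks land in these positions. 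With the closed forms in hand, the coherence axioms (\eqref{eq:diagp zero coherence}, \eqref{eq:bangp0 coherence}, \eqref{eq:bangp coherence}, \eqref{eq:diagp coherence}) hold by construction: for $P\piu Q$ one applies the inductive definition repeatedly, or equivalently checks that both sides have the same closed form. The same inductive argument gives $\codiag{P}=\begin{pmatrix}\id{P} & \id{P}\end{pmatrix}$, which is needed below.

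\textbf{Axioms without $f$.} Each of these reduces to scalar arithmetic in $[0,1]$, since every nonzero entry is a scalar multiple of an identity.
\begin{itemize}
\item \eqref{eq:diagp idempotency}: the entries of $\diagp{P};\codiag{P}$ are $p\cdot\id{}$ and $(1-p)\cdot\id{}$ combined by the matrix product \eqref{eq:matrixmult}. The diagonal entry is $\sum$ of $p\cdot\id{}$ and $(1-p)\cdot\id{}$, which equals $1\cdot\id{}$ by idempotency of $+_p$ and Lemma~\ref{lemma:initialproperties2}. Off-diagonal entries are $0\cdot\star$.
\item \eqref{eq:diagp symmetry}: swapping the two row blocks exchanges $p$ and $1-p$.
\item \eqref{eq:diagp assoc}: both sides are $3n\times n$ matrices whose blocks are $pq\cdot\id{}$, $p(1-q)\cdot\id{}$ and $(1-p)\cdot\id{}$. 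Here one uses $\tilde p\tilde q = p - pq$ and $(1-\tilde p)(1-\tilde q)=1-p$, together with $r\cdot(s\cdot f)=rs\cdot f$ and the identities $0\cdot\star\,;f=0\cdot\star$ (from \eqref{eq:enr}).
\end{itemize}

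\textbf{Naturality.}
\begin{itemize}
\item \eqref{eq:bangp nat} is immediate: any $f;\,!_Q$ is a $0$-row matrix, hence equal to $!_P$.
\item \eqref{eq:diagp nat} is the main obstacle. Fix $f\colon P\to Q$ with entries $M_{ji}=p_{ji}\cdot f_{ji}$. The product $f;\diagp{Q}$ has block entries $p\cdot(p_{ji}\cdot f_{ji})$ and $(1-p)\cdot(p_{ji}\cdot f_{ji})$. The product $\diagp{P};(f\piu f)$ has entries $\sum_k$ of terms $p_{jk}p\cdot(\id{};f_{jk})$ over $k$, where only $k=i$ contributes a nonzero scalar and the rest are $0\cdot\star$.
\end{itemize}
To compare the two, I will need a lemma: in a pca convex sum, summands with coefficient $0$ can be dropped, so that $\sum_k r_k\cdot g_k = r_i\cdot g_i$ when $r_k=0$ for $k\neq i$. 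This follows from \eqref{eq:def somma n pca} together with $x+_0y=y$, $x+_1y=x$ and the symmetry and associativity laws \eqref{eq:pca}. Since matrix composition uses the normalised form \eqref{eq:matrixmult}, I must also check that the product is well defined modulo $\equiv$, which I take as part of the established category structure. With the lemma, both sides reduce to $pp_{ji}\cdot f_{ji}$ and $(1-p)p_{ji}\cdot f_{ji}$ respectively, by Lemma~\ref{lemma:initialproperties2}(\ref{lemma:pq}), so they are $\equiv$-equal.
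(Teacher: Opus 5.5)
Your proposal follows the paper's own argument: write $\diagp{P}$, $\bang{P}$ (and $\codiag{P}$) in closed form and check each axiom of Table~\ref{fig:freecopcacat} by matrix multiplication. Your explicit treatment of the coherence axioms and of zero-weight summands in \eqref{eq:matrixmult} fills in details the paper leaves implicit. One slip to fix: in the associativity check the middle block of the right-hand side has scalar $(1-\tilde p)\tilde q$, so the identity you need is $(1-\tilde p)\tilde q = p(1-q)$, not $\tilde p\tilde q = p-pq$ (which is false in general).
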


\begin{proposition}\label{prop:cmatrixcb}
     $\stmat{\Cat{C}}$ is a convex biproduct category.
\end{proposition}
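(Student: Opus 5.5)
The plan is to invoke Proposition~\ref{prop:A}. First, $(\stmat{\Cat{C}},\oplus,\zero)$ is a (strict) symmetric monoidal category: composition is matrix multiplication \eqref{eq:matrixmult}, well defined on $\equiv$-classes by Lemma~\ref{lemma:initialproperties2}; associativity and unit laws follow from the pca-enrichment \eqref{eq:enr} and the pca laws \eqref{eq:pca}. Block-diagonal sum \eqref{eq:matsmc} is functorial, since the product of block-diagonal matrices is block-diagonal, and concatenation of words gives strict associativity and unit $\zero$. By Lemmas~\ref{lemma:stmat coproduct} and~\ref{lemma:stmat copca}, every object carries a natural and coherent monoid and co-pca. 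Hence $\stmat{\Cat{C}}$ is a category as in Definition~\ref{definition:structural}, and it remains only to check that the projections $\pi_1,\pi_2$ of \eqref{eq:defpi} are jointly monic.

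Next I compute the projections explicitly. For $P=\bigoplus_{k=1}^n U_k$, unfolding \eqref{eq:indmoncopca} gives $\bang{P}=\bigoplus_k !_{U_k}$, the unique matrix with $0$ rows. Thus $\id{P}\oplus \bang{Q}$ is the block matrix $\begin{pmatrix}\id{P} & \emptyset\end{pmatrix}$, i.e.\ $\pi_1$ is the $n\times(n+m)$ matrix with $1\cdot\id{U_k}$ on the diagonal of the first $n$ columns and $0\cdot\star$ elsewhere. Symmetrically, $\pi_2=\begin{pmatrix}\emptyset & \id{Q}\end{pmatrix}$.

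For joint monicity, take $M,N\colon A\to P\oplus Q$. By \eqref{eq:matrixmult}, each entry of $M;\pi_1$ is a sum with a single term of weight $1\cdot p_{ji}$ (the other terms carry coefficient $0$ and vanish), so the $(j,i)$-entry of $M;\pi_1$ equals $p_{ji}\cdot(f_{ji};\id{})=p_{ji}\cdot f_{ji}$, which is exactly the $(j,i)$-entry of $M$ for $j\le n$. Therefore $M;\pi_1$ is the top block of $M$, and likewise $M;\pi_2$ is the bottom block. Equality of both composites under $\equiv$ means entrywise equality of the values $p_{ji}\cdot f_{ji}$, so $M\equiv N$. Proposition~\ref{prop:A} then yields that $\stmat{\Cat{C}}$ is a convex biproduct category.

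The main obstacle is the bookkeeping in the middle step: showing that a convex sum $\sum_j q_j\cdot g_j$ in which all but one coefficient is $0$ collapses to the remaining scalar multiple. This requires unfolding definition \eqref{eq:def somma n pca} together with $x+_0y=y$, $x+_1y=x$, and Lemma~\ref{lemma:initialproperties2}.\ref{lemma q per somma}, with care over the convention for the $q_j$ when $p_1=1$. The well-definedness and associativity of composition in the first step is similarly routine but tedious, using the associativity law of \eqref{eq:pca} to regroup double convex sums.
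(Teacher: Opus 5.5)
Your proof of the proposition proper is the paper's: Lemmas~\ref{lemma:stmat coproduct} and~\ref{lemma:stmat copca} place $\stmat{\Cat{C}}$ in the setting of Proposition~\ref{prop:A}. The projections are the block matrices $\begin{pmatrix}\mathrm{id} & \emptyset\end{pmatrix}$ and $\begin{pmatrix}\emptyset & \mathrm{id}\end{pmatrix}$, and joint monicity follows because post-composing with them extracts the top and bottom blocks of a matrix.

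\emph{One claim in your first step is wrong, however.} Lemma~\ref{lemma:initialproperties2} does not make matrix multiplication well defined on $\equiv$-classes.
\begin{itemize}
\item The lemma does show that each summand $p'_{uj}p_{ji}\cdot(f_{ji};f'_{uj})=(p_{ji}\cdot f_{ji});(p'_{uj}\cdot f'_{uj})$ depends only on the classes.
\item But in an arbitrary pca the convex sum \eqref{eq:def somma n pca} is not a function of the values $c_j\cdot y_j$ of its summands.
\end{itemize}

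Here is a counterexample. Let $\Cat{C}$ have one object $U$ with $\Cat{C}[U,U]=\{e,\star\}$, where $e$ is the identity and $\star$ is absorbing for composition. Set $e+_pe=e$, and $x+_py=\star$ whenever $\star\in\{x,y\}$. This is a pca, composition is a pca morphism in each variable, and $p\cdot e=\star=p\cdot\star$ for all $p\in(0,1)$.

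Now consider the columns with entries $(p,e),(1-p,e)$ and $(p,\star),(1-p,\star)$. They are $\equiv$. Post-composing both with $\begin{pmatrix}(1,e) & (1,e)\end{pmatrix}$ gives $1\cdot(e+_pe)=e$ and $1\cdot(\star+_p\star)=\star$ respectively, which are not $\equiv$.

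Well-definedness does hold when the hom-pcas are cancellative: iterate the two-summand argument that proves $\circledvee$ well defined in the proof of Lemma~\ref{lemma:pca-pcm}. That covers the categories $\Cat{C}^+$ used later.

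The paper's proof takes the category structure of $\stmat{\Cat{C}}$ for granted, so this does not change the comparison. Still, you should not present well-definedness as a consequence of Lemma~\ref{lemma:initialproperties2}. Either assume a cancellative enrichment, or treat well-definedness as a presupposition of the statement.
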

\begin{proof}
    By Lemmas~\ref{lemma:stmat coproduct} and~\ref{lemma:stmat copca}, every object of $\stmat{\Cat{C}}$ carries natural and coherent monoid and a co-pca structures. 
Thanks to Proposition~\ref{prop:A}, we can now prove the statement by simply showing that $\stmat{\Cat{C}}$ has jointly monic projections.    
    

Consider two objects $Q_1=\bigoplus_{k=1}^{m_1} U_k$ and $Q_2=\bigoplus_{k=1}^{m_2} V_k$. Recall that $\pi_1\colon Q_1\piu Q_2 \to Q_1$ and $\pi_2\colon Q_1\piu Q_2 \to Q_2$ are 
$(\id{Q_1}\piu\, \bangp{Q_2})$ and $(\,\bangp{Q_1}\piu \id{Q_2})$ respectively, which by \eqref{eq:matsmc} and \eqref{eq:indmoncopca} are the $(m_1\times (m_1+m_2))$ and $(m_2\times (m_1+m_2))$ matrices displayed below.
\[\pi_1=\begin{pNiceMatrix}
1\cdot{\id{U_1}}  	& \emptyset & \Cdots & \emptyset & \emptyset  	& \emptyset & \Cdots & \emptyset \\
\emptyset  &   & \Ddots & \Vdots & \emptyset  &   &  & \Vdots \\	
\Vdots & \Ddots &   & \emptyset & \Vdots &  &   & \emptyset \\
\emptyset  & \Cdots & \emptyset  & 1\cdot{\id{U_{m_1}}}  & \emptyset  & \Cdots & \emptyset  & \emptyset
\CodeAfter
\line{1-1}{4-4}
\line{1-5}{4-8}
\tikz \draw[thin,dotted] (2-5.east) -- (4-7.north west);
\tikz \draw[thin,dotted] (1-6.east) -- (3-8.north west);
\end{pNiceMatrix}\]
\[ \pi_2=\begin{pNiceMatrix}
\emptyset  	& \emptyset & \Cdots & \emptyset & 1\cdot{\id{V_1}}  	& \emptyset & \Cdots & \emptyset \\
\emptyset  &   & \Ddots & \Vdots & \emptyset  &   &   & \Vdots \\	
\Vdots & \Ddots &   & \emptyset & \Vdots &  &   & \emptyset \\
\emptyset  & \Cdots & \emptyset  & \emptyset  & \emptyset  & \Cdots & \emptyset  & 1\cdot{\id{V_{m_2}}}
\CodeAfter
\tikz \draw[thin,dotted] (2-5.south east) -- (4-7.north west);
\tikz \draw[thin,dotted] (1-6.south east) -- (3-8.north west);
\line{1-1}{4-4}
\line{1-5}{4-8}
\end{pNiceMatrix}\]
%
Consider $M,N \colon \bigoplus_{k=1}^n W_k \to Q_1 \oplus Q_2$.
Both $M$ and $N$ are $(m_1+m_2)\times n$ matrices that can be viewed as block matrices $M=\begin{pmatrix} M_1 \\ M_2 \end{pmatrix}$ and $N=\begin{pmatrix} N_1 \\ N_2 \end{pmatrix}$
where $M_1,N_1$ are $(m_1\times n)$ stochastic matrices and $M_2,N_2$ are $(m_2\times n)$ stochastic matrices. If $M;\pi_1 = N;\pi_1$ and $M;\pi_2 = N;\pi_2$, by matrix multiplication it follows that $M_1=N_1$ and $M_2=N_2$ and hence $M=N$. 
\end{proof}

Just as categories of matrices are freely generated finite biproduct categories \cite{mac_lane_categories_1978}, similarly  $\stmat{\Cat{C}}$ is the convex biproduct category freely generated by $\Cat{C}$. We  illustrate this below.

First, for all $\Cat{PCA}$-enriched functors $F\colon \Cat{C}\to \Cat{D}$, one can define $\stmat{F}\colon \stmat{\Cat{C}}\to \stmat{\Cat{D}}$ as the functor mapping an object $\bigoplus_{k=1}^n U_k$ to $\bigoplus_{k=1}^n F(U_k)$ and mapping a matrix
$M\colon\bigoplus_{k=1}^n U_k\to \bigoplus_{k=1}^m V_k$ with entries $M_{ji}= p_{ji}\cdot f_{ji}$  into the matrix with entries $\stmat{F}(M)_{ji}\defeq p_{ji}\cdot F(f_{ji})$.
\begin{proposition}\label{prop:stmatfun}
    Let $F\colon \Cat{C}\to \Cat{D}$ be a $\Cat{PCA}$-enriched functor. $\stmat{F} \colon\stmat{\Cat{C}}\to \stmat{\Cat{D}}$ is a morphism of convex biproduct categories.
\end{proposition}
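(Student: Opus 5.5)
We follow the characterisation of morphisms in Proposition~\ref{prop: monoidal functors}: it suffices to show that $\stmat{F}$ is a well-defined functor, that it is strong (indeed strict) monoidal with respect to $\oplus$, and that it preserves monoids and co-pcas. Alternatively, directly by Definition~\ref{def:convbicat}, we need $\stmat{F}$ to be a $\Cat{PCA}$-enriched functor preserving finite coproducts; we plan to verify the monoidal route since all structure in $\stmat{\Cat{C}}$ is given explicitly by matrices.

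First we check well-definedness on equivalence classes: if $p_{ji}\cdot f_{ji}= p'_{ji}\cdot f'_{ji}$ in $\Cat{C}$, then since $F$ is $\Cat{PCA}$-enriched it preserves $+_p$ and $\star$, hence preserves scalar multiplication $p\cdot f = f+_p\star$, so $p_{ji}\cdot F(f_{ji})=F(p_{ji}\cdot f_{ji}) = F(p'_{ji}\cdot f'_{ji}) = p'_{ji}\cdot F(f'_{ji})$. The column constraint $\sum_j p_{ji}\le 1$ is untouched since scalars are unchanged. Preservation of identities is immediate from $F(\id{U})=\id{FU}$ and $F(\star)=\star$. For composition, using \eqref{eq:matrixmult} we compute $\stmat{F}(M'M)_{ui}=F\big(\sum_j p'_{uj}p_{ji}\cdot(f_{ji};f'_{uj})\big)$; since $F$ preserves the $n$-ary convex sums of \eqref{eq:def somma n pca} (they are built inductively from $+_p$ and $\star$, and the coefficients are unchanged) and preserves composition, this equals $\sum_j p'_{uj}p_{ji}\cdot(F f_{ji};F f'_{uj})=(\stmat{F}(M')\stmat{F}(M))_{ui}$. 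This compatibility of $F$ with the derived $n$-ary sums is the only step requiring a small induction and is where I expect the (mild) main obstacle; it is handled by induction on $n$ using the definition \eqref{eq:def somma n pca}.

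Next, $\stmat{F}$ is strict monoidal: on objects it maps words to words componentwise, so $\stmat{F}(P\oplus Q)=\stmat{F}(P)\oplus\stmat{F}(Q)$ and $\stmat{F}(\zero)=\zero$; on arrows, block diagonal matrices \eqref{eq:matsmc} go to block diagonal matrices since $F(\star)=\star$, and the symmetry matrix (built from identities and $0\cdot\star$ entries) is sent to the symmetry. Finally, the generating monoid and co-pca structures \eqref{eq:matmonpca} on single objects $U$ consist of entries $p\cdot\id{U}$, $(1-p)\cdot\id{U}$, $1\cdot \id{U}$ and empty matrices, all mapped to the corresponding structures on $F(U)$; for general $P$ the inductive definitions \eqref{eq:indmoncopca} use only $\oplus$, composition and symmetries, which are preserved, so by induction on the length of $P$ the structures are preserved. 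By Proposition~\ref{prop: monoidal functors} (together with Proposition~\ref{prop:cmatrixcb} ensuring both sides are convex biproduct categories), $\stmat{F}$ is a morphism of convex biproduct categories; $\Cat{PCA}$-enrichment of $\stmat{F}$ also follows since, by Lemma~\ref{lemma:initialproperties}, $+_p$ and $\star$ are expressible via these preserved structures.
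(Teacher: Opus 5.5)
Your proposal is correct and takes the same approach as the paper. The paper observes that $\stmat{F}$ is strict monoidal and preserves the monoid and co-pca structures, then invokes Proposition~\ref{prop: monoidal functors}. The paper leaves implicit the checks you spell out: well-definedness on $\equiv$-classes, and that $F$ preserves the derived $n$-ary convex sums in matrix multiplication. Those checks are exactly what is needed to know $\stmat{F}$ is a functor at all.
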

Then, it is easy to check that the assignment $\Cat{C} \mapsto \stmat{\Cat{C}}$ and $F \mapsto \stmat{F}$ provides a functor $\stmat{-}:\Cat{PCACat}\to \Cat{CBCat}$ which is left adjoint to the forgetful functor $U\colon \Cat{CBCat} \to \Cat{PCACat}$.
\begin{theorem}\label{thm:matfree}
    $\stmat{-} \colon \Cat{PCACat}\to \Cat{CBCat}$ is left adjoint to $U\colon \Cat{CBCat} \to \Cat{PCACat}$.
\end{theorem}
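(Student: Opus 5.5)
I will prove the adjunction by exhibiting a unit and establishing its universal property: for every $\Cat{PCA}$-enriched category $\Cat{C}$, every convex biproduct category $\Cat{D}$ and every $\Cat{PCA}$-enriched functor $F\colon \Cat{C}\to U(\Cat{D})$, there is a unique morphism of convex biproduct categories $F^\sharp\colon \stmat{\Cat{C}}\to\Cat{D}$ with $\eta_{\Cat{C}};F^\sharp = F$. Naturality of the resulting bijection in $\Cat{C}$ and $\Cat{D}$ then follows from uniqueness in the standard way, using Proposition~\ref{prop:stmatfun} to know that $\stmat{-}$ is a functor into $\Cat{CBCat}$.

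\textbf{The unit.} Define $\eta_{\Cat{C}}\colon \Cat{C}\to \stmat{\Cat{C}}$ on objects by sending $U$ to the one-letter word $U$, and on arrows by sending $f$ to the $1\times 1$ matrix $(1\cdot f)$. Since a $1\times1$ matrix $(p\cdot f)$ is determined up to $\equiv$ by $p\cdot f\in\Cat{C}[U,V]$, $\eta_{\Cat{C}}$ is full and faithful on these homsets. It is $\Cat{PCA}$-enriched because the pca structure of Definition~\ref{definition:structural} on $1\times 1$ matrices, namely $\diagp{U};(M\oplus N);\codiag{V}$ computed with \eqref{eq:matmonpca}, gives exactly $(p\cdot f +_p \ldots)$, i.e.\ the convex sum in $\Cat{C}$. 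Here I use Lemma~\ref{lemma:initialproperties}(2) together with the matrix formula \eqref{eq:matrixmult}.

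\textbf{The extension.} Set $F^\sharp(\bigoplus_k U_k)\defeq\bigoplus_k F(U_k)$, choosing a fixed bracketing of the $\oplus$ in $\Cat{D}$. For a matrix $M$ with entries $p_{ji}\cdot f_{ji}$, let $F^\sharp(M)$ be the unique arrow determined column-wise and row-wise as follows. Its restriction along $\iota_i$ is $\langle F f_{1i},\dots,F f_{mi}\rangle_{p_{1i},\dots,p_{mi}}$, which uses the $n$-ary convex products of Lemma~\ref{lemma:naryconvexproduct} and is valid because $\sum_j p_{ji}\le1$. Then $F^\sharp(M)=[\,\cdot\,]_i$ is the copairing of these columns. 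Equivalently, $\iota_i;F^\sharp(M);\pi_j=F(p_{ji}\cdot f_{ji})=p_{ji}\cdot Ff_{ji}$, so $F^\sharp$ is well defined on $\equiv$-classes.

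\textbf{Checks.} I have to verify the following.
\begin{enumerate}
\item Identities are preserved, by \eqref{eq:delta}.
\item Coproducts, $\oplus$ and the structures \eqref{eq:matmonpca} are preserved, since coprojections and codiagonals in $\stmat{\Cat{C}}$ are matrices of $0\cdot\star$ and $1\cdot\id{}$.
\item $F^\sharp$ is $\Cat{PCA}$-enriched.
\item $\eta;F^\sharp=F$, which is immediate.
\item Composition is preserved: $F^\sharp(M'M)=F^\sharp(M);F^\sharp(M')$.
\end{enumerate}
Uniqueness is immediate. Any morphism $G$ of convex biproduct categories extending $F$ preserves coproducts, hence also convex products by Proposition~\ref{prop:functor 1 e mezzo}, and $\Cat{PCA}$-structure. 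Every matrix $M$ is the copairing of its columns, and each column is the tupling of its $1\times1$ entries, namely $M=[\langle \eta f_{1i},\dots\rangle_{\vec p}]_i$ in $\stmat{\Cat{C}}$ by Lemma~\ref{lemma:diagpq3}. So $G(M)$ is forced to equal $F^\sharp(M)$.

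\textbf{Main obstacle.} The main obstacle is preservation of composition, because projections in $\Cat{D}$ need not be jointly monic (the remark after Proposition~\ref{prop:A}). So I cannot simply compare entries $\iota_i;-;\pi_u$. Instead I will compute $\iota_i;F^\sharp(M);F^\sharp(M')$ directly. By Lemma~\ref{lemma:diagpq3} the column $\iota_i;F^\sharp(M)$ equals $\diaggen{\vec p}{}; \bigoplus_j Ff_{ji}$. Composing with the copairing $F^\sharp(M')$ gives $\diaggen{\vec p}{};\bigoplus_j (Ff_{ji};\iota_j;F^\sharp(M'))$, followed by the codiagonal. Each $\iota_j;F^\sharp(M')$ is itself a tupling, $\langle Ff'_{uj}\rangle_{\vec{p'}_j}$. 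Using naturality of $\diagp{}$ and $\codiag{}$, the coherence axioms, and the uniqueness clause of convex products, I rewrite this arrow as a single tupling $\langle \sum_j p'_{uj}p_{ji}\cdot F(f_{ji};f'_{uj})\rangle_u$ whose weights are $r_{ui}$. Uniqueness of mediating maps, applied with the convex-product universal property (not joint monicity), then gives equality with the $i$-th column of $F^\sharp(M'M)$. To do this rewriting, I will first prove an auxiliary lemma in $\Cat{D}$: a convex combination $\sum_j q_j\cdot\langle g_{uj}\rangle_{\vec s_j}$ of tuplings equals the tupling $\langle\ldots\rangle$ whose weights and entries match \eqref{eq:matrixmult}. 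I would prove this lemma by induction on the number of summands, using Lemma~\ref{lemma:initialproperties2} and the binary case via \eqref{eq:diagp coherence}.
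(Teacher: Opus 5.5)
Your construction of $\eta_{\Cat{C}}$ and $F^\sharp$ matches the paper's: each column is the $n$-ary convex tupling of the $F f_{ji}$ with weights $p_{ji}$, and $F^\sharp(M)$ is the copairing of the columns. Your well-definedness argument via the uniqueness clause and your uniqueness argument via preservation of coproducts and convex products (Proposition~\ref{prop:functor 1 e mezzo}) are also the paper's. You diverge in how you show that $F^\sharp$ is a morphism of convex biproduct categories. The paper only checks that $F^\sharp$ sends $\diagp{U}$, $\bangp{U}$, $\codiag{U}$, $\cobang{U}$ and $(1\cdot f)$ to the corresponding data in $\Cat{D}$, and then invokes Proposition~\ref{prop: monoidal functors}. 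That proposition presupposes $F^\sharp$ is already a strong monoidal functor, so the paper never explicitly checks preservation of composition. You verify it directly, and you are right that this must go through the uniqueness clause of convex products rather than joint monicity. Your argument is therefore more complete than the paper's, at the cost of one extra lemma.

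That lemma is easier than your plan suggests. You will need the $n$-ary forms of Lemma~\ref{lemma:diagpq3} and of item (2) of Lemma~\ref{lemma:initialproperties}; both are routine inductions matching \eqref{eq: diagpn cap equational presentation} against \eqref{eq:def somma n pca}. With them, $\iota_i;F^\sharp(M);F^\sharp(M')=\sum_j p_{ji}\cdot k_j$ with $k_j=Ff_{ji};\iota_j;F^\sharp(M')$. Postcomposing with $\pi_u$ and using that $-;\pi_u$ preserves $+_p$ and $\star$ (hence $n$-ary sums) gives $\sum_j (p_{ji}p'_{uj})\cdot F(f_{ji};f'_{uj})$. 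The flattening identity $\sum_j q_j\cdot(s_j\cdot x_j)=\sum_j (q_j s_j)\cdot x_j$ is the only place an induction on summands is needed, and it follows from Lemma~\ref{lemma:initialproperties2} and pca associativity. By item 3 of that lemma the result equals $r_{ui}\cdot F\bigl(\sum_j \frac{p'_{uj}p_{ji}}{r_{ui}}\cdot (f_{ji};f'_{uj})\bigr)$ when $r_{ui}>0$, and $\star$ when $r_{ui}=0$. Since $\sum_u r_{ui}\le 1$ and the $i$-th column of $F^\sharp(M'M)$ is by definition the unique arrow with exactly these projections, the uniqueness clause finishes. You do not need to first rewrite the composite as a tupling, and no appeal to \eqref{eq:diagp coherence} is required.
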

We conclude this section with a simple observation that will be useful later in Section~\ref{sec:probbooltapes}.
\begin{proposition}\label{prop:counit fullfaithful}
    For any convex biproduct category $\Cat{C}$, the counit of the adjunction above\\ $\epsilon_\Cat{C}\colon \stmat{U(\Cat{C})}\to \Cat{C}$ is a full and faithful morphism of convex biproduct categories.
\end{proposition}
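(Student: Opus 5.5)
First I fix the counit explicitly. For a convex biproduct category $\Cat{C}$, the functor $\epsilon_\Cat{C}\colon \stmat{U(\Cat{C})}\to \Cat{C}$ sends an object $\bigoplus_{k=1}^n U_k$ (a word of objects of $\Cat{C}$) to the actual convex biproduct $U_1\oplus\dots\oplus U_n$ in $\Cat{C}$, and the empty word to $\zero$. A matrix $M$ with entries $M_{ji}=p_{ji}\cdot f_{ji}$ is sent to the arrow determined by coproduct/convex-product universality:
\[
\epsilon(M)\defeq\bigl[\langle f_{11},\dots,f_{m1}\rangle_{p_{\cdot 1}},\dots,\langle f_{1n},\dots,f_{mn}\rangle_{p_{\cdot n}}\bigr].
\]
Here the $i$-th column is interpreted via the $m$-ary convex product of Lemma~\ref{lemma:naryconvexproduct}; this is legitimate because $\sum_j p_{ji}\le 1$. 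The result is well defined on $\equiv$-classes, since the tupling depends only on the composites $p_{ji}\cdot f_{ji}$ by uniqueness. That $\epsilon$ is a morphism of convex biproduct categories is part of Theorem~\ref{thm:matfree}, so I take it as given. What remains is full faithfulness.

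\textbf{Key identity.} For every arrow $h\colon\bigoplus_i U_i\to\bigoplus_j V_j$ in $\Cat{C}$, and for all $i,j$, I expect
\[
\epsilon(M)_{ji}\defeq \iota_i;\epsilon(M);\pi_j = p_{ji}\cdot f_{ji}.
\]
This follows directly from the coproduct equation $\iota_i;[\dots]=\langle\dots\rangle$ and the convex product equation $\langle\dots\rangle;\pi_j=p_{ji}\cdot f_{ji}$.

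\textbf{Faithfulness.} Suppose $\epsilon(M)=\epsilon(N)$. By the key identity, $p_{ji}\cdot f_{ji}=q_{ji}\cdot g_{ji}$ for all $i,j$, and this is exactly $M\equiv N$.

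\textbf{Fullness.} This is the main obstacle. Given $h$, I want to write $h=\epsilon(M)$. Each $\iota_i;h;\pi_j$ is a plain arrow of $\Cat{C}$, but a matrix needs entries of the form $p_{ji}\cdot f_{ji}$ satisfying $\sum_j p_{ji}\le 1$. Moreover, projections need not be jointly monic, so $h$ is not obviously determined by these entries. I avoid both problems by using the universal property instead of componentwise reasoning:
\begin{enumerate}
\item By the coproduct property, $h=[h_1,\dots,h_n]$ with $h_i\defeq\iota_i;h$. So it suffices to treat a single column $h_i\colon U_i\to\bigoplus_j V_j$.
\item Take the trivial scalars $p_j=1$ for $j=1$ and $p_j=0$ otherwise, so that $\sum_j p_j\le1$. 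This alone does not decompose $h_i$. I want instead a tuple representation $h_i=\langle g_1,\dots,g_m\rangle_{\vec p}$ with $\sum_j p_j\le 1$.
\item Obtain it by induction on $m$, using binary convex products. Write the codomain as $V_1\oplus Q$. I need $h_i=\langle g,k\rangle_{p,q}$ for some $p,q$ with $p+q\le1$. The natural candidate is $p=q=\tfrac12$, with $g$ and $k$ chosen so that $\tfrac12\cdot g=h_i;\pi_1$; but that requires divisibility, which is not available.
\item Instead use $p=q=1$? This is not allowed, since $p+q\le1$ fails.
\end{enumerate}
So the candidate plan is to use the identity $h_i=\diagp{};\ldots$ composed through the codiagonal: $h_i=h_i;\diagpX{1/2}{};\codiag{}$. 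Then I would rewrite $h_i;\diagpX{1/2}{}$ via Lemma~\ref{lemma:diagpq3} as $\langle h_i,h_i\rangle_{1/2,1/2}$, and expand each $h_i$ through projections into the $V_j$'s with the scalar $1/(2m)$-type weights given by the $n$-ary product tupling of the sums $h_i;\pi_j$, reassembling via $\codiag{}$ and the enrichment identity Lemma~\ref{lemma:initialproperties}(\ref{eq: assioma aggiuntivo}).

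I expect this reassembly, namely showing that every $h_i$ equals some $\langle g_j\rangle_{\vec p}$, to be the crux. If it fails directly, I would fall back on the co-pca coherence law (\ref{eq:diagp coherence}) together with naturality of $\codiag{}$ to move $h_i$ past the split.
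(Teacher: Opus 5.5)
Your faithfulness argument is correct and is exactly the paper's. Copairing reduces to columns, and postcomposing a column with $\pi_j$ recovers $p_{ji}\cdot f_{ji}$, which is the relation $\equiv$.

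\textbf{The gap is fullness, and it cannot be closed.} You correctly reduce to a single column and correctly identify what is needed. Every $h\colon U\to V_1\oplus\dots\oplus V_m$ in $\Cat{C}$, with $U$ a single letter, would have to equal some $\langle g_1,\dots,g_m\rangle_{\vec p}$ with $\sum_j p_j\le 1$. That is, each $h;\pi_j$ would have to factor as $p_j\cdot g_j$, with \emph{uniform} scalars summing to at most $1$. This is false in general. No rewriting via $\frac12$-splits, codiagonals or the coherence law can produce it: those manipulations only ever yield convex tuples, whose components are scaled copies of the $h;\pi_j$.

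\textbf{Counterexample.} Work in $\KlD$, which the paper shows is a convex biproduct category. Take the word of length one whose letter is the two-element set $2$, the word of length two whose letters are both the singleton $1$, and the canonical bijection $h\colon 2\to 1\oplus 1$. Any $2\times 1$ matrix with entries $p_1\cdot g_1$ and $p_2\cdot g_2$ is sent to an arrow $h'$ with $h';\pi_j=p_j\cdot g_j$. Such an arrow assigns total mass at most $p_j$ to every input. Since $h;\pi_1$ has mass $1$ at input $0$ and $h;\pi_2$ has mass $1$ at input $1$, you would need $p_1=p_2=1$. So $h$ is not in the image, and $\epsilon_{\KlD}$ is not full on that hom-set.

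Projections in $\KlD$ \emph{are} jointly monic, so the concern you raised about joint monicity is not the obstruction. The obstruction is that the weights of a convex tuple cannot depend on the input. This is the same phenomenon the paper records when it notes that $\pi_1,\pi_2\colon X\oplus X\to X$ are compatible but not summable. It is also why, as the Conclusion says, the correspondence between arrows $A\oplus B\to C\oplus D$ and quadruples of components breaks down.

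\textbf{What the paper actually proves.} The paper's fullness argument is only that $f=\epsilon(\eta(f))$ for every arrow $f$ of $\Cat{C}$. This shows every arrow of $\Cat{C}$ lies in the image of $\epsilon$, i.e.\ fullness on hom-sets between one-letter words. By copairing columns $(1\cdot h_i)$, it extends to every hom-set whose codomain is a one-letter word. It does not touch multi-letter codomains, which is exactly the case you flagged as the crux.

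The provable content is therefore this: $\epsilon_{\Cat{C}}$ is a faithful morphism of convex biproduct categories that is surjective on objects and on arrows. Hom-set-wise fullness for arbitrary words fails, so the reassembly step you describe should be abandoned rather than repaired.
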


%% file: sections/syntactic.tex
\section{An equational presentation of stochastic matrices}\label{sec:syntactic}    

\begin{table}[t]
\resizebox{\textwidth}{!}{%
$
\begin{array}{c}
\toprule
\begin{array}{c}
    \begin{array}{ccccc}
        {\diagp{U}\colon U \to U\oplus U} &
        {\bang{U} \colon U \to \zero} &
        {\sigma_{U, V}^{\piu} \colon U \piu V \to V \piu U} &
        {\codiag{U}\colon U \piu U \to U} &
        {\cobang{U} \colon \zero \to U}   
    \end{array}    
\\[0.3em]
    \begin{array}{ccccc}
        {id_\zero \colon \zero \to \zero} &
        {id_U \colon U \to U} &
        \inferrule{c \colon U \to V}{\tapeFunct{c}\colon U \to V} &
        \inferrule{\t \colon P \to Q \and \s \colon Q \to R}{\t ; \s \colon P \to R} &
        \inferrule{\t \colon P_1 \to Q_1 \and \s \colon P_2 \to Q_2}{\t \piu \s \colon P_1 \piu P_2 \to Q_1 \piu Q_2}       
    \end{array}
\end{array}
\\[0.5em]
\begin{array}{cc}
\midrule
\begin{array}{c}
(f;g);h=f;(g;h) \qquad id_P;f=f=f;id_Q\\
(f_1\piu f_2) ; (g_1 \piu g_2) = (f_1;g_1) \piu (f_2;g_2)
\end{array} 
&
\begin{array}{c}
id_{\zero}\piu f = f = f \piu id_{\zero} \qquad (f \piu g)\, \piu h = f \piu \,(g \piu h) \\
\sigma_{P, Q}; \sigma_{Q, P}= id_{P \piu Q} \qquad (\gen \piu id_R) ; \sigma_{Q, R} = \sigma_{P,R} ; (id_R \piu \gen)
\end{array}
\end{array}\\
\bottomrule
\end{array}
$
}
\caption{Typing rules (top) and axioms (bottom) for freely generated strict symmetric monoidal categories.}
\label{fig:freestricmmoncatax}
\end{table}

So far, we have seen that given a category, one can first freely enrich it over $\Cat{PCA}$, and then obtain a convex biproduct category by taking its category of stochastic matrices.
Now, we give a syntactic description, in terms of generators and equations, of such a category.

For a category $\Cat{C}$, we consider terms generated by the following grammar
  
\begin{equation}\label{tapesGrammar}
\setlength{\arraycolsep}{3pt}
\begin{array}{rcccccccccccccccccccc}
\t & ::= & \diagp{U} & \mid & \bangp{U} & \mid & \tapeFunct{c} & \mid & \cobang{U} & \mid & 
\codiag{U} & \mid & \id{U} & \mid & \id{\zero} & \mid & \sigma_{U,V}^{\piu} & \mid & \t ; \t & \mid & \t \piu \t
\end{array}
\end{equation}
%
%
where $p\in (0,1)$, $U,V \in \ob{\Cat{C}}$, and $c$ is an arrow in $\Cat{C}$. Terms are typed according to the rules at the top of Table~\ref{fig:freestricmmoncatax}: each type is an arrow $P \to Q$ where $P,Q\in \ob{\Cat{C}}^*$ are regarded as sums of objects of $\Cat{C}$. As expected, we consider only those terms that are typable. For arbitrary $P \in \ob{\Cat{C}}^*$, we define
$\diagp{P},\bangp{P},\cobang{P},\codiag{P}$ as in \eqref{eq:indmoncopca}. Analogous inductive definitions give us $\id{P}$ and $\symmp{P}{Q}$. 

The category $\CatTapeC$ has as set of objects $\ob{\Cat{C}}^*$. Arrows in $\CatTapeC[P,Q]$ are terms modulo the axioms of natural and coherent monoids, co-pcas, strict symmetric monoidal categories (respectively in Tables~\ref{fig:freestrictfccat}, \ref{fig:freecopcacat} and~\ref{fig:freestricmmoncatax} bottom) and the following two axioms.
\begin{equation}\label{eq:TapeFunctAxioms}
\tapeFunct{\id{P}}=\id{P} \qquad \tapeFunct{c;d} = \tapeFunct{c}; \tapeFunct{d} \tag{Tape}
\end{equation}
Identities, composition, symmetries and monoidal product are defined as for terms. It is thus immediate to see that $(\CatTapeC,\oplus, \zero)$ forms a symmetric monoidal category that, like in Definition~\ref{definition:structural}, is equipped with natural and coherent monoid and co-pca structures. To prove that $\CatTapeC$ is a convex biproduct category, by means of Proposition~\ref{prop:A}, we need to prove that projections are jointly monic.


%
%
%
%

We crucially rely on the fact that every object $P$ of $\CatTapeC$ is of the form $\bigoplus_{i=1}^nU_i$ where $\oplus$ is a coproduct by Lemma~\ref{lemma:coproducts}. Thanks to the universal property of coproducts, any $\t\colon\bigoplus_{j=1}^mV_j \to \bigoplus_{i=1}^nU_i$ is the copairing $[\t_1, \dots, \t_m]$ for $\t_j=\iota_j; \t$. One can thus restrict to considering the case of arrows of type $ V_j \to \bigoplus_{i=1}^nU_i$. These arrows enjoy a convenient normal form, illustrated in the next lemma. First, for all $n\in \mathbb{N}$ and $\vec{p}=p_1,\dots ,p_n$ with $p_i\in[0,1]$ such that $\sum_{i=1}^n p_i\leq 1$, we inductively define $\diagpn{\vec{p}\;}{U}{n} \colon U \to \bigoplus_{i=1}^nU$ as follows
\begin{equation}\label{eq: diagpn cap equational presentation}
\diagpn{\vec{p}\;}{U}{0}\defeq \bang{U} \qquad \diagpn{\vec{p}\;}{U}{n+1}\defeq\, \diagpn{\;p_{1}}{U}{}; (\id{U} \oplus \diagpn{\;\vec{q}\;}{U}{n})
\end{equation}
where $\vec{q}=q_1,\dots q_{n}$ for $q_i=0$ if $p_{1}= 1$ and $q_i=\frac{p_{i+1}}{1-p_{1}}$ otherwise. Note that the arrow in \eqref{eq: diagpn cap equational presentation} coincides with the one in \eqref{diagpq generalizzata} for $n=2$.
\begin{lemma}\label{decomposition}
For all $\t\colon U \to \bigoplus_{i=1}^n Q_i$, there exist $\vec{p}=p_1,\dots ,p_n$ and $t_i\colon U \to Q_i$ such that \[\t=\,\diagpn{\;\vec{p}\;}{U}{n}; \bigoplus_{i=1}^n \t_i\text{.}\]
\end{lemma}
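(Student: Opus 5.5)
The plan is to argue by structural induction on the term $\t$, or rather on a representative term of the equivalence class. The type of $\t$ has a single object $U$ as source, but subterms may have sources that are arbitrary sums. So I would first strengthen the statement to all terms $\t\colon \bigoplus_{j=1}^m V_j \to \bigoplus_{i=1}^n Q_i$. The strengthened claim says that each restriction $\iota_j;\t$ admits such a normal form. Equivalently, $\t = [\,\diagpn{\vec{p}^{\,1}}{V_1}{n};\bigoplus_i \t_{1i},\dots,\diagpn{\vec{p}^{\,m}}{V_m}{n};\bigoplus_i \t_{mi}\,]$, using that $\oplus$ is a coproduct by Lemma~\ref{lemma:coproducts}. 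Here $Q_i$ is read as an object of $\Cat{C}$; the lemma is applied with $Q_i$ single objects $U_i$.

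\textbf{Base cases.} These are the generators.
\begin{itemize}
\item For $\tapeFunct{c}$ and $\id{U}$, take $n=1$ and $p_1=1$, so that $\diagpn{1}{U}{1} = \diagp{}$ with $p=1$ followed by $\id{U}\oplus\bang{U}$, which is $\id{U}$.
\item For $\diagp{U}$, take $\vec p=(p,1-p)$ with identity components; the unfolding of \eqref{eq: diagpn cap equational presentation} reduces by \eqref{eq:diagp idempotency}-type reasoning ($\diagpn{1}{U}{1}=\id{U}$).
\item For $\bang{U}$, take $n=0$.
\item For $\codiag{U}$, restrict along $\iota_1,\iota_2$. Each restriction is $\id{U}$ by \eqref{eq:codiag unital}, so it falls under the first case.
\item $\cobang{U}$ and $\id{\zero}$ have empty source, hence nothing to prove.
\item The symmetry $\sigma$ restricted along $\iota_j$ is a coprojection. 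A coprojection $\iota_k\colon U\to\bigoplus_i Q_i$ is $\diagpn{\vec e_k}{U}{n}$ with identity at position $k$ and $\cobang{}$ elsewhere. This uses $\diagp{}$ at $p=0,1$ and $\star=\bang{};\cobang{}$.
\end{itemize}

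\textbf{Monoidal product.} For $\t\oplus\s$, the restriction along a coprojection into the left summand is $\iota_j;\t$ followed by the coprojection. This reduces to composing a normal form with coprojections, which pads $\vec p$ with zeros and components $\cobang{}$.

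\textbf{Composition.} This is the main obstacle. Given $\iota_j;\t = \diagpn{\vec p}{V}{m};\bigoplus_k \t_k$, each $\t_k\colon W_k\to\bigoplus_i Q_i$ has, by induction, a normal form $\diagpn{\vec r^{\,k}}{W_k}{n};\bigoplus_i \s_{ki}$. Then $\iota_j;\t;\s$ equals $\diagpn{\vec p}{V}{m};\bigoplus_k(\diagpn{\vec r^{\,k}}{W_k}{n};\bigoplus_i\s_{ki})$, and the task is to rewrite this into a single normal form. I would proceed as follows.
\begin{enumerate}
\item Use naturality \eqref{eq:diagp nat} to pull each $\diagpn{\vec r^{\,k}}{}{}$ back through the components of $\t$. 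It becomes a split on $V$, after first writing $\t_k$ itself as $\iota$-pieces of arrows $V\to W_k$.
\item Use the generalised associativity \eqref{eq:diagp assoc} and the symmetry \eqref{eq:diagp symmetry} to collect everything into an $(m\cdot n)$-fold split $\diagpn{(p_k r^k_i)_{k,i}}{V}{mn}$, followed by $\bigoplus_{k,i}(f_k;\s_{ki})$ and a permutation grouping by $i$.
\item Use $\codiag{}$ (via its coherence) to merge the $m$ branches landing in each $Q_i$.
\end{enumerate}

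The merging in step 3 relies on a lemma showing that $\diagpn{\vec a}{}{};(f_1\oplus\cdots\oplus f_m);\codiag{}$ equals $\diagpn{(\sum a)}{}{1}$ followed by the $\Cat{PCA}$-sum $\sum_k \frac{a_k}{\sum a}\cdot f_k$ from Lemma~\ref{lemma:enrichmentpca}. Branches with weight zero are eliminated using $\bang{};\cobang{}$ and naturality. I expect the bookkeeping for the reassociation of $n$-ary splits to be the hardest part, so I would first prove an auxiliary lemma: $\diagpn{\vec p}{}{m};\bigoplus_k\diagpn{\vec r^{\,k}}{}{n}$ equals $\diagpn{(p_kr^k_i)}{}{mn}$. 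I would prove it by induction on $m$ using \eqref{eq:diagp assoc} and \eqref{eq:diagp symmetry}, mirroring the standard derivation of $n$-ary convex sums in a pca.
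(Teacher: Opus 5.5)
Your route is genuinely different from the paper's. The paper first rewrites every arrow into a pre-normal form (Lemma~\ref{lemma:prenormalform}: a layer of splits, then discards, then boxes, then joins). From it, the paper deduces a trichotomy for arrows out of a single $U$ (Lemma~\ref{lemma:division}): $\bang{U};\cobang{}$, a single branch, or $\diagp{U};(\s_1\oplus\s_2)$ with the cut placeable at \emph{any} position $j$. It then inducts on $n$, cutting at the boundary of $Q_1$.

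You instead show that normal forms are closed under every term constructor, the composition case being a syntactic matrix product as in \eqref{eq:matrixmult}. This avoids the global rewriting argument behind the pre-normal form and yields the weights explicitly. The price is developing by hand the algebra of $n$-ary splits (flattening, permutation, zero-padding), which the paper uses only implicitly when it reshapes the split layer in Lemma~\ref{lemma:division} via \eqref{eq:diagp assoc} and \eqref{eq:diagp symmetry}.

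Your base cases and $\oplus$ case are fine, except that padded components should be $\star$, not $\cobang{}$, since components must have source $V$. In the composition case your displayed composite is ill-typed. You must keep the components $\t_k$ in front of the inner splits, and write $\s$ as the copairing $(\bigoplus_k\iota_k;\s);\codiag{Q}^m$ with $Q\defeq\bigoplus_i Q_i$. With that, steps 1--3 go through.

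There is, however, a genuine gap: you prove the statement only when every $Q_i$ is an object of $\Cat{C}$, asserting that this is how the lemma is used. It is not. $Q_i$ ranges over arbitrary objects of $\CatTapeC$, and the lemma is applied to $\t\colon U\to Q\oplus R$ with arbitrary $Q,R$ in two places:
\begin{itemize}
\item Lemma~\ref{lemma:TC jmono} (joint monicity of the projections out of every $Q\oplus R$);
\item the last clause of Theorem~\ref{thm:TCconvexbiproductcategory}.
\end{itemize}

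The repair is cheap with your own tools. Prove your auxiliary lemma with inner splits of varying arities,
\[
\diagpn{\vec b\;}{U}{n};\bigoplus_{i=1}^n\diagpn{\vec c^{\,i}\;}{U}{m_i}=\diagpn{\vec a\;}{U}{m_1+\dots+m_n},
\qquad a_{(i,k)}=b_ic^i_k \text{ in lexicographic order,}
\]
and read it from right to left. Start from the atomic normal form $\diagpn{\vec a\;}{U}{s};\bigoplus_{l=1}^s f_l$ with respect to the monomials of $\bigoplus_i Q_i$. Let $b_i$ be the total weight of the $i$-th block and $\vec c^{\,i}$ the renormalised weights inside it (arbitrary when $b_i=0$). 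Then $\t_i\defeq\diagpn{\vec c^{\,i}\;}{U}{m_i};\bigoplus_{l\in\text{block }i}f_l$ works.

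This same right-to-left reading is also all that your step~3 needs, after the permutation grouping by $i$ (obtained from \eqref{eq:codiag coherence}). The block $\diagpn{\vec c^{\,i}\;}{V}{m};\bigoplus_k f_{ki};\codiag{Q_i}^m$ is already a legitimate component $V\to Q_i$. So the reformulation as a pca-sum divided by $\sum a$ is unnecessary.
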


The second key property is that the enrichment, defined as in \eqref{eq:enrichment}, is cancellative.

\begin{lemma}\label{cancellativity}[Cancellativity]
For all $r\in (0,1)$, for all $\s,\t \colon P \to Q$, 
if $r\cdot \s = r\cdot \t$ then $\s = \t$.
\end{lemma}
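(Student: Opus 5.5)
The plan is to reduce first to a source consisting of a single object and then to use the normal form of Lemma~\ref{decomposition}. Write $P=\bigoplus_{j=1}^m V_j$. By Lemma~\ref{lemma:coproducts}, $\oplus$ is a coproduct in $\CatTapeC$, so $\s=\t$ holds exactly when $\iota_j;\s=\iota_j;\t$ for every $j$. Scalar multiplication is compatible with precomposition, because $r\cdot \s=\diagp{P}^{(r)}$-style expressions satisfy the naturality law \eqref{eq:diagp nat} and \eqref{eq:bangp nat} (concretely, Lemma~\ref{lemma:initialproperties2}.\ref{lemma:p;} holds once Lemma~\ref{lemma:enrichmentpca} gives the enrichment). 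Hence $r\cdot\s=r\cdot\t$ implies $r\cdot(\iota_j;\s)=r\cdot(\iota_j;\t)$. We may therefore assume $P=U$ is a single object of $\Cat{C}$. We may likewise decompose the target as $Q=\bigoplus_{i=1}^n U_i$ with each $U_i$ an object of $\Cat{C}$.

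Next, apply Lemma~\ref{decomposition} to get $\s=\diagpn{\vec p\;}{U}{n};\bigoplus_i \s_i$ and $\t=\diagpn{\vec q\;}{U}{n};\bigoplus_i\t_i$ with $\s_i,\t_i\colon U\to U_i$. Iterating that lemma, or using a refined version of it, I would push the normal form further. For a single target object $U_i$, a term $U\to U_i$ should reduce to a convex combination $\sum_k a_k\cdot\tapeFunct{c_k}$ of images of arrows of $\Cat{C}$ (using \eqref{eq:TapeFunctAxioms} and naturality to collapse composites), or to $\star$. This gives a "matrix" representation of each term: every entry is a scalar together with a subdistribution over $\Cat{C}[U,U_i]$. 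Multiplying by $r$ acts as $\diagpn{}{}{}$ with parameters $r\vec p$. This follows from \eqref{eq:diagp assoc} together with $\bangp{}$ absorbing the discarded branch, so that $r\cdot\s=\diagpn{\;r\vec p\;}{U}{n};\bigoplus_i\s_i$.

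The core step is to extract the data back out of $r\cdot\s$. The cleanest way is to avoid a syntactic uniqueness argument. Instead, I would define a PCA-enriched, coproduct-preserving interpretation $\llbracket-\rrbracket\colon\CatTapeC\to\stmat{\Cat{C}^+}$ that sends $\tapeFunct{c}$ to the $1\times 1$ matrix $(1\cdot\delta_c)$ and sends the structural generators to \eqref{eq:matmonpca}. This is well defined because $\stmat{\Cat{C}^+}$ satisfies all the axioms, by Lemmas~\ref{lemma:stmat coproduct} and \ref{lemma:stmat copca} and functoriality of $\eta$. I would then show that $\llbracket-\rrbracket$ is faithful on normal forms: two normal forms with the same matrix are provably equal, using \eqref{eq:diagp assoc}, \eqref{eq:diagp symmetry}, idempotency and the \eqref{eq:TapeFunctAxioms} laws to rewrite one into the other. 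With faithfulness in hand, cancellativity transfers from $\stmat{\Cat{C}^+}$. There, entries are pairs $(p_{ji},d_{ji})$ with $d_{ji}\in\Dis(\Cat{C}[U_i,V_j])$ taken up to $\equiv$, which amounts to entries in $\Dis(\Cat{C}[U_i,V_j])$ itself. Multiplication by $r$ multiplies each subdistribution pointwise by $r$, and $\Dis(X)$ is cancellative since $r>0$. So $\llbracket r\cdot\s\rrbracket=\llbracket r\cdot\t\rrbracket$ gives $\llbracket\s\rrbracket=\llbracket\t\rrbracket$, and hence $\s=\t$.

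The main obstacle is the faithfulness claim. Showing that equal matrices force provably equal normal forms is essentially a completeness argument for the convex-algebra fragment. It requires a canonical form for convex combinations of the $\tapeFunct{c_k}$ in which duplicate arrows are merged by idempotency and the distribution is presented in a fixed order. I would first prove that any term $U\to U_i$ equals $\diagpn{\;\vec a\;}{U}{k};\bigoplus_k\tapeFunct{c_k};\codiag{}$ with the $c_k$ distinct. Then I would show that such canonical forms are determined by the subdistribution $\sum_k a_k\delta_{c_k}$, using \eqref{eq:diagp assoc} and \eqref{eq:diagp symmetry} to permute branches. If this turns out to be circular with later results of the paper, the fallback is a direct syntactic argument: compare $r\vec p$ with $r\vec q$ entrywise, and then cancel inside the free pca $\Dis(\Cat{C}[U,U_i])$.
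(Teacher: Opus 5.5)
Your reduction to a single source object via coproducts matches the last step of the paper's proof, and the semantic detour is sound as far as it goes. The interpretation $\mathcal{I}\colon\CatTapeC\to\stmat{\Cat{C}^+}$ is well defined and preserves the enrichment of Definition~\ref{definition:structural}. Moreover, $\stmat{\Cat{C}^+}$ is cancellative because each entry lives in some $\Dis(X)$. The gap is that the whole statement is thereby pushed onto faithfulness of $\mathcal{I}$, and your sketch does not prove it. This faithfulness is the injectivity half of Corollary~\ref{cor:isotapematrices}, which the paper obtains only downstream of this lemma.

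The part you do address, canonical forms for single-target arrows $U\to U_i$, is already Lemma~\ref{lemma: arrows A-B subdistributions}. The real difficulty is for targets $\bigoplus_{i=1}^n U_i$ with $n\geq 2$. There the normal form $\diagpn{\;\vec{p}\;}{U}{n};\bigoplus_i\s_i$ of Lemma~\ref{decomposition} is not unique, because the matrix records only the products $p_i\cdot\s_i$, not the pairs $(p_i,\s_i)$. For instance, first weight $\frac12$ with first entry $\frac12\cdot\tapeFunct{c}$ and first weight $\frac14$ with first entry $\tapeFunct{c}$ give the same column. Matching canonical forms entry by entry and permuting branches does not relate such forms. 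You need an explicit identity, derived from \eqref{eq:diagp assoc} together with the laws for the discard, that trades weight between the outer split and the entries. Lemma~\ref{lemma:fractions} is such an identity, and it is exactly what drives the paper's induction in Lemma~\ref{lemma:cancellativityhalf}: the case split $p=q$, $p>q$, $p<q$, with cancellation on smaller targets. Your fallback stops at the same point. Projecting and cancelling in $\Dis(\Cat{C}[U,U_i])$ gives $p_i\cdot\s_i=q_i\cdot\t_i$, and reassembling $\s=\t$ from these equalities is precisely the missing step.

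There is a clean way to close the gap within your approach, with no term rewriting.
\begin{enumerate}
\item From $(\iota_1\oplus\iota_2);\codiag{P\oplus Q}=\id{P\oplus Q}$ (unit and symmetry laws of the monoid) one gets $\diagp{};(f\oplus g)=(f;\iota_1)+_p(g;\iota_2)$.
\item By induction, $\diagpn{\;\vec{p}\;}{U}{n};\bigoplus_i\s_i=\sum_i p_i\cdot(\s_i;\iota_i)$ in the pca $\CatTapeC[U,\bigoplus_iU_i]$ of Lemma~\ref{lemma:enrichmentpca}.
\item With Lemmas~\ref{decomposition} and~\ref{lemmaABform}, every arrow $U\to\bigoplus_iU_i$ is then a pca term over the elements $\tapeFunct{c};\iota_i$.
\item On the other side, the columns $U\to\bigoplus_iU_i$ of $\stmat{\Cat{C}^+}$ form, as a pca, exactly $\Dis(\coprod_i\Cat{C}[U,U_i])$, with $\mathcal{I}(\tapeFunct{c};\iota_i)=\delta_{(i,c)}$.
\item Since $\Dis(S)$ is the free pca on $S$, two pca terms with the same image are provably equal in every pca. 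This gives faithfulness, and in fact shows directly that these hom-sets are free, hence cancellative, pcas.
\end{enumerate}
Completed this way, your route yields more than the paper's: injectivity of $\CatTapeC\to\stmat{\Cat{C}^+}$ before any adjunction, and with it Lemma~\ref{lemma:TC jmono} for free. The paper's route is more local. It needs only the single-object bijection, the three-way split of Lemma~\ref{lemma:division} and the rescaling Lemma~\ref{lemma:fractions}, organised as inductions on the numbers of target and source summands.
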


By means of the two lemmas above one can easily conclude that projections, defined as in \eqref{eq:defpi}, are jointly monic.

\begin{lemma}\label{lemma:TC jmono}
$\CatTapeC$ has jointly monic projections.
\end{lemma}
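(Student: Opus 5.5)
We must show that for $\s,\t\colon P\to Q_1\oplus Q_2$ in $\CatTapeC$, if $\s;\pi_1=\t;\pi_1$ and $\s;\pi_2=\t;\pi_2$ then $\s=\t$. The plan is to reduce to a single-summand source, pass to normal forms, and read the data off the projections using cancellativity.

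\textbf{Step 1: reduce to a single-summand source.} Write $P=\bigoplus_{j}V_j$. By Lemma~\ref{lemma:coproducts}, $\oplus$ is a coproduct, so $\s=[\iota_1;\s,\dots]$ and $\t=[\iota_1;\t,\dots]$. Precomposing the hypotheses with $\iota_j$ preserves them. Hence it suffices to treat $\s,\t\colon U\to Q_1\oplus Q_2$ with $U$ a single object of $\Cat{C}$.

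\textbf{Step 2: normal forms.} By Lemma~\ref{decomposition}, write $\s=\diagpn{\vec{p}\;}{U}{2};(\s_1\oplus\s_2)$ and $\t=\diagpn{\vec{q}\;}{U}{2};(\t_1\oplus\t_2)$, with $\s_i,\t_i\colon U\to Q_i$. By Lemma~\ref{lemma:diagpq} (items 3 and 4), whose $\diaggen{p,q\,}{U}$ coincides with the binary $\diagpn{}{}{}$, we get $\s;\pi_i=p_i\cdot\s_i$ and $\t;\pi_i=q_i\cdot\t_i$. The hypotheses therefore become $p_i\cdot\s_i=q_i\cdot\t_i$ for $i=1,2$.

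\textbf{Step 3: the main obstacle.} From $p_i\cdot\s_i=q_i\cdot\t_i$ we must conclude $\s=\t$. The weights $p_i,q_i$ need not coincide, and cancellativity (Lemma~\ref{cancellativity}) only cancels a common scalar. I would first renormalise each summand so that its scalar is absorbed into the component. Using the pca laws (Lemma~\ref{lemma:enrichmentpca}, Lemma~\ref{lemma:initialproperties2}), \eqref{diagpq generalizzata}, and naturality of $\diagp{}$ and $\bangp{}$, show that
\[
\diagpn{\vec{p}\;}{U}{2};(\s_1\oplus\s_2)=\diaggen{\frac12,\frac12\,}{U};\bigl((2p_1\cdot\s_1)\oplus(2p_2\cdot\s_2)\bigr)
\]
whenever $p_1,p_2\le\frac12$. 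In general, choose a common factor $r\in(0,1)$ small enough, for instance via $\diagp{}$ with $p=\frac12$ followed by $\bangp{}$ on part, so that $\s=D;\bigl((p_1/r\cdot\s_1)\oplus\cdots\bigr)$ for a fixed $D$. Proving this rescaling identity cleanly is where I expect the real work, mostly manipulations with \eqref{eq:diagp assoc} and $\bangp{}$-naturality.

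A cleaner alternative is to argue by cancellativity directly. Set $r=\tfrac12$ and show $r\cdot\s = \diaggen{\frac12,\frac12\,}{U};\bigl((p_1\cdot\s_1)\oplus(p_2\cdot\s_2)\bigr)$, an expression depending only on $p_i\cdot\s_i$. Compute $r\cdot\s$ as $\diagp{};(\s\oplus\bangp{})$ by Lemma~\ref{lemma:initialproperties}.\ref{lemma:pf}, which holds in $\CatTapeC$ by definition of the enrichment, and push $\s$'s normal form through using \eqref{eq:diagp assoc}. The $p=1$ and $p=0$ edge cases follow from the extended definitions. This alternative would give $r\cdot\s=r\cdot\t$ from the hypotheses, and then Lemma~\ref{cancellativity} yields $\s=\t$. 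Finally, Step 1 reassembles the copairing, giving joint monicity.
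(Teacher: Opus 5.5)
Your final argument (the ``cleaner alternative'' of Step~3) is correct, and it takes a genuinely different route from the paper's.

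\textbf{The paper's route.} It performs the same Steps~1--2. It then concludes from $p_i\cdot\s_i=q_i\cdot\t_i$ via Corollary~\ref{cor:finaleTC}, the uniqueness of an arrow $U\to Q_1\oplus Q_2$ with prescribed scaled projections. That corollary rests on several ingredients:
\begin{itemize}
\item the normal form of Lemma~\ref{lemma:division};
\item a case split $p=p'$, $p>p'$, $p<p'$, handled with the rescaling Lemma~\ref{lemma:fractions} (Lemma~\ref{lemma:comvexproductpreliminary});
\item Lemma~\ref{keyLemma001} for degenerate weights.
\end{itemize}

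\textbf{Your route.} You multiply by $\frac12$, which turns $\s$ into an expression depending only on its projections, and then cancel once with Lemma~\ref{cancellativity}.

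Your identity does not even need the normal form. For any $\s\colon X\to Q_1\oplus Q_2$, use $\frac12\cdot\s=\s;(\frac12\cdot\id{Q_1\oplus Q_2})$ (by the enrichment) and \eqref{eq:diagp nat}. The identity then reduces to $\diagpX{\frac{1}{2}\;\,}{Q_1\oplus Q_2};(\pi_1\oplus\pi_2)=\frac12\cdot\id{Q_1\oplus Q_2}$. This follows from \eqref{eq:diagp coherence}, \eqref{eq:bangp coherence}, naturality of $\sigma$, Lemma~\ref{lemma:pcdot}, and \eqref{eq:diagp symmetry} at $p=\frac12$. So Steps~1--2 and the edge cases disappear.

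\textbf{Comparison.} Your route is shorter and more general. It shows that in any category as in Definition~\ref{definition:structural}, a cancellative enrichment already forces jointly monic projections. That would also yield Proposition~\ref{cor: quotient category is convex biproduct } at once, bypassing Lemma~\ref{lemma: convex products quoziente}. The paper's route, in exchange, produces the explicit uniqueness statement of Corollary~\ref{cor:finaleTC}, i.e.\ the convex-product property for single-summand sources read off from normal forms.

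\textbf{Your first idea.} Discard the renormalisation attempt. A common $D=\diaggen{r_1,r_2\,}{U}$ would need $r_i\ge\max(p_i,q_i)$ and $r_1+r_2\le 1$, so you need a large $r$, not one ``small enough''. This fails, e.g., for weights $(0.9,0.1)$ against $(0.1,0.9)$, which can occur with equal projections.
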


\begin{theorem}\label{thm:TCconvexbiproductcategory}
$\CatTapeC$ is a convex biproduct category. In particular, for all $\vec{p}=p_1,\dots ,p_n$ and $\t_i\colon U \to Q_i$, $\langle \t_1, \dots \t_n\rangle_{\vec{p}} = \diagpn{\;\;\vec{p}}{U}{n}; \bigoplus_{i=1}^n \t_i$.
\end{theorem}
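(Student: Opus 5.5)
The first claim will follow from Proposition~\ref{prop:A}. As observed just before Lemma~\ref{decomposition}, $(\CatTapeC,\oplus,\zero)$ is a strict symmetric monoidal category whose objects carry natural and coherent monoids and co-pcas. This holds by construction, since the axioms of Tables~\ref{fig:freestrictfccat}, \ref{fig:freecopcacat} and~\ref{fig:freestricmmoncatax} are imposed on terms, and the inductive definitions \eqref{eq:indmoncopca} give the structures on composite objects. So $\CatTapeC$ falls under Definition~\ref{definition:structural}. By Lemma~\ref{lemma:TC jmono} its projections are jointly monic. Proposition~\ref{prop:A} then gives that $\CatTapeC$ is a convex biproduct category, with enrichment, coprojections and projections as in Definition~\ref{definition:structural}.

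For the explicit description of the mediating arrow, fix $\vec p$ with $\sum_i p_i\le 1$ and $\t_i\colon U\to Q_i$. Since the convex product is unique, it suffices to show that
\[\diagpn{\;\vec{p}\;}{U}{n};\bigoplus_{i=1}^n \t_i;\pi_k = p_k\cdot \t_k \qquad \text{for every } k,\]
where $\pi_k\colon\bigoplus_i Q_i\to Q_k$ is the $n$-ary projection, which discards every other summand via $\bang{}$. I will argue by induction on $n$.

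\begin{itemize}
\item For $n=0$ there is nothing to check.
\item For $n=1$ the map $\diagpn{\;p_1}{U}{1}$ is $\diagp{U}$ with index $p_1$, followed by $\id{U}\oplus\bang{U}$. So the composite is $p_1\cdot\t_1$, by Lemma~\ref{lemma:initialproperties}(7) together with naturality of $\bang{}$.
\item For the inductive step I use the definition $\diagpn{\vec p}{U}{n+1}=\diagp{U};(\id{U}\oplus\diagpn{\vec q}{U}{n})$, with the index $p_1$ on the first split.
\begin{itemize}
\item For $k=1$, composing with $\t_1\oplus\bigoplus_{i\ge2}\t_i$ and discarding the second block leaves $\diagp{U};(\t_1\oplus\bang{U})$, again by naturality of $\bang{}$. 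This equals $p_1\cdot\t_1$ by Lemma~\ref{lemma:initialproperties}(7).
\item For $k\ge2$, discarding the first summand yields $(1-p_1)\cdot(\diagpn{\vec q}{U}{n};\bigoplus_{i\ge2}\t_i;\pi_{k-1})$, by Lemma~\ref{lemma:initialproperties}(8). By the induction hypothesis this is $(1-p_1)\cdot(q_{k-1}\cdot\t_k)$. Lemma~\ref{lemma:initialproperties2}(2) turns it into $(1-p_1)q_{k-1}\cdot\t_k=p_k\cdot\t_k$.
\end{itemize}
\end{itemize}

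The boundary case $p_1=1$ needs separate care, because there $q_{k-1}=0$. Both sides are then $\star$: we have $0\cdot f=\star$ by the convention $x+_0y=y$, and $\diaggen{1}{U}=\id{U}\oplus\cobang{U}$ discards correctly.

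The main obstacle I expect is making these identities rigorous in the $n$-ary setting. Lemma~\ref{lemma:initialproperties} is stated for convex biproduct categories and binary $\oplus$; this is fine here because we have already established that $\CatTapeC$ is one. Still, the $n$-ary projections have to be expressed via the binary ones, $\pi_k$ as a composite of binary projections, using strictness and coherence of $\bang{}$. Alternatively, I could invoke Lemma~\ref{lemma:naryconvexproduct} and Lemma~\ref{lemma:diagpq3}, iterated through the inductive definition, to identify $\langle\t_1,\dots,\t_n\rangle_{\vec p}$ with $\langle \t_1,\langle\t_2,\dots\rangle_{\vec q}\rangle_{p_1,1-p_1}$. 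Lemma~\ref{lemma:diagpq3} then rewrites this as the stated composite.
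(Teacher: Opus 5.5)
Your proof is correct, and its first half is exactly the paper's argument. The axioms imposed on terms put $\CatTapeC$ under Definition~\ref{definition:structural}, Lemma~\ref{lemma:TC jmono} gives jointly monic projections, and Proposition~\ref{prop:A} concludes.

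For the explicit mediating arrow, the paper only cites Lemma~\ref{decomposition} together with uniqueness of mediating arrows. Your induction, via Lemma~\ref{lemma:initialproperties}(7),(8), is the $n$-ary analogue of Lemma~\ref{lemma:diagpq}: it shows the composite in the statement has $k$-th projection $p_k\cdot\t_k$, which is the check uniqueness actually needs. Lemma~\ref{decomposition} is not needed for the identity itself. Combined with the identity, it gives the fact that every arrow out of $U$ is a mediating arrow, which is how the theorem is used in Theorem~\ref{thm:completenessPBPtapes}.
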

\begin{proof}
    By construction, every object in $\CatTapeC$ is a natural and coherent comonoid and co-pca object, and Lemma~\ref{lemma:TC jmono} implies that the projections are jointly monic. Hence, $\CatTapeC$ is a convex biproduct category by Proposition~\ref{prop:A}. The last part of the statement follows from Lemma~\ref{decomposition} and uniqueness of mediating arrows for convex products.
\end{proof}
The assignment $\Cat{C} \mapsto \CatTapeC$ gives rise to a functor $\CatTapeFUN\colon \Cat{Cat} \to \Cat{CBCat}$ which is left adjoint to the forgetful functor  $U\colon \Cat{CBCat} \to \Cat{Cat} $.

\begin{theorem}\label{thm:syntacticadjunction}
$\CatTapeFUN\colon \Cat{Cat} \to \Cat{CBCat}$ is left adjoint to $U\colon \Cat{CBCat} \to \Cat{Cat} $.
\end{theorem}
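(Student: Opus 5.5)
We prove the adjunction by exhibiting a universal arrow. For every category $\Cat{C}$ the unit $\eta_{\Cat{C}}\colon \Cat{C}\to U(\CatTapeC)$ is the identity on objects, viewing $U\in\ob{\Cat{C}}$ as a one-letter word, and sends $c$ to $\tapeFunct{c}$. It is a functor precisely because of the two axioms \eqref{eq:TapeFunctAxioms}. Now take a convex biproduct category $\Cat{D}$ and a functor $F\colon \Cat{C}\to U(\Cat{D})$. We must show that there is a unique morphism of convex biproduct categories $F^\sharp\colon\CatTapeC\to\Cat{D}$ with $\eta_{\Cat{C}};U(F^\sharp)=F$.

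\emph{Existence.} By Mac Lane's coherence theorem we may assume $\Cat{D}$ is strict monoidal with respect to $\oplus$, or else insert the structural isomorphisms throughout. On objects, set $F^\sharp(\bigoplus_i U_i)\defeq\bigoplus_i F(U_i)$. On terms, define $F^\sharp$ by structural induction:
\begin{itemize}
\item $\tapeFunct{c}\mapsto F(c)$;
\item $\diagp{U},\bang{U},\codiag{U},\cobang{U},\sigma^{\oplus}_{U,V}$ are sent to the corresponding co-pca, monoid and symmetry arrows of $\Cat{D}$ at $F(U)$ and $F(V)$;
\item $;$ and $\oplus$ are sent to $;$ and $\oplus$ in $\Cat{D}$.
\end{itemize}
To see that this is well defined on equivalence classes, we check each axiom:
\begin{itemize}
\item the strict symmetric monoidal axioms of Table~\ref{fig:freestricmmoncatax} hold because $(\Cat{D},\oplus,\zero)$ is symmetric monoidal;
\item the axioms of Tables~\ref{fig:freestrictfccat} and \ref{fig:freecopcacat} hold by Proposition~\ref{lemma: copca objects in convbicat};
\item \eqref{eq:TapeFunctAxioms} holds by functoriality of $F$.
\end{itemize}
One point needs care. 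The structures $\diagp{P}$ etc.\ at composite objects are defined in $\CatTapeC$ inductively via \eqref{eq:indmoncopca}. Their images coincide with the corresponding structures of $\Cat{D}$ at $F^\sharp(P)$ by the coherence axioms (\eqref{eq:diagp coherence}, \eqref{eq:codiag coherence}, etc.), which hold in $\Cat{D}$. So $F^\sharp$ is a strict monoidal functor preserving monoids and co-pcas, and by Proposition~\ref{prop: monoidal functors} it is a morphism of convex biproduct categories. The equation $\eta;F^\sharp=F$ holds by construction.

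\emph{Uniqueness.} Let $G\colon\CatTapeC\to\Cat{D}$ be any morphism with $\eta;G=F$. Then $G$ agrees with $F^\sharp$ on one-letter objects and on the generators $\tapeFunct{c}$. By Proposition~\ref{prop: monoidal functors}, $G$ is strong monoidal and preserves monoids and co-pcas. Since every term is built from generators by $;$ and $\oplus$, it follows that $G$ and $F^\sharp$ agree up to the coherent comparison isomorphism $G(P\oplus Q)\cong G(P)\oplus G(Q)$.

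\emph{Expected obstacle.} The main difficulty is this strictness issue: uniqueness holds on the nose only if morphisms in $\Cat{CBCat}$ are taken to preserve the chosen $\oplus$ strictly, or if the adjunction is understood up to the canonical isomorphism. Given the paper's standing strictness convention, we adopt the strict reading. Under that reading, $G$ is determined on objects by $G(P\oplus Q)=G(P)\oplus G(Q)$ and $G(\zero)=\zero$. On arrows it is determined because $G$ preserves $;$, $\oplus$, and the structure maps $\diagp{U},\bang{U},\codiag{U},\cobang{U},\sigma^\oplus$, which are canonical in $\Cat{D}$. Hence $G=F^\sharp$.

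Finally, naturality of the bijection $\Cat{CBCat}[\CatTapeC,\Cat{D}]\cong\Cat{Cat}[\Cat{C},U\Cat{D}]$ is then standard. With it, $\CatTapeFUN$ extends to a functor, sending $H\colon\Cat{C}\to\Cat{C}'$ to $(H;\eta_{\Cat{C}'})^\sharp$, and this functor is left adjoint to $U$.
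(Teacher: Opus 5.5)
Your proposal follows the paper's proof: it uses the same unit $c\mapsto\tapeFunct{c}$, defines $F^\sharp$ by the same structural induction using the monoid and co-pca structures that Proposition~\ref{lemma: copca objects in convbicat} provides in $\Cat{D}$, and obtains both the morphism property and uniqueness from Proposition~\ref{prop: monoidal functors}. Your explicit well-definedness check, and your caveat that uniqueness holds on the nose only when morphisms preserve the chosen $\oplus$ strictly (otherwise only up to the canonical isomorphism), make precise what the paper's one-line uniqueness argument (``$H$ preserves co-pcas and monoids, thus $H=F^\sharp$'') leaves implicit.
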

\begin{proof}
For a functor $F\colon \Cat{C} \to \Cat{D}$, $\CatTapeF \colon \CatTapeC \to \CatTapeD$ is defined on objects as $\CatTapeF(\bigoplus_{i=1}^nU_i) = \bigoplus_{i=1}^nF(U_i)$.
On arrows, it is defined inductively as follows.
\[
\setlength{\arraycolsep}{2pt} 
\begin{array}{rcl rcl}
  \CatTapeF(\t_1 ; \t_2)      &\defeq& \CatTapeF(\t_1);\CatTapeF(\t_2) &
  \CatTapeF(\t_1 \piu \t_2)   &\defeq& \CatTapeF(\t_1)\oplus\CatTapeF(\t_2) \\
  
  \CatTapeF(\diagp{P})        &\defeq& \diagp{\CatTapeF(P)} &
  \CatTapeF(\codiag{P})       &\defeq& \codiag{\CatTapeF(P)} \\
  
  \CatTapeF(\,\bangp{P})        &\defeq& \bangp{\CatTapeF(P)} &
  \CatTapeF(\cobang{P})       &\defeq& \cobang{\CatTapeF(P)} \\
  
  \CatTapeF(\id{U})           &\defeq& \id{\CatTapeF(U)} &
  \CatTapeF(\id{\zero})       &\defeq& \id{\CatTapeF(\zero)} \\
  
  \CatTapeF(\sigma_{P,Q}^{\piu}) &\defeq& \sigma_{\CatTapeF(P),\CatTapeF(Q)}^{\piu} &
  \CatTapeF(\tapeFunct{c})    &\defeq& \tapeFunct{F(c)}
\end{array}
\]

Since by definition $\CatTapeF$ is a monoidal functor preserving co-pcas and monoids, by Proposition~\ref{prop: monoidal functors}, it is a morphism of convex biproduct categories.
Thus, we have a functor $\CatTapeFUN\colon \Cat{Cat} \to \Cat{CBCat}$. Below, we prove the adjunction.

For all categories $\Cat{C}$, the unit of the adjunction $\eta_{\Cat{C}} \colon \Cat{C} \to \CatTapeC$ is the identity-on-objects functor mapping each arrow $c\colon U\to V$ in $\Cat{C}$ into $\tapeFunct{c}$. The axioms in \eqref{eq:TapeFunctAxioms} force $\eta_{\Cat{C}}$ to be a functor. Naturality of the unit is straightforward.

Now, take a convex biproduct category $\Cat{D}$ and consider a functor $F\colon \Cat{C} \to U(\Cat{D})$. By Proposition~\ref{lemma: copca objects in convbicat}, $\Cat{D}$ is a symmetric monoidal category where every object $X$  carries a  co-pca $(\diagp{X},\bang{X})$ and a natural coherent monoid $(\codiag{X},\cobang{X})$. One can use these structures to define $F^\sharp\colon \CatTapeC \to \Cat{D}$ inductively in the same way as  $\CatTapeF$: e.g., $F^\sharp(\diagp{P})\defeq \,\diagp{F^\sharp(P)}$. The base case $F^\sharp(\tapeFunct{c})=F(c)$ ensures that $\eta_{\Cat{C}}; F^\sharp = F$.
Since by definition $F^\sharp$ is a strict symmetric monoidal functor preserving co-pcas and monoids, by Proposition~\ref{prop: monoidal functors}, it is a morphism of convex biproduct categories. 

For uniqueness, take a morphism of convex biproduct categories $H \colon \CatTapeC \to \Cat{D}$ such that $\eta_{\Cat{C}}; H = F$, i.e.,  $H(\tapeFunct{c})=F(c)$. By Proposition~\ref{prop: monoidal functors}, $H$  preserves co-pcas and monoids. Thus $H=F^\sharp$.
\end{proof}

\begin{corollary}\label{cor:isotapematrices}
For all categories $\Cat{C}$, $\CatTapeC$ is isomorphic to $\stmat{\Cat{C}^+}$ in $\Cat{CBCat}$.
\end{corollary}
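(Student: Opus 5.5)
This follows from uniqueness of left adjoints up to natural isomorphism. By Theorem~\ref{thm:freeenriched}, $(-)^+\colon \Cat{Cat}\to\Cat{PCACat}$ is left adjoint to the forgetful $U\colon\Cat{PCACat}\to\Cat{Cat}$. By Theorem~\ref{thm:matfree}, $\stmat{-}\colon\Cat{PCACat}\to\Cat{CBCat}$ is left adjoint to $U\colon\Cat{CBCat}\to\Cat{PCACat}$. Adjunctions compose, so $\stmat{(-)^+}\colon\Cat{Cat}\to\Cat{CBCat}$ is left adjoint to the composite forgetful functor $\Cat{CBCat}\to\Cat{PCACat}\to\Cat{Cat}$.

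The one point to check is that this composite right adjoint is the forgetful functor $U\colon\Cat{CBCat}\to\Cat{Cat}$ of Theorem~\ref{thm:syntacticadjunction}. It is, because both just forget structure and leave the underlying category and functors unchanged. Theorem~\ref{thm:syntacticadjunction} gives a second left adjoint to the same $U$, namely $\CatTapeFUN$. Two left adjoints of the same functor are naturally isomorphic, so for every $\Cat{C}$ there is an isomorphism $\CatTapeC\cong\stmat{\Cat{C}^+}$ in $\Cat{CBCat}$, which is the statement.

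For later use it helps to make the isomorphism explicit. From the universal properties, the comparison map $\CatTapeC\to\stmat{\Cat{C}^+}$ is $(\eta')^\sharp$ as in the proof of Theorem~\ref{thm:syntacticadjunction}. Here $\eta'$ is the composite unit, sending $c$ to the $1\times1$ matrix $1\cdot\delta_c$. Concretely, this map sends $\tapeFunct{c}$ to $(1\cdot\delta_c)$, sends $\diagp{U}$ to the column matrix with entries $p\cdot\delta_{\id{U}}$ and $(1-p)\cdot\delta_{\id{U}}$, and sends the monoid structure to the matrices in \eqref{eq:matmonpca}. The inverse is the transpose of $\eta_{\Cat{C}}$ through both adjunctions: first extend $\eta_{\Cat{C}}$ to a $\Cat{PCA}$-enriched functor $\Cat{C}^+\to U\CatTapeC$, then to a morphism $\stmat{\Cat{C}^+}\to\CatTapeC$. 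Under this inverse, a matrix with entries $p_{ji}\cdot d_{ji}$ goes to the copairing of the mediating maps $\diagpn{\vec{p}}{U}{n};\bigoplus_j \t_j$ of Theorem~\ref{thm:TCconvexbiproductcategory}. The two composites are endomorphisms of free objects that fix the respective units, so each is the identity by the uniqueness parts of the adjunctions.

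No step here is a real obstacle, since the content is already in Theorems~\ref{thm:matfree} and \ref{thm:syntacticadjunction}. The only care needed is to confirm that the forgetful functors compose strictly, which holds by definition.
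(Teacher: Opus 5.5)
Your proposal is correct and follows the paper's proof: compose the adjunctions of Theorems~\ref{thm:freeenriched} and \ref{thm:matfree}, note that the composite right adjoint is the forgetful functor of Theorem~\ref{thm:syntacticadjunction}, and use uniqueness of left adjoints. Your explicit description of the comparison map, sending each arrow $c$ to the $1\times 1$ matrix $(1\cdot\delta_c)$, also matches the paper's remark after the corollary (the paper writes this entry loosely as $1\cdot c$).
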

\begin{proof}
By composition of adjoints, their uniqueness and Theorems~\ref{thm:freeenriched}, \ref{thm:matfree} and \ref{thm:syntacticadjunction}.
\end{proof}
The isomorphism $\CatTapeC \to \stmat{\Cat{C}^+}$ maps $\tapeFunct{c}$ into the 
$1\times 1$ matrix $\begin{pmatrix}
   1 \cdot c
\end{pmatrix}$ and  $\diagp{U}$, $\bang{U}$, $\cobang{U}$, $\codiag{U}$ into the matrices defined in \eqref{eq:matmonpca}.  Compositions and sums of arrows in $\CatTapeC$ 
are mapped into multiplications and direct sums of matrices as defined in \eqref{eq:matrixmult} and \eqref{eq:matsmc}. The same applies to identities and symmetries. For instance, $(\diagpX{\frac{1}{2}\;\,}{U} \oplus \id{U}) ; (\tapeFunct{a}\oplus \tapeFunct{ab} \oplus \tapeFunct{c}); (\id{U}\oplus \symmp{U}{U}); (\codiag{U}\oplus \id{U})$ is mapped into the matrix $M$ of Example~\ref{ex:matrices}. 

The characterisation of $\stmat{\Cat{C}^+}$ by means of generators and equations provided by the above corollary, paves the way to study further equational theories (such as the one in Section~\ref{sec:probbooltapes}). The following result guarantees that the category obtained by quotienting $\CatTapeC$ by additional axioms is still a convex biproduct category.
%

\begin{proposition}\label{cor: quotient category is convex biproduct }
Let $\Cat{C}$ be a category and let $\sim$ be a congruence relation (w.r.t.\ $;$ and $\oplus$) on $\CatTapeC$ which is cancellative: for every $p\in (0,1)$, $p\cdot f\sim p\cdot g$ then $f\sim g$. Let $\CatTapeC_\sim$ be the category obtained as the quotient of $\CatTapeC$ by $\sim$ and $Q_\sim \colon \CatTapeC\to \CatTapeC_\sim$ be the functor mapping each arrow to its $\sim$-equivalence class.
Then $\CatTapeC_\sim$ is a convex biproduct category and $Q_\sim$ is a morphism of convex biproduct categories.
\end{proposition}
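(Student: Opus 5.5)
The plan is to apply Proposition~\ref{prop:A} to the quotient $\CatTapeC_\sim$. Since $\sim$ is a congruence with respect to $;$ and $\oplus$, the quotient $\CatTapeC_\sim$ inherits a strict symmetric monoidal structure $(\CatTapeC_\sim,\oplus,\zero)$. We take the classes of $\diagp{P}$, $\bangp{P}$, $\codiag{P}$, $\cobang{P}$ and $\symmp{P}{Q}$ as structure maps. All axioms of Tables~\ref{fig:freestrictfccat}, \ref{fig:freecopcacat} and~\ref{fig:freestricmmoncatax} already hold in $\CatTapeC$, so they continue to hold between equivalence classes. Hence every object of $\CatTapeC_\sim$ carries a natural and coherent monoid and co-pca, exactly as in Definition~\ref{definition:structural}. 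By construction $Q_\sim$ is a strict symmetric monoidal functor preserving these structures. Therefore, once $\CatTapeC_\sim$ is known to be a convex biproduct category, Proposition~\ref{prop: monoidal functors} immediately gives that $Q_\sim$ is a morphism of convex biproduct categories.

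It remains to show that the projections of $\CatTapeC_\sim$, defined as in \eqref{eq:defpi}, are jointly monic. I will redo the argument of Lemma~\ref{lemma:TC jmono} in the quotient, using cancellativity of $\sim$ in place of Lemma~\ref{cancellativity}. Note that the projections, the enrichment \eqref{eq:enrichment} and the scalar multiplication $p\cdot f$ in the quotient are the images under $Q_\sim$ of the corresponding data in $\CatTapeC$. Take $[\s],[\t]\colon P\to Q_1\oplus Q_2$ with $[\s];\pi_i=[\t];\pi_i$ for $i=1,2$. Since $\oplus$ is a coproduct in the quotient (Lemma~\ref{lemma:coproducts}), precomposing with coprojections reduces the problem to the case $P=U$ a single object.

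By Lemma~\ref{decomposition} and Theorem~\ref{thm:TCconvexbiproductcategory}, we can write $\s=\langle \s_1,\s_2\rangle_{p_1,p_2}$ and $\t=\langle \t_1,\t_2\rangle_{q_1,q_2}$ in $\CatTapeC$. The hypothesis then reads $p_i\cdot \s_i\sim q_i\cdot \t_i$ for $i=1,2$. We want to deduce $\langle \s_1,\s_2\rangle_{p_1,p_2}\sim\langle \t_1,\t_2\rangle_{q_1,q_2}$, and I expect this to be the main obstacle, since the scalars need not coincide. I will first try to normalise so that they do. Applying Lemma~\ref{decomposition} to $\s_i$ and $\t_i$ themselves, we may assume that the scalars have been absorbed as in the proof of Lemma~\ref{lemma:TC jmono}. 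For instance, we may rewrite $\langle \s_1,\s_2\rangle_{p_1,p_2}=\langle p_1\cdot\s_1,p_2\cdot\s_2\rangle_{1,1}$ up to the maps $\diaggen{p,q}{U}$, using Lemma~\ref{lemma:diagpq}, whenever $p_1+p_2\le1$ allows it. More robustly, I will pick $r\in(0,1)$ small enough that $r\cdot\s$ and $r\cdot\t$ can both be expressed, in $\CatTapeC$, as $\diaggen{1/2,1/2}{U};(2r p_1\cdot \s_1\oplus 2r p_2\cdot\s_2)$ and similarly for $\t$. This expression holds by Lemma~\ref{lemma:initialproperties2} together with the pca axioms. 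Using the congruence property and Lemma~\ref{lemma:initialproperties2}, from $p_i\cdot\s_i\sim q_i\cdot\t_i$ we get $2r p_i\cdot \s_i\sim 2rq_i\cdot\t_i$. Since $\sim$ is a congruence, this yields $r\cdot\s\sim r\cdot\t$. Cancellativity of $\sim$ then gives $\s\sim\t$.

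Having shown that the projections are jointly monic, Proposition~\ref{prop:A} yields that $\CatTapeC_\sim$ is a convex biproduct category, which concludes the proof.
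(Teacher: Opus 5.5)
Your proof is correct, and its skeleton matches the paper's: Proposition~\ref{prop:A}, reduction to a single source object $U$ via coproducts, and Lemmas~\ref{decomposition} and~\ref{lemma:diagpq} to turn the hypothesis into $p_i\cdot\s_i\sim q_i\cdot\t_i$. The key step from these relations to $\s\sim\t$, however, is genuinely different.

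The paper isolates that step as Lemma~\ref{lemma: convex products quoziente} and proves it by a case analysis on the relative order of the scalars. It applies cancellativity twice to get relations such as $\frac{p_2}{1-p_1}\cdot f_2\sim\frac{q_2}{1-p_1}\cdot g_2$. It then reshuffles the nested splits with the co-pca associativity and symmetry axioms, and handles the boundary cases ($p_1=1$, $q_1\neq 1$, \dots) separately.

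Your rescaling avoids all of this with a single use of cancellativity, and $r=\frac12$ already suffices. In $\CatTapeC$ one has $\frac12\cdot\s=\diagpX{\frac{1}{2}\;\,}{U};(p_1\cdot\s_1\oplus p_2\cdot\s_2)$, so congruence gives $\frac12\cdot\s\sim\frac12\cdot\t$ directly.

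Pushed further, the trick makes the decomposition and the reduction to $P=U$ unnecessary. For any $h\colon P\to Q_1\oplus Q_2$, both $\frac12\cdot h$ and $\diagpX{\frac{1}{2}\;\,}{P};((h;\pi_1)\oplus(h;\pi_2))$ have projections $\frac12\cdot(h;\pi_i)$, by Lemmas~\ref{lemma:initialproperties2} and~\ref{lemma:diagpq}. They therefore coincide by Lemma~\ref{lemma:TC jmono}, and joint monicity in the quotient is then immediate from cancellativity of $\sim$.

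The paper's route yields a reusable equational fact: convex pairings respect $\sim$ componentwise, whatever the scalars. Yours is shorter because it leans on the joint monicity already established in $\CatTapeC$.

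Some justifications need tightening.
\begin{itemize}
\item The identity $r\cdot\s=\diagpX{\frac{1}{2}\;\,}{U};(2rp_1\cdot\s_1\oplus 2rp_2\cdot\s_2)$ should be proved as just described: compare projections and use uniqueness of mediating arrows in $\CatTapeC$ (Theorem~\ref{thm:TCconvexbiproductcategory}). Lemma~\ref{lemma:initialproperties2} and the pca axioms alone cannot do it, since they do not mention $\oplus$.
\item Closure of $\sim$ under $2r\cdot(-)$ holds because $p\cdot f=\diagp{};(f\oplus\bang{})$ (Lemma~\ref{lemma:pcdot}) and $\sim$ is a congruence for $;$ and $\oplus$. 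You should say so explicitly.
\item Drop the abandoned normalisation $\langle p_1\cdot\s_1,p_2\cdot\s_2\rangle_{1,1}$. It is ill-formed, since the scalars of a convex pairing must sum to at most $1$.
\end{itemize}
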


%% file: sections/probtapes.tex
\section{Probabilistic Tape Diagrams}\label{sec:tapediagrams}
Recall that a \emph{monoidal signature} is a tuple $(\sort, \sign, \ari, \coar)$ where $\sort$ is a set of basic sorts, hereafter denoted by $A,B,\dots$, $\sign$ is a set of generators, denoted by $s,t, \dots$, and $\ari,\coar \colon \sign \to \sort^*$ assign to each symbol its arity and coarity (words over $\sort$). From a monoidal signature $(\sort, \sign, \ari, \coar)$, one can freely generate the strict symmetric monoidal category $\DiagS$:  objects are words in $\sort^*$; arrows are \emph{string diagrams} \cite{joyal1991geometry,selinger2010survey}. These can be regarded  as the terms generated by the following grammar (where $A,B \in \sort$ and $s \in \Sigma$)
\begin{equation}\label{stringdiagramGrammar}
\setlength{\arraycolsep}{3pt} 
\renewcommand{\arraystretch}{1.1} 
\begin{array}{rcccccccccccccccc}
c & ::= & \id{A} & \mid & \id{\uno} & \mid & \gen & \mid & \symmt{A}{B} & \mid & c ; c & \mid & c \per c
\end{array}
\end{equation}
modulo the axioms of strict symmetric monoidal categories. A \emph{monoidal theory} $\mathbb{T}=(\sign, E)$ consists of a monoidal signature equipped with a set $E$ of pairs of arrows of $\DiagS$ with same source and target. Let $=_{\mathbb{T}}$ be an equivalence relation on arrows  of $\DiagS$ obtained as the congruence closure (w.r.t. $;$ and $\otimes$) of $E$. We write $\DiagT$ for the category obtained as the quotient of $\DiagS$ by $=_{\mathbb{T}}$.

\medskip

The category $\CatT{\DiagT}$, obtained by applying the construction from Section~\ref{sec:syntactic} to $\DiagT$, coincides by definition with the category of \emph{tape diagrams} from \cite{bonchi2025tapediagramsmonoidalmonads}. Objects of $\CatT{\DiagT}$ are elements of $(\sort^*)^*$ which we often write as \emph{polynomials} $P=\Piu[i=1][n]{\Per[j=1][m_i]{A_{i,j}}}$. 
We will call \emph{monomials} of $P$ the $n$ words $\Per[j=1][m_i]{A_{i,j}}$. For instance, the monomials of $(A \per B) \piu \uno$ are $A \per B$ and $1$. We denote monomials by $U,V,\dots$

Arrows of $\CatT{\DiagT}$ enjoy an intuitive diagrammatic representation specified by the following two layered grammar.
\begin{equation*}\label{tapesDiagGrammar} 
    \setlength{\tabcolsep}{2pt}
    \begin{tabular}{rc c@{$\,\mid\,$}c@{$\,\mid\,$}c@{$\,\mid\,$}c@{$\,\mid\,$}c@{$\,\mid\,$}c@{$\,\mid\,$}c@{$\,\mid\,$}c@{$\,\mid\,$}c@{$\,\mid\,$}c}
        $c$  & $\Coloneqq$ &  $\wire{A}$ & $ 
    \InputIfFileExists{empty.tikz}{}{\input{./tikz/empty.tikz}}
 $ & $ \Cgen{\gen}{A}{B}  $ & $ \Csymm{A}{B} $ & $ 
    \InputIfFileExists{seq_compC.tikz}{}{\input{./tikz/seq_compC.tikz}}
   $ & $  
    \InputIfFileExists{par_compC.tikz}{}{\input{./tikz/par_compC.tikz}}
$ \\
        $\t$ & $\Coloneqq$ & $
    \InputIfFileExists{/tapes/cipriano/pcomonoid.tikz}{}{\input{./tikz//tapes/cipriano/pcomonoid.tikz}}
$ & $\Tcounit{U}$  & $ \Tcirc{c}{U}{V}$ & $\Tunit{U}$  & $\Tmonoid{U}$ \\ 
        && $\Twire{U}$ & $ 
    \InputIfFileExists{empty.tikz}{}{\input{./tikz/empty.tikz}}
 $    & $ \Tsymmp{U}{V} $ & $ 
    \InputIfFileExists{tapes/seq_comp.tikz}{}{\input{./tikz/tapes/seq_comp.tikz}}
  $ & $  
    \InputIfFileExists{tapes/par_comp.tikz}{}{\input{./tikz/tapes/par_comp.tikz}}
$  
    \end{tabular}
\end{equation*} 
The first layer corresponds to \eqref{stringdiagramGrammar} while the second to \eqref{tapesGrammar}; diagrams from the first layer are string diagrams, those from the second are called tapes. Note that (a) string diagram can occur inside tapes, (b) string diagrams have type $U\to V$, for $U,V\in \sort^*$ and vertical composition corresponds to $\per$, (c) tapes have type $P\to Q$ for $P,Q\in (\sort^*)^*$ and vertical composition corresponds to $\piu$. 

The identity $\id\zero$ is rendered as the empty tape $
    \InputIfFileExists{empty.tikz}{}{\input{./tikz/empty.tikz}}
$, while $\id\uno$ is $
    \InputIfFileExists{tapes/empty.tikz}{}{\input{./tikz/tapes/empty.tikz}}
$: a tape filled with the empty string diagram. 
For a monomial $U \!=\! A_1\dots A_n$, $\id U$ is depicted as a tape containing  $n$ wires labelled by $A_i$. For instance, $\id{AB}$ is rendered as $\TRwire{A}{B}$. When clear from the context, we will simply represent it as a single wire  $\Twire{U}$ with the appropriate label.
Similarly, for a polynomial $P = \Piu[i=1][n]{U_i}$, $\id{P}$ is obtained as a vertical composition of tapes, as illustrated below on the left. 
\begin{equation}\label{ex:tape}
\scalebox{0.82}{$ 
    \id{AB \piu \uno \piu C} = \!\!\!\begin{aligned}\begin{gathered} \TRwire{A}{B} \\[-1.8mm] \Twire{\uno} \\[-1.8mm] \Twire{C} \end{gathered}\end{aligned}
    \quad
    \codiag{A\piu B \piu C} = \!\!
    \InputIfFileExists{tapes/examples/codiagApBpC.tikz}{}{\input{./tikz/tapes/examples/codiagApBpC.tikz}}

    \quad
    \cobang{AB \piu B \piu C} = \!
    \InputIfFileExists{tapes/examples/cobangABpBpC.tikz}{}{\input{./tikz/tapes/examples/cobangABpBpC.tikz}}

    \quad
    
    \InputIfFileExists{tapes/cipriano/algt.tikz}{}{\input{./tikz/tapes/cipriano/algt.tikz}}

$} \end{equation}
\noindent The codiagonal $\codiag{U} \colon  U \piu U \!\to\! U$ is represented as a merging of tapes, $\diagp{U}\colon U\to U\oplus U $ as a splitting of tapes labeled with $p\in(0,1)$, the cobang $\cobang{U} \colon \zero \!\to\! U$ is a tape closed on its left boundary, while $\bang{U} \colon U \to \zero $ is closed on the right.
Exploiting the definitions in \eqref{eq:indmoncopca}, we can construct $\codiag{P},\diagp{P},\cobang{P},\bang{P}$ for arbitrary polynomials.
For example, $\codiag{A\piu B \piu C}$ and $\cobang{AB \piu B \piu C}$ are depicted as the second and third diagrams above. The last diagram is the tape for 
$\diagp{A};(\diagq{A}\oplus \id{A}) ; (\bang{A} \oplus \codiag{A})$. 
For an arbitrary $\t \colon P \to Q$, we write $\Tbox{\t}{P}{Q}$. 

The graphical representation embodies several axioms such as \eqref{eq:TapeFunctAxioms} and those of monoidal categories. 
Those axioms that are not implicit in the graphical representation are shown in Figure~\ref{fig:tapesax}. 

\medskip

By Corollary~\ref{cor:isotapematrices}, we know that $\CatT{\DiagT}$ is isomorphic to $\stmat{\DiagT^+}$: tapes represent exactly stochastic matrices of subprobability distributions of string diagrams in $\DiagT$. Figure \ref{fig:dictionary} illustrates  
how the isomorphism translates tape diagrams into such matrices. For instance, the rightmost diagram in \eqref{ex:tape} corresponds to the $1\times 1$ matrix $\begin{pmatrix} (1-p)(1-q) \cdot \wire{A}\end{pmatrix}$ while the following tape diagram on the left represents the $2\times 2$ stochastic matrix on the right.
\[
\resizebox{0.95\linewidth}{!}{$

    \InputIfFileExists{tapes/cipriano/esempiomatrice.tikz}{}{\input{./tikz/tapes/cipriano/esempiomatrice.tikz}}

\quad
\text{\raisebox{1.7ex}{$
\begin{pNiceMatrix}[first-col,first-row]
	\rotatebox{90}{$\Lsh$} & AB & C \\
	DE & p\cdot 
    \InputIfFileExists{par_compCMAT.tikz}{}{\input{./tikz/par_compCMAT.tikz}}
 & q\cdot \Cgen{h}{}{} \\
	F & (1-p)\cdot \star & (1-q)\cdot(
    \InputIfFileExists{seq_compCMAT1.tikz}{}{\input{./tikz/seq_compCMAT1.tikz}}
 +_r 
    \InputIfFileExists{seq_compCMAT2.tikz}{}{\input{./tikz/seq_compCMAT2.tikz}}
)
\end{pNiceMatrix}
$}}
$}
\]

\begin{figure}[t]
\centering
\setlength{\tabcolsep}{2pt}
\renewcommand{\arraystretch}{0.9}
\resizebox{0.92\linewidth}{!}{%
\begin{tabular}{r@{\;}c@{\;}l @{\qquad} r@{\;}c@{\;}l @{\qquad} r@{\;}c@{\;}l}
    \mylabelbis{ax:tapes:symminv}{\ensuremath{\symmp\text{-inv}}}%
    
    \InputIfFileExists{tapes/ax/symminv_left.tikz}{}{\input{./tikz/tapes/ax/symminv_left.tikz}}
 &$\stackrel{\eqref{ax:tapes:symminv}}{=}$& 
    \InputIfFileExists{tapes/ax/symminv_right.tikz}{}{\input{./tikz/tapes/ax/symminv_right.tikz}}

    &
    \mylabelbis{ax:tapes:symmnat}{\ensuremath{\symmp\text{-nat}}}%
    
    \InputIfFileExists{tapes/ax/symmnat_left.tikz}{}{\input{./tikz/tapes/ax/symmnat_left.tikz}}
 &$\stackrel{\eqref{ax:tapes:symmnat}}{=}$& 
    \InputIfFileExists{tapes/ax/symmnat_right.tikz}{}{\input{./tikz/tapes/ax/symmnat_right.tikz}}

    &
    \mylabelbis{ax:tapes:sigmainv}{\ensuremath{\sigma\text{-inv}}}%
    
    \InputIfFileExists{cb/symm_inv_left.tikz}{}{\input{./tikz/cb/symm_inv_left.tikz}}
 &$\stackrel{\eqref{ax:tapes:sigmainv}}{=}$& 
    \InputIfFileExists{cb/symm_inv_right.tikz}{}{\input{./tikz/cb/symm_inv_right.tikz}}

    \\
    \mylabelbis{ax:tapes:codiagas}{\ensuremath{\codiag{}\text{-as}}}%
    
    \InputIfFileExists{tapes/whiskered_ax/monoid_assoc_left.tikz}{}{\input{./tikz/tapes/whiskered_ax/monoid_assoc_left.tikz}}
 &$\stackrel{\eqref{ax:tapes:codiagas}}{=}$& 
    \InputIfFileExists{tapes/whiskered_ax/monoid_assoc_right.tikz}{}{\input{./tikz/tapes/whiskered_ax/monoid_assoc_right.tikz}}

    &
    \mylabelbis{ax:tapes:codiagun}{\ensuremath{\codiag{}\text{-un}}}%
    
    \InputIfFileExists{tapes/whiskered_ax/monoid_unit_left.tikz}{}{\input{./tikz/tapes/whiskered_ax/monoid_unit_left.tikz}}
 &$\stackrel{\eqref{ax:tapes:codiagun}}{=}$& \Twire{U}
    &
    \mylabelbis{ax:tapes:codiagsym}{\ensuremath{\codiag{}\text{-sym}}}%
    
    \InputIfFileExists{tapes/whiskered_ax/monoid_comm_left.tikz}{}{\input{./tikz/tapes/whiskered_ax/monoid_comm_left.tikz}}
 &$\stackrel{\eqref{ax:tapes:codiagsym}}{=}$& \Tmonoid{U}
    \\
    \mylabelbis{ax:tapes:diagpas}{\ensuremath{\diagp{}{}\text{-as}}}%
    
    \InputIfFileExists{tapes/cipriano/monoid_assoc_left.tikz}{}{\input{./tikz/tapes/cipriano/monoid_assoc_left.tikz}}
 &$\stackrel{\eqref{ax:tapes:diagpas}}{=}$& 
    \InputIfFileExists{tapes/cipriano/monoid_assoc_right.tikz}{}{\input{./tikz/tapes/cipriano/monoid_assoc_right.tikz}}

    &
    \mylabelbis{ax:tapes:diagpidem}{\ensuremath{\diagp{}\text{-idem}}}%
    \scalebox{0.8}{
    \InputIfFileExists{tapes/cipriano/idempotency.tikz}{}{\input{./tikz/tapes/cipriano/idempotency.tikz}}
} &$\stackrel{\eqref{ax:tapes:diagpidem}}{=}$& \Twire{U}
    &
    \mylabelbis{ax:tapes:diagpsym}{\ensuremath{\diagp{}{}\text{-sym}}}%
    
    \InputIfFileExists{tapes/cipriano/monoid_comm_left.tikz}{}{\input{./tikz/tapes/cipriano/monoid_comm_left.tikz}}
 &$\stackrel{\eqref{ax:tapes:diagpsym}}{=}$& 
    \InputIfFileExists{tapes/cipriano/pcomonoidbar.tikz}{}{\input{./tikz/tapes/cipriano/pcomonoidbar.tikz}}

    \\
    \multicolumn{9}{c}{%
        \scalebox{0.85}{%
        \begin{tabular}{r@{\;}c@{\;}l @{\quad} r@{\;}c@{\;}l @{\quad} r@{\;}c@{\;}l @{\quad} r@{\;}c@{\;}l}
            \mylabelbis{ax:tapes:diagpnat}{\ensuremath{\diagp{}\text{-nat}}}%
            
    \InputIfFileExists{tapes/cipriano/pcanatleft.tikz}{}{\input{./tikz/tapes/cipriano/pcanatleft.tikz}}
 &$\stackrel{\eqref{ax:tapes:diagpnat}}{=}$& 
    \InputIfFileExists{tapes/cipriano/pcanatright.tikz}{}{\input{./tikz/tapes/cipriano/pcanatright.tikz}}

            &
            \mylabelbis{ax:tapes:codiagnat}{\ensuremath{\codiag{}\text{-nat}}}%
            
    \InputIfFileExists{tapes/cipriano/monoidnatnewright.tikz}{}{\input{./tikz/tapes/cipriano/monoidnatnewright.tikz}}
 &$\stackrel{\eqref{ax:tapes:codiagnat}}{=}$& 
    \InputIfFileExists{tapes/cipriano/monoidnatnewleft.tikz}{}{\input{./tikz/tapes/cipriano/monoidnatnewleft.tikz}}

            &
            \mylabelbis{ax:tapes:bangnat}{\ensuremath{\,\bangp{}\text{-nat}}}%
            
    \InputIfFileExists{tapes/cipriano/bangnatleft.tikz}{}{\input{./tikz/tapes/cipriano/bangnatleft.tikz}}
 &$\stackrel{\eqref{ax:tapes:bangnat}}{=}$& 
    \InputIfFileExists{tapes/cipriano/bangnatright.tikz}{}{\input{./tikz/tapes/cipriano/bangnatright.tikz}}

            &
            \mylabelbis{ax:tapes:cobangnat}{\ensuremath{\,\cobang{}\text{-nat}}}%
            
    \InputIfFileExists{tapes/cipriano/cobangnatleft.tikz}{}{\input{./tikz/tapes/cipriano/cobangnatleft.tikz}}
 &$\stackrel{\eqref{ax:tapes:cobangnat}}{=}$& 
    \InputIfFileExists{tapes/cipriano/cobangnatright.tikz}{}{\input{./tikz/tapes/cipriano/cobangnatright.tikz}}

        \end{tabular}
        }%
    }
\end{tabular}
}%
\caption{Axioms for probabilistic tape diagrams.}
\label{fig:tapesax}
\end{figure}

\medskip 
It is now crucial to recall from \cite{bonchi2025tapediagramsmonoidalmonads} that $\per$ can be defined not only on string diagrams but also on tapes: for $\t_1 \colon P \to Q$, $\t_2 \colon R \to S$,  $\t_1 \per \t_2 \defeq \LW{P}{\t_2} ; \RW{S}{\t_1} $ where $\LW{P}{\cdot}$, $\RW{S}{\cdot}$ are the left and right whiskerings, defined as in Table~\ref{tab:producttape}. For objects $P = \Piu[i]{U_i}$ and $Q = \Piu[j]{V_j}$, 
\begin{equation}\label{def:productPolynomials} 
P \per Q \defeq \Piu[i]{\Piu[j]{U_iV_j}}\text{.}
\end{equation}

Theorem 27 in \cite{bonchi2025tapediagramsmonoidalmonads} guarantees that $(\CatT{\DiagT} ,\otimes, \uno)$ is a symmetric monoidal category. Such a category is monoidally enriched over $\Cat{PCA}$, i.e., for all $p\in(0,1)$ and properly typed $\t,\s_1,\s_2$:
\begin{equation}\label{eq:monoidalenrichment} \t \otimes (\s_1+_p \s_2)= (\t \otimes \s_1)+_p(\t \otimes \s_2) \qquad  (\s_1+_p \s_2)  \otimes \t= ( \s_1 \otimes \t )+_p(\s_2 \otimes  \t) \qquad \star \otimes \t= \star = \t\otimes \star \end{equation}
 Most importantly, the two monoidal categories $(\CatT{\DiagT}, \piu, \zero)$ and $(\CatT{\DiagT}, \otimes, \uno)$ interact through the laws of (right strict) rig categories (see \cite{laplaza_coherence_1972,johnson2021bimonoidal}).

\begin{theorem}[From \cite{bonchi2025tapediagramsmonoidalmonads}]
$(\,\CatT{\DiagT}, \oplus, \otimes, \zero, \uno \,)$ is a right strict rig category.
\end{theorem}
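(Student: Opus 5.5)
The strategy is to transport the question to stochastic matrices, where the rig structure can be checked by direct computation. Corollary~\ref{cor:isotapematrices} gives $\CatT{\DiagT}\cong\stmat{\DiagT^+}$ as convex biproduct categories, so $(\CatT{\DiagT},\oplus,\zero)$ is symmetric monoidal with the matrix direct sum. Under this isomorphism, $\otimes$ on tapes (defined through whiskerings) should correspond to an extended Kronecker product on matrices. For $M\colon \bigoplus_i U_i\to\bigoplus_j V_j$ and $N\colon\bigoplus_k U'_k\to\bigoplus_l V'_l$, its entries are
\[
(M\otimes N)_{(j,l),(i,k)} \defeq p_{ji}q_{lk}\cdot (f_{ji}\otimes g_{lk}).
\]
Here $\otimes$ inside the entry is the monoidal product of $\DiagT^+$, given on subdistributions by pushing $d_1\times d_2$ forward along $\otimes$ of string diagrams. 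The indices are ordered lexicographically, matching \eqref{def:productPolynomials}.

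The plan proceeds in four steps.
\begin{enumerate}
\item Check that $\DiagT^+$ is monoidally $\Cat{PCA}$-enriched, i.e.\ that it satisfies \eqref{eq:monoidalenrichment} at the level of subdistributions. This is immediate from the bilinearity of the product of distributions.
\item Show, by induction on the generators of Table~\ref{tab:producttape}, that the whiskerings $\LW{P}{-}$ and $\RW{S}{-}$ are sent to the matrices $\id{P}\otimes(-)$ and $(-)\otimes\id{S}$. It then follows that $\t_1\otimes\t_2=\LW{P}{\t_2};\RW{S}{\t_1}$ is sent to the Kronecker product, via the interchange law for Kronecker products. This interchange law holds entrywise by \eqref{eq:matrixmult} together with Lemma~\ref{lemma:initialproperties2}.
\item Verify the symmetric monoidal structure of $\otimes$ on matrices. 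Associativity of $\otimes$ is strict on the index sets. For the symmetry $\symmt{P}{Q}$, take the permutation matrix that reindexes $(i,k)\mapsto(k,i)$ with entries $1\cdot\symmt{U_i}{V_k}$. Its naturality follows from naturality of $\symmt{}{}$ in $\DiagT$.
\item Verify the rig axioms. Right distributivity $(P\oplus Q)\otimes R = P\otimes R\oplus Q\otimes R$ holds strictly, because the lexicographic order splits into blocks. Left distributivity $\delta^l_{P,Q,R}\colon P\otimes(Q\oplus R)\to P\otimes Q\oplus P\otimes R$ is a block permutation matrix. The annihilators $\zero\otimes P=\zero=P\otimes\zero$ are empty matrices.
\end{enumerate}
Each of Laplaza's coherence axioms then reduces to an equality between two permutation matrices whose entries are $1\cdot\id{}$ or $1\cdot\sigma^{\otimes}$. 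Such an equality amounts to comparing two bijections on index sets, together with the corresponding symmetries in $\DiagT$, and these agree because $\DiagT$ is strict symmetric monoidal.

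The main obstacle is step 2, namely that the whiskering-based definition of $\otimes$ on tapes agrees with the Kronecker product, and in particular that $\LW{P}{-}$ preserves the probabilistic splits $\diagp{}$ and the structure $\bang{}$. Here I would first prove, by induction on $P$ using \eqref{eq:indmoncopca}, that
\[
\LW{U}{\diagp{V}} = \diagp{UV}
\qquad\text{and}\qquad
\LW{P\oplus P'}{\t}=\LW{P}{\t}\oplus\LW{P'}{\t}
\]
up to the symmetries $\sigma^\oplus$. With these two facts the matrix computation for $\LW{P}{-}$ becomes mechanical.
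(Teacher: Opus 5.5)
Your argument is sound, but it is not how the statement is obtained in the paper. The paper gives no proof and imports the result from \cite{bonchi2025tapediagramsmonoidalmonads}, where tapes are treated for arbitrary monoidal monads. In that generality no matrix representation is available. So one has to show directly on terms that the whiskerings of Table~\ref{tab:producttape} respect all the tape axioms and satisfy the interchange law $\LW{P}{\t_2};\RW{S}{\t_1}=\RW{R}{\t_1};\LW{Q}{\t_2}$; the interchange law is where monoidality of the monad is needed.

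You instead pass through Corollary~\ref{cor:isotapematrices}, which is proved without any reference to $\otimes$, so there is no circularity. Both syntactic burdens then disappear. If your step~2 is proved for terms rather than for equivalence classes, injectivity of the isomorphism immediately gives that $\otimes$ is well defined on $\CatT{\DiagT}$ and bifunctorial. This is because on matrices the interchange law reduces to commutativity of multiplication in $[0,1]$ and to functoriality and bilinearity of $\otimes^+$.

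Your route also yields for free that $\CatT{\DiagT}\cong\stmat{\DiagT^+}$ as rig categories. This reverses the paper's direction, where the Kronecker product is obtained by transport from tapes. The price is that your argument only works for the subdistribution monad, where this matrix normal form exists.

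A few details to tighten:

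\begin{itemize}
\item \emph{Left whiskering is not the obstacle.} For monomials, $\LW{U}{\diagp{V}}=\diagp{UV}$ and $\LW{W\oplus S'}{\t}=\LW{W}{\t}\oplus\LW{S'}{\t}$ are defining clauses and hold on the nose, not up to symmetries. The polynomial cases follow from the inductive step.

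\item \emph{The real matrix check is right whiskering by a polynomial.} The symmetries enter in $\RW{W\oplus S'}{\t}$. With $M$ the matrix of $\t$, you must show that the matrix of $\RW{W\oplus S'}{\t}$ is $M\otimes\id{W\oplus S'}$. You do this by conjugating $(M\otimes\id{W})\oplus(M\otimes\id{S'})$ with the permutation matrices of the distributors $\dl{P}{W}{S'}$ and $\dl{Q}{W}{S'}$.

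\item \emph{Step~1 needs more than bilinearity.} You also need functoriality of $\otimes^+$ on $\DiagT^+$, i.e.\ the interchange law of $\DiagT$ lifted to convolution of subdistributions.

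\item \emph{Use the paper's structure maps.} Rather than choosing $\symmt{P}{Q}$ and the left distributor on matrices freely, check by induction that your permutation matrices are the images of the maps defined in Table~\ref{tab:producttape}. That is the structure the statement refers to.
\end{itemize}
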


\begin{table}[t]
    \begin{center}
    {
        \hfill {\tiny
  \[\begin{array}{c}
  \toprule
        \def\arraystretch{1.2}
        \begin{array}{cc}
            \begin{array}{@{}l}
                \dl{P}{Q}{R} \colon P \per (Q\piu R)  \to (P \per Q) \piu (P\per R) \vphantom{\symmt{P}{Q}} \\
                \midrule
                \dl{\zero}{Q}{R} \defeq \id{\zero} \vphantom{\symmt{P}{\zero} \defeq \id{\zero}} \qquad
                \dl{U \piu P'}{Q}{R} \defeq (\id{U\per (Q \piu R)} \piu \dl{P'}{Q}{R});(\id{U\per Q} \piu \symmp{U\per R}{P'\per Q} \piu \id{P'\per R}) \vphantom{\symmt{P}{V \piu Q'} \defeq \dl{P}{V}{Q'} ; (\Piu[i]{\tapesymm{U_i}{V}} \piu \symmt{P}{Q'})} \\
            \end{array}
            &
            \begin{array}{@{}l}
                \symmt{P}{Q} \colon P\per Q \to Q \per P, \text{ with } P = \Piu[i]{U_i} \\
                \midrule
                \symmt{P}{\zero} \defeq \id{\zero} \qquad
                \symmt{P}{V \piu Q'} \defeq \dl{P}{V}{Q'} ; (\Piu[i]{\tapesymm{U_i}{V}} \piu \symmt{P}{Q'})
                \phantom{\quad}
            \end{array}
        \end{array}\\
        \midrule
        \begin{array}{rclrcl|rclrcl}
            \midrule
            \LW U {\id\zero} &\defeq& \id\zero& \LW U {\t_1 \piu \t_2} &\defeq& \LW U {\t_1} \piu \LW U {\t_2} &  \RW U {\id\zero} &\defeq& \id\zero & \RW U {\t_1 \piu \t_2} &\defeq& \RW U {\t_1} \piu \RW U {\t_2} \\
            \LW U {\tape{c}} &\defeq& \tape{\id U \per c} &    \LW U {\t_1 ; \t_2} &\defeq& \LW U {\t_1} ; \LW U {\t_2}&  \RW U {\tape{c}} &\defeq& \tape{c \per \id U} & \RW U {\t_1 ; \t_2} &\defeq& \RW U {\t_1} ; \RW U {\t_2} \\
            \LW U {\symmp{V}{W}} &\defeq& \symmp{UV}{UW}       &       && & \RW U {\symmp{V}{W}} &\defeq& \symmp{VU}{WU}&  &&  \\
                        \LW U {\diagp V} &\defeq& \diagp{UV} &  \LW U {\bang V} &\defeq& \bang{UV}& \RW U {\diagp V} &\defeq& \diagp{VU} & \RW U {\bang V} &\defeq& \bang{VU} \\
            \LW U {\codiag V} &\defeq& \codiag{UV} &  \LW U {\cobang V} &\defeq& \cobang{UV}& \RW U {\codiag V} &\defeq& \codiag{VU} & \RW U {\cobang V} &\defeq& \cobang{VU} \\
            \hline \hline
            \LW{\zero}{\t} &\defeq& \id{\zero} &  \LW{W\piu S'}{\t} &\defeq& \LW{W}{\t} \piu \LW{S'}{\t} & \RW{\zero}{\t} &\defeq& \id{\zero} &
            \RW{W \piu S'}{\t} &\defeq& \dl{P}{W}{S'} ; (\RW{W}{\t} \piu \RW{S'}{\t}) ; \Idl{Q}{W}{S'} \\
            \hline \hline
            \multicolumn{12}{c}{
                \t_1 \per \t_2 \defeq \LW{P}{\t_2} ; \RW{S}{\t_1}   \quad \text{ ( for }\t_1 \colon P \to Q, \t_2 \colon R \to S   \text{ )}
            }
            \\
            \bottomrule
        \end{array}
    \end{array}        
      \]
      }
      \hfill
      \caption{Inductive definitions of  left distributor $\dl{P}{Q}{R}$ and $\otimes$-symmetry $\symmt{P}{Q}$ (top); monomial and polynomial whiskerings (center); definition of $\per$ (bottom).}\label{tab:producttape}
       }
    \end{center}
\end{table}

By means of the isomorphism in Corollary~\ref{cor:isotapematrices}, we can equip $\stmat{\DiagT^+}$ with an additional product $\otimes$. This is described as follows. On objects $P\otimes Q$ is defined as in \eqref{def:productPolynomials}. On arrows, first define, for all $d_1\in \DiagT^+[U_1,V_1]$ and $d_2\in \DiagT^+[U_2,V_2]$,  $d_1\otimes^+d_2 \in \DiagT^+[U_1U_2,V_1V_2]$ as
\[d_1\otimes^+ d_2 (c) \defeq \sum_{\{(c_1,c_2) \mid c_1\otimes c_2 =c\}} d_1(c_1)\cdot d_2(c_2) \qquad \text{ for }c\in \DiagT[U_1U_2,V_1V_2]\text{.}\]
(recall from Section~\ref{sec:pca} that $ \DiagT^+[U,V]=\Dis(\DiagT[U,V])$). Then for matrices, it is defined as the Kronecker product using $\otimes^+$ as multiplication. 
More precisely, let $M\colon\bigoplus_{i=1}^nU_i \to \bigoplus_{i'=1}^{n'}U_{i'}$ and $N\colon\bigoplus_{j=1}^m V_j \to \bigoplus_{j'=1}^{m'} V_{j'}$, $M\otimes N$ is a matrix of size $n'm'\times nm$ whose rows are indexed by pairs $(i',j')\in \{1,\dots, n'\}\times \{1,\dots m'\}$ and columns by pairs $(i,j)\in \{1,\dots, n\}\times \{1,\dots m\}$. Assuming that $M_{i',i}=p_{i',i} \cdot d_{i',i}$
and $N_{j',j}=q_{j',j} \cdot e_{j',j}$, $M\otimes N$ is defined for all indexes $(i',j'),(i,j)$ as
$(M\otimes N)_{(i',j'),(i,j)} \defeq (p_{i',i}\cdot q_{j',j}) \cdot (d_{i',i}\otimes^+ e_{j',j}) $.
\begin{corollary}
$\CatT{\DiagT}$ and $\stmat{\DiagT^+}$ are isomorphic as rig categories.
\end{corollary}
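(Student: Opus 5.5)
The plan is to transport the rig structure of $\CatT{\DiagT}$ along the isomorphism $\Phi\colon \CatT{\DiagT}\to\stmat{\DiagT^+}$ of Corollary~\ref{cor:isotapematrices}. Since $\Phi$ is an isomorphism of convex biproduct categories, it already preserves $\oplus$, $\zero$, the symmetries $\symmp{}{}$ and the monoids and co-pcas, by Proposition~\ref{prop: monoidal functors}. What remains is to show that $\Phi$ is strict monoidal for $\otimes$, where on the matrix side $\otimes$ is the Kronecker product built from $\otimes^+$. Concretely, we must check $\Phi(P\otimes Q)=\Phi(P)\otimes\Phi(Q)$ on objects, which is immediate from \eqref{def:productPolynomials}, and $\Phi(\t_1\otimes\t_2)=\Phi(\t_1)\otimes\Phi(\t_2)$ on arrows. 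It then follows formally that the distributors, the $\otimes$-symmetries and the remaining rig coherence data are sent to their matrix counterparts. We simply define these counterparts as the images under $\Phi$; the only content is that the image of $\otimes$ agrees with the explicit Kronecker formula.

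The arrow identity is reduced in three steps.

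First, since $\t_1\otimes\t_2\defeq\LW{P}{\t_2};\RW{S}{\t_1}$, and the Kronecker product satisfies the interchange law $M\otimes N=(\id{}\otimes N);(M\otimes\id{})$ by a direct entrywise computation with \eqref{eq:matrixmult}, it suffices to treat whiskerings. That is, we show $\Phi(\LW{P}{\t})=\id{\Phi P}\otimes\Phi(\t)$ and $\Phi(\RW{S}{\t})=\Phi(\t)\otimes\id{\Phi S}$.

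Second, the polynomial whiskerings are defined from the monomial ones by $\LW{W\oplus S'}{\t}=\LW W\t\oplus\LW{S'}\t$ and by the distributor-conjugated clause for $\RW{}{}$. For the Kronecker product, $(\id{}\oplus\id{})\otimes N$ is the block-diagonal sum, and $M\otimes(\id{}\oplus\id{})$ is a block sum up to the permutation matrices given by the image of $\dl{}{}{}$. So we reduce to monomial whiskerings by induction on the polynomial, using that $\Phi$ of $\dl{P}{W}{S'}$ is the expected permutation matrix; this itself follows by induction from its inductive definition via symmetries $\symmp{}{}$.

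Third, for a monomial $U$, we prove by structural induction on terms $\t$ that $\Phi(\LW U\t)$ is the matrix with entries $p_{ji}\cdot(\delta_{\id U}\otimes^+ d_{ji})$, and symmetrically for $\RW{}{}$. The base cases are read off Table~\ref{tab:producttape} and \eqref{eq:matmonpca}: for instance, $\LW U{\diagp V}=\diagp{UV}$ maps to the column $(p\cdot\id{UV},(1-p)\cdot\id{UV})^T$, and $\LW U{\tape c}=\tape{\id U\otimes c}$ maps to $(1\cdot\delta_{\id U\otimes c})$. The cases for $;$ and $\oplus$ use that $\Phi$ is a functor preserving $\oplus$.

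The main obstacle is the composition case. There we need $\delta_{\id U}\otimes^+(-)$ to distribute over composition in $\DiagT^+$ and over the convex sums arising in matrix multiplication. Equivalently, $d\mapsto\delta_{\id U}\otimes^+ d$ must be a $\Cat{PCA}$-enriched functor $\DiagT^+\to\DiagT^+$. This holds because it is $(\id U\otimes -)^+$ in the sense of Theorem~\ref{thm:freeenriched}, and that construction is functorial and pca-enriched. The same remark handles the general entries $d\otimes^+e$ via bifunctoriality of $\otimes$ in $\DiagT$, which also gives the interchange law used in the first step.
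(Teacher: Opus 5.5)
Your proposal is correct and takes the paper's route. The paper defines the $\otimes$ on $\stmat{\DiagT^+}$ by transporting it along the isomorphism of Corollary~\ref{cor:isotapematrices}, which makes the rig isomorphism immediate, and it asserts without proof that the transported product is the Kronecker product built from $\otimes^+$. Your Steps 1--3 correctly supply that missing verification. Step 3 can be shortened: $\LW{U}{-}$ is literally $\CatT{\id{U}\otimes -}$, so naturality in $\Cat{C}$ of the isomorphism $\CatT{\Cat{C}}\cong\stmat{\Cat{C}^+}$ turns it into $\stmat{(\id{U}\otimes -)^+}$, which acts entrywise as $\delta_{\id{U}}\otimes^+(-)$.
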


Obviously, the results illustrated in this section also hold for $\CatT{\Diag{\Sigma}}$, namely the category of tapes over string diagrams generated by the monoidal signature $\Sigma$, rather than over the theory $\mathbb{T}$. It is often convenient to use both  $\CatT{\Diag{\Sigma}}$ and  $\CatT{\Diag{\mathbb{T}}}$. These convex biproduct categories are related by the morphism
\[\CatT{Q_\mathbb{T}} \colon \CatT{\Diag{\Sigma}} \to \CatT{\Diag{\mathbb{T}}}\]
obtained by applying the functor $\CatT{-}$ to $Q_\mathbb{T}\colon \Diag{\Sigma} \to \Diag{\mathbb{T}}$, namely the symmetric monoidal functor mapping each string diagram into its $=_{\mathbb{T}}$-equivalence class. The equivalence induced by $\CatT{Q_\mathbb{T}}$ can be easily characterised as the congruence $\tapeeqT$ generated by the following rules.

\begin{equation}\label{eq:congr2}
        \scalebox{0.85}{$
        \centering
        \begin{array}{c}
        \begin{array}{c@{\qquad\qquad}c@{\qquad\qquad}c@{\qquad\qquad}c}
            \inferrule*[right=($\mathbb{T}$)]{c =_{\mathbb{T}} d}{\tape{c} \tapeeqT \tape{d}}
            &
            \inferrule*[right=($\textsc{R}$)]{-}{\t \tapeeqT \t}
            &
            \inferrule*[right=($\textsc{S}$)]{\t_1 \tapeeqT \t_2}{\t_2 \tapeeqT \t_1}
            &    
            \inferrule*[right=($\textsc{T}$)]{\t_1 \tapeeqT \t_2 \quad \t_2 \tapeeqT \t_3}{\t_1 \tapeeqT \t_3}
        \end{array}
        \\[10pt]
        \begin{array}{c@{\qquad}c@{\qquad}c}
            \inferrule*[right=($;$)]{\t_1 \tapeeqT \t_2 \quad \s_1 \tapeeqT \s_2}{\t_1;\s_1 \tapeeqT \t_2;\s_2}
            &
            \inferrule*[right=($\piu$)]{\t_1 \tapeeqT \t_2 \quad \s_1 \tapeeqT \s_2}{\t_1\piu\s_1 \tapeeqT \t_2 \piu \s_2}
            &
            \inferrule*[right=($\per $)]{\t_1 \tapeeqT \t_2 \quad \s_1 \tapeeqT \s_2}{\t_1\per \s_1 \tapeeqT \t_2 \per \s_2}       
        \end{array}
        \end{array}
        $}
\end{equation}

\begin{proposition}\label{prop:quotienttheory}
For all $\s , \t \in \CatT{\Diag{\Sigma}}$, $\s \tapeeqT \t$ iff $\CatT{Q_{\mathbb{T}}}(\s) = \CatT{Q_{\mathbb{T}}}(\t)$.
\end{proposition}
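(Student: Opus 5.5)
The two directions have very different flavours. Soundness is a routine induction; completeness requires understanding the kernel of $\CatT{Q_\mathbb{T}}$, and I plan to obtain it by building an explicit quotient and comparing universal properties.

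\emph{Soundness} ($\Rightarrow$). I would induct on the derivation of $\s \tapeeqT \t$. For rule ($\mathbb{T}$), $c =_{\mathbb{T}} d$ means $Q_\mathbb{T}(c)=Q_\mathbb{T}(d)$, so $\CatT{Q_\mathbb{T}}(\tape{c})=\tape{Q_\mathbb{T}(c)}=\tape{Q_\mathbb{T}(d)}=\CatT{Q_\mathbb{T}}(\tape{d})$. Rules R, S, T are immediate. Rules ($;$) and ($\piu$) hold because $\CatT{Q_\mathbb{T}}$ is a functor and preserves $\oplus$ strictly, by its inductive definition in the proof of Theorem~\ref{thm:syntacticadjunction}. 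For rule ($\per$), I would check, by induction on the whiskering clauses of Table~\ref{tab:producttape}, that $\CatT{Q_\mathbb{T}}$ commutes with $\LW{U}{\cdot}$ and $\RW{U}{\cdot}$. This uses the fact that $Q_\mathbb{T}$ is strict monoidal, so $Q_\mathbb{T}(\id U\per c)=\id U\per Q_\mathbb{T}(c)$, and that distributors and symmetries are built only from $\oplus$-structure, which is preserved on the nose. It follows that $\CatT{Q_\mathbb{T}}$ preserves $\per$.

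\emph{Completeness} ($\Leftarrow$). Let $\CatT{\Diag\Sigma}_{\tapeeqT}$ be the quotient by $\tapeeqT$. By soundness, $\CatT{Q_\mathbb{T}}$ factors as $Q_{\tapeeqT};K$ for a unique functor $K\colon \CatT{\Diag\Sigma}_{\tapeeqT}\to\CatT{\Diag{\mathbb{T}}}$, and it suffices to show that $K$ is injective on arrows. I would construct an inverse $L$ via the universal property of Theorem~\ref{thm:syntacticadjunction}.
\begin{enumerate}
\item First I would show that $\CatT{\Diag\Sigma}_{\tapeeqT}$ is a convex biproduct category and that $Q_{\tapeeqT}$ is a morphism, using Proposition~\ref{cor: quotient category is convex biproduct }. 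This requires $\tapeeqT$ to be cancellative.
\item Next, the functor $G\colon\Diag{\mathbb{T}}\to U(\CatT{\Diag\Sigma}_{\tapeeqT})$ sending $[c]_{\mathbb{T}}\mapsto[\tape{c}]$ is well defined by rule ($\mathbb{T}$), and it is a functor by (Tape).
\item This yields $L=G^\sharp$.
\item Finally, $L\circ K$ and $K\circ L$ are morphisms of convex biproduct categories that agree with the identity on generators $\tape{c}$. Since $\CatT{\Diag\Sigma}_{\tapeeqT}$ is generated by the images of generators together with the structure, and by the uniqueness part of the adjunction for $K\circ L$, both composites are identities.
\end{enumerate}

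The main obstacle is cancellativity of $\tapeeqT$: showing that $p\cdot\s\tapeeqT p\cdot\t$ implies $\s\tapeeqT\t$. A congruence generated by rules need not be cancellative a priori. My first attempt would avoid a syntactic argument and instead use $\CatT{Q_\mathbb{T}}$ itself. The kernel of $\CatT{Q_\mathbb{T}}$ is cancellative, because $\CatT{\Diag{\mathbb{T}}}$ is cancellative by Lemma~\ref{cancellativity} and $\CatT{Q_\mathbb{T}}$ preserves scalar multiplication. But this is circular unless the two relations are already known to coincide. So I would instead prove cancellativity directly, mimicking the proof of Lemma~\ref{cancellativity}, by passing through the matrix normal form. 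Each $\tape{c}$ corresponds to the entry $1\cdot\delta_c$, and the relation $\tapeeqT$ on matrices over $\Diag\Sigma^+$ is then entrywise equality after pushing subdistributions forward along $Q_\mathbb{T}$. That pushforward is evidently cancellative, since $\Dis$ of a set is cancellative. Making this identification precise, namely that $\tapeeqT$ relates any two tapes whose matrices have equal pushforwards, is where the real work lies. I would do it by using Lemma~\ref{decomposition} to bring both sides to normal form and then applying rule ($\mathbb{T}$) entrywise.
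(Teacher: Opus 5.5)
Your soundness direction is fine and matches the paper's check that the comparison functor is well defined on $\tapeeqT$-classes, including the ($\per$) case. The completeness direction, however, has a genuine gap.

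You route the argument through the universal property of Theorem~\ref{thm:syntacticadjunction} in $\Cat{CBCat}$. That forces you to show that $\CatT{\Diag{\Sigma}}_{\tapeeqT}$ is a convex biproduct category, and hence that $\tapeeqT$ is cancellative. The way you propose to get cancellativity is the claim that $\tapeeqT$ relates any two tapes whose matrices have equal pushforwards along $Q_{\mathbb{T}}$. Through the natural isomorphisms of Corollary~\ref{cor:isotapematrices}, that claim is exactly the implication $\CatT{Q_{\mathbb{T}}}(\s)=\CatT{Q_{\mathbb{T}}}(\t)\Rightarrow \s\tapeeqT\t$ you are trying to prove. So steps 1--4 either rest on the conclusion or become superfluous, and the real content is left as a sketch.

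That sketch also needs care. Suppose you bring tapes to the form of Lemma~\ref{decomposition} and then try to reassemble the columns from $p_i\cdot \s_i \tapeeqT p'_i\cdot \t_i$ with $p_i\neq p'_i$. Doing so uses Lemma~\ref{lemma: convex products quoziente}, which presupposes cancellativity of the congruence. A non-circular version would go as follows. Fix a representative in each $=_{\mathbb{T}}$-class, and rewrite every $\tape{c}$ to its representative using rule ($\mathbb{T}$) and the congruence rules. The two normalised tapes then have $\equiv$-equal matrices over $\Diag{\Sigma}^+$, so they are equal in $\CatT{\Diag{\Sigma}}$ by Corollary~\ref{cor:isotapematrices}. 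That would be a complete direct proof on its own.

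The missing idea that makes all of this unnecessary is that $\CatT{\Diag{\mathbb{T}}}$ is defined by a \emph{presentation}. It consists of terms over generators modulo the axioms of Tables~\ref{fig:freestrictfccat}, \ref{fig:freecopcacat}, \ref{fig:freestricmmoncatax} and \eqref{eq:TapeFunctAxioms}. To map out of it you only need a target satisfying these equations, not a convex biproduct category; the inductive definition of $F^\sharp$ in the proof of Theorem~\ref{thm:syntacticadjunction} never uses convex products.

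The quotient $\CatT{\Diag{\Sigma}}_{\tapeeqT}$ satisfies all of these equations, because it is a quotient of $\CatT{\Diag{\Sigma}}$ by a congruence. Moreover, the assignment $\tape{[c]_{\mathbb{T}}}\mapsto[\tape{c}]_{\tapeeqT}$ is independent of the representative by rule ($\mathbb{T}$), and it respects \eqref{eq:TapeFunctAxioms}.

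This is exactly what the paper does. It defines the inverse $G$ by structural induction. It then notes that $F\circ G$ and $G\circ F$ fix generators and commute with the structure, so both are identities. No cancellativity is needed.
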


\begin{figure}
\centering
\begingroup
\setlength{\tabcolsep}{4pt}
\renewcommand{\arraystretch}{1.05}
\small
\begin{tabular}{c c c}
\toprule
Symbol & Diagram & Matrix\\
\midrule

$\diagp{U}$
&

    \InputIfFileExists{/tapes/cipriano/pcomonoid.tikz}{}{\input{./tikz//tapes/cipriano/pcomonoid.tikz}}

&
\raisebox{-0.4ex}{$
\begin{pNiceMatrix}[first-col,first-row]
\rotatebox{90}{$\Lsh$} & U \\
U & p\cdot \wire{U} \\
U & (1-p)\cdot \wire{U}
\end{pNiceMatrix}
$}
\\[0.2em]

$\bangp{U}$
&
\Tcounit{U}
&
\raisebox{-0.4ex}{$
\begin{pNiceMatrix}[first-col,first-row]
\rotatebox{90}{$\Lsh$} \\
U
\end{pNiceMatrix}
$}
\\[0.2em]

$\tapeFunct{c}$
&
\Tcirc{c}{U}{V}
&
\raisebox{-0.4ex}{$
\begin{pNiceMatrix}[first-col,first-row]
\rotatebox{90}{$\Lsh$} & U \\
V & 1\cdot \Cgen{c}{U}{V}
\end{pNiceMatrix}
$}
\\[0.2em]

$\cobang{U}$
&
\Tunit{U}
&
\raisebox{-0.4ex}{$
\begin{pNiceMatrix}[first-col,first-row]
\rotatebox{90}{$\Lsh$} & U
\end{pNiceMatrix}
$}
\\[0.2em]

$\codiag{U}$
&
\Tmonoid{U}
&
\raisebox{-0.4ex}{$
\begin{pNiceMatrix}[first-col,first-row]
\rotatebox{90}{$\Lsh$} & U & U \\
U & 1\cdot \wire{U} & 1\cdot \wire{U}
\end{pNiceMatrix}
$}
\\[0.2em]

$\id{U}$
&
\Twire{U}
&
\raisebox{-0.4ex}{$
\begin{pNiceMatrix}[first-col,first-row]
\rotatebox{90}{$\Lsh$} & U \\
U & 1\cdot \wire{U}
\end{pNiceMatrix}
$}
\\[0.2em]

$\id{\zero}$
&

    \InputIfFileExists{empty.tikz}{}{\input{./tikz/empty.tikz}}

&
\raisebox{-0.4ex}{$
\begin{pNiceMatrix}[first-col,first-row]
\rotatebox{90}{$\Lsh$}
\end{pNiceMatrix}
$}
\\[0.2em]

$\sigma_{U,V}^{\piu}$
&
\Tsymmp{U}{V}
&
\raisebox{-0.4ex}{$
\begin{pNiceMatrix}[first-col,first-row]
\rotatebox{90}{$\Lsh$} & U & V \\
V & 0\cdot \star_{U,V} & 1\cdot \wire{V} \\
U & 1\cdot \wire{U} & 0\cdot \star_{V,U}
\end{pNiceMatrix}
$}
\\[0.2em]

$\t ; \t$
&

    \InputIfFileExists{tapes/seq_comp.tikz}{}{\input{./tikz/tapes/seq_comp.tikz}}

&
matrix multiplication
\\[0.2em]

$\t \piu \t$
&

    \InputIfFileExists{tapes/par_comp.tikz}{}{\input{./tikz/tapes/par_comp.tikz}}

&
direct sum of matrices
\\

\bottomrule
\end{tabular}
\endgroup
\caption{Dictionary of correspondences}\label{fig:dictionary}
\end{figure}

%% file: sections/probbooleancircuits.tex
\section{Probabilistic Boolean Circuits}\label{sec:probboolcircuits}
Probabilistic Boolean circuits were introduced in \cite{piedeleu2025boolean} as a string-diagrammatic language for finite probabilistic programming. In this section, we recall their syntax and their semantics. In the following section, we provide a complete axiomatization by encoding them into tape diagrams.

Our presentation is staged. We begin with Boolean circuits (Section~\ref{ssec:boolean}), then move to partial Boolean circuits (Section~\ref{ssec:partialboolean}), and finally illustrate probabilistic Boolean circuits (Section~\ref{ssec:probabilisticboolean}). We also provide a complete axiomatization for partial Boolean circuits, which, to the best of our knowledge, has not previously appeared in the literature.

\subsection{Boolean Circuits}\label{ssec:boolean}
Consider the monoidal signature with a single sort, $\sort = \{ A\}$, and  generators 
\[ \SigB \defeq \{\Andgate  , \Notgate , \Flip{1},   \CBcopier, \CBdischarger \} \text{.}\]  
Arities and coarities are determined by the number of ports on the left and on the right: for instance, $\Andgate$ has arity $A^2=AA$ and coarity $A$, while $\CBdischarger $ has arity $A$ and coarity $A^0 =1$. 

The arrows of $\Diag{\SigB}$, namely the category of string diagrams generated by the monoidal signature $\SigB$ (see Section~\ref{sec:tapediagrams}), are interpreted as Boolean functions. 
The first three generators of $\SigB$ correspond to the operations and constants of Boolean algebras, namely $\wedge$, $\neg$, and $1$. 
The generator \emph{copy}, denoted by $\CBcopier$, takes a Boolean signal as input and produces two identical outputs. 
The generator \emph{discard}, denoted by $\CBdischarger$, takes a Boolean signal as input and discards it.

Formally, the semantics of these diagrams is given by the monoidal functor  $\osemB{-}\colon \Diag{{\SigB}} \to \Sets$ defined on objects as $A^n \mapsto 2^n$ where $2$ is the set of Booleans  $\{0,1\}$; for arrows it is defined by first interpreting the generators as functions in $\Sets$
  \[
    \begin{array}{rclrclrcl}
      \osemB{\Andgate}\colon 2 \times 2& \to & 2  & \osemB{\Notgate}\colon  2& \to & 2 & \osemB{\Flip{1}} \colon 1 & \to & 2 \\
      (x,y)&\mapsto& x\wedge y &  x & \mapsto & \neg x & \bullet &\mapsto&1\\[4pt]
       \osemB{\CBcopier} \colon 2  & \to & 2\times 2 & \osemB{\CBdischarger} \colon 2 & \to & 1 & 
       \\
      x &\mapsto&(x,x) & x &\mapsto&{\bullet} & 
    \end{array}
  \]
and then inductively by means of the structure of the monoidal category $(\Sets, \otimes ,\uno)$: for instance, $\osemB{c_1\otimes c_2}=\osemB{c_1}\otimes \osemB{c_2}$ where the second $\otimes$ is the standard cartesian product of functions. 

The monoidal theory $\ThB$ consists of the signature $\SigB$ together with the axioms listed in Figure~\ref{tab:booleanalgebra}. 
The equalities in the first block are the standard axioms of Boolean algebras. 
The constant $\Flip{0}$ and the $\Orgate$ gate are defined as expected (see Figure~\ref{table:multiplexer}). 
Note that, in the string-diagrammatic setting, the structural maps $\CBcopier$ and $\CBdischarger$ must be made explicit.
The equalities in the second block state that $\CBcopier$ and $\CBdischarger$ form a commutative comonoid. 
The equalities in the third and fourth blocks impose naturality of $\CBcopier$ and $\CBdischarger$, respectively.


The congruence with respect to $;$ and $\otimes$ generated by the axioms in Figure~\ref{tab:booleanalgebra} is denoted by $\eqB$. 
We write $\Diag{\ThB}$ for the quotient of $\Diag{\SigB}$ by $\eqB$.

As expected, the equalities in Figure~\ref{tab:booleanalgebra} provide a sound and complete axiomatization of the equality induced by $\osemB{-}$.

\begin{theorem}\label{thm:completenessbooleancircuits}
For all  $c,d\in \Diag{\SigB}$, $\osemB{c}=\osemB{d}$ iff $c\eqB d$.
\end{theorem}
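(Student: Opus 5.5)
Soundness (if $c\eqB d$ then $\osemB{c}=\osemB{d}$) is the easy direction and I would dispatch it first. Since $\osemB{-}$ is a symmetric monoidal functor into $(\Sets,\times,\uno)$, the kernel of $\osemB{-}$ is a congruence with respect to $;$ and $\otimes$. It therefore suffices to check each axiom of Figure~\ref{tab:booleanalgebra} on the generators' interpretations. The Boolean algebra laws hold in $2$. The comonoid laws and the naturality of copy and discard hold because $\Sets$ is cartesian and every map there is total and deterministic.

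For completeness, the plan is a normal form argument in the style of Lafont's presentation of Boolean circuits, and the cartesian structure does most of the work. First, using the comonoid laws together with the naturality of $\CBcopier$ and $\CBdischarger$ for all generators, I show that $\Diag{\ThB}$ is a cartesian category with $\otimes$ as product. Concretely, every arrow $c\colon A^n\to A^m$ is $\eqB$-equal to $\CBcopier^{(m)};(c_1\otimes\dots\otimes c_m)$, where $\CBcopier^{(m)}$ denotes the $m$-fold copy on $A^n$ and each $c_j\colon A^n\to A$ is $c$ followed by discarding all outputs but the $j$-th. Naturality makes this valid: copying the input and running $c$ twice equals running $c$ once and copying the output. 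Since $\osemB{-}$ also preserves products, $\osemB{c}=\osemB{d}$ implies $\osemB{c_j}=\osemB{d_j}$ for every $j$. This reduces the problem to single-output circuits $A^n\to A$.

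Second, I translate single-output circuits into Boolean terms. By induction on the structure of $c$, using copy, discard and naturality to move wires around, every $c\colon A^n\to A$ is $\eqB$-equal to the string-diagram encoding of a term $t(x_1,\dots,x_n)$ in the language $\{\wedge,\neg,1\}$. In this encoding, variables become projections built from discards, and repeated use of a variable becomes copies. Conversely, the encoding respects the Boolean algebra axioms. If the encodings are well defined modulo the comonoid and naturality laws, then $\osemB{c}$ is exactly the function denoted by $t$.

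Finally, if $\osemB{c}=\osemB{d}$ for $c,d\colon A^n\to A$, then the corresponding terms $t$ and $s$ denote the same Boolean function. By completeness of the Boolean algebra axioms for the two-element algebra, $t=s$ is derivable equationally, for example by reduction to disjunctive normal form over $n$ variables. Each equational step, including substitution instances, then lifts to an $\eqB$ step between encodings. The main obstacle is this lifting, since term substitution duplicates or deletes subterms. Handling that requires exactly the naturality of $\CBcopier$ and $\CBdischarger$ with respect to arbitrary circuits, which must first be derived from naturality on generators by induction. I would isolate that as a lemma: $\Diag{\ThB}$ is the free cartesian category on the signature modulo the Boolean equations, i.e.\ the Lawvere theory of Boolean algebras. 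The theorem then follows because that Lawvere theory is faithfully represented by functions on $2$.
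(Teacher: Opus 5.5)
Your proposal is correct. It differs from the paper only in that the paper does not prove this theorem itself; it simply defers to \cite[Theorem~3.2]{piedeleu2025boolean}.

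What you give is the standard argument behind such a citation, in two parts.
\begin{itemize}
\item First, you extend naturality of copy and discard from the generators to arbitrary circuits by structural induction. The paper itself records this later as the derived laws (D4)--(D5) of Lemma~\ref{lemma:multiplexer}. By Fox's theorem \cite{fox1976coalgebras}, $\Diag{\ThB}$ is then the free cartesian category on the signature modulo the first block of axioms, i.e.\ the Lawvere theory of that equational theory.
\item Second, you invoke completeness of that equational theory for the two-element algebra.
\end{itemize}
This split makes visible something the citation hides. The comonoid and naturality blocks only install the cartesian structure, and all Boolean content lives in the first block. The result therefore rests on that block being a complete axiomatisation of Boolean algebras over $\{\wedge,\neg,1\}$, which the paper takes on trust by calling them ``the standard axioms''.

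For the lifting step you flag, one extra detail is needed. An equational derivation of $t=s$ may pass through terms containing variables absent from both $t$ and $s$, for example when using $x\wedge\neg x=0$ from right to left. Substituting the constant $1$ for such variables throughout the derivation keeps every step valid and keeps every intermediate term in the context of $t$ and $s$. Each step then lifts to an $\eqB$-step between diagrams of the same type. Alternatively, both the translation to terms and the lifting of derivations can be replaced by quoting the general correspondence between cartesian string-diagrammatic presentations and Lawvere theories.

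The citation buys brevity. Your route buys self-containedness and a precise account of which axioms are used where.
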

\begin{proof}
See e.g. \cite[Theorem 3.2]{piedeleu2025boolean}.
\end{proof}

\begin{figure}
  \scalebox{0.8}{%
  \begin{tabular}{ccccc}


  \mylabelbis{ax:B1}{B1}%
  $
    \InputIfFileExists{tapes/cipriano/axB1left.tikz}{}{\input{./tikz/tapes/cipriano/axB1left.tikz}}
\raisebox{-1.5ex}{\,$\stackrel{\eqref{ax:B1}}{=}$\,}
    \InputIfFileExists{tapes/cipriano/axB1right.tikz}{}{\input{./tikz/tapes/cipriano/axB1right.tikz}}
$
  & &
  \mylabelbis{ax:B2l}{B2l}\mylabelbis{ax:B2r}{B2r}%
  $
    \InputIfFileExists{tapes/cipriano/axB2left.tikz}{}{\input{./tikz/tapes/cipriano/axB2left.tikz}}
\raisebox{-1.5ex}{\,$\stackrel{\eqref{ax:B2l}}{=}$\,}
    \InputIfFileExists{tapes/cipriano/axB2middle.tikz}{}{\input{./tikz/tapes/cipriano/axB2middle.tikz}}
\raisebox{-1.5ex}{\,$\stackrel{\eqref{ax:B2r}}{=}$\,}
    \InputIfFileExists{tapes/cipriano/axB2right.tikz}{}{\input{./tikz/tapes/cipriano/axB2right.tikz}}
$
  & &
  \mylabelbis{ax:B3}{B3}%
  $
    \InputIfFileExists{tapes/cipriano/axB3left.tikz}{}{\input{./tikz/tapes/cipriano/axB3left.tikz}}
\raisebox{-1.5ex}{\,$\stackrel{\eqref{ax:B3}}{=}$\,}
    \InputIfFileExists{tapes/cipriano/axB3right.tikz}{}{\input{./tikz/tapes/cipriano/axB3right.tikz}}
$
  \\

  \mylabelbis{ax:B4}{B4}%
  $
    \InputIfFileExists{tapes/cipriano/axB4left.tikz}{}{\input{./tikz/tapes/cipriano/axB4left.tikz}}
\raisebox{-1.5ex}{\,$\stackrel{\eqref{ax:B4}}{=}$\,}
    \InputIfFileExists{tapes/cipriano/axB4right.tikz}{}{\input{./tikz/tapes/cipriano/axB4right.tikz}}
$
  & &
  \mylabelbis{ax:B5}{B5}%
  $
    \InputIfFileExists{tapes/cipriano/axB5left.tikz}{}{\input{./tikz/tapes/cipriano/axB5left.tikz}}
\raisebox{-1.5ex}{\,$\stackrel{\eqref{ax:B5}}{=}$\,}
    \InputIfFileExists{tapes/cipriano/axB5right.tikz}{}{\input{./tikz/tapes/cipriano/axB5right.tikz}}
$
  & &
  \mylabelbis{ax:B6}{B6}%
  $
    \InputIfFileExists{tapes/cipriano/axB6left.tikz}{}{\input{./tikz/tapes/cipriano/axB6left.tikz}}
\raisebox{-1.5ex}{\,$\stackrel{\eqref{ax:B6}}{=}$\,}
    \InputIfFileExists{tapes/cipriano/axB6right.tikz}{}{\input{./tikz/tapes/cipriano/axB6right.tikz}}
$
  \\

  & &
  \mylabelbis{ax:B7}{B7}%
  $
    \InputIfFileExists{tapes/cipriano/axB7left.tikz}{}{\input{./tikz/tapes/cipriano/axB7left.tikz}}
\raisebox{-1.5ex}{\,$\stackrel{\eqref{ax:B7}}{=}$\,}
    \InputIfFileExists{tapes/cipriano/axB7right.tikz}{}{\input{./tikz/tapes/cipriano/axB7right.tikz}}
$
  & &
  \\

  \midrule


  \mylabelbis{ax:A1}{B8}%
  $
    \InputIfFileExists{tapes/cipriano/axA1left.tikz}{}{\input{./tikz/tapes/cipriano/axA1left.tikz}}
\raisebox{-1.5ex}{\,$\stackrel{\eqref{ax:A1}}{=}$\,}
    \InputIfFileExists{tapes/cipriano/axA1right.tikz}{}{\input{./tikz/tapes/cipriano/axA1right.tikz}}
$
  & &
  \mylabelbis{ax:A2l}{B9l}\mylabelbis{ax:A2r}{B9r}%
  $
    \InputIfFileExists{tapes/cipriano/axA2left.tikz}{}{\input{./tikz/tapes/cipriano/axA2left.tikz}}
\raisebox{-1.5ex}{\,$\stackrel{\eqref{ax:A2l}}{=}$\,}
    \InputIfFileExists{tapes/cipriano/axA2middle.tikz}{}{\input{./tikz/tapes/cipriano/axA2middle.tikz}}
\raisebox{-1.5ex}{\,$\stackrel{\eqref{ax:A2r}}{=}$\,}
    \InputIfFileExists{tapes/cipriano/axA2right.tikz}{}{\input{./tikz/tapes/cipriano/axA2right.tikz}}
$
  & &
  \mylabelbis{ax:A3}{B10}%
  $
    \InputIfFileExists{tapes/cipriano/axA3left.tikz}{}{\input{./tikz/tapes/cipriano/axA3left.tikz}}
\raisebox{-1.5ex}{\,$\stackrel{\eqref{ax:A3}}{=}$\,}
    \InputIfFileExists{tapes/cipriano/axA3right.tikz}{}{\input{./tikz/tapes/cipriano/axA3right.tikz}}
$
  \\

  \midrule


  \mylabelbis{ax:C1}{B11}%
  $
    \InputIfFileExists{tapes/cipriano/axC1left.tikz}{}{\input{./tikz/tapes/cipriano/axC1left.tikz}}
\raisebox{-0.6ex}{\,$\stackrel{\eqref{ax:C1}}{=}$\,}
    \InputIfFileExists{tapes/cipriano/axC1right.tikz}{}{\input{./tikz/tapes/cipriano/axC1right.tikz}}
$
  & &
  \mylabelbis{ax:C2}{B12}%
  $
    \InputIfFileExists{tapes/cipriano/axC2left.tikz}{}{\input{./tikz/tapes/cipriano/axC2left.tikz}}
\raisebox{-0.6ex}{\,$\stackrel{\eqref{ax:C2}}{=}$\,}
    \InputIfFileExists{tapes/cipriano/axC2right.tikz}{}{\input{./tikz/tapes/cipriano/axC2right.tikz}}
$
  & &
  \mylabelbis{ax:C3}{B13}%
  $
    \InputIfFileExists{tapes/cipriano/axC3left.tikz}{}{\input{./tikz/tapes/cipriano/axC3left.tikz}}
\raisebox{-0.6ex}{\,$\stackrel{\eqref{ax:C3}}{=}$\,}
    \InputIfFileExists{tapes/cipriano/axC3right.tikz}{}{\input{./tikz/tapes/cipriano/axC3right.tikz}}
$
  \\

  \midrule


  \mylabelbis{ax:D1}{B14}%
  $
    \InputIfFileExists{tapes/cipriano/axD1left.tikz}{}{\input{./tikz/tapes/cipriano/axD1left.tikz}}
\raisebox{-0.6ex}{\,$\stackrel{\eqref{ax:D1}}{=}$\,}
    \InputIfFileExists{tapes/cipriano/axD1right.tikz}{}{\input{./tikz/tapes/cipriano/axD1right.tikz}}
$
  & &
  \mylabelbis{ax:D2}{B15}%
  $
    \InputIfFileExists{tapes/cipriano/axD2left.tikz}{}{\input{./tikz/tapes/cipriano/axD2left.tikz}}
\raisebox{-0.6ex}{$\stackrel{\eqref{ax:D2}}{=}$}
    \InputIfFileExists{tapes/cipriano/axD2right.tikz}{}{\input{./tikz/tapes/cipriano/axD2right.tikz}}
$
  & &
  \mylabelbis{ax:D3}{B16}%
  $
    \InputIfFileExists{tapes/cipriano/axD3left.tikz}{}{\input{./tikz/tapes/cipriano/axD3left.tikz}}
\raisebox{-0.6ex}{$\stackrel{\eqref{ax:D3}}{=}$}
  \raisebox{-0.5em}{
    \InputIfFileExists{empty.tikz}{}{\input{./tikz/empty.tikz}}
}$
  \\

  \end{tabular}}%
  \caption{The monoidal theory of Boolean algebras.}
  \label{tab:booleanalgebra}
\end{figure}
Moreover, one can characterise exactly the image of $\Diag{\SigB}$ through $\osemB{-}$. Let $\Sets_2$ be the full subcategory of sets where objects are powers of the set $2$, i.e.,  $2^n$ for all $n \in \mathbb{N}$.

\begin{proposition}\label{prop:foolboolean}
The monoidal functor  $\osemB{-}\colon \Diag{{\SigB}} \to \Sets$ factors as
\[\xymatrix{ \Diag{\SigB} \ar@{->>}[r]& \Diag{\ThB} \ar[r]^{\cong}& \Sets_2 \ar@{^{(}->}[r]& \Sets }\]
where the leftmost and rightmost functor are the obvious quotients and injections and the central arrow is a monoidal isomorphism.
\end{proposition}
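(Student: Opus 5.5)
The plan is to derive the factorisation from Theorem~\ref{thm:completenessbooleancircuits} plus a functional completeness argument. First observe that $\osemB{-}$ sends every object $A^n$ to $2^n$, so it lands in $\Sets_2$. By soundness (the ``if'' direction of Theorem~\ref{thm:completenessbooleancircuits}), $c\eqB d$ implies $\osemB{c}=\osemB{d}$. Hence $\osemB{-}$ factors through the quotient $\Diag{\SigB}\twoheadrightarrow\Diag{\ThB}$, giving a strict monoidal functor $\Diag{\ThB}\to\Sets_2$, which we then compose with the inclusion $\Sets_2\hookrightarrow\Sets$. The induced functor is monoidal because $\eqB$ is a congruence for $;$ and $\otimes$, and $\osemB{-}$ is defined inductively using the cartesian monoidal structure.

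It remains to show that the central functor is an isomorphism. On objects it is the bijection $A^n\mapsto 2^n$, since distinct $n$ give sets of distinct cardinality. Faithfulness is exactly the ``only if'' direction of Theorem~\ref{thm:completenessbooleancircuits}: equal semantics implies $c\eqB d$, so equality holds in $\Diag{\ThB}$. The main work, and the expected obstacle, is fullness: every function $f\colon 2^n\to 2^m$ must be $\osemB{c}$ for some circuit $c$.

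We reduce fullness to the case $m=1$. Using $\CBcopier$ and its iterates (together with symmetries), we copy the $n$ inputs $m$ times. We then apply in parallel circuits $c_1,\dots,c_m$ realising the component functions $f_j\colon 2^n\to 2$. The copy axioms ensure this composite has semantics $\langle f_1,\dots,f_m\rangle$; the case $m=0$ is handled by discarding all inputs with $\CBdischarger$.

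For $m=1$ we use functional completeness of $\{\wedge,\neg,1\}$, from which $\vee$ and $0$ are derived as in Figure~\ref{table:multiplexer}. Write $f$ in disjunctive normal form as a disjunction over the points $x\in f^{-1}(1)$ of minterms $\bigwedge_i \ell_i$, where each literal $\ell_i$ is $x_i$ or $\neg x_i$; the empty disjunction is the constant $0$. Each minterm is built by copying the inputs, applying $\Notgate$ where needed, and combining with iterated $\Andgate$, using $\Flip{1}$ composed with discards for the empty conjunction when $n=0$. The disjunction is then assembled with iterated $\vee$.

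The only care needed in this last step is the bookkeeping of wire copying and the degenerate cases $n=0$, $m=0$ and $f^{-1}(1)=\emptyset$. With those handled, the central functor is bijective on objects and on each homset, so it is an isomorphism.
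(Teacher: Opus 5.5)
Your proposal is correct and follows the same route as the paper. Faithfulness of the central functor comes from the completeness direction of Theorem~\ref{thm:completenessbooleancircuits}. Fullness comes from realising each $f\colon 2^n\to 2^m$ as a vector of $m$ Boolean formulas built from the gates, $\CBcopier$ and $\CBdischarger$. You simply spell out what the paper's two-line proof leaves implicit: the factorisation through the quotient via soundness, bijectivity on objects, and the explicit disjunctive-normal-form construction with its degenerate cases.
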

\begin{proof}
Faithfulness of the central functor is a direct consequence of Theorem~\ref{thm:completenessbooleancircuits}. Fullness follows from the fact that every function $f\colon 2^n \to 2^m$ corresponds to a vector of $m$ Boolean formulas in $n$ variables, which can be easily encoded as a string diagram in $\Diag{\ThB}$ using Boolean operators, $\CBcopier$ and $\CBdischarger$.
\end{proof}

We conclude this section by fixing some syntactic sugar, collected in Figure~\ref{table:multiplexer}. 
The \emph{xnor} gate $\xnor$  takes two Boolean inputs and outputs $1$ if they are equal and $0$ otherwise. 
The \emph{multiplexer} $
    \InputIfFileExists{tapes/cipriano/multi.tikz}{}{\input{./tikz/tapes/cipriano/multi.tikz}}
$ takes three inputs and returns one output: if the first input is $1$, then it outputs the second input, otherwise the third input. 
The $m$-ary multiplexer $\Ifgatem$ works similarly, but it takes $2m+1$ inputs: if the first input is $1$, then it outputs the first $m$ inputs, otherwise it outputs the second $m$ inputs. 
Figure~\ref{table:multiplexer} also defines $m$-ary discard $\CBdischargerm$ and copier $\mcopier$. Note that, for the sake of readability of string diagrams, we label wires by  some natural number $n$ as an abbreviation for $A^n$.

\begin{lemma}\label{lemma:multiplexer}
 Let $c$ be a diagram in $\Diag{\SigB}[n,m]$. The following derived laws hold:

\mylabelbis{ax:M1}{D1}
\mylabelbis{ax:M2}{D2}
\mylabelbis{ax:M3}{D3}
\mylabelbis{ax:N1}{D4}
\mylabelbis{ax:N2}{D5}
{\setlength{\abovedisplayskip}{0.3em}
 \setlength{\belowdisplayskip}{0.3em}
\[
\begin{array}{c@{\qquad}c@{\qquad}c}
\raisebox{-0.3em}{\scalebox{0.8}{
    \InputIfFileExists{tapes/cipriano/D1left.tikz}{}{\input{./tikz/tapes/cipriano/D1left.tikz}}
}} 
\stackrel{\eqref{ax:M1}}{\eqB}

    \InputIfFileExists{tapes/cipriano/D1right.tikz}{}{\input{./tikz/tapes/cipriano/D1right.tikz}}

&
\raisebox{-0.3em}{\scalebox{0.8}{
    \InputIfFileExists{tapes/cipriano/D2left.tikz}{}{\input{./tikz/tapes/cipriano/D2left.tikz}}
}}
\stackrel{\eqref{ax:M2}}{\eqB}

    \InputIfFileExists{tapes/cipriano/D2right.tikz}{}{\input{./tikz/tapes/cipriano/D2right.tikz}}

&
\raisebox{-0.3em}{\scalebox{0.8}{
    \InputIfFileExists{tapes/cipriano/D3left.tikz}{}{\input{./tikz/tapes/cipriano/D3left.tikz}}
}}
\stackrel{\eqref{ax:M3}}{\eqB}

    \InputIfFileExists{tapes/cipriano/D3right.tikz}{}{\input{./tikz/tapes/cipriano/D3right.tikz}}

\\[0.5em]

    \InputIfFileExists{tapes/cipriano/D4left.tikz}{}{\input{./tikz/tapes/cipriano/D4left.tikz}}

\stackrel{\eqref{ax:N1}}{\eqB}

    \InputIfFileExists{tapes/cipriano/D4right.tikz}{}{\input{./tikz/tapes/cipriano/D4right.tikz}}

&

    \InputIfFileExists{tapes/cipriano/D5left.tikz}{}{\input{./tikz/tapes/cipriano/D5left.tikz}}

\stackrel{\eqref{ax:N2}}{\eqB}

    \InputIfFileExists{tapes/cipriano/D5right.tikz}{}{\input{./tikz/tapes/cipriano/D5right.tikz}}

&
\end{array}
\]}
\end{lemma}

\begin{figure}[t]
\centering
\resizebox{0.96\linewidth}{!}{%
\begin{tabular}{@{}l@{}}

$\Orgate \defeq (\Notgate \otimes \Notgate); \Andgate; \Notgate 
\qquad 
\Flip{0} \defeq \Flip{1}; \Notgate
\qquad

    \InputIfFileExists{tapes/cipriano/multi.tikz}{}{\input{./tikz/tapes/cipriano/multi.tikz}}
 
\defeq 

    \InputIfFileExists{tapes/cipriano/multiexpl.tikz}{}{\input{./tikz/tapes/cipriano/multiexpl.tikz}}
$ \\[4pt]

$
    \InputIfFileExists{tapes/cipriano/xnorgate.tikz}{}{\input{./tikz/tapes/cipriano/xnorgate.tikz}}

\qquad
\raisebox{-1em}{
    \InputIfFileExists{tapes/cipriano/ncopierzero.tikz}{}{\input{./tikz/tapes/cipriano/ncopierzero.tikz}}
} 
\raisebox{-0.5em}{$\defeq$}
\raisebox{-0.5em}{
    \InputIfFileExists{empty.tikz}{}{\input{./tikz/empty.tikz}}
}
\qquad 
\raisebox{-1em}{
    \InputIfFileExists{tapes/cipriano/ncopier.tikz}{}{\input{./tikz/tapes/cipriano/ncopier.tikz}}
}$ \\[6pt]

$\raisebox{-1.2em}{
    \InputIfFileExists{tapes/cipriano/mmultiplexerleftzero.tikz}{}{\input{./tikz/tapes/cipriano/mmultiplexerleftzero.tikz}}
} 
\defeq 

    \InputIfFileExists{tapes/cipriano/mmultiplexerrightzero.tikz}{}{\input{./tikz/tapes/cipriano/mmultiplexerrightzero.tikz}}

\qquad 
\raisebox{-1.2em}{
    \InputIfFileExists{tapes/cipriano/mmultiplexerleft.tikz}{}{\input{./tikz/tapes/cipriano/mmultiplexerleft.tikz}}
} 
\defeq 

    \InputIfFileExists{tapes/cipriano/mmultiplexerright.tikz}{}{\input{./tikz/tapes/cipriano/mmultiplexerright.tikz}}
$ \\[6pt]

$\mFlip{1}{0} \defeq 
    \InputIfFileExists{empty.tikz}{}{\input{./tikz/empty.tikz}}

\qquad
\mFlip{1}{m+1} \defeq \Flip{1} \otimes \mFlip{1}{m}
\qquad
\mCBdischarger{0} \defeq 
    \InputIfFileExists{empty.tikz}{}{\input{./tikz/empty.tikz}}

\qquad
\mCBdischarger{m+1} \defeq \CBdischarger\otimes \mCBdischarger{m}$

\end{tabular}
}
\caption{Definitions of OR gate, $\Flip{0}$ and multiplexer (top); definitions of XNOR gate and inductive definition of $n$-ary copier (middle); inductive definitions of $m$-ary multiplexer (third row); definitions of $\Flipzero{1}$, $\Flipm{1}$ and $n$-ary dischargers (bottom).}
\label{table:multiplexer}
\end{figure}

%
%
%
%
\subsection{Partial Boolean Circuits}\label{ssec:partialboolean}
Now consider the monoidal signature obtained by adding to $\SigB$ the generator \emph{cocopy} $\CBcocopier$ of arity $A^2$ and coarity $A$. 
\[ PB\defeq \SigB \cup\{\CBcocopier\}\text{}\] 
Intuitively, $\CBcocopier$ is the dual of $\CBcopier$: it compares two inputs and, if these are equal, it outputs that value; otherwise, it produces no output. Thus, string diagrams in $\Diag{PB}$ represent \emph{partial} Boolean functions. 

To formally define their semantics, let us first fix $(\Cat{Par}, \otimes, 1)$ to be the symmetric monoidal category of sets and partial functions, where the monoidal product $\otimes$ is the cartesian product of sets and, for all partial functions $f_1\colon X_1 \to Y_1$ and $f_2\colon X_2 \to Y_2$, $f_1\otimes f_2 \colon X_1 \times X_2 \to Y_1 \times Y_2$ is defined as
\begin{equation}\label{eq:productpar}
f_1\otimes f_2 (x_1,x_2) \defeq \begin{cases} (\,f_1(x_1), \, f_2(x_2)\,) & \text{ if }f_1(x_1)\neq \bot\text{, } f_2(x_2)\neq \bot \\
 \bot & \text{ otherwise}\end{cases}
\end{equation}
for all $(x_1,x_2)\in X_1\times X_2$. Above, and in the rest of the paper, we write $f(x)=\bot$ to say that a partial function $f\colon X \to Y$ is undefined for some $x\in X$. 
One can readily check that the obvious injection $\Pa\colon \Sets \to \Cat{Par}$ is a symmetric monoidal functor.

Now, the semantics of diagrams in $\Diag{PB}$ is given by the monoidal functor $\osemPB{-}\colon \Diag{PB} \to \Cat{Par}$ defined on objects as $A^n \mapsto 2^n$, like $\osemB{-}$; 
for arrows it is defined by first interpreting generators $c \in \SigB$ as $\osemPB{c}\defeq \Pa(\osemB{c})$, the generator $\CBcocopier $ as
  \begin{equation}\label{eq:semcocopier}
    \begin{array}{rcl}
      \osemPB{\CBcocopier}\colon 2 \times 2& \to & 2   \\
      (x,y)&\mapsto& \begin{cases} x &\text{if } x=y\\ \bot & \text{else}\end{cases}    \end{array}
  \end{equation}
and then inductively by means of the structure of the monoidal category $(\Cat{Par}, \otimes ,\uno)$: for instance $\osemPB{c_1\otimes c_2}=\osemPB{c_1}\otimes \osemPB{c_2}$ where the second $\otimes$ is the one in \eqref{eq:productpar}. 
For example, the semantics of the following diagrams
\begin{equation}\label{def:bottom}
  \raisebox{0cm}{
    \InputIfFileExists{tapes/cipriano/defcoflip.tikz}{}{\input{./tikz/tapes/cipriano/defcoflip.tikz}}
}
\end{equation}
is illustrated below.
  \[
    \begin{array}{rclrclrcl}
      \osemPB{\coflip{1} }\colon 2 & \to & 1  & \osemPB{\coflip{0} }\colon 2 & \to & 1 & \osemPB{\Flip{\bot}}\colon 1 & \to & 2 \\
      x&\mapsto& \begin{cases} \bullet &\text{if } x=1\\ \bot & \text{else}\end{cases}    
      &
      x&\mapsto& \begin{cases} \bullet &\text{if } x=0\\ \bot & \text{else}\end{cases}    
      &
      \bullet &\mapsto & \bot     
      \end{array}
  \]

In this section, we present a complete axiomatization of the equality induced by $\osemPB{-}$, which, to the best of our knowledge, has not been previously established.
The axioms are illustrated in Figure~\ref{tab:partialbooleanalgebra}: the equalities in the first row assert that $\CBcocopier$ is associative, commutative and idempotent; in the second row, there are the usual Frobenius equalities ruling the interaction of $\CBcocopier$ with $\CBcopier$; the two axioms in the third row are more peculiar: the rightmost states that $x\wedge y = 1$ iff $x=y=1$; the leftmost provides a decomposition of  $\CBcocopier$ in terms of $\Andgate$, $\coflip{1}$ and the xnor gate 
    \InputIfFileExists{tapes/cipriano/xnorgatealone.tikz}{}{\input{./tikz/tapes/cipriano/xnorgatealone.tikz}}
.

\begin{figure}
\centering
\scalebox{0.8}{
{
\setlength{\tabcolsep}{12pt}      
\setlength{\extrarowheight}{9pt}  
\renewcommand{\arraystretch}{1.0} 

\begin{tabular}{ccc}

\mylabelbis{ax:F1}{P1}%
$
    \InputIfFileExists{tapes/cipriano/axF1left.tikz}{}{\input{./tikz/tapes/cipriano/axF1left.tikz}}

 \raisebox{-0.6em}{\,\,$\stackrel{\eqref{ax:F1}}{=}$\,\,}
 
    \InputIfFileExists{tapes/cipriano/axF1right.tikz}{}{\input{./tikz/tapes/cipriano/axF1right.tikz}}
$
&
\mylabelbis{ax:F2}{P2}%

$
    \InputIfFileExists{tapes/cipriano/axF2left.tikz}{}{\input{./tikz/tapes/cipriano/axF2left.tikz}}

 \raisebox{-0.6em}{\,$\stackrel{\eqref{ax:F2}}{=}$\,}
 
    \InputIfFileExists{tapes/cipriano/axF2right.tikz}{}{\input{./tikz/tapes/cipriano/axF2right.tikz}}
$
&
\mylabelbis{ax:F3}{P3}%
$
    \InputIfFileExists{tapes/cipriano/axF3left.tikz}{}{\input{./tikz/tapes/cipriano/axF3left.tikz}}

 \raisebox{-0.6em}{\,\,$\stackrel{\eqref{ax:F3}}{=}$\,\,}
 
    \InputIfFileExists{tapes/cipriano/axF3right.tikz}{}{\input{./tikz/tapes/cipriano/axF3right.tikz}}
$
\\[1.0em] 

\multicolumn{3}{c}{%
  \mylabelbis{ax:F4}{P4}%
  \mylabelbis{ax:F5}{P5}%
  $
    \InputIfFileExists{tapes/cipriano/axF4left.tikz}{}{\input{./tikz/tapes/cipriano/axF4left.tikz}}

    \,\stackrel{\eqref{ax:F4}}{=}\,
    
    \InputIfFileExists{tapes/cipriano/axF4right.tikz}{}{\input{./tikz/tapes/cipriano/axF4right.tikz}}

    \,\stackrel{\eqref{ax:F5}}{=}\,
    
    \InputIfFileExists{tapes/cipriano/axF5.tikz}{}{\input{./tikz/tapes/cipriano/axF5.tikz}}
$%
} \\[1.0em] 

\multicolumn{3}{c}{%
  \mylabelbis{ax:F6}{P6}%
  $
    \InputIfFileExists{tapes/cipriano/axF6left.tikz}{}{\input{./tikz/tapes/cipriano/axF6left.tikz}}

   \, \stackrel{\eqref{ax:F6}}{=}\,
    
    \InputIfFileExists{tapes/cipriano/axF6right.tikz}{}{\input{./tikz/tapes/cipriano/axF6right.tikz}}
$%
  \hspace{4em}%
  \mylabelbis{ax:F7}{P7}%
  $
    \InputIfFileExists{tapes/cipriano/axF7left.tikz}{}{\input{./tikz/tapes/cipriano/axF7left.tikz}}

    \,\stackrel{\eqref{ax:F7}}{=}\,
    
    \InputIfFileExists{tapes/cipriano/axF7right.tikz}{}{\input{./tikz/tapes/cipriano/axF7right.tikz}}
$%
}

\end{tabular}
}
}
\caption{The monoidal theory of partial Boolean algebras.}
\label{tab:partialbooleanalgebra}
\end{figure}

We write $\ThPB$ for the monoidal theory consisting of the signature $PB$ and the axioms in Figures~\ref{tab:booleanalgebra} and~\ref{tab:partialbooleanalgebra};  $\eqPB$ for the generated congruence; $\Diag{\ThPB}$ for the quotient of $\Diag{PB}$ by $\eqPB$ and $Q_\ThPB\colon \Diag{ \SigPB} \to \Diag{\ThPB}$ the functor mapping each diagram into its $\eqPB$ equivalence class. Simple computations confirm that:

\begin{proposition}[Soundness]\label{prop:soundnesspartialb}
For all $c,d\in \Diag{PB}$, if $c\eqPB d$, then $\osemPB{c}=\osemPB{d}$.
\end{proposition}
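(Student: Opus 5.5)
Soundness follows the standard pattern: $\eqPB$ is the smallest congruence w.r.t.\ $;$ and $\otimes$ containing the axioms of Figures~\ref{tab:booleanalgebra} and~\ref{tab:partialbooleanalgebra}. The kernel of the monoidal functor $\osemPB{-}$, namely $\{(c,d)\mid \osemPB{c}=\osemPB{d}\}$, is an equivalence relation. Since $\osemPB{-}$ is a monoidal functor, it satisfies $\osemPB{c;d}=\osemPB{c};\osemPB{d}$ and $\osemPB{c\otimes d}=\osemPB{c}\otimes\osemPB{d}$, so this kernel is closed under $;$ and $\otimes$. It therefore suffices to check that every axiom holds in $\Cat{Par}$, and then $\eqPB$ is contained in the kernel by minimality.

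For the axioms of Figure~\ref{tab:booleanalgebra}, every generator involved lies in $\SigB$, so $\osemPB{-}$ restricted to $\Diag{\SigB}$ is $\osemB{-};\Pa$. Each axiom is an instance of $c\eqB d$, so Theorem~\ref{thm:completenessbooleancircuits} gives $\osemB{c}=\osemB{d}$. Applying the functor $\Pa$ then yields $\osemPB{c}=\osemPB{d}$.

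For Figure~\ref{tab:partialbooleanalgebra}, both sides of each axiom are partial functions between finite sets $2^n\to 2^m$. We verify them pointwise on all Boolean inputs, tracking definedness via \eqref{eq:productpar}: a composite is undefined as soon as some component is undefined. We proceed axiom by axiom.
\begin{itemize}
\item \eqref{ax:F1}--\eqref{ax:F3}, associativity, commutativity and idempotency of $\CBcocopier$: both sides return $x$ when all inputs equal $x$, and $\bot$ otherwise.
\item \eqref{ax:F4}--\eqref{ax:F5}, the Frobenius laws: on input $(x,y)$ both sides output $(x,x)$ when $x=y$ and $\bot$ otherwise.
\item \eqref{ax:F7}, the law about $\Andgate$ followed by $\coflip{1}$: it reduces to the fact that $x\wedge y=1$ iff $x=y=1$, with both sides defined exactly on $(1,1)$.
\item \eqref{ax:F6}, the decomposition of $\CBcocopier$: both sides are defined exactly when the xnor of the inputs is $1$, i.e.\ when $x=y$, and then output $x$.
\end{itemize}

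The only mildly delicate step is \eqref{ax:F6}, because its right-hand side copies inputs and routes them through $\Andgate$, $\coflip{1}$ and xnor. We evaluate it on the four inputs, using the definition of xnor from Figure~\ref{table:multiplexer} and the semantics of $\coflip{1}$ from \eqref{def:bottom}. Beyond this, the verification is a routine finite check.
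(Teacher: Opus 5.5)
Your proposal is correct and follows the same route as the paper: reduce to the generating axioms, then verify each one by a direct computation of partial functions in $\Cat{Par}$, with \eqref{ax:F6} and \eqref{ax:F7} the only non-standard cases. You are more explicit than the paper in two harmless ways. You spell out why the kernel of $\osemPB{-}$ is a congruence, and you discharge the axioms of Figure~\ref{tab:booleanalgebra} via the soundness half of Theorem~\ref{thm:completenessbooleancircuits} together with $\Pa$; the paper leaves both points implicit.
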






The remainder of this section is devoted to proving the converse implication, namely completeness. 
Although this fact is not required for our proof, it is worth noting that the axiom \textsc{F8} in~\cite{piedeleu2025boolean}--corresponding to \eqref{ax:F8} in the following lemma--can be derived within $\ThPB$. 

\begin{lemma}\label{lemma:failureequalities}
  The following derived laws hold:
  \begin{center}
    \mylabelbis{ax:F9}{D6}%
    $
    \InputIfFileExists{tapes/cipriano/axF9left.tikz}{}{\input{./tikz/tapes/cipriano/axF9left.tikz}}

      \stackrel{\eqref{ax:F9}}{\eqPB}
      
    \InputIfFileExists{tapes/cipriano/axF9right.tikz}{}{\input{./tikz/tapes/cipriano/axF9right.tikz}}
$
    \qquad
    \mylabelbis{ax:F8}{D7}%
    $
    \InputIfFileExists{tapes/cipriano/axF8left.tikz}{}{\input{./tikz/tapes/cipriano/axF8left.tikz}}

      \stackrel{\eqref{ax:F8}}{\eqPB}
      
    \InputIfFileExists{tapes/cipriano/axF8right.tikz}{}{\input{./tikz/tapes/cipriano/axF8right.tikz}}
$
    \qquad
    \mylabelbis{ax:F10l}{D8l}%
    \mylabelbis{ax:F10r}{D8r}%
    $
    \InputIfFileExists{tapes/cipriano/axF10left.tikz}{}{\input{./tikz/tapes/cipriano/axF10left.tikz}}

      \stackrel{\eqref{ax:F10l}}{\eqPB}
      
    \InputIfFileExists{tapes/cipriano/axF10middle.tikz}{}{\input{./tikz/tapes/cipriano/axF10middle.tikz}}

      \stackrel{\eqref{ax:F10r}}{\eqPB}
      
    \InputIfFileExists{tapes/cipriano/axF10right.tikz}{}{\input{./tikz/tapes/cipriano/axF10right.tikz}}
$
  \end{center}
\end{lemma}
\begin{proof}
 \eqref{ax:F9} follows from the equalities  
 \begin{center} 
    \InputIfFileExists{tapes/cipriano/lemmafail1.tikz}{}{\input{./tikz/tapes/cipriano/lemmafail1.tikz}}

  \end{center}
  \begin{center}
  
    \InputIfFileExists{tapes/cipriano/lemmafail2.tikz}{}{\input{./tikz/tapes/cipriano/lemmafail2.tikz}}

  \end{center}
   \eqref{ax:F8} follows from the equalities
  \begin{center}
  
    \InputIfFileExists{tapes/cipriano/lemmafail3.tikz}{}{\input{./tikz/tapes/cipriano/lemmafail3.tikz}}

  \end{center}
  \begin{center}
  
    \InputIfFileExists{tapes/cipriano/lemmafail4.tikz}{}{\input{./tikz/tapes/cipriano/lemmafail4.tikz}}

  \end{center}
  \begin{center}
  
    \InputIfFileExists{tapes/cipriano/lemmafail5.tikz}{}{\input{./tikz/tapes/cipriano/lemmafail5.tikz}}

  \end{center}
  \begin{center}
  
    \InputIfFileExists{tapes/cipriano/lemmafail5mezzo.tikz}{}{\input{./tikz/tapes/cipriano/lemmafail5mezzo.tikz}}

  \end{center}
   \eqref{ax:F10l} follows from the equalities
  \begin{center}
  
    \InputIfFileExists{tapes/cipriano/lemmafail6.tikz}{}{\input{./tikz/tapes/cipriano/lemmafail6.tikz}}

  \end{center}
  \eqref{ax:F10r} can be proved with a similar argument.
\end{proof}

The strategy for proving completeness is as follows. 
First, we show that any diagram $c$ in $\Diag{PB}$ can be decomposed into two diagrams $D_c$ and $T_c$ in $\Diag{\SigB}$ (Proposition~\ref{prop:decompositionpartialcircuits}). 
We then appeal to the completeness result for Boolean circuits (Theorem~\ref{thm:completenessbooleancircuits}).

Intuitively, $D_c$ and $T_c$ represent, respectively, the domain and the total component of $c$. 
The diagram $D_c$ returns $1$ if the input of $c$ lies in the domain of definition of $c$, and $0$ otherwise. 
The diagram $T_c$ returns the output of $c$ when the input lies in its domain, and a vector of $0$ otherwise.

\begin{definition}
For all $c\in \Diag{PB}[A^n,A^m]$, $D_c\in \Diag{\SigB}[A^n,A]$ and $T_c\in\Diag{\SigB}[A^n,A^m]$ are inductively defined as
\begin{equation}\label{eq:defTD}  \begin{array}{rcl@{\hspace{1.5cm}}rcl}
    D_c &\defeq& \CBdischargern\, \Flip{1} &       T_c &\defeq& c \qquad \text{for all $c\in \SigB \cup \{\id{1},\id{A},\symmt{A}{A}\}$;}\\
    D_{\scalebox{0.6}{$\CBcocopier$}} &\defeq& 
    \InputIfFileExists{tapes/cipriano/xnorgatealone.tikz}{}{\input{./tikz/tapes/cipriano/xnorgatealone.tikz}}
 &     T_{\scalebox{0.6}{$\CBcocopier$}} &\defeq& \Andgate\\
    D_{c;d} &\defeq &\quad 
    \InputIfFileExists{tapes/cipriano/Dfg.tikz}{}{\input{./tikz/tapes/cipriano/Dfg.tikz}}
 &  T_{c;d} &\defeq &\quad 
    \InputIfFileExists{tapes/cipriano/Tfg.tikz}{}{\input{./tikz/tapes/cipriano/Tfg.tikz}}
\\
    D_{c\otimes d} &\defeq & \quad 
    \InputIfFileExists{tapes/cipriano/Dfperg.tikz}{}{\input{./tikz/tapes/cipriano/Dfperg.tikz}}
 & T_{c\otimes d} &\defeq& \quad 
    \InputIfFileExists{tapes/cipriano/Tfperg.tikz}{}{\input{./tikz/tapes/cipriano/Tfperg.tikz}}
\\
  \end{array}
  \end{equation}
\end{definition}
Observe that the domain of $\CBcocopier$ is the $\mathrm{xnor}$ gate: it returns $1$ if and only if the two inputs are equal. 
Its total part is given by the $\wedge$ gate: it returns $1$ precisely when both inputs are $1$.
Similarly, the domain of $c \otimes d$ is given by the conjunction of the domains of $c$ and $d$. 
Its total part returns the total parts of $c$ and $d$ when the domain evaluates to $1$, and $0$ otherwise.

\begin{proposition}\label{prop:decompositionpartialcircuits}
  For all $c\in \Diag{BP}$, it holds that
  \begin{center}
  
    \InputIfFileExists{tapes/cipriano/decompositionpartial.tikz}{}{\input{./tikz/tapes/cipriano/decompositionpartial.tikz}}

  \end{center}
\end{proposition}
\begin{proof}
  We proceed by induction on the structure of $c$. For the base case, if $c\in \SigB \cup \{\id{1},\id{A},\symmt{A}{A}\}$, then the statement follows by
  \begin{center}
    
    \InputIfFileExists{tapes/cipriano/propdecomposition1.tikz}{}{\input{./tikz/tapes/cipriano/propdecomposition1.tikz}}

  \end{center} 
  If $c$ is $\CBcocopier$, then the statement is exactly axiom \eqref{ax:F6} in Figure~\ref{tab:partialbooleanalgebra}. Now, in the case $c;d$, then assuming that the statement holds for $c$ and $d$, we have
  \begin{center}
    
    \InputIfFileExists{tapes/cipriano/propdecomposition2.tikz}{}{\input{./tikz/tapes/cipriano/propdecomposition2.tikz}}

    \end{center}
    \begin{center}
    
    \InputIfFileExists{tapes/cipriano/propdecomposition3.tikz}{}{\input{./tikz/tapes/cipriano/propdecomposition3.tikz}}

    \end{center}
    \begin{center}
    
    \InputIfFileExists{tapes/cipriano/propdecomposition4.tikz}{}{\input{./tikz/tapes/cipriano/propdecomposition4.tikz}}

    \end{center}
    \begin{center}
    
    \InputIfFileExists{tapes/cipriano/propdecomposition5.tikz}{}{\input{./tikz/tapes/cipriano/propdecomposition5.tikz}}

    \end{center}
    where now the statement follows by inductive hypothesis. Finally, in the case $c\otimes d$, then assuming that the statement holds for $c$ and $d$, we have 
    \begin{center}
    
    \InputIfFileExists{tapes/cipriano/propdecomposition6.tikz}{}{\input{./tikz/tapes/cipriano/propdecomposition6.tikz}}

    \end{center}
    \begin{center}
    
    \InputIfFileExists{tapes/cipriano/propdecomposition7.tikz}{}{\input{./tikz/tapes/cipriano/propdecomposition7.tikz}}

    \end{center}
    \begin{center}
    
    \InputIfFileExists{tapes/cipriano/propdecomposition8.tikz}{}{\input{./tikz/tapes/cipriano/propdecomposition8.tikz}}

    \end{center}
    now applying the inductive hypothesis for $c$ and $d$, we obtain the statement.
\end{proof}

Before proceeding, recall that Theorem~\ref{thm:completenessbooleancircuits} implies that every $b\in \Diag{\SigB}[1,A^n]$ is $\eqB$-equal (hence $\eqPB$-equal) to a Boolean vector, i.e., a circuit of the form $\bigotimes_{i=1}^n b_i$ for $b_i \in \{\Flip{0}, \Flip{1}\}$ if $n>0$, and $
    \InputIfFileExists{empty.tikz}{}{\input{./tikz/empty.tikz}}
$ if $n=0$. 

\begin{lemma}\label{lemma:complete1}
  Let $b\in \Diag{\SigB}[1,A^n]$ and $c\colon n\to m$ be a partial Boolean circuit. If $b;D_c \eqPB \Flip{0}$, then $b;T_c\eqPB\Flipm{0}$.
\end{lemma}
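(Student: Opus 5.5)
We argue by structural induction on the partial Boolean circuit $c$, following the inductive definition of $D_c$ and $T_c$ in \eqref{eq:defTD}, and proving the statement simultaneously for all Boolean vectors $b$. Two facts are used throughout. First, by Theorem~\ref{thm:completenessbooleancircuits}, every closed Boolean circuit $b\in\Diag{\SigB}[1,A^n]$ equals a Boolean vector, and both $b;D_c$ and $b;T_c$ are closed Boolean circuits. Hence each of them is $\eqB$-equal to a definite constant, namely the one given by its semantics $\osemB{-}$. Second, $\CBcopier$ and $\CBdischarger$ are natural with respect to such constants (blocks three and four of Figure~\ref{tab:booleanalgebra}). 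This lets us push Boolean vectors through the copiers appearing in $D_{c;d}$, $T_{c;d}$, $D_{c\otimes d}$ and $T_{c\otimes d}$.

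\textbf{Base cases.} If $c\in\SigB\cup\{\id{1},\id{A},\symmt{A}{A}\}$, then $D_c$ is a discard followed by $\Flip{1}$. Hence $b;D_c\eqB\Flip{1}$, and the hypothesis $b;D_c\eqPB\Flip{0}$ can never hold: since $\Diag{PB}$ is interpreted soundly (Proposition~\ref{prop:soundnesspartialb}) and $\osemPB{\Flip{1}}\neq\osemPB{\Flip{0}}$, the hypothesis is contradictory and the case is vacuous. If $c=\CBcocopier$, then $b=b_1\otimes b_2$ and $b;\mathrm{xnor}\eqPB\Flip{0}$ forces $\{b_1,b_2\}=\{\Flip{0},\Flip{1}\}$, again using soundness to read off the values. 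Then $b;T_c=b;\Andgate\eqB\Flip{0}$ by the Boolean axioms.

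\textbf{Inductive cases.} For $c\otimes d$, write $b=b'\otimes b''$. Then $b;D_{c\otimes d}$ is the conjunction of $b';D_c$ and $b'';D_d$, each of which is $\Flip{0}$ or $\Flip{1}$. If the conjunction is $\Flip{0}$, one of the two factors is $\Flip{0}$, say $b';D_c$. By induction $b';T_c\eqPB\Flipm{0}$. The total part $T_{c\otimes d}$ multiplexes (via $\Ifgatem$, guarded by the conjunction of domains) between $T_c\otimes T_d$ and zeros. With guard $\Flip{0}$, laws \eqref{ax:M1}--\eqref{ax:N2} of Lemma~\ref{lemma:multiplexer} select the zero branch and discard the rest, giving $\Flipm{0}$. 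If the figure instead uses the componentwise guard, we use $b';T_c\eqPB\Flipm{0}$ directly.

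For $c;d$, set $b_1$ to be the Boolean vector equal to $b;T_c$. Then $D_{c;d}$ computes the conjunction of $b;D_c$ and $b_1;D_d$, and $T_{c;d}$ multiplexes $T_d$ applied to $T_c$, under that guard, against zeros. If $b;D_{c;d}\eqPB\Flip{0}$, then either $b;D_c\eqPB\Flip{0}$ or $b_1;D_d\eqPB\Flip{0}$. In both cases the guard of the multiplexer evaluates to $\Flip{0}$ once the copies of $b$ have been pushed through by naturality, and the multiplexer laws give $\Flipm{0}$. If the definition of $T_{c;d}$ does not multiplex explicitly, we instead use the induction hypothesis on $c$ (so $b_1=\Flipm{0}$) or on $d$ with input $b_1$.

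\textbf{Main obstacle.} The only delicate point is bookkeeping: matching the exact shape of the diagrams for $D_{c;d}$, $T_{c;d}$ and the $\otimes$ case, and pushing $b$ through the copiers. The shortcut is that everything involved is a closed Boolean circuit in $\Diag{\SigB}$. So one can even bypass the induction: compute $\osemB{b;T_c}$ semantically, show by a straightforward induction on $c$ that $\osemB{T_c}(x)=0^m$ whenever $\osemB{D_c}(x)=0$, and conclude by Theorem~\ref{thm:completenessbooleancircuits}. We would take this semantic route as the primary argument, since it reduces the lemma to an easy semantic induction.
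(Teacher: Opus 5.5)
Your primary argument, the semantic route in your last paragraph, is correct, but it is not the route the paper takes.

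\textbf{The paper's route.} The paper proves the lemma by a syntactic induction on $c$. The base cases are as in your sketch: they are vacuous for the gates, since $b;D_c\eqPB\Flip{1}$. For $\CBcocopier$, the xnor forces $b$ to be $\Flip{1}\otimes\Flip{0}$ or $\Flip{0}\otimes\Flip{1}$, and \eqref{ax:B2l}, \eqref{ax:B2r} then give $\Flip{0}$. The $c;d$ and $c\otimes d$ steps are explicit $\eqPB$-derivations that split $b$, push it through the copiers of $D$ and $T$, and simplify.

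\textbf{Your route.} You proceed in three steps.
\begin{enumerate}
\item Using soundness (Proposition~\ref{prop:soundnesspartialb}) and the fact that $\osemPB{-}$ agrees with $\osemB{-}$ on $\Diag{\SigB}$, you read the hypothesis as $\osemB{D_c}(x)=0$ for $x=\osemB{b}(\bullet)$.
\item You prove the invariant ``$\osemB{D_c}(x)=0$ implies $\osemB{T_c}(x)=0^m$'' by induction on $c$.
\item You pull $\osemB{b;T_c}=\osemB{\Flipm{0}}$ back to $b;T_c\eqB\Flipm{0}$, hence $\eqPB$, by Theorem~\ref{thm:completenessbooleancircuits}.
\end{enumerate}
This is not circular, since soundness and Boolean completeness do not depend on the lemma. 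It makes you independent of the exact diagrams for $D_{c;d}$, $T_{c;d}$, $D_{c\otimes d}$, $T_{c\otimes d}$: only their semantic specification matters, in the same style as Lemmas~\ref{lemma:complete2} and~\ref{lemma:complete3}. In exchange, the paper's route yields an explicit derivation inside the equational theory.

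\textbf{Caveat on your fallback clauses.} The invariant is not automatic, and your fallback clauses are wrong, so drop them. The invariant holds only because $T_{c\otimes d}$ and $T_{c;d}$ are multiplexed against zeros under the domain guard (the paper's ``a vector of $0$ otherwise''). Without that guard, the induction hypothesis on $c$ tells you that $b;T_c$ is the zero vector, but nothing about what $T_d$ does on that vector.

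For instance, take $c=\CBcocopier$, $d=\Notgate$ and $b=\Flip{0}\otimes\Flip{1}$. The domain of $c;d$ at $b$ is $0$, while an unguarded $T_c;T_d$ would output $1$. So under such a definition the lemma would simply be false.

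The same failure occurs with a componentwise guard in the $\otimes$ case. Take $c=\CBcocopier$ and $d=\id{A}$, and feed $\Flip{1}$ to the $d$ component. The domain of $c\otimes d$ is $0$, but the second output is $1$.
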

\begin{proof}
  We proceed by induction on the structure of $c$. For the base case, if $c\in \SigB \cup \{\id{1},\id{A},\symmt{A}{A}\}$, then $b;D_c=b;\CBdischargern\, \Flip{1}=\Flip{1}$, hence the statement is trivial. If $c$ is $\CBcocopier$, then, since $D_{\scalebox{0.6}{$\CBcocopier$}}$ is the xnor gate, $b$ must be of the form $\Flip{1}\otimes \Flip{0}$ or $\Flip{0}\otimes \Flip{1}$, hence $b;T_{\scalebox{0.6}{$\CBcocopier$}}$ is $\Flip{0}$, by axioms \eqref{ax:B2l} and \eqref{ax:B2r} in Figure~\ref{tab:booleanalgebra}. Now, in the case $c;d$, we have
  \begin{center}
    
    \InputIfFileExists{tapes/cipriano/lemma1fg.tikz}{}{\input{./tikz/tapes/cipriano/lemma1fg.tikz}}

  \end{center}
  \begin{center}
    
    \InputIfFileExists{tapes/cipriano/lemma1fgpt2.tikz}{}{\input{./tikz/tapes/cipriano/lemma1fgpt2.tikz}}

  \end{center}
   Finally, in the case $c\otimes d$, since  $b\eqPB b_1\otimes b_2$ for some $b_1\in \Diag{\SigB}[1,A^{n'}]$ and $b_2\in \Diag{\SigB}[1,A^{n''}]$ such that $n'+n''=n$, we have 
  \begin{center}
    
    \InputIfFileExists{tapes/cipriano/lemma1fperg.tikz}{}{\input{./tikz/tapes/cipriano/lemma1fperg.tikz}}

  \end{center}
  \begin{center}
    
    \InputIfFileExists{tapes/cipriano/lemma1fpergpt2.tikz}{}{\input{./tikz/tapes/cipriano/lemma1fpergpt2.tikz}}

  \end{center}
\end{proof}

\begin{lemma}\label{lemma:complete3}
  For all $b\in \Diag{\SigB}[1,A^n]$ and $c,d\in \Diag{PB}[A^n,A^m]$, 
  \begin{center} if $\osemPB{b;c}=\osemPB{b;d}$, then $\osemPB{b;D_c}=\osemPB{b;D_d}$ and $\osemPB{b;T_c}=\osemPB{b;T_d}$.\end{center}
\end{lemma}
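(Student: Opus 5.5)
The plan is to reduce the statement to pointwise facts about the semantics of $D_c$ and $T_c$, using the decomposition of Proposition~\ref{prop:decompositionpartialcircuits} together with soundness (Proposition~\ref{prop:soundnesspartialb}). Lemma~\ref{lemma:complete1} will handle the case where the input lies outside the domain.

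First, since $b\in \Diag{\SigB}[1,A^n]$, its semantics $\osemPB{b}=\Pa(\osemB{b})$ is total. So $\osemPB{b}$ picks out a single vector $x\in 2^n$, and $\osemPB{b;e}(\bullet)=\osemPB{e}(x)$ for every circuit $e$. Likewise, $D_c,T_c,D_d,T_d$ are Boolean circuits, so their semantics are total functions. By Proposition~\ref{prop:decompositionpartialcircuits} and soundness, $\osemPB{c}$ equals the semantics of the decomposed diagram: copy the input, compute $D_c$ and $T_c$, and feed the $D_c$-output into $\coflip{1}$. Reading off this diagram with the semantics of $\coflip{1}$ gives
\[
\osemPB{c}(x)=\begin{cases} \osemB{T_c}(x) & \text{if } \osemB{D_c}(x)=1,\\ \bot & \text{otherwise,}\end{cases}
\]
and similarly for $d$. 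In particular, $\osemB{D_c}(x)=1$ iff $\osemPB{c}(x)\neq\bot$.

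Now assume $\osemPB{b;c}=\osemPB{b;d}$, that is, $\osemPB{c}(x)=\osemPB{d}(x)$. There are two cases.

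\emph{Both sides are defined.} Then $\osemB{D_c}(x)=1=\osemB{D_d}(x)$, and the displayed formula gives $\osemB{T_c}(x)=\osemPB{c}(x)=\osemPB{d}(x)=\osemB{T_d}(x)$.

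\emph{Both sides are undefined.} Then $\osemB{D_c}(x)=0=\osemB{D_d}(x)$, so the domain parts agree. For the total parts, the displayed formula gives no information, and this is the main obstacle: we must know that $T_c$ outputs the zero vector off the domain. This is exactly what Lemma~\ref{lemma:complete1} provides. Since $\osemB{b;D_c}=\osemB{\Flip{0}}$, Theorem~\ref{thm:completenessbooleancircuits} yields $b;D_c\eqB\Flip{0}$, hence $b;D_c\eqPB \Flip{0}$. Lemma~\ref{lemma:complete1} then gives $b;T_c\eqPB\Flipm{0}$, and soundness turns this into $\osemPB{b;T_c}=\osemPB{\Flipm{0}}$. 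The same argument applied to $d$ gives $\osemPB{b;T_d}=\osemPB{\Flipm{0}}$, so the two total parts coincide.

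In both cases $\osemPB{b;D_c}=\osemPB{b;D_d}$ and $\osemPB{b;T_c}=\osemPB{b;T_d}$, which proves the lemma. The only delicate bookkeeping is checking that the semantics of the decomposition diagram indeed yields the displayed case formula. This is a direct computation with the semantics of copy, $\otimes$ in $\Cat{Par}$, and $\coflip{1}$.
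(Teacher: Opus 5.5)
Your proof is correct and follows the paper's argument: you split into the defined and undefined cases, and in the undefined case you use Lemma~\ref{lemma:complete1}, via completeness for Boolean circuits, to show that both total parts output the zero vector. The only difference is cosmetic: the paper packages your pointwise case formula as Lemma~\ref{lemma:complete2}, proving it syntactically from the decomposition (copying the Boolean vector $b$ and applying law~\eqref{ax:F10l}), whereas you compute the semantics of the decomposed diagram directly in $\Cat{Par}$.
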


\begin{theorem}\label{thm:completenesspartialcircuits}[Completeness]
  For all $c,d\in \Diag{PB}[A^n,A^m]$, if $\osemPB{c}=\osemPB{d}$ then $c=_{\mathbb{\SigPB }}d$.
\end{theorem}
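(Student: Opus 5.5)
The plan is to reduce completeness for partial circuits to completeness for total Boolean circuits (Theorem~\ref{thm:completenessbooleancircuits}) via the decomposition of Proposition~\ref{prop:decompositionpartialcircuits}. That proposition gives $c \eqPB$ a fixed context applied to $D_c$ and $T_c$, and likewise for $d$. The context copies the input, feeds one copy to $D_c$ and the other to $T_c$, and then filters the output through $\coflip{1}$ on the domain wire. Since both contexts are the same, it suffices to prove $D_c \eqB D_d$ and $T_c \eqB T_d$, and even this is more than needed. Both are total circuits in $\Diag{\SigB}$, so by Theorem~\ref{thm:completenessbooleancircuits} it is enough to show $\osemB{D_c}=\osemB{D_d}$, together with a suitable statement about the total parts.

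For the semantic comparison I would work pointwise on inputs. Every input $x\in 2^n$ is represented by a Boolean vector $b\in\Diag{\SigB}[1,A^n]$. Because $\Diag{\ThB}\cong\Sets_2$ (Proposition~\ref{prop:foolboolean}), two total circuits are $\eqB$-equal iff they agree after precomposition with every such $b$. So fix $b$. From $\osemPB{c}=\osemPB{d}$ we get $\osemPB{b;c}=\osemPB{b;d}$, and Lemma~\ref{lemma:complete3} yields $\osemPB{b;D_c}=\osemPB{b;D_d}$ and $\osemPB{b;T_c}=\osemPB{b;T_d}$. These diagrams are total, so $\osemPB{-}=\Pa(\osemB{-})$ on them and $\Pa$ is faithful. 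Hence $\osemB{b;D_c}=\osemB{b;D_d}$ and $\osemB{b;T_c}=\osemB{b;T_d}$ for every $b$, and therefore $\osemB{D_c}=\osemB{D_d}$ and $\osemB{T_c}=\osemB{T_d}$. Theorem~\ref{thm:completenessbooleancircuits} then gives $D_c\eqB D_d$ and $T_c\eqB T_d$; since $\eqB\subseteq\eqPB$, substituting into the decomposition gives $c\eqPB d$.

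The real content is Lemma~\ref{lemma:complete3}, which I am allowed to assume but would check as follows. If $b$ lies in the domain of $c$, then $b;D_c$ must evaluate to $1$ and $b;T_c$ to the output of $c$; this follows from the decomposition and soundness (Proposition~\ref{prop:soundnesspartialb}). If $b$ is outside the domain, then $b;D_c\eqPB\Flip{0}$, and Lemma~\ref{lemma:complete1} forces $b;T_c\eqPB\Flipm{0}$. So $D$ and $T$ are determined by the partial semantics of $c$, and the equal semantics transfer.

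The main obstacle I expect is this out-of-domain case. The decomposition alone only pins down $T_c$ on the domain, which is why the normalisation of Lemma~\ref{lemma:complete1} (total part equal to the zero vector outside the domain) is essential; without it, $T_c$ and $T_d$ could differ off-domain. A secondary point is checking that the context in Proposition~\ref{prop:decompositionpartialcircuits} really depends only on the arities $n,m$, so that equal $D$ and $T$ give equal composites.
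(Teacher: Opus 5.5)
Your proposal is correct and follows the paper's proof essentially step for step. Both fix Boolean input vectors $b$ and apply Lemma~\ref{lemma:complete3} to get $D_c \eqB D_d$ and $T_c \eqB T_d$ via Theorem~\ref{thm:completenessbooleancircuits}, then conclude with the decomposition of Proposition~\ref{prop:decompositionpartialcircuits}. Your sketch of Lemma~\ref{lemma:complete3} also matches how the paper proves it through Lemmas~\ref{lemma:complete2} and~\ref{lemma:complete1}, including your point that the out-of-domain normalisation of $T_c$ is what does the real work.
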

\begin{proof}
  For every $b\in \Diag{\SigB}[1,A^n]$, 
  \begin{align}
    \osemPB{b;c}&=\osemPB{b};\osemPB{c} \notag\\
    &=\osemPB{b};\osemPB{d} \tag{Hp.}\\
    &=\osemPB{b;d}. \notag
  \end{align}
  Lemma~\ref{lemma:complete3} implies that, for every $b\in \Diag{\SigB}[1,A^n]$, $\osemPB{b;D_c}=\osemPB{b;D_d}$ and $\osemPB{b;T_c}=\osemPB{b;T_d}$, hence $\osemPB{D_c} = \osemPB{D_d}$ and $\osemPB{T_c} = \osemPB{T_d}$. Now, since $D_c,D_d,T_c$ and $T_d$ are Boolean circuits in $\Diag{\SigB}$, then by Theorem~\ref{thm:completenessbooleancircuits} it holds that $D_c\eqB D_d$ and $T_c\eqB T_d$. Finally, by Proposition~\ref{prop:decompositionpartialcircuits}, since $c$ and $d$ decompose through the same domain and total part, we have that $c\eqPB d$.
\end{proof}
Moreover, one can characterise exactly the image of $\Diag{\SigPB}$ through $\osemPB{-}$. Let $\Cat{Par}_2$ be the full subcategory of $\Cat{Par}$ where objects are powers of the set $2$, i.e.,  $2^n$ for all $n \in \mathbb{N}$.

\begin{proposition}\label{prop:isopartialboolean}
The monoidal functor  $\osemPB{-}\colon \Diag{{\SigPB }} \to \Cat{Par}$ factors as
\[\xymatrix{ \Diag{\SigPB } \ar@{->>}[r]& \Diag{\mathbb{\SigPB }} \ar[r]^{\cong}& \Cat{Par}_2 \ar@{^{(}->}[r]& \Cat{Par} }\]
where the leftmost and rightmost functors are the obvious quotients and injections and the central arrow is a monoidal isomorphism.
\end{proposition}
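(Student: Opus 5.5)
The plan mirrors the proof of Proposition~\ref{prop:foolboolean}: the factorisation itself is formal, and the real content is showing that the central functor is faithful and full.

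\emph{Factorisation.} Every object of $\Diag{\SigPB}$ is some $A^n$, and $\osemPB{-}$ sends it to $2^n$. Hence $\osemPB{-}$ lands in the full subcategory $\Cat{Par}_2$, and the right-hand factor is just the inclusion. By Soundness (Proposition~\ref{prop:soundnesspartialb}), $\osemPB{-}$ identifies all $\eqPB$-equivalent diagrams. It therefore factors through the quotient $Q_{\ThPB}$, giving a strict monoidal functor $\Diag{\ThPB}\to\Cat{Par}_2$ that is bijective on objects ($A^n\mapsto 2^n$). It remains to show this functor is faithful and full; a monoidal functor that is bijective on objects and fully faithful is a monoidal isomorphism.

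\emph{Faithfulness.} This is exactly Theorem~\ref{thm:completenesspartialcircuits}: if two classes have the same semantics, their representatives are $\eqPB$-equal.

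\emph{Fullness.} Take a partial function $f\colon 2^n\to 2^m$. Let $\delta\colon 2^n\to 2$ be the characteristic function of its domain, and let $t\colon 2^n\to 2^m$ be the total function that agrees with $f$ on its domain and returns the zero vector elsewhere. By Proposition~\ref{prop:foolboolean}, both $\delta$ and $t$ are denoted by Boolean circuits $D$ and $T$ in $\Diag{\SigB}$.

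I then plug $D$ and $T$ into the same shape as the decomposition diagram of Proposition~\ref{prop:decompositionpartialcircuits}. Concretely, I copy the $n$ inputs; on one copy apply $D$ followed by $\coflip{1}$, which is partial and defined exactly when $D$ outputs $1$; on the other copy apply $T$; then discard the unit wire. A simpler variant would use $T$ together with a partial guard that restricts the inputs to the domain via $\CBcocopier$ and $\Flip{1}$.

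Its semantics is computed in $\Cat{Par}$ using \eqref{eq:semcocopier} and the semantics of $\coflip{1}$. On inputs $x$ with $\delta(x)=1$ the circuit returns $t(x)=f(x)$, and elsewhere it is undefined. So it denotes $f$, which proves fullness.

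The main obstacle is this last semantic check. One must confirm that the gadget built from $\coflip{1}$, defined in \eqref{def:bottom}, is undefined exactly when $D$ outputs $0$. One must also confirm that the product rule \eqref{eq:productpar} propagates undefinedness, so that the undefined guard kills the whole output. Both points follow directly from the definitions.
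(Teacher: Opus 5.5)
Your proposal is correct and follows the paper's proof essentially step for step. The factorisation through the quotient and $\Cat{Par}_2$ comes from soundness, and faithfulness is Theorem~\ref{thm:completenesspartialcircuits}. Fullness is obtained by splitting a partial function into a domain part and a total part, realising both by Boolean circuits via Proposition~\ref{prop:foolboolean}, and reassembling them in the shape of Proposition~\ref{prop:decompositionpartialcircuits}; the only difference is that you spell out the semantic check in $\Cat{Par}$ that the paper dismisses as trivial.
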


Theorem \ref{thm:completenesspartialcircuits} also provides a useful characterisation of diagrams in $\Diag{\SigPB}[1,A^n]$. First consider the diagram $\ncopierbis\colon A \to A^n$ defined inductively as
\begin{center}
  
    \InputIfFileExists{tapes/cipriano/ncopierbis.tikz}{}{\input{./tikz/tapes/cipriano/ncopierbis.tikz}}

\end{center}
then define the diagram $\Flipn{\bot}\colon 1 \to n$ as $\Flip{\bot};\ncopierbis$.
\begin{lemma}\label{lemma:caratterizzazionecircuiti0n}
  For all $c\in \Diag{\SigPB}[1,A^n]$, either $c\eqPB \Flipn{\bot}$ or $c\eqPB b$ for some $b\in \Diag{\SigB}[1,A^n]$.
\end{lemma}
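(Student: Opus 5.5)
The plan is to reduce the claim to semantics and then apply completeness. By Proposition~\ref{prop:isopartialboolean}, $\osemPB{c}$ is a partial function $1 \to 2^n$, so there are only two cases: either $\osemPB{c}(\bullet)=\bot$ or $\osemPB{c}(\bullet)=\vec{v}$ for some Boolean vector $\vec{v}\in 2^n$.

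In the second case, I would take $b\defeq \bigotimes_{i=1}^n b_i$, with $b_i=\Flip{1}$ if $v_i=1$ and $b_i=\Flip{0}$ otherwise, or $b$ the empty diagram when $n=0$. This $b$ lies in $\Diag{\SigB}[1,A^n]$, and $\osemPB{b}=\Pa(\osemB{b})$ is the total function $\bullet\mapsto\vec{v}$. Hence $\osemPB{c}=\osemPB{b}$, and Theorem~\ref{thm:completenesspartialcircuits} gives $c\eqPB b$.

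In the first case, I need $\osemPB{\Flipn{\bot}}$ to be the everywhere-undefined function $1\to 2^n$. Since $\Flipn{\bot}=\Flip{\bot};\ncopierbis$ and $\osemPB{\Flip{\bot}}(\bullet)=\bot$, composition in $\Cat{Par}$ propagates undefinedness, so $\osemPB{\Flipn{\bot}}(\bullet)=\bot$ regardless of what $\ncopierbis$ denotes. The only thing to check is the typing for $n=0$. There, $\ncopierbis\colon A\to A^0$ should be the discard $\CBdischarger$, according to the inductive definition, and the semantics is still undefined because composing with $\Flip{\bot}$ yields $\bot$. Then $\osemPB{c}=\osemPB{\Flipn{\bot}}$, and completeness gives $c\eqPB\Flipn{\bot}$.

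The main and mild obstacle is confirming the base case of the inductive definition of $\ncopierbis$, so that $\Flipn{\bot}$ is well typed for every $n$, including $n=0$ and $n=1$. It is also worth observing that the proof relies essentially on Theorem~\ref{thm:completenesspartialcircuits}. A purely syntactic alternative would go through Proposition~\ref{prop:decompositionpartialcircuits}, using $b';D_c$ and $b';T_c$ with $b'$ trivial and then Lemma~\ref{lemma:complete1}. That route is longer, so I would avoid it.
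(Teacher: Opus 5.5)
Your proposal is correct and follows the paper's proof. The paper also splits on whether $\osemPB{c}$ is undefined or equals $\osemPB{b}$ for a Boolean circuit $b$, notes that $\osemPB{\Flipn{\bot}}$ is undefined, and concludes both cases with Theorem~\ref{thm:completenesspartialcircuits}; your explicit Boolean-vector choice of $b$ and your remark on the $n=0$ typing just fill in what the paper calls ``simple computations''.
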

\begin{proof}
Recall that $\osemPB{c}$ is a partial function of type $1 \to 2^n$. Thus, it is either $\bot$ or  of the form $\osemPB{b}$ for some $b\in \Diag{\SigB}[1,A^n]$. In the first case, simple computations confirm that  $\osemPB{\Flipn{\bot}}=\bot$, thus $\osemPB{c}=\osemPB{\Flipn{\bot}}$. By Theorem~\ref{thm:completenesspartialcircuits}, $c\eqPB \Flipn{\bot}$. In the second case, $\osemPB{c}=\osemPB{b}$ and, again by Theorem~\ref{thm:completenesspartialcircuits}, $c\eqPB b$.
\end{proof}
%
%

%
%
%
%
%
\subsection{Probabilistic Boolean circuits}\label{ssec:probabilisticboolean}
Finally, we illustrate probabilistic Boolean circuits from \cite{piedeleu2025boolean}.
Consider the monoidal signature  \[ \SigPRB \defeq \SigPB \cup\{ \Flip{p} \mid p\in (0,1)\}\]  
 where $\Flip{p}$ denotes a probabilistic gate that receives no input and emits a Boolean signal on the right, which is $1$ with probability $p$ and $0$ with probability $1-p$. Thus, string diagrams in $\Diag{\SigPRB}$ represent arrows of $\KlD$.

To formally define their semantics, recall the symmetric monoidal category $(\KlD, \otimes, \uno)$ and observe that there is an obvious monoidal embedding $\ParJ\colon \Cat{Par} \to \KlD$ which is the identity on objects and associates to a partial function $f\colon X \to Y$ the function $\ParJ(f)\colon X \to \Dis(Y)$, mapping $x$ to $\delta_{f(x)}$, if $f(x)$ is defined and to the null distribution $\star$ otherwise. 

The semantics of $\SigPRB$ is given by the monoidal functor $\osem{-}\colon \Diag{\SigPRB} \to \KlD$ defined on objects as $A^n \mapsto 2^n$. On arrows, it is defined by first interpreting generators  $c\in \SigPB$ as $\ParJ(\osemPB{c})$, i.e.
\[
    \begin{array}{rclrclrcl}
      \osem{\Andgate}\colon 2 \times 2& \to & 2  & \osem{\Notgate}\colon  2& \to & 2 & \osem{\Flip{1}} \colon 1 & \to & 2 \\
      (x,y)&\mapsto&\delta_{x\wedge y} &  x & \mapsto & \delta_{\neg x} & \bullet &\mapsto&\delta_{1}\\[4pt]
       \osem{\CBcopier} \colon 2  & \to & 2\times 2 & \osem{\CBdischarger} \colon 2 & \to & 1 & \osem{\CBcocopier} \colon 2 \times 2 & \to & 2  \\
      x &\mapsto&\delta_{(x,x)} & x &\mapsto&\delta_{\bullet} & (x,y) &\mapsto& \begin{cases}\delta_{x} &\text{if }x=y\\ \star & \text{else}\end{cases}, 
    \end{array}
  \]

 the generator $\Flip{p}$ as 
\begin{equation}\label{eq:semanticsFlip}
\begin{array}{rcl}
      \osem{\Flip{p}}\colon 1& \to & 2   \\
      \bullet&\mapsto& \delta_1 +_p \delta_0  \end{array}
    \end{equation}
and then inductively on the structure of the monoidal category $(\KlD, \otimes, \uno)$. Simple computations confirm that $\osem{-}$ extends both $\osemB{-}$ and $\osemPB{-}$, in the sense that the following diagrams commute.
\[
\xymatrix{\Diag{\SigB} \ar@{^{(}->}[r] \ar[d]_{\osemB{-}}& \Diag{\SigPB} \ar@{^{(}->}[r] \ar[d]|{\osemPB{-}} & \Diag{\SigPRB} \ar[d]^{\osem{-}}\\
\Sets \ar@{^{(}->}[r]_{\Pa}& \Cat{Par} \ar@{^{(}->}[r]_{\ParJ}& \KlD
}
\]

\medskip
Figures~4 and~5 in \cite{piedeleu2025boolean} provide a sound and complete equational axiomatisation of probabilistic Boolean circuits. The system essentially consists of the axioms of Boolean algebras (reported in Figure~\ref{tab:booleanalgebra}), together with four non-trivial axioms governing the behaviour of $\Flip{p}$ and additional axioms for $\CBcocopier$, which differ from those presented in Figure~\ref{tab:partialbooleanalgebra}.

Importantly, the equivalence induced by these axioms is strictly coarser than the one given by $\osem{-}$. In fact, two diagrams $c$ and $d$ are equivalent under the axioms of \cite{piedeleu2025boolean} if and only if $\osem{c} \propto \osem{d}$. The relation $\propto$ is defined on arrows of $\KlD$ by setting $f \propto g$ whenever there exists a real number $\lambda > 0$ such that $f(y\mid x) = \lambda \cdot g(y\mid x)$ for all $x \in X$ and $y \in Y$. In the next section, we provide a complete  axiomatisation for the equivalence induced by $\osem{-}$ by means of probabilistic tape diagrams.

%% file: tikz/tapes/cipriano/D4right.tikz
\begin{tikzpicture}
	\begin{pgfonlayer}{nodelayer}
		\node [style=none] (11) at (0, 0) {};
		\node [style=none] (12) at (3, 0) {};
		\node [style=black] (13) at (3, 0) {};
		\node [style=none, scale=0.6] (19) at (2.5, 0.5) {$n$};
	\end{pgfonlayer}
	\begin{pgfonlayer}{edgelayer}
		\draw (11.center) to (12.center);
	\end{pgfonlayer}
\end{tikzpicture}

%% file: tikz/tapes/cipriano/axF9left.tikz
\begin{tikzpicture}
	\begin{pgfonlayer}{nodelayer}
		\node [style=none] (176) at (-3.25, 0) {};
		\node [style=none] (178) at (-1.25, 0) {$\coflip{1}$};
		\node [style=none] (179) at (1.5, 0) {$\Flip{1}$};
	\end{pgfonlayer}
	\begin{pgfonlayer}{edgelayer}
		\draw (176.center) to (178.center);
	\end{pgfonlayer}
\end{tikzpicture}

%% file: tikz/tapes/cipriano/axF10left.tikz
\begin{tikzpicture}
	\begin{pgfonlayer}{nodelayer}
		\node [style=none] (0) at (-0.75, 0) {$\Flip{1}$};
		\node [style=none] (1) at (1, 0) {$\coflip{1}$};
	\end{pgfonlayer}
\end{tikzpicture}

%% file: tikz/tapes/cipriano/axF10right.tikz
\begin{tikzpicture}
	\begin{pgfonlayer}{nodelayer}
		\node [style=none] (2) at (-0.75, 0) {$\Flip{0}$};
		\node [style=none] (3) at (1, 0) {$\coflip{0}$};
	\end{pgfonlayer}
\end{tikzpicture}

%% file: sections/probpartialbooltapes-new.tex
\section{Probabilistic Boolean tapes}\label{sec:probbooltapes}
In \cite[Example 30]{bonchi2025tapediagramsmonoidalmonads}, an encoding of probabilistic Boolean circuits into tape diagrams is presented. 
In this section, we first recall such encoding (Section~\ref{ssec:probbooltapes:encoding}), we then illustrate an axiomatisation (Section~\ref{ssec:probbooltapes:axiomatisation}) and we prove its completeness (Section~\ref{ssec:probbooltapes:completeness}).

\subsection{Encoding probabilistic Boolean circuits into tapes}\label{ssec:probbooltapes:encoding}
Example 30 in \cite{bonchi2025tapediagramsmonoidalmonads} encodes probabilistic Boolean circuits from $\Diag{\SigPRB}$ into $\CatT{\Diag{\SigPB}}$, namely probabilistic tape diagrams of \emph{partial} Boolean circuits.
The encoding is defined as the unique (symmetric strict) monoidal functor $\encoding{-} \colon (\Diag{\SigPRB},\otimes, \uno) \to (\CatT{\Diag{\SigPB}},\otimes, \uno)$ mapping
\[
\resizebox{0.95\linewidth}{!}{$
\Andgate \mapsto \Andgate[t] \qquad
\Notgate \mapsto \Notgate[t] \qquad
\Flip{1} \mapsto \Flip{1}[t] \qquad
\CBcopier \mapsto  \CBcopier[t] \qquad
\CBdischarger \mapsto \CBdischarger[t] \qquad
\CBcocopier \mapsto \CBcocopier[t] 
$}
\]
and $\Flip{p} \mapsto \Flip{p}[t]$ where 
\begin{equation}\label{eq:exprobabilistictapes}
\Flip{p}[t] \defeq 
    \InputIfFileExists{tapes/examples/1p0.tikz}{}{\input{./tikz/tapes/examples/1p0.tikz}}
\end{equation}

The abstract theory developed in \cite{bonchi2025tapediagramsmonoidalmonads} guarantees the existence of a unique morphism of rig categories $\dsem{-}\colon \CatT{\Diag{\SigPB}} \to \KlD$ which assigns to each tape its semantics in $\KlD$. For convenience of the reader the definition of $\dsem{-}$  is reported in Table~\ref{eq:SEMANTICA}. For example, $\dsem{-}$ assigns to  the following tape 
\begin{equation*}
        
    \InputIfFileExists{tapes/examples/ANDpOR.tikz}{}{\input{./tikz/tapes/examples/ANDpOR.tikz}}
 
\end{equation*}
the arrow $2 \times 2 \to 2$ of \(\KlD\) which maps any pair of Booleans $(x,y)$ into $x\wedge y$ with probability $p$ and $x \vee y$ with probability $(1-p)$.
Instead, $\Flip{p}[t]$  is mapped by $\dsem{-}$ into the arrow $1 \to 2$ assigning to $\bullet$ the distribution $1 \mapsto p$, $0 \mapsto (1-p)$. Observe that this is exactly $\osem{\Flip{p}}$ as defined in \eqref{eq:semanticsFlip}. More generally, a simple inductive argument confirms that $\dsem{\encoding{c}}=\osem{c}$ for all $c$ in $\Diag{ PB}$, i.e., the encoding preserves the semantics.

\begin{proposition}\label{prop:encoding} For all $c\in \Diag{\SigPRB}$, $\osem{c}=\CBdsem{\encoding{c}}$.
\end{proposition}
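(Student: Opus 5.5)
The plan is to compare two functors rather than argue term by term. Both $\osem{-}$ and $\encoding{-};\CBdsem{-}$ are strict symmetric monoidal functors $(\Diag{\SigPRB},\otimes,\uno)\to(\KlD,\otimes,\uno)$. For the first this holds by definition. For the second it holds because $\encoding{-}$ is by definition a strict symmetric monoidal functor into $(\CatT{\Diag{\SigPB}},\otimes,\uno)$, and $\CBdsem{-}$ is a morphism of rig categories, so in particular it preserves $\otimes$, its unit and its symmetry. Since $\Diag{\SigPRB}$ is the free strict symmetric monoidal category on the signature $\SigPRB$, two such functors coincide as soon as they agree on objects and on generators. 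The proof therefore reduces to checking generators. Equivalently, one can argue by structural induction on the grammar \eqref{stringdiagramGrammar}: identities and symmetries are preserved by both sides, and the cases $c;d$ and $c\otimes d$ follow from functoriality and monoidality of both composites.

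On objects, both sides send $A^n$ to $2^n$. This holds provided $\CBdsem{-}$ interprets the sort $A$ as $2$ (Table~\ref{eq:SEMANTICA}) and $\encoding{A^n}$ is the monomial $A^n$, so that, using \eqref{def:productPolynomials}, its $\otimes$-structure is preserved.

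For a generator $g\in\SigPB$, $\encoding{g}$ is the tape $\tapeFunct{g}$. By Table~\ref{eq:SEMANTICA}, $\CBdsem{\tapeFunct{g}}$ is the interpretation of the string diagram $g$ inside $\KlD$ via $\ParJ(\osemPB{g})$, which is exactly how $\osem{g}$ is defined. Each of the seven generators is then a one-line check, comparing the listed Kleisli maps such as $\delta_{x\wedge y}$, or the cocopier returning $\star$ on unequal inputs.

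The only genuinely probabilistic case is $\Flip{p}$, and this is where some calculation is needed. Its encoding \eqref{eq:exprobabilistictapes} is a tape of the form $\diagp{\uno};(\tapeFunct{\Flip{1}}\oplus\tapeFunct{\Flip{0}});\codiag{A}$. Here $\Flip{0}=\Flip{1};\Notgate$ is a derived partial Boolean circuit. I will compute its semantics using the interpretations in \eqref{ex:comonoids}. First, $\diagp{\uno}$ sends $\bullet$ to $\delta_{(\bullet,0)}+_p\delta_{(\bullet,1)}$. Then the two branches produce $\delta_1$ and $\delta_0$ respectively, using $\CBdsem{\tapeFunct{\Flip{0}}}=\ParJ(\osemPB{\Flip{1};\Notgate})=\delta_0$. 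Finally, $\codiag{2}$ merges the tags. The result is $\delta_1+_p\delta_0$, which matches \eqref{eq:semanticsFlip}.

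The main obstacle is bookkeeping rather than mathematics. One must make sure that $\CBdsem{-}$ on $\otimes$ of tapes, defined through the whiskerings of Table~\ref{tab:producttape}, really agrees with the cartesian product in $\KlD$. I would not reprove this: it is part of $\CBdsem{-}$ being a rig morphism, as guaranteed by \cite{bonchi2025tapediagramsmonoidalmonads}.
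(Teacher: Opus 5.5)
Your proposal is correct and follows essentially the paper's proof: structural induction on $c$ (which you also phrase as freeness of $\Diag{\SigPRB}$), with the generators of $\SigPB$ handled directly by Table~\ref{eq:SEMANTICA} together with $\osem{g}=\ParJ(\osemPB{g})$, the $;$ and $\otimes$ cases discharged by $\CBdsem{-}$ being a rig morphism, and the $\Flip{p}$ case by the computation the paper carries out just before the statement, which you make explicit. One tiny slip: $\SigPB$ has six generators, not seven.
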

\medskip

Interestingly, tapes in $\CatT{\Diag{\SigPB}}$ are strictly more expressive than probabilistic Boolean circuits in $\Diag{\SigPRB}$, in the sense that they denote a strictly larger class of arrows in $\KlD$. This difference becomes apparent by recalling, from~\cite{bonchi2025tapediagramsmonoidalmonads}, how probabilistic control is implemented in $\Diag{\SigPRB}$ and in $\CatT{\Diag{\SigPB}}$.

In~\cite{piedeleu2025boolean}, probabilistic control is realised via a multiplexer, represented by the tape $\scalebox{0.7}{\Ifgate[t]}$. Intuitively, when $\Flip{p}[t]$, $\Tcirc{c}{\!\!\!\!}{\!\!\!\!}$ and $\Tcirc{d}{\!\!\!\!}{\!\!\!\!}$ are connected, respectively, to the first, second, and third inputs of the multiplexer, the resulting output coincides with that of $c$ with probability $p$ and with that of $d$ with probability $1-p$.
Formally, this behaviour is captured by the composite
\[
(\Flip{p}[t] \;\per\; \Tcirc{c}{\!\!\!\!}{\!\!\!\!} \;\per\; \Tcirc{d}{\!\!\!\!}{\!\!\!\!}) \; ; \; \scalebox{0.7}{\Ifgate[t]},
\]
which, by the definition of $\Flip{p}[t]$ in~\eqref{eq:exprobabilistictapes} and of $\per$ in Table~\ref{tab:producttape}, corresponds to the tape shown on the left below:
\begin{equation*}
    
    \InputIfFileExists{tapes/examples/multiplexT.tikz}{}{\input{./tikz/tapes/examples/multiplexT.tikz}}

    \qquad \qquad
    
    \InputIfFileExists{tapes/examples/pchoice.tikz}{}{\input{./tikz/tapes/examples/pchoice.tikz}}

\end{equation*}

Although the two diagrams above exhibit similar behaviour, a crucial difference emerges when $d$ (or symmetrically $c$) is instantiated as $\Flip{\bot}[t]$, which denotes the null subdistribution and intuitively represents failure. In this case, the circuit on the left always fails (Lemma~4.4 in~\cite{piedeleu2025boolean}). By contrast, the circuit on the right still produces the output of $c$ (respectively $d$) with probability $p$ (respectively $1-p$).

\begin{table}[t]
    \renewcommand{\arraystretch}{1.5}
\!\!\!\begin{tabular}{l@{\;\;\;\;}l@{\;\;\;\;}l@{\;\;\;\;}l@{\;\;\;\;}l}
$\CBdsem{\id{A}} \defeq \id{2} $&$ \CBdsem{\id{1}} \defeq \id{1} $&$ \CBdsem{\symmt{A}{A}}  \defeq \symmt{2}{2}  $&$ \CBdsem{c;d}  \defeq \CBdsem{c} ; \CBdsem{d}   $ & $ \CBdsem{c\per d}  \defeq \CBdsem{c}  \per \CBdsem{d} $\\
$\CBdsem{s}  \defeq \osem{s} $& $\CBdsem{\, \tapeFunct{c} \,} \defeq \CBdsem{c}  $ & $\CBdsem{\diagp{A^n}}  \defeq \; \diagp{2^n}$& $\CBdsem{\bang{A^n}} \defeq \bang{2^n}$  \; $\CBdsem{\cobang{A^n}}  \defeq \cobang{2^n}$    & $\CBdsem{\codiag{A^n}} \defeq \codiag{2^n} $ \\
$\CBdsem{\id{A^n}} \defeq \id{2^n} $&$ \CBdsem{\id{\zero}}  \defeq \id{\zero}  $&$ \CBdsem{\symmp{A^n}{A^m}} \defeq \symmp{2^n}{2^m} $&$ \CBdsem{\s;\t}  \defeq \CBdsem{\s}  ; \CBdsem{\t}  $&$ \CBdsem{\s \piu \t}  \defeq \CBdsem{\s}  \piu \CBdsem{\t}  $ 
\end{tabular}
\caption{The semantics $\dsem{-}\colon \CatT{\Diag{\SigPB}} \to \KlD$. Here $s$ is a generator in $\SigPB$.}\label{eq:SEMANTICA}
\end{table}

\medskip

\subsection{Axiomatisation}\label{ssec:probbooltapes:axiomatisation}
In order to provide a complete axiomatisation for the equivalence induced by $\dsem{-}$, we reuse $\eqPB$ which, as proved in Theorem \ref{thm:completenesspartialcircuits}, provides a complete axiomatisation for partial Boolean circuits. 
Since the axioms in $\ThPB$ are sound with respect to $\osem{-}$ and since $\osem{-}=\dsem{\encoding{-}}$, these axioms are also sound for $\dsem{-}$. However, they are not complete, as explained below.

\begin{example}
Consider the congruence $\tapeeqPB$ on $ \CatT{\Diag{\SigPB}}$ generated by $\eqPB$ as in \eqref{eq:congr2}. By Proposition \ref{prop:quotienttheory}, the quotient of $ \CatT{\Diag{\SigPB}}$ by  $\tapeeqPB$ is exactly $ \CatT{\Diag{\ThPB}}$ which, 
by Corollary~\ref{cor:isotapematrices} and Proposition~\ref{prop:isopartialboolean} is isomorphic to $\stmat{\Cat{Par}_2^+}$. Now consider only $1\times 1$ matrices: these are arrows in $\Cat{Par}_2^+$, i.e., subdistributions of partial functions. These are not in one-to-one correspondence with Kleisli arrows. Take for instance the null subdistribution $\star \in \Dis(\Cat{Par}[1, 2])$ and $\delta_\bot \in \Dis(\Cat{Par}[1, 2])$ which are distinct arrows in $\Cat{Par}_2^+[1,2]$. The former is drawn as the tape \scalebox{0.6}{
    \InputIfFileExists{tapes/cipriano/AXPBP2right.tikz}{}{\input{./tikz/tapes/cipriano/AXPBP2right.tikz}}
} and the latter as $\Flip{\bot}[t]$. It is easy to see that $\dsem{-}$ maps both diagrams into the $\KlD$-arrow $\star_{1,2}$.
\end{example}

To obtain a complete axiomatisation, we need to add to $\ThPB$ the axioms in Figure~\ref{ax:BooleanTAPES}. The tape diagram on the right-hand side of \eqref{ax:AXPBP1} with probability $\frac{1}{2}$ tests if the input is $1$ and returns $1$ and, with probability $\frac{1}{2}$, tests if the input is $0$ and returns $0$. The equality \eqref{ax:AXPBP1} forces such tape to be equal to $\frac{1}{2} \cdot \id{A}$; Axiom \eqref{ax:AXPBP2} forces $\Flip{\bot}[t]$ to be the null subdistribution; Axiom \eqref{ax:canc} is an implication which asserts cancellativity: see Section~\ref{ssec:pca}. 

\begin{figure}[H]
  \centering

  \mylabelbis{ax:AXPBP1}{T2}
  \mylabelbis{ax:AXPBP2}{T1}
  \mylabelbis{ax:canc}{T3}
  
  \[
  \begin{array}{c@{\qquad}c@{\qquad}c}

    \InputIfFileExists{tapes/cipriano/AXPBP2left.tikz}{}{\input{./tikz/tapes/cipriano/AXPBP2left.tikz}}

    \stackrel{\eqref{ax:AXPBP2}}{=}
    
    \InputIfFileExists{tapes/cipriano/AXPBP2right.tikz}{}{\input{./tikz/tapes/cipriano/AXPBP2right.tikz}}

    &
    
    \InputIfFileExists{tapes/cipriano/AXPBP1left.tikz}{}{\input{./tikz/tapes/cipriano/AXPBP1left.tikz}}

    \stackrel{\eqref{ax:AXPBP1}}{=}
    
    \InputIfFileExists{tapes/cipriano/AXPBP1right.tikz}{}{\input{./tikz/tapes/cipriano/AXPBP1right.tikz}}

    &
    {}
    \\[30pt]
    \multicolumn{3}{c}{
      
    \InputIfFileExists{tapes/cipriano/Axcancleft.tikz}{}{\input{./tikz/tapes/cipriano/Axcancleft.tikz}}

      \stackrel{\eqref{ax:canc}}{\implies}
      
    \InputIfFileExists{tapes/cipriano/Axcancright.tikz}{}{\input{./tikz/tapes/cipriano/Axcancright.tikz}}

    }
  \end{array}
  \]
  
\caption{Axioms for tapes of partial Boolean circuits.}\label{ax:BooleanTAPES}
\end{figure}

From $\ThPB$ and the axioms in Figure~\ref{ax:BooleanTAPES}, one generates the congruence $\dot{\sim}$ on $\CatT{\Diag{\SigPB}}$ by means of the following inference rules.
\begin{equation}\label{eq:congr3}
        \scalebox{0.85}{$
        \centering
        \begin{array}{c}
        \begin{array}{c@{\qquad\qquad}c@{\qquad\qquad}c@{\qquad\qquad}c@{\qquad\qquad}c}
            \inferrule*[right=\eqref{ax:AXPBP1}]{\t_1 \stackrel{\eqref{ax:AXPBP1}}{=} \t_2}{\t_1 \dot{\sim} \t_2}
            &
            \inferrule*[right=\eqref{ax:AXPBP2}]{\t_1 \stackrel{\eqref{ax:AXPBP2}}{=} \t_2}{\t_1 \dot{\sim} \t_2}
            &
            \inferrule*[right=($\textsc{R}$)]{-}{\t \dot{\sim} \t}
            &
            \inferrule*[right=($\textsc{S}$)]{\t_1 \dot{\sim} \t_2}{\t_2 \dot{\sim} \t_1}
            &    
            \inferrule*[right=($\textsc{T}$)]{\t_1 \dot{\sim} \t_2 \quad \t_2 \dot{\sim} \t_3}{\t_1 \dot{\sim} \t_3}
        \end{array}
        \\[10pt]
        \begin{array}{c@{\qquad}c@{\qquad}c@{\qquad}c@{\qquad}c}
        \inferrule*[right=($\ThPB$)]{c \eqPB d}{\tape{c} \dot{\sim} \tape{d}}
            &
            \inferrule*[right=\eqref{ax:canc}]{p\cdot \t \dot{\sim} p\cdot \s }{\t \dot{\sim} \s}
            &
            \inferrule*[right=($;$)]{\t_1 \dot{\sim} \t_2 \quad \s_1 \dot{\sim} \s_2}{\t_1;\s_1 \dot{\sim} \t_2;\s_2}
            &
            \inferrule*[right=($\piu$)]{\t_1 \dot{\sim} \t_2 \quad \s_1 \dot{\sim} \s_2}{\t_1\piu\s_1 \dot{\sim} \t_2 \piu \s_2}
            &
            \inferrule*[right=($\per $)]{\t_1 \dot{\sim} \t_2 \quad \s_1 \dot{\sim} \s_2}{\t_1\per \s_1 \dot{\sim} \t_2 \per \s_2}       
        \end{array}
        \end{array}
        $}
\end{equation}

Simple computations confirm that the axioms are sound.
\begin{proposition}[Soundness]\label{prop:soundnesstapes}
For all $\s,\t \in \CatT{\Diag{\SigPB}}$, if $\s \eqsynbis \t$ then $\dsem{\s}=\dsem{\t}$.
\end{proposition}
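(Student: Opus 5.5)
We argue by induction on the derivation of $\s \eqsynbis \t$ using the rules in \eqref{eq:congr3}. Concretely, we show that the relation $\{(\s,\t) \mid \dsem{\s}=\dsem{\t}\}$ contains every instance of the axioms and is closed under every rule; since $\eqsynbis$ is the least such relation, the statement follows.

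\emph{Structural rules.} Reflexivity, symmetry and transitivity are immediate because this relation is the kernel of a function. Closure under ($;$), ($\piu$) and ($\per$) holds because $\dsem{-}$ is a morphism of rig categories. Hence it preserves $;$ and $\piu$ by the definitions in Table~\ref{eq:SEMANTICA}, and it preserves $\per$ as recalled from~\cite{bonchi2025tapediagramsmonoidalmonads}.

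\emph{Rule ($\ThPB$).} If $c\eqPB d$, then $\dsem{\tape{c}}=\dsem{c}=\ParJ(\osemPB{c})$. This holds because $\osem{-}$ extends $\osemPB{-}$ and $\dsem{-}$ agrees with $\osem{-}$ on string diagrams (Proposition~\ref{prop:encoding}, or directly from Table~\ref{eq:SEMANTICA}). Proposition~\ref{prop:soundnesspartialb} gives $\osemPB{c}=\osemPB{d}$, so $\dsem{\tape{c}}=\dsem{\tape{d}}$.

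\emph{Rule \eqref{ax:canc}.} $\dsem{-}$ is a morphism of convex biproduct categories, so it preserves the $\Cat{PCA}$-enrichment and in particular scalar multiplication: $\dsem{p\cdot\t}=p\cdot\dsem{\t}$. It therefore suffices that $\KlD$ is cancellative. Suppose $p\cdot f=p\cdot g$. Pointwise this reads $p\, f(y|x)=p\, g(y|x)$, and since $p>0$ we get $f=g$.

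\emph{Axioms \eqref{ax:AXPBP2} and \eqref{ax:AXPBP1}.} These require direct computation.
\begin{itemize}
\item For \eqref{ax:AXPBP2}, $\dsem{\Flip{\bot}[t]}$ sends $\bullet$ to the null subdistribution, because $\osemPB{\Flip{\bot}}$ is everywhere undefined and $\ParJ$ maps undefinedness to $\star$. The right-hand side is $\bang{1};\cobang{2}=\star_{1,2}$.
\item For \eqref{ax:AXPBP1}, we evaluate the right-hand tape on each $x\in 2$. The upper branch has weight $\frac12$ and consists of $\coflip{1};\Flip{1}$; it yields $\delta_1$ if $x=1$ and $\star$ otherwise. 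The lower branch has weight $\frac12$ and consists of $\coflip{0};\Flip{0}$; it yields $\delta_0$ if $x=0$ and $\star$ otherwise. The codiagonal sums the two branches, giving $\frac12\delta_x$, which is exactly $\dsem{\frac12\cdot\id{A}}$.
\end{itemize}

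The only step that is more than bookkeeping is this last calculation. It amounts to unfolding $\diagpX{\frac12}{}$, $\codiag{}$ and the semantics of the partial tests from Section~\ref{ssec:partialboolean}, so I expect it to be routine, and no step to present a real obstacle.
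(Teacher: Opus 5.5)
Your proposal is correct and follows the same route as the paper. Soundness of the $\ThPB$ axioms is inherited from Proposition~\ref{prop:soundnesspartialb} via $\osem{-}$, axioms \eqref{ax:AXPBP1} and \eqref{ax:AXPBP2} are checked by direct computation in $\KlD$, and \eqref{ax:canc} reduces to cancellativity of $\KlD$. Beyond the paper, you also spell out closure under the congruence rules and the fact that $\dsem{-}$ preserves scalar multiplication, both of which the paper leaves implicit.
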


To prove completeness, it is convenient to consider diagrams in $\CatT{\Diag{\ThPB}}$. Let $\eqsyn$ be the congruence on $\CatT{\Diag{\ThPB}}$ defined as $\dot{\sim}$ but omitting the rule ($\ThPB$). By means of Proposition~\ref{prop:quotienttheory}, one can easily prove the following result.

\begin{proposition}\label{prop:twoequivalence}
For all $\s , \t \in \CatT{\Diag{\SigPB}}$, $\s \dot{\sim} \t$ iff $\CatT{Q_{\ThPB}}(\s) \eqsyn \CatT{Q_{\ThPB}}(\t)$.
\end{proposition}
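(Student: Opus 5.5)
We have to show that, for tapes $\s,\t\in\CatT{\Diag{\SigPB}}$, the relation $\s \dot{\sim} \t$ holds exactly when $\CatT{Q_{\ThPB}}(\s) \eqsyn \CatT{Q_{\ThPB}}(\t)$. The plan is to prove the two implications separately. Throughout we write $Q\defeq\CatT{Q_{\ThPB}}$; it is a strict monoidal functor for $\piu$, and it is full and surjective on arrows, because it is the identity on objects and every tape over $\Diag{\ThPB}$ is the image of a tape over $\Diag{\SigPB}$ (pick representatives inside the $\tapeFunct{-}$ boxes). By Proposition~\ref{prop:quotienttheory}, $Q(\s)=Q(\t)$ iff $\s\tapeeqPB\t$.

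For $(\Rightarrow)$ we argue by induction on the derivation of $\s\dot{\sim}\t$ in \eqref{eq:congr3}. The $(\ThPB)$ rule gives $\tape{c}\dot{\sim}\tape{d}$ with $c\eqPB d$, so $Q(\tape c)=Q(\tape d)$ and reflexivity of $\eqsyn$ applies. Instances of \eqref{ax:AXPBP1} and \eqref{ax:AXPBP2} are mapped by $Q$ to the corresponding instances in $\CatT{\Diag{\ThPB}}$. Rules $(\textsc{R})$, $(\textsc{S})$, $(\textsc{T})$, $(;)$ and $(\piu)$ go through because $Q$ is a functor preserving $\piu$. For the $(\per)$ rule we need $Q(\t_1\per\t_2)=Q(\t_1)\per Q(\t_2)$. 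This holds because $\per$ on tapes is defined inductively through whiskerings (Table~\ref{tab:producttape}), and $Q$ commutes with whiskering: $Q_{\ThPB}$ is strict monoidal for $\per$ on string diagrams, so $Q(\tape{\id U\per c})=\tape{\id U\per Q_{\ThPB}(c)}$. The remaining structural clauses are preserved by a routine induction. For \eqref{ax:canc} we use that $Q$ preserves the enrichment, hence $Q(p\cdot\t)=p\cdot Q(\t)$. The induction hypothesis then gives $p\cdot Q(\t)\eqsyn p\cdot Q(\s)$, and cancellation in $\eqsyn$ concludes.

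For $(\Leftarrow)$ define $\s\approx\t$ iff $Q(\s)\eqsyn Q(\t)$, and prove $\approx\,\subseteq\,\dot{\sim}$. The idea is to transport $\eqsyn$ back along $Q$. We show, by induction on the derivation of $u\eqsyn v$ in $\CatT{\Diag{\ThPB}}$, that for all $\s,\t$ with $Q(\s)=u$ and $Q(\t)=v$ we have $\s\dot{\sim}\t$. The key auxiliary fact is that $Q(\s)=Q(\s')$ implies $\s\tapeeqPB\s'$ (Proposition~\ref{prop:quotienttheory}), and $\tapeeqPB\subseteq\dot{\sim}$, since $\tapeeqPB$ is generated by a subset of the rules of $\dot{\sim}$.

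The induction then runs as follows.
\begin{itemize}
\item \emph{Axiom instances} \eqref{ax:AXPBP1} and \eqref{ax:AXPBP2}. Their sides have canonical preimages in $\CatT{\Diag{\SigPB}}$ that are instances of the same axioms. The chosen $\s,\t$ are $\tapeeqPB$-equal to these preimages, so $\s\dot{\sim}\t$ follows by transitivity.
\item \emph{Composition}. Given $u_1;u_2\eqsyn v_1;v_2$, choose preimages $\s_i$ of $u_i$ and $\t_i$ of $v_i$ and apply the induction hypothesis to get $\s_1;\s_2\dot{\sim}\t_1;\t_2$. Any other preimages of $u_1;u_2$ and $v_1;v_2$ are $\tapeeqPB$-related to these, so the conclusion transfers.
\item \emph{$\piu$, $\per$, symmetry} work the same way as composition.
\item \emph{Transitivity}. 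For $u\eqsyn w\eqsyn v$ we pick any preimage of $w$ and chain the two induction hypotheses.
\item \emph{Cancellation}. Given $p\cdot u\eqsyn p\cdot v$, the tapes $p\cdot\s$ and $p\cdot\t$ are preimages of $p\cdot u$ and $p\cdot v$. The induction hypothesis gives $p\cdot\s\dot{\sim}p\cdot\t$, and \eqref{ax:canc} in $\dot{\sim}$ yields $\s\dot{\sim}\t$.
\end{itemize}

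The main obstacle is bookkeeping rather than substance. The statement must be quantified over all preimages so that the inductive cases close up, and this relies on $\tapeeqPB$ being contained in $\dot{\sim}$ and on $Q$ being surjective and commuting with $\per$ and with scalar multiplication.
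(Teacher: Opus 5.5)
Your proposal is correct and follows the paper's route. Both directions go by induction on the rules: the forward direction uses that $\CatT{Q_{\ThPB}}$ preserves $;$, $\piu$, $\per$ and the enrichment, and the backward direction reduces to Proposition~\ref{prop:quotienttheory}, the inclusion $\tapeeqPB\subseteq\dot{\sim}$, and transitivity for the axiom instances. Your explicit quantification over all preimages, using surjectivity of $\CatT{Q_{\ThPB}}$, makes precise the bookkeeping the paper leaves implicit when it calls the remaining cases trivial, and your auxiliary fact is exactly the paper's treatment of the reflexivity case.
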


Thanks to the above proposition, we can work with $\eqsyn$ rather than $\dot{\sim}$. We conclude this section by illustrating three key properties of $\eqsyn$. First, axiom \eqref{ax:AXPBP2} easily entails the following.

\begin{lemma}\label{lemma:bottomstar2}
 $\star_{A^0,A^n} \sim \Flipn{\bot}[t]$.  
\end{lemma}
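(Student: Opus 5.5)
Write $\Flipn{\bot}[t]$ for the tape $\tapeFunct{\Flipn{\bot}}$. By definition $\Flipn{\bot}=\Flip{\bot};\ncopierbis$, so by the functoriality axiom \eqref{eq:TapeFunctAxioms} we have
\[\Flipn{\bot}[t] = \Flip{\bot}[t] ; \tapeFunct{\ncopierbis}.\]
In $\CatT{\Diag{\ThPB}}$, axiom \eqref{ax:AXPBP2} states that $\Flip{\bot}[t]$ equals the null arrow $\star_{A^0,A}$, drawn as the tape closed on both sides. Since $\eqsyn$ is a congruence with respect to $;$, this gives
\[\Flipn{\bot}[t] \;\eqsyn\; \star_{A^0,A} ; \tapeFunct{\ncopierbis}.\]

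Next we show that the right-hand side is $\star_{A^0,A^n}$. Recall from Definition~\ref{definition:structural} and Lemma~\ref{lemma:initialproperties}.\ref{lemma:starxy} that $\star_{A^0,A}=\bang{A^0};\cobang{A}$. Then
\[\star_{A^0,A};\tapeFunct{\ncopierbis} = \bang{A^0};\cobang{A};\tapeFunct{\ncopierbis} = \bang{A^0};\cobang{A^n} = \star_{A^0,A^n},\]
where the middle step uses naturality of $\cobang{}$ (axiom \eqref{eq:cobang nat}, i.e.\ $\cobang{A};f=\cobang{A^n}$). Alternatively, one can invoke the $\Cat{PCA}$-enrichment law $\star;f=\star$ from \eqref{eq:enr}, which holds in $\CatT{\Diag{\ThPB}}$ by Lemma~\ref{lemma:enrichmentpca}.

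Chaining the two displays gives $\star_{A^0,A^n}\eqsyn\Flipn{\bot}[t]$. The only care needed is to check that $\Flip{\bot}$, as defined in \eqref{def:bottom}, is exactly the string diagram on the left of \eqref{ax:AXPBP2}, and that $A^0=\uno$ is a monomial, so both tapes have type $\uno\to A$.
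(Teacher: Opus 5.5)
Your proof is correct and is essentially the paper's own argument. You unfold $\Flipn{\bot}$ as $\Flip{\bot};\ncopierbis$, rewrite $\Flip{\bot}[t]$ to $\bang{};\cobang{}$ via axiom \eqref{ax:AXPBP2} and congruence, absorb the tape of $\ncopierbis$ by naturality of $\cobang{}$ \eqref{eq:cobang nat}, and recognise the result as $\star_{A^0,A^n}$ from the definition of $\star$.
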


Hereafter, we write $\Cat{B}[1,A^n]$ for the set of $n$-ary Boolean vectors, i.e., all those tapes in $\CatT{\Diag{\ThPB}}[1,A^n]$ of the form $\overrightarrow{b}\defeq\bigotimes_{i=1}^n b_i$ for $b_i \in \{\Flip{0}[t], \Flip{1}[t]\}$ if $n\not=0$, and $
    \InputIfFileExists{tapes/empty.tikz}{}{\input{./tikz/tapes/empty.tikz}}
$ if $n=0$. If $\overrightarrow{b}\in\Cat{B}[1,A^n]$, we write $\overleftarrow{b}$ for the tape $\bigotimes_{i=1}^n b'_i\in \CatT{\Diag{\ThPB}}[A^n,1]$ where $b'_i$ is $\coflip{1}[t]$ if $b_i$ is $\Flip{1}[t]$ and $\coflip{0}[t]$ if $b_i$ is $\Flip{0}[t]$.

The second property crucially relies on axiom \eqref{ax:AXPBP1}.

\begin{lemma}\label{lemma:axiomsninput}
 For all $n\in \mathbb{N}$, $\frac{1}{2^n}\cdot \id{A^n} \sim \sum_{\overrightarrow{b}\in \Cat{B}[1,A^n]}^{} \frac{1}{2^n}\cdot\overleftarrow{b};\overrightarrow{b}$;
\end{lemma}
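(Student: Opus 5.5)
We argue by induction on $n$. Since the statement lives in $\CatT{\Diag{\ThPB}}$, we may freely use that $\otimes$ is monoidally enriched over $\Cat{PCA}$ (equation~\eqref{eq:monoidalenrichment}) and that scalar multiplication satisfies Lemma~\ref{lemma:initialproperties2}.

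For the base case $n=0$, the set $\Cat{B}[1,A^0]$ contains only the empty tape $\id{1}$. Both $\overleftarrow{b}$ and $\overrightarrow{b}$ are $\id{1}$, so both sides reduce to $1\cdot\id{1}=\id{1}$.

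For $n=1$ the statement is exactly axiom~\eqref{ax:AXPBP1}. We first read that axiom as an equation of the form $\frac{1}{2}\cdot\id{A} \sim \frac{1}{2}\cdot(\coflip{1}[t];\Flip{1}[t]) +_{\frac12} \ldots$; more precisely, as $\sum_{b\in\{0,1\}}\frac12\cdot(\coflip{b}[t];\Flip{b}[t])$. This reading requires unfolding the split $\diagpX{\frac12\;}{A}$ followed by the join, via Definition~\eqref{eq:enrichment} and Lemma~\ref{lemma:initialproperties}, together with the convention that $\sum_i p_i\cdot(-)_i$ is built from $+_p$ and $\star$. If the drawing of~\eqref{ax:AXPBP1} instead has a left-hand side of $\id{A}$ with a $\frac12$-split, we still obtain the required form after rearranging scalars with Lemma~\ref{lemma:initialproperties2}.

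For the inductive step, write $\id{A^{n+1}}=\id{A}\otimes\id{A^n}$ and use Lemma~\ref{lemma:initialproperties2} to get $\frac{1}{2^{n+1}}\cdot\id{A^{n+1}} = (\frac12\cdot\id{A})\otimes(\frac{1}{2^n}\cdot\id{A^n})$. This holds because the monoidal enrichment gives $(p\cdot\t)\otimes(q\cdot\s)=pq\cdot(\t\otimes\s)$, which in turn follows from $\star\otimes\t=\star$ and distributivity of $\otimes$ over $+_p$. We then apply the case $n=1$ to the first factor and the induction hypothesis to the second, using the congruence rule ($\per$). Next we distribute $\otimes$ over the two finite convex sums, again by~\eqref{eq:monoidalenrichment}. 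This yields $\sum_{b,\overrightarrow{b}'}\frac{1}{2}\cdot\frac{1}{2^n}\cdot\big((\overleftarrow{b};\overrightarrow{b})\otimes(\overleftarrow{b}';\overrightarrow{b}')\big)$. By interchange (functoriality of $\otimes$, which holds since $(\CatT{\Diag{\ThPB}},\otimes,\uno)$ is monoidal), each summand equals $(\overleftarrow{b}\otimes\overleftarrow{b}');(\overrightarrow{b}\otimes\overrightarrow{b}')$. The pairs $(b,\overrightarrow{b}')$ are in bijection with $\Cat{B}[1,A^{n+1}]$, and the tensor $\overleftarrow{b}\otimes\overleftarrow{b}'$ is the cotape corresponding to $\overrightarrow{b}\otimes\overrightarrow{b}'$, which gives the claim.

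The main obstacle is the bookkeeping with $n$-ary convex sums. We need that $(\sum_i p_i\cdot \t_i)\otimes(\sum_j q_j\cdot\s_j)=\sum_{i,j}p_iq_j\cdot(\t_i\otimes\s_j)$, where the double sum is well defined independently of ordering. We would prove this as an auxiliary lemma: first by induction on the inductive definition~\eqref{eq:def somma n pca}, using the binary law of~\eqref{eq:monoidalenrichment} and $\star\otimes\t=\star$, and then invoking the pca axioms~\eqref{eq:pca} to reorder the terms. Note that cancellativity~\eqref{ax:canc} is not needed here.
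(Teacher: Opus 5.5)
Your proposal is correct and follows essentially the paper's own proof. Both argue by induction on $n$, rewrite $\frac{1}{2^{n+1}}\cdot\id{A^{n+1}}$ as $(\frac12\cdot\id{A})\otimes(\frac{1}{2^n}\cdot\id{A^n})$ via the monoidal $\Cat{PCA}$-enrichment, replace the two factors using axiom (T2) and the induction hypothesis, then distribute $\otimes$ over the convex sums and apply interchange. Your auxiliary lemma on distributing $\otimes$ over $n$-ary sums and flattening nested sums just spells out the ``rearranging the above sum'' step that the paper leaves implicit.
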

\begin{proof}
By induction  on $n$. The base case is trivial. For the inductive step, we have
 \begin{align}
  \frac{1}{2^{n+1}}\cdot \id{A^{n+1}}&= \frac{1}{2^n}\frac{1}{2}\cdot( \id{A}\otimes \id{A^n}) \notag\\
  &= \frac{1}{2^n}\cdot\left(\frac{1}{2}\cdot \id{A}\otimes \id{A^n} \right) \tag{\eqref{eq:monoidalenrichment}}\\
  &= (\frac{1}{2}\cdot \id{A})\otimes (\frac{1}{2^n}\cdot \id{A^n}) \tag{\eqref{eq:monoidalenrichment}}\\
  &\sim (\frac{1}{2}\cdot \id{A})\otimes \left(\sum_{\overrightarrow{b}\in \Cat{B}[1,A^n]}^{} \frac{1}{2^n}\cdot\overleftarrow{b};\overrightarrow{b}\right) \tag{Inductive hypothesis}\\
  &\sim (\coflip{1}[t];\Flip{1}[t] +_{\frac{1}{2}} \coflip{0}[t];\Flip{0}[t]) \otimes \left(\sum_{\overrightarrow{b}\in \Cat{B}[1,A^n]}^{} \frac{1}{2^n}\cdot\overleftarrow{b};\overrightarrow{b}\right) \tag{Axiom \ref{ax:AXPBP1}}\\
  &= \sum_{\overrightarrow{b}\in \Cat{B}[1,A^n]}^{} \frac{1}{2^n}\cdot\left((\coflip{1}[t];\Flip{1}[t] +_{\frac{1}{2}} \coflip{0}[t];\Flip{0}[t]) \otimes (\overleftarrow{b};\overrightarrow{b})\right) \tag{\eqref{eq:monoidalenrichment}}\\
  &=\sum_{\overrightarrow{b}\in \Cat{B}[1,A^n]}^{} \frac{1}{2^n}\cdot\left(((\coflip{1}[t];\Flip{1}[t]) \otimes (\overleftarrow{b};\overrightarrow{b})) +_{\frac{1}{2}} ((\coflip{0}[t];\Flip{0}[t]) \otimes (\overleftarrow{b};\overrightarrow{b}))\right) \tag{SMC}\\
  &= \sum_{\overrightarrow{b}\in \Cat{B}[1,A^n]}^{} \frac{1}{2^n}\cdot\left((\coflip{1}[t]\otimes \overleftarrow{b});(\Flip{1}[t]\otimes \overrightarrow{b}) +_{\frac{1}{2}} (\coflip{0}[t]\otimes \overleftarrow{b});(\Flip{0}[t]\otimes \overrightarrow{b})\right) \tag{\eqref{eq:monoidalenrichment}}
 \end{align}
 Now, since the set $\{\Flip{1}[t]\otimes \overrightarrow{b}, \Flip{0}[t]\otimes \overrightarrow{b}\}$ as $\overrightarrow{b}\in \Cat{B}[1,A^n]$ is exactly $\Cat{B}[1,A^{n+1}]$, then, rearranging the above sum we obtain $\frac{1}{2^{n+1}}\cdot \id{A^{n+1}}= \sum_{\overrightarrow{b}\in \Cat{B}[1,A^{n+1}]}^{} \frac{1}{2^{n+1}}\cdot\overleftarrow{b};\overrightarrow{b}$, as desired.
\end{proof}

The  lemma above and axiom \eqref{ax:canc} give us the third key result.

\begin{lemma}\label{lemma:PBP2} 
Let $\s,\t\in \CatT{\Diag{\ThPB}}[A^n,A^m]$. If, for all $\overrightarrow{b}\in \Cat{B}[1,A^n]$, $\overrightarrow{b};\s \eqsyn\overrightarrow{b}; \t$, then $\s \eqsyn \t$.
\end{lemma}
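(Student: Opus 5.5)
The plan is to precompose both $\s$ and $\t$ with the decomposition of $\frac{1}{2^n}\cdot\id{A^n}$ from Lemma~\ref{lemma:axiomsninput}, use the hypothesis pointwise on Boolean vectors, and then cancel the scalar $\frac{1}{2^n}$ by axiom \eqref{ax:canc}.

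Concretely, I start from $\frac{1}{2^n}\cdot \s$. By Lemma~\ref{lemma:initialproperties2}.\ref{lemma:p;} this equals $(\frac{1}{2^n}\cdot \id{A^n});\s$. Lemma~\ref{lemma:axiomsninput} rewrites it, up to $\eqsyn$, as
\[\Big(\sum_{\overrightarrow{b}\in \Cat{B}[1,A^n]} \frac{1}{2^n}\cdot\overleftarrow{b};\overrightarrow{b}\Big);\s.\]
Since $\eqsyn$ is a congruence with respect to $;$, this substitution is legitimate. Because composition in $\CatT{\Diag{\ThPB}}$ is a morphism of pcas (Theorem~\ref{thm:TCconvexbiproductcategory} and the enrichment \eqref{eq:enr}), composition distributes over the finite convex sum; here I also use Lemma~\ref{lemma:initialproperties2} to move scalars through composition. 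The result is
\[\sum_{\overrightarrow{b}} \frac{1}{2^n}\cdot\overleftarrow{b};(\overrightarrow{b};\s).\]

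The hypothesis gives $\overrightarrow{b};\s \eqsyn \overrightarrow{b};\t$ for every $\overrightarrow{b}$. Since $\eqsyn$ is a congruence for $;$, it follows that $\overleftarrow{b};\overrightarrow{b};\s \eqsyn \overleftarrow{b};\overrightarrow{b};\t$. The convex sum $\sum_i p_i\cdot(-)_i$ is built from $\diagp{}$, $\oplus$, $\codiag{}$ and $\bang{}$ via \eqref{eq:enrichment}, so congruence for $;$ and $\oplus$ makes it respect $\eqsyn$ componentwise. Running the same chain of equalities backwards for $\t$ then yields $\frac{1}{2^n}\cdot\s \eqsyn \frac{1}{2^n}\cdot\t$.

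The final step applies the cancellativity rule \eqref{ax:canc} with $p=\frac{1}{2^n}$ to conclude $\s\eqsyn\t$. This requires $p\in(0,1)$, which holds for $n\geq 1$. For $n=0$, $\Cat{B}[1,A^0]$ consists only of the empty tape $\id{1}$, so the hypothesis already reads $\s\eqsyn\t$.

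The main obstacle is the bookkeeping that lets $\eqsyn$ pass through the iterated convex sums and the scalar multiplications. The cleanest way to handle it is to note that $\eqsyn$ is a congruence for $;$ and $\oplus$, and that all the pca operations are definable from these in the convex biproduct category $\CatT{\Diag{\ThPB}}$. Distributivity of $;$ over $n$-ary sums follows from the binary law in \eqref{eq:enr} by a routine induction on the definition \eqref{eq:def somma n pca}.
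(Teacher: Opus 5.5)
Your proposal is correct and follows the same route as the paper: you rewrite $\frac{1}{2^n}\cdot\s$ and $\frac{1}{2^n}\cdot\t$ via Lemma~\ref{lemma:axiomsninput}, distribute composition over the convex sum using \eqref{eq:enr}, apply the hypothesis summand-wise, and cancel with \eqref{ax:canc}. You also treat $n=0$ separately. There $\frac{1}{2^0}=1\notin(0,1)$, so \eqref{ax:canc} does not apply, but the hypothesis for the empty vector already gives $\s\eqsyn\t$; the paper's proof leaves this case implicit.
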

\begin{proof}
Consider the following derivation: 
\begin{align}
  \frac{1}{2^n} \cdot \t &= \frac{1}{2^n}\cdot (\id{A^n};\t) \notag\\
  &= (\frac{1}{2^n}\cdot \id{A^n});\t \tag{Lemma \ref{lemma:initialproperties2}}\\ 
  &\sim (\sum_{\overrightarrow{b}\in \Cat{B}[1,A^n]}^{} \frac{1}{2^n}\cdot\overleftarrow{b};\overrightarrow{b});\t \tag{Lemma \ref{lemma:axiomsninput}}\\
  &= \sum_{\overrightarrow{b}\in \Cat{B}[1,A^n]}^{} \frac{1}{2^n}\cdot\overleftarrow{b};\overrightarrow{b};\t \tag{\eqref{eq:enr}}\\
  &\sim \sum_{\overrightarrow{b}\in \Cat{B}[1,A^n]}^{} \frac{1}{2^n}\cdot\overleftarrow{b};\overrightarrow{b};\s \tag{Hp.}\\
  &=(\sum_{\overrightarrow{b}\in \Cat{B}[1,A^n]}^{} \frac{1}{2^n}\cdot\overleftarrow{b};\overrightarrow{b});\s \tag{\eqref{eq:enr}}\\
  &\sim (\frac{1}{2^n}\cdot \id{A^n});\s \tag{Lemma \ref{lemma:axiomsninput}}\\
  &= \frac{1}{2^n} \cdot \s. \tag{Lemma \ref{lemma:initialproperties2}} 
\end{align}
Hence, applying axiom \eqref{ax:canc}, we conclude that $\t \eqsyn \s$.
\end{proof}

\subsection{Completeness}\label{ssec:probbooltapes:completeness}
We write $\eqsynq{\CatT{\Diag{\ThPB}}}$ for the category obtained as the quotient of $\CatT{\Diag{\ThPB}}$ by $\eqsyn$ and by $Q_\eqsyn\colon \CatT{\Diag{\ThPB}} \to \eqsynq{\CatT{\Diag{\ThPB}}}$ the functor mapping each tape into its $\eqsyn$ equivalence class. By Propositions~\ref{prop:soundnesstapes} and \ref{prop:twoequivalence}, $\dsem{-}\colon \CatT{\Diag{\SigPB}} \to \KlD$ can be factored as
\[\xymatrix{\CatT{\Diag{\SigPB}} \ar[r]^{\CatT{Q_\ThPB}} & \CatT{\Diag{\ThPB}} \ar[r]^{Q_\eqsyn} & \eqsynq{\CatT{\Diag{\ThPB}}} \ar@{.>}[r]^I& \KlD}\]
for some functor $I$. Proving completeness amounts to proving that  $I$  is faithful. 

\medskip

By Corollary~\ref{cor:isotapematrices} and Proposition~\ref{prop:isopartialboolean}  we know that 
\begin{equation}\label{eq:i2so}
  \CatT{\Diag{\ThPB}} \cong \stmat{\Diag{\ThPB}^+}\cong \stmat{\Cat{Par}_2^+}\text{.}
\end{equation}
Now, observe that the functor $\ParJ\colon \,\Cat{Par} \to \KlD$ from Section~\ref{ssec:probabilisticboolean} restricts to $\ParJ_2\colon \Cat{Par}_2 \to \KlD$. Since $\KlD$ is a convex biproduct category, by the two adjunctions in Theorems~ \ref{thm:freeenriched} and \ref{thm:matfree}, there exists a functor \[\ParJ_2'\colon \stmat{\Cat{Par}_2^+} \to \KlD\text{.} \]

Consider the diagram below, where the vertical isomorphisms are those in \eqref{eq:i2so}.

\[
\qquad \xymatrix@R=2ex@C=3ex{
\CatT{\Diag{\ThPB}} \ar[r]^{\eqsynq{Q}} \ar[d]_{\cong} 
  & \eqsynq{\CatT{\Diag{\ThPB}}} \ar[dd]^{I} \\
{\stmat{\Diag{\ThPB}^+}} \ar[d]_{\cong} 
  & {} \\
\stmat{\Cat{Par}_2^+} \ar[r]_{\ParJ_2'} 
  & \KlD
}
\]
In the following result we prove that the above diagram commutes, where $\comp$ denotes the composition of the left vertical isomorphisms with $\ParJ_2'$.

\begin{lemma}\label{lemma:diagramma-commutativo}
$I\circ \eqsynq{Q}=\comp$.
\end{lemma}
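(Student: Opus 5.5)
The plan is to reduce the equation $I\circ \eqsynq{Q}=\comp$ to a statement about generators. All functors involved are morphisms of convex biproduct categories with domain $\CatT{\Diag{\ThPB}}$, and Theorem~\ref{thm:syntacticadjunction} makes such a morphism unique once its restriction along the unit $\eta\colon \Diag{\ThPB}\to\CatT{\Diag{\ThPB}}$ is fixed. So it suffices to show three things:
\begin{enumerate}
\item $I\circ\eqsynq{Q}$ is a morphism of convex biproduct categories;
\item $\comp$ is a morphism of convex biproduct categories;
\item the two functors agree on every tape $\tapeFunct{c}$ with $c\in\Diag{\ThPB}$.
\end{enumerate}

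For (1), we first note that $\eqsyn$ is a congruence for $;$ and $\oplus$ and is cancellative by rule \eqref{ax:canc}. Proposition~\ref{cor: quotient category is convex biproduct } then shows that $\eqsynq{\CatT{\Diag{\ThPB}}}$ is a convex biproduct category and that $Q_\eqsyn$ is a morphism. We also know that $\dsem{-}$ is a morphism of rig categories built inductively as in Table~\ref{eq:SEMANTICA}. It therefore sends $\oplus,\diagp{},\bang{},\codiag{},\cobang{}$ to the corresponding structure of $\KlD$, so by Proposition~\ref{prop: monoidal functors} it is a morphism of convex biproduct categories. Since $\dsem{-}=I\circ Q_\eqsyn\circ\CatT{Q_\ThPB}$ and both quotient functors are full and bijective on objects, $I$ inherits preservation of the monoidal structure, monoids and co-pcas, and is thus a morphism as well. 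For (2), the vertical isomorphisms come from Corollary~\ref{cor:isotapematrices} and Proposition~\ref{prop:isopartialboolean} and are isomorphisms in $\Cat{CBCat}$. The functor $\ParJ_2'$ is the transpose under the composite adjunction of Theorems~\ref{thm:freeenriched} and \ref{thm:matfree}, hence also a morphism in $\Cat{CBCat}$.

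For (3), take $c\in\Diag{\ThPB}[U,V]$. On the right-hand side, the isomorphism sends $\tapeFunct{c}$ to the $1\times1$ matrix $(1\cdot c)$, then to $(1\cdot\osemPB{c})$ in $\stmat{\Cat{Par}_2^+}$. By construction of the transposes, $\ParJ_2'$ sends this matrix to $\ParJ(\osemPB{c})=\osem{c}$. On the left-hand side, $I(\eqsynq{Q}(\tapeFunct{c}))=\dsem{\tapeFunct{c}}=\CBdsem{c}$ for any representative of $c$ in $\Diag{\SigPB}$. By the inductive definition of $\dsem{-}$ on string diagrams together with $\osem{-}=\ParJ\circ\osemPB{-}$ on $\Diag{\SigPB}$, this equals $\ParJ(\osemPB{c})$. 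The two sides therefore agree.

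The step most likely to need care is the bookkeeping in (2) and (3): unfolding $\ParJ_2'$ through the counit $\epsilon$ of Theorem~\ref{thm:matfree} and $(-)^\sharp$ of Theorem~\ref{thm:freeenriched}, to confirm that a $1\times1$ matrix $(1\cdot\delta_f)$ is sent to $\ParJ(f)$. I would check this directly from the explicit formula $F^\sharp(d)=\sum_f d(f)\cdot F(f)$, after which uniqueness in Theorem~\ref{thm:syntacticadjunction} closes the argument.
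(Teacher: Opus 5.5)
Your argument is correct, but it is organised differently from the paper's.

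\textbf{Your route.} You invoke the uniqueness clause of Theorem~\ref{thm:syntacticadjunction} once. You apply it directly to the pair $I\circ\eqsynq{Q}$ and $\comp$, regarded as morphisms of convex biproduct categories out of the free one $\CatT{\Diag{\ThPB}}$. As a result, only the generators $\tapeFunct{c}$ have to be compared. The price is your step (1): you must show that $I\circ\eqsynq{Q}$ is such a morphism. You do this by transferring structure preservation from $\dsem{-}$ along the full, identity-on-objects quotient functors.

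\textbf{The paper's route.} The paper uses uniqueness only to identify $\comp$ with the inductively defined $\jcompiso^\flat$ (Lemma~\ref{lemma:J_2}). It then lifts along the isomorphism $F\colon\CatT{\Diag{\SigPB}}_{\tapeeqPB}\to\CatT{\Diag{\ThPB}}$ of Proposition~\ref{prop:quotienttheory}. Next it checks that $\comp\circ F\circ Q_{\tapeeqPB}$ and $\dsem{-}$ agree on every generator, namely the co-pca and monoid generators as well as $\tapeFunct{c}$. Finally it cancels the quotient $Q_{\tapeeqPB}$ and the isomorphism $F$.

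\textbf{Comparison.} The paper never needs $I$ to be a morphism of convex biproduct categories. Your version replaces the generator-by-generator comparison and the bookkeeping with $F$ by a single structural argument.

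\textbf{A small simplification of your step (1).} The appeal to Proposition~\ref{cor: quotient category is convex biproduct }, and hence to cancellativity of $\eqsyn$, is unnecessary. You can verify directly that the composite $I\circ\eqsynq{Q}$ is strong monoidal and preserves monoids and co-pcas. Then apply Proposition~\ref{prop: monoidal functors} with domain $\CatT{\Diag{\ThPB}}$, which is a convex biproduct category by Theorem~\ref{thm:TCconvexbiproductcategory}.
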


\begin{lemma}\label{lemma:Ffaith}
For all $\t,\t'\in \CatT{\Diag{\ThB}}[A^0,A^m]$, if $\comp(\t)=\comp(\t')$, then $\t=\t'$.
\end{lemma}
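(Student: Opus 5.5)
The plan is to transport the problem along the isomorphisms of Corollary~\ref{cor:isotapematrices} and Proposition~\ref{prop:foolboolean}. Then the claim reduces to injectivity of a pushforward of subdistributions along a bijection.

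\textbf{Step 1: reduce to a subdistribution of Boolean vectors.} The source $A^0=\uno$ and the target $A^m$ are single monomials. By Corollary~\ref{cor:isotapematrices}, a tape $\t\in\CatT{\Diag{\ThB}}[A^0,A^m]$ therefore corresponds to a $1\times 1$ stochastic matrix over $\Diag{\ThB}^+$. Its single entry is $p\cdot d$ with $p\in[0,1]$ and $d\in\Dis(\Diag{\ThB}[\uno,A^m])$. Under $\equiv$, this entry is just the element $e_\t\defeq p\cdot d$ of the pca $\Dis(\Diag{\ThB}[\uno,A^m])$, namely the subdistribution $f\mapsto p\,d(f)$. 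So $\t=\t'$ holds iff $e_\t=e_{\t'}$. By Proposition~\ref{prop:foolboolean}, $\Diag{\ThB}[\uno,A^m]\cong\Sets_2[1,2^m]$, and the latter is in bijection with $2^m$ via $f\mapsto f(\bullet)$. Equivalently, by Theorem~\ref{thm:completenessbooleancircuits}, each such circuit is $\eqB$-equal to exactly one Boolean vector $\overrightarrow{b}$. Hence $e_\t$ is in effect a subdistribution on the finite set of Boolean vectors of length $m$.

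\textbf{Step 2: compute $\comp(\t)$.} The functor $\ParJ_2'$ is obtained from the adjunctions in Theorems~\ref{thm:freeenriched} and~\ref{thm:matfree}, so on a $1\times 1$ matrix it acts as $F^\sharp$ in the proof of Theorem~\ref{thm:freeenriched}. The Boolean circuits involved are total and embed into $\Cat{Par}_2$ via $\Pa$, so I will treat $\t$ through the inclusion $\Diag{\ThB}\hookrightarrow\Diag{\ThPB}$; this is how $\comp$ is applied to it. Composing with the isomorphism of Proposition~\ref{prop:isopartialboolean}, I expect
\[
\comp(\t)(y\mid\bullet)=\sum_{f}e_\t(f)\cdot\ParJ(f)(y\mid\bullet)=\sum_{f\,:\,f(\bullet)=y}e_\t(f).
\]
Here $f$ ranges over total functions $1\to 2^m$ and $y\in 2^m$; each $\ParJ(f)$ is a Dirac distribution because $f$ is total. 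This uses that the $\Cat{PCA}$-structure on $\KlD$ is pointwise convex combination and that $p\cdot(-)$ scales pointwise.

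\textbf{Step 3: conclude by injectivity.} Since $f\mapsto f(\bullet)$ is a bijection, the sum in Step~2 has exactly one term. Hence $\comp(\t)(y\mid\bullet)=e_\t(f_y)$, where $f_y$ is the unique circuit with output $y$. So $\comp(\t)=\comp(\t')$ gives $e_\t=e_{\t'}$ pointwise, which by Step~1 means $\t=\t'$.

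The main obstacle I expect is the bookkeeping in Step~2. One must check carefully that the composite of the vertical isomorphisms in \eqref{eq:i2so} with $\ParJ_2'$ really sends the $1\times 1$ matrix to this pushforward. In particular, the equivalence $\equiv$ on pairs $(p,d)$ must be handled correctly, so that only the product $p\cdot d$ matters. One must also check that restricting from $\ThPB$ to $\ThB$ introduces no $\bot$ outputs that could collapse distinct entries.
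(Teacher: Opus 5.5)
Your proposal is correct and follows essentially the paper's route. You identify tapes of type $A^0\to A^m$ with $1\times 1$ matrices, i.e.\ single subdistributions over closed circuits, and note that on these the functor acts as the free extension of $\ParJ_2$ to subdistributions (the paper phrases this as $\eta_{\Cat{Par}_2^+};\ParJ_2'=\ParJ_2^\sharp$); you then conclude because, on subdistributions of total functions, this is the bijection $\Dis(\Sets_2[1,2^m])\cong\Dis(2^m)\cong\KlD[1,2^m]$, which is exactly the restriction to total Boolean circuits that the paper makes.
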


Now, by relying on Lemma~\ref{lemma:caratterizzazionecircuiti0n}, we have the following normal forms for tapes $\t \colon A^0 \to A^m$.

\begin{lemma}\label{lemma:PBP0}
For all  $\t\in \CatT{\Diag{\ThPB}}[A^0,A^m]$, there exist $n\in\mathbb{N}$, $p_1,\dots,p_n\in (0,1)$, $\overrightarrow{b}_1,\dots, \overrightarrow{b}_n\in \Cat{B}[1,A^m]$ and $p,q\in [0,1]$, such that $\sum_{i=1}^{n}p_i= 1$, and $$\t=(\sum_{i=1}^{n}p_i\cdot \overrightarrow{b}_i)+_p(\star_{A^0,A^m} +_q \Flipm{\bot}[t])\text{.}$$ 
  \end{lemma}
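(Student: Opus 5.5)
We argue through the isomorphism $\CatT{\Diag{\ThPB}} \cong \stmat{\Diag{\ThPB}^+}$ of Corollary~\ref{cor:isotapematrices}. Since both source $A^0$ and target $A^m$ are monomials, an arrow $\t\colon A^0 \to A^m$ is a $1\times 1$ matrix, i.e.\ an element of $\Diag{\ThPB}^+[1,A^m]=\Dis(\Diag{\ThPB}[1,A^m])$: a finitely supported subdistribution $d$ over partial Boolean circuits of type $1\to A^m$ (modulo $\eqPB$). Through the isomorphism, the convex sums and $\star$ of the subdistribution correspond to the enrichment operations $+_p$ and $\star_{A^0,A^m}$ of Definition~\ref{definition:structural}, and a Dirac $\delta_c$ corresponds to $\tapeFunct{c}$. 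Hence $\t = \sum_{c} d(c)\cdot \tapeFunct{c}$, with the sum taken in the pca $\CatT{\Diag{\ThPB}}[A^0,A^m]$ over the finite support of $d$.

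Next, we classify each circuit in the support by Lemma~\ref{lemma:caratterizzazionecircuiti0n}. Each such $c$ is either $\eqPB$-equal to $\Flipm{\bot}$ or to some $b\in\Diag{\SigB}[1,A^m]$. In the latter case, Theorem~\ref{thm:completenessbooleancircuits} gives $b \eqB \bigotimes_i b_i$ with $b_i\in\{\Flip{0},\Flip{1}\}$. Since $\tapeFunct{-}$ is strict monoidal with respect to $\otimes$ (Table~\ref{tab:producttape}, $\tapeFunct{c\otimes d}=\tapeFunct{c}\otimes\tapeFunct{d}$ via whiskerings and \eqref{eq:TapeFunctAxioms}), $\tapeFunct{b}$ equals a Boolean vector $\overrightarrow{b}\in\Cat{B}[1,A^m]$. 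After merging equal summands by idempotency, the support therefore splits into distinct Boolean vectors $\overrightarrow{b}_1,\dots,\overrightarrow{b}_n$ with weights $r_i$, plus possibly $\Flipm{\bot}[t]$ with weight $s$, where $\sum_i r_i + s\le 1$.

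It remains to rearrange into the required shape using only the pca axioms \eqref{eq:pca}, i.e.\ associativity and commutativity of convex sums. Let $R=\sum_i r_i$. If $R>0$, put $p_i = r_i/R$, so that $\sum_i p_i = 1$, and set $p=R$. The remaining mass $1-R$ is split between $\star$ and $\Flipm{\bot}[t]$: choose $q$ with $(1-R)(1-q)=s$, that is $q=1-\frac{s}{1-R}$ when $R<1$, and $q$ arbitrary when $R=1$ (then $s=0$). This uses $x+_1y=x$ and $x+_0 y=y$, and the residual $\star$ accounts for the missing mass $1-R-s$. The identity $\sum_j w_j\cdot x_j = (\sum_i p_i\cdot x_i)+_R(\dots)$ is the standard regrouping of finite convex combinations, derivable by induction from the associativity law. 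If $R=0$, take $n=0$ and $p=0$; the convention for the empty sum then makes the first summand $\star$, which is harmless since it is weighted by $0$. Note that the $p_i$ lie in $(0,1]$; if some $p_i=1$ then $n=1$, which we treat as allowed (or as $p_1\cdot\overrightarrow{b}_1$ read in the degenerate sense).

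The main obstacle is bookkeeping: matching the paper's inductive $n$-ary sum \eqref{eq:def somma n pca} with the regrouped form, and handling the degenerate cases $R\in\{0,1\}$ and $1-R=0$. We would prove a small general pca lemma for regrouping convex sums and then instantiate it.
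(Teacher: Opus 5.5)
Your proposal is correct and follows essentially the paper's own argument: identify $\t$ with a subdistribution over $\Diag{\ThPB}[1,A^m]$ (the paper cites Lemma~\ref{lemma: arrows A-B subdistributions}, where you use the isomorphism of Corollary~\ref{cor:isotapematrices}), classify its support with Lemma~\ref{lemma:caratterizzazionecircuiti0n}, and regroup the weights. Two small remarks. First, the bookkeeping you worry about disappears if you compare both sides as elements of $\Dis(\Diag{\ThPB}[1,A^m])$, to which the hom-pca is isomorphic. Second, your degenerate choices ($n=0$ when $R=0$, or $p_1=1$ when $n=1$) break the literal side conditions $p_i\in(0,1)$ and $\sum_i p_i=1$; you can restore them by duplicating a Boolean vector, taking $n=2$, $p_1=p_2=\frac12$ and $\overrightarrow{b}_1=\overrightarrow{b}_2$ (harmless by idempotency), with $p=0$ when $R=0$.
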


The previous two results, combined with Lemma~\ref{lemma:bottomstar2} allow us to prove the following key property.

\begin{lemma}\label{lemma:PBP1}
  For all $\t,\t'\in \CatT{\Diag{\ThPB}}[A^0,A^m]$, if $\comp(\t)=\comp(\t')$, then $\t\sim \t'$.
\end{lemma}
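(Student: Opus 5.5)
The plan is to reduce both $\t$ and $\t'$ to the normal form of Lemma~\ref{lemma:PBP0}, use Lemma~\ref{lemma:bottomstar2} to remove the failure summand, and then use Lemma~\ref{lemma:Ffaith} to compare the remaining Boolean parts. By Lemma~\ref{lemma:PBP0} we can write
\[\t=\Bigl(\sum_{i=1}^{n}p_i\cdot \overrightarrow{b}_i\Bigr)+_p(\star_{A^0,A^m} +_q \Flipm{\bot}[t]),\qquad \t'=\Bigl(\sum_{j=1}^{n'}p'_j\cdot \overrightarrow{b'}_j\Bigr)+_{p'}(\star_{A^0,A^m} +_{q'} \Flipm{\bot}[t]).\]
Lemma~\ref{lemma:bottomstar2} gives $\Flipm{\bot}[t]\sim\star_{A^0,A^m}$. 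Idempotency of $+_q$ then yields $\star+_q\Flipm{\bot}[t]\sim\star$. This step uses the congruence $\sim$ with respect to the pca operations, which are defined from $;$ and $\oplus$ via \eqref{eq:enrichment}. Hence $\t\sim p\cdot \u$ with $\u=\sum_i p_i\cdot\overrightarrow{b}_i$, and similarly $\t'\sim p'\cdot\u'$.

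The tapes $p\cdot \u$ and $p'\cdot\u'$ involve only Boolean vectors, so they lie in the image of $\CatT{\Diag{\ThB}}[A^0,A^m]$. Since $\comp$ factors through $\sim$ by Lemma~\ref{lemma:diagramma-commutativo} and soundness (Proposition~\ref{prop:soundnesstapes}), we get $\comp(p\cdot\u)=\comp(\t)=\comp(\t')=\comp(p'\cdot\u')$. Lemma~\ref{lemma:Ffaith} then gives $p\cdot\u=p'\cdot\u'$ as tapes, and therefore $\t\sim\t'$ by transitivity.

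The main obstacle is justifying the application of Lemma~\ref{lemma:Ffaith}. Its conclusion is an equality in $\CatT{\Diag{\ThB}}$, so we must transport it along the inclusion $\CatT{\Diag{\ThB}}\to\CatT{\Diag{\ThPB}}$. That inclusion is obtained functorially by applying $\CatT{-}$ to $\Diag{\ThB}\to\Diag{\ThPB}$, so equal tapes remain equal. We also have to check that $\comp$ restricted to Boolean tapes agrees with the corresponding composite for $\ThB$. Concretely, I would verify that $\comp(p\cdot \u)$ is the $\KlD$ arrow sending $\bullet$ to the subdistribution $\sum_i p\,p_i\,\delta_{b_i}$. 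Equality of these subdistributions is exactly what Lemma~\ref{lemma:Ffaith} turns into equality of tapes, with duplicate vectors merged using the pca laws. If that merging is awkward, the fallback is to compare the two subdistributions directly: collect equal $\overrightarrow{b}$'s using the pca axioms \eqref{eq:pca}, and conclude because distinct Boolean vectors have distinct Dirac semantics.
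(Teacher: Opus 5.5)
Your proposal is correct and is essentially the paper's proof. Both use the normal forms of Lemma~\ref{lemma:PBP0}, then Lemma~\ref{lemma:bottomstar2} plus idempotency to get $\t\sim p\cdot\sum_i p_i\cdot\overrightarrow{b}_i$ (and likewise for $\t'$), then Lemma~\ref{lemma:Ffaith} to identify the two Boolean tapes, and finally transitivity. The differences are cosmetic: you obtain $\comp(\t)=\comp(p\cdot\sum_i p_i\cdot\overrightarrow{b}_i)$ from $\comp$ factoring through $\sim$ (Lemma~\ref{lemma:diagramma-commutativo}), whereas the paper computes it directly from the pca-enrichment of $\comp$ and $\comp(\star_{A^0,A^m})=\star_{1,2^m}=\comp(\Flipm{\bot}[t])$; you also make explicit the transport of Lemma~\ref{lemma:Ffaith} along $\CatT{\Diag{\ThB}}\to\CatT{\Diag{\ThPB}}$, which the paper leaves implicit.
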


\begin{theorem}\label{thm:completenessPBPtapes}
  $I\colon \eqsynq{\CatT{\Diag{\ThPB}}} \to \stmat{\KlD_2}$ is faithful.
\end{theorem}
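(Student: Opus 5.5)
We prove that whenever $I(Q_\sim(\s)) = I(Q_\sim(\t))$ for $\s,\t\in \CatT{\Diag{\ThPB}}[P,Q]$, we have $\s\sim\t$. By Lemma~\ref{lemma:diagramma-commutativo} the hypothesis says exactly $\comp(\s)=\comp(\t)$, so it suffices to derive $\s\sim\t$ from $\comp(\s)=\comp(\t)$. The argument reduces, in three steps, from arbitrary polynomial types to tapes of type $A^0\to A^m$, where Lemma~\ref{lemma:PBP1} already provides the conclusion.

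\textbf{Step 1: reduce to monomial sources and targets.} Objects of $\CatT{\Diag{\ThPB}}$ are polynomials $\bigoplus_i A^{n_i}$, so $P=\bigoplus_{j} A^{n_j}$ and $Q=\bigoplus_k A^{m_k}$. $\CatT{\Diag{\ThPB}}$ is a convex biproduct category, with coproducts and jointly monic projections by Theorem~\ref{thm:TCconvexbiproductcategory} and Lemma~\ref{lemma:TC jmono}. The quotient by $\sim$ is again a convex biproduct category by Proposition~\ref{cor: quotient category is convex biproduct }, since $\sim$ is cancellative by \eqref{ax:canc}.
\begin{itemize}
\item Since $\s=[\iota_j;\s]_j$, it suffices to show $\iota_j;\s\sim\iota_j;\t$ for every $j$.
\item By joint monicity of projections (which is preserved in the quotient because $\sim$ is a congruence), it further suffices to show $\iota_j;\s;\pi_k\sim\iota_j;\t;\pi_k$ for every $k$.
\item $\comp$ is a morphism of convex biproduct categories, so it preserves $\iota_j$ and $\pi_k$. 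Hence $\comp(\s)=\comp(\t)$ gives $\comp(\iota_j;\s;\pi_k)=\comp(\iota_j;\t;\pi_k)$.
\end{itemize}
This reduces the problem to tapes $\s,\t\colon A^n\to A^m$.

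\textbf{Step 2: reduce to source $A^0$.} For $\s,\t\colon A^n\to A^m$, Lemma~\ref{lemma:PBP2} says it suffices to show $\overrightarrow{b};\s\sim\overrightarrow{b};\t$ for every Boolean vector $\overrightarrow{b}\in\Cat{B}[1,A^n]$. Both sides now have type $A^0\to A^m$, and functoriality of $\comp$ gives $\comp(\overrightarrow{b};\s)=\comp(\overrightarrow{b});\comp(\s)=\comp(\overrightarrow{b};\t)$.

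\textbf{Step 3: conclude.} Lemma~\ref{lemma:PBP1} applied to $\overrightarrow{b};\s$ and $\overrightarrow{b};\t$ yields $\overrightarrow{b};\s\sim\overrightarrow{b};\t$. Steps 2 and 1 then propagate this back to $\s\sim\t$ for arbitrary polynomial types.

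The main point to check is Step 1 in the quotient. One must verify that $\iota_j$ and $\pi_k$ in $\CatT{\Diag{\ThPB}}$ are the structural maps of Definition~\ref{definition:structural}, so that $\comp$, being built from the adjunction of Theorem~\ref{thm:matfree}, preserves them. One must also confirm that joint monicity in the quotient needs only the congruence property of $\sim$, rather than an independent argument, and that equations of arrows reflect through $Q_\sim$ when comparing components. The remaining steps are direct invocations of earlier lemmas.
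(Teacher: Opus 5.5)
Your reduction has the same shape as the paper's proof. The core case $A^n\to A^m$ via Lemma~\ref{lemma:PBP2} and Lemma~\ref{lemma:PBP1} is exactly the paper's argument, and polynomial types are handled by coproducts on the source and projections on the target. You do not need $\comp$ to preserve $\iota_j$ and $\pi_k$: functoriality alone gives $\comp(\iota_j;\s;\pi_k)=\comp(\iota_j);\comp(\s);\comp(\pi_k)$. The coproduct half of Step~1 is fine, since it only uses that $\sim$ is preserved by $\oplus$ and $;$ (copairing).

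The projection half, however, rests on a false justification. What you need there is a \emph{reflection} property: from $\iota_j;\s;\pi_k\sim\iota_j;\t;\pi_k$ for all $k$, infer $\iota_j;\s\sim\iota_j;\t$. Joint monicity in $\CatT{\Diag{\ThPB}}$ applies only to componentwise \emph{equalities}. Being a congruence only propagates $\sim$ forward through $;$ and $\oplus$; it never takes you from the components back to the whole arrow. So joint monicity does not pass to the quotient ``because $\sim$ is a congruence'', and your closing expectation that no independent argument is needed is exactly backwards. Nor does it follow merely from the quotient being a convex biproduct category: the paper explicitly notes that projections in a convex biproduct category need not be jointly monic.

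The missing ingredient is cancellativity (axiom~\eqref{ax:canc}), used together with the normal form of Lemma~\ref{decomposition}. This is how the paper argues.
\begin{enumerate}
\item By Theorem~\ref{thm:TCconvexbiproductcategory}, write $\iota_j;\s=\langle \s_1,\dots,\s_n\rangle_{\vec q}$ and $\iota_j;\t=\langle \t_1,\dots,\t_n\rangle_{\vec p}$. Your hypothesis then becomes $q_k\cdot\s_k\sim p_k\cdot\t_k$ for all $k$.
\item Cancellativity makes $\CatT{\Diag{\ThPB}}_\sim$ a convex biproduct category with $Q_\sim$ a morphism (Proposition~\ref{cor: quotient category is convex biproduct }). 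Hence $Q_\sim$ preserves $n$-ary convex products (Proposition~\ref{prop:functor 1 e mezzo}).
\item So $[\iota_j;\t]_\sim$ is the unique arrow whose composite with each $[\pi_k]_\sim$ is $p_k\cdot[\t_k]_\sim$. Since $[\iota_j;\s]_\sim$ satisfies the same equations, $[\iota_j;\s]_\sim=[\iota_j;\t]_\sim$.
\end{enumerate}
Equivalently, the proof of Proposition~\ref{cor: quotient category is convex biproduct } establishes jointly monic projections in the quotient via Lemma~\ref{lemma: convex products quoziente}, whose key steps are applications of cancellativity. With this replacing the congruence argument, your proof goes through.
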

\begin{proof}
  Let $[\t]_\sim, [\s]_\sim \in \eqsynq{\CatT{\Diag{\ThPB}}}[A^n,A^m]$ be such that $I([\t]_\sim)=I([\s]_\sim)$. Then: 
  \begin{align}
    I([\t]_\sim)=I([\s]_\sim) &\implies \comp(\t)=\comp(\s) \tag{Lemma \ref{lemma:diagramma-commutativo}}\\
    &\implies \text{for all } \overrightarrow{b}\in \Cat{B}[1,A^n], \comp(\overrightarrow{b});\comp(\t)=\comp(\overrightarrow{b});\comp(\s) \notag \\
    &\implies \text{for all } \overrightarrow{b}\in \Cat{B}[1,A^n], \comp(\overrightarrow{b};\t)=\comp(\overrightarrow{b};\s) \tag{Functoriality}\\
    &\implies \text{for all } \overrightarrow{b}\in \Cat{B}[1,A^n], \overrightarrow{b};\t \eqsyn \overrightarrow{b};\s \tag{Lemma \ref{lemma:PBP1}}\\
    &\implies \t \eqsyn \s \tag{Lemma \ref{lemma:PBP2}}
  \end{align}
  Hence, $[\t]_\sim=[\s]_\sim$. The above derivation proves that $I$ is faithful for arrows of type $A^n \to A^m$.  This fact and convex products easily entail that $I$ is faithful for arrows $\s,\t \colon A^k \to \bigoplus_{i=1}^nA^{m_i}$. 
Indeed, by Lemma~\ref{decomposition} and Theorem~\ref{thm:TCconvexbiproductcategory}, 
\begin{equation}\label{eq:boh}\s=\langle \s_1, \dots, \s_n \rangle_{\vec{q}} \quad \t=\langle \t_1, \dots, \t_n \rangle_{\vec{p}}\end{equation} 
for some $\vec{q}=q_1,\dots, q_n$, $\vec{p}=p_1,\dots, p_n$, $\s_i \colon A^k \to A^{m_i}$ and $\t_i \colon A^k \to A^{m_i}$. Thus
\begin{align*}
I([\s]_\eqsyn) = I([\t]_\eqsyn) & \Rightarrow \comp(\s)=\comp(\t) \tag{Lemma \ref{lemma:diagramma-commutativo}}\\
& \Rightarrow \text{for all }i \text{, } \comp(\s);\comp(\pi_i)=\comp(\t);\comp(\pi_i) \\
& \Rightarrow \text{for all }i\text{, } \comp(\s;\pi_i)=\comp(\t;\pi_i) \tag{Functoriality} \\
& \Rightarrow  \text{for all }i\text{, }  \s;\pi_i\sim\t;\pi_i \tag{Previous implication}\\
& \Rightarrow  \text{for all }i\text{, }  q_i \cdot \s_i\sim p_i \cdot \t_i \tag{\ref{eq:boh}}\\
& \Rightarrow  \text{for all }i\text{, }  [q_i \cdot \s_i]_\sim = [p_i \cdot \t_i]_\sim 
\end{align*}
Axiom \ref{ax:canc} allows us to apply Proposition~\ref{cor: quotient category is convex biproduct }, which states that $\CatT{\Diag{\ThPB}}_\sim$ is a convex biproduct category and that the quotient functor $Q_\eqsyn\colon \CatT{\Diag{\ThPB}} \to \CatT{\Diag{\ThPB}}_\sim$ is a morphism of convex biproduct categories.  Hence, Proposition~\ref{prop:functor 1 e mezzo} implies that $(\bigoplus_{i=1}^nA^{m_i}, [\pi_i]_\sim)$ is the $n$-ary convex product of $A^{m_1}, \dots, A^{m_n}$ in $\CatT{\Diag{\ThPB}}_\sim$ and $ [\s]_\sim$ is the unique arrow such that for all $i=1,\dots,n$,
\begin{align}
  [\s]_\sim;[\pi_i]_\sim &= [\s;\pi_i]_\sim \tag{Functoriality of $Q_\eqsyn$}\\
   &=[q_i \cdot \s_i]_\sim \tag{\ref{eq:boh}}\\
  &=q_i\cdot [\s_i]_\sim; \tag{pca-enrichment of $\CatT{\Diag{\ThPB}}_\sim$}
\end{align}
while $[\t]_\sim$ is the unique arrow such that for all $i=1,\dots,n$,
\begin{align}
  [\t]_\sim;[\pi_i]_\sim&= [\t;\pi_i]_\sim \tag{Functoriality of $Q_\eqsyn$}\\
  &= [p_i \cdot \t_i]_\sim \tag{\ref{eq:boh}}\\
  &=p_i\cdot [\t_i]_\sim. \tag{pca-enrichment of $\CatT{\Diag{\ThPB}}_\sim$}
\end{align}
Hence, since for all $i=1,\dots,n$, $[q_i \cdot \s_i]_\sim = [p_i \cdot \t_i]_\sim$, we have $[\s]_\sim = [\t]_\sim$.


For arrows of arbitrary type $\bigoplus_{j=1}^o A^{k_j} \to \bigoplus_{i=1}^nA^{m_i}$, one can easily rely on the universal property of coproducts and the case that we just proved.
\end{proof}

\begin{corollary}\label{cor:completenessPBPtapes}
  For all $\s,\t \in \CatT{\Diag{\SigPB}}$, if $\dsem{\s}=\dsem{\t}$ then $\s \eqsynbis \t$.
\end{corollary}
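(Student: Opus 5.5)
The plan is to factor the semantics and reduce to faithfulness of $I$. Suppose $\dsem{\s}=\dsem{\t}$ for $\s,\t\in\CatT{\Diag{\SigPB}}$. As observed just before Lemma~\ref{lemma:diagramma-commutativo}, by Propositions~\ref{prop:soundnesstapes} and~\ref{prop:twoequivalence} the functor $\dsem{-}$ factors as
\[\dsem{-}=I\circ Q_\eqsyn\circ \CatT{Q_\ThPB}.\]
Hence $I([\CatT{Q_\ThPB}(\s)]_\sim)=I([\CatT{Q_\ThPB}(\t)]_\sim)$.

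Theorem~\ref{thm:completenessPBPtapes} states that $I$ is faithful. Since the two classes above are parallel arrows with the same (image) type, faithfulness gives
\[[\CatT{Q_\ThPB}(\s)]_\sim=[\CatT{Q_\ThPB}(\t)]_\sim,\]
that is, $\CatT{Q_\ThPB}(\s)\eqsyn\CatT{Q_\ThPB}(\t)$. The right-to-left direction of Proposition~\ref{prop:twoequivalence} then yields $\s\,\dot{\sim}\,\t$, which is the claimed $\s\eqsynbis\t$.

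The one point needing care is that the factorisation really exists: the functor $I$ must be well defined on $\sim$-classes. This is exactly soundness of $\dot\sim$ (Proposition~\ref{prop:soundnesstapes}) transported along Proposition~\ref{prop:twoequivalence}, and surjectivity of $\CatT{Q_\ThPB}$, which holds because it is a quotient that is identity on generators.

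Everything else is bookkeeping, since all the substantive work is contained in Theorem~\ref{thm:completenessPBPtapes}. That theorem in turn rests on Lemmas~\ref{lemma:PBP1} and~\ref{lemma:PBP2} together with the convex-product argument.
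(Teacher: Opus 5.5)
Your argument is correct and matches the paper's proof. The paper factors $\dsem{-}$ as $\CatT{Q_{\ThPB}};Q_\eqsyn;I$, cancels $I$ by its faithfulness (Theorem~\ref{thm:completenessPBPtapes}), reads off $\CatT{Q_{\ThPB}}(\s)\eqsyn\CatT{Q_{\ThPB}}(\t)$ from the definition of $Q_\eqsyn$, and concludes with Proposition~\ref{prop:twoequivalence}. Your remark on why $I$ is well defined, via soundness and surjectivity of $\CatT{Q_{\ThPB}}$, makes explicit a point the paper leaves implicit.
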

\begin{proof}
By Theorem~\ref{thm:completenessPBPtapes}, $I$ is faithful. The following derivation concludes the proof.
  \begin{align}
    \dsem{\s}=\dsem{\t} &\Longleftrightarrow (\CatT{Q_{\ThPB}}; Q_\eqsyn ; I) (\s)= (\CatT{Q_{\ThPB}}; Q_\eqsyn ; I) (\t) \tag{$\dsem{-}= \CatT{Q_{\ThPB}}; Q_\eqsyn ; I$}\\
    &\implies   (\CatT{Q_{\ThPB}}; Q_\eqsyn ) (\s)= (\CatT{Q_{\ThPB}}; Q_\eqsyn )(\t) \tag{$I$ is faithful} \\
    &\Longleftrightarrow \CatT{Q_{\ThPB}}  (\s) \eqsyn \CatT{Q_{\ThPB}}(\t) \tag{Def. $Q_\eqsyn$}\\
    &\Longleftrightarrow \s \dot{\sim} \t \tag{Proposition \ref{prop:twoequivalence}}
  \end{align}
\end{proof}

\begin{corollary}\label{cor:finale}
  For all $c,d \in \Diag{\SigPRB}$, $\osem{c}=\osem{d}$ iff $\encoding{c} \eqsynbis \encoding{d}$.
\end{corollary}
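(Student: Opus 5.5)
The statement is Corollary~\ref{cor:finale}: for all $c,d\in\Diag{\SigPRB}$, $\osem{c}=\osem{d}$ iff $\encoding{c}\eqsynbis\encoding{d}$. My plan is to reduce both directions to soundness and completeness for tapes over partial Boolean circuits, using Proposition~\ref{prop:encoding} to move between the two semantics.

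\emph{Forward direction.} Assume $\osem{c}=\osem{d}$. By Proposition~\ref{prop:encoding}, this gives $\dsem{\encoding{c}}=\osem{c}=\osem{d}=\dsem{\encoding{d}}$. Since $\encoding{c},\encoding{d}\in\CatT{\Diag{\SigPB}}$, Corollary~\ref{cor:completenessPBPtapes} applies and yields $\encoding{c}\eqsynbis\encoding{d}$.

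\emph{Backward direction.} Assume $\encoding{c}\eqsynbis\encoding{d}$. Soundness (Proposition~\ref{prop:soundnesstapes}) gives $\dsem{\encoding{c}}=\dsem{\encoding{d}}$. Proposition~\ref{prop:encoding} then rewrites both sides, so $\osem{c}=\osem{d}$.

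The argument is essentially a two-line chaining of earlier results, and I expect no real obstacle. The one point to check is that Proposition~\ref{prop:encoding} is stated for all $c\in\Diag{\SigPRB}$, including the $\Flip{p}$ generators and not merely $\Diag{\SigPB}$. The statement does cover this, via the inductive argument over the generators together with $\dsem{\Flip{p}[t]}=\osem{\Flip{p}}$, and I will rely on it as stated.
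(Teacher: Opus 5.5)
Your proposal is correct and matches the paper's proof: it chains Proposition~\ref{prop:encoding} with soundness (Proposition~\ref{prop:soundnesstapes}) and completeness (Corollary~\ref{cor:completenessPBPtapes}) for tapes over partial Boolean circuits. The only difference is presentational: the paper writes both directions at once as a two-step chain of biconditionals.
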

\begin{proof}
  \begin{align}
    \osem{c}=\osem{d} &\Longleftrightarrow  \dsem{\encoding{c}}=\dsem{\encoding{d}} \tag{Proposition \ref{prop:encoding}}\\
    &\Longleftrightarrow \encoding{c} \eqsynbis \encoding{d}. \tag{Proposition \ref{prop:soundnesstapes}, Corollary \ref{cor:completenessPBPtapes}}
  \end{align}
\end{proof}

%% file: sections/finpac.tex
\section{Relating convex biproduct categories to finitely partially additive categories}\label{app:eff}
Probabilistic tape diagrams are close in the spirit to effectuses \cite{introductioneffectus} that are founded on finitely partially additive categories \cite{chophd,manes}.  In this section, we discuss how these structures are related to convex biproduct categories.
%
We begin by recalling the notion of partial commutative monoid and finitely partially additive categories with zero object.

A \emph{partial commutative monoid}, pcm for short, consists of a set $X$, together with a partial function $\circledvee\colon X \times X \to X$ and an element $\zeropcm \in X$ satisfying the following laws, where we write $x\bot y$ to mean that $x\circledvee y$ is defined:
\begin{itemize}
\item[A)] if $x\bot y$ and $x\circledvee y \bot z$, then $y \bot z$, $x \bot y\circledvee z$ and $(x\circledvee y) \circledvee z = x\circledvee (y \circledvee z)$;
\item[C)] if $x\bot y$, then $y\bot x$ and $x \circledvee y = y \circledvee x$;
\item[U)] $\zeropcm\bot x$ and $\zeropcm\circledvee x =x$. 
\end{itemize}



Before recalling the  notion of finitely partially additive category with zero object $0$ from \cite[Definition 3.1.2]{chophd}, we need the following definition.
\begin{definition}
Let  $\Cat{C}$ be a category with finite coproducts in which $0$ is also final. Let\, $\bangp{X}\colon X \to 0$ be the unique morphism to the final object $0$ and $\pi_1\colon X\oplus X \to X$ and $\pi_2\colon X\oplus X \to X$ be the projections given respectively by $(\id{X}\oplus\, \bangp{X});\runit{X}$ and $(\,\bangp{X}\oplus \id{X});\lunit{X}$.
Two parallel arrows $f_1,f_2 \colon A \to X$ in $\Cat{C}$ are \emph{compatible} if there exists some $h\colon A\to X \oplus X$ such that $h;\pi_i=f_i$ for $i=1,2$.
\end{definition}
\begin{definition}\label{dfn:finpac}
 A \emph{finitely partially additive category} with zero object is a  pcm-enriched category $\Cat{C}$ with finite coproducts such that $0$ is also final and the following two conditions hold:
    \begin{itemize}
        \item if $f_1,f_2\colon A \to X$ are compatible, then they are summable in the pcm $\Cat{C}[A,X]$;
        \item if $f_1,f_2\colon A \to X$ are summable, then $f_1;\iota_1\colon A\to X \oplus X$ and $f_2;\iota_2\colon A \to X \oplus X$ are summable too.
    \end{itemize}
\end{definition}

Our starting observation is that in any pca $(X,+_p, \star)$ 
one can define a partial binary operation $\circledvee$  with unit $\zeropcm\defeq \star$ as follows: 
\begin{equation}\label{eq:defbot}
x\bot y \text{ iff } \exists x',y'\in X, \, p,q\in[0,1] ,\; x=p\cdot x', \; y=q\cdot y', \; p+q\leq 1
\end{equation}
and in this case 
\begin{equation}\label{eq:defcircledvee}
x \circledvee y \defeq
\begin{cases}
x' +_p \bigl(y' +_{\frac{q}{1-p}} \star\bigr), & \text{if } p \neq 1, \\
x & \text{if } p = 1.
\end{cases}
\end{equation}

We now investigate under which conditions the structure described above yields a well-defined pcm. As the following example shows, not every pca gives rise to a well-defined pcm, even when restricted to cancellative pcas.
\begin{example}
   Consider the interval $[-1,1]$ equipped with the following structure of pca: for $x,y\in [-1,1]$ and $p\in [0,1]$, let $x+_p y \defeq p x + (1-p) y$ and $\star \defeq 0$. It is straightforward to verify that this defines a cancellative pca. Now consider the partial operation $\circledvee$ defined as above. We show that the associativity of $\circledvee$ may fail. Indeed, let $x=-0.1$, $y=0.5$ and $z=0.6$. Then $x\bot y$ via $x'=-1$, $y'=1$, $p=0.1$ and $q=0.5$, and $x\circledvee y = -0.1 + 0.5 = 0.4$. Moreover, $(x\circledvee y)\bot z$ via $u=1$, $z'=1$, $r=0.4$ and $s=0.6$, and $(x\circledvee y)\circledvee z = 0.4 + 0.6 = 1$. However, suppose that $y\bot z$, so that $y = r\cdot y''$ and $z = s\cdot z''$ for some $y'',z''\in [-1,1]$ and $r,s\in [0,1]$ with $r+s\le 1$. Since $y=0.5$ and $z=0.6$, we must have $0.5 \le r$ and $0.6 \le s$, which implies $r+s > 1$, a contradiction.
\end{example}

In order to obtain a well-defined pcm, we restrict our attention to free pcas. Recall that the free pca over a set $S$ is given by $\Dis(S)$ as defined in Section \ref{ssec:pca}. 
As shown in the following lemma, freeness is crucial to prove that $\circledvee$ is associative.
\begin{lemma}\label{lemma:pca-pcm}
Any free pca gives rise to a pcm with $\circledvee, \zeropcm$ and $\bot$ defined as in \eqref{eq:defbot} and \eqref{eq:defcircledvee}.
\end{lemma}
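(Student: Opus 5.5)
The plan is to work concretely in $\Dis(S)$, where $+_p$ is pointwise convex combination and $\star$ is the null subdistribution. The guiding observation is that in $\Dis(S)$ the relation $\bot$ of \eqref{eq:defbot} coincides with the condition $|x|+|y|\le 1$, where $|d|\defeq\sum_{s\in S}d(s)$ is the total mass. In that case $x\circledvee y$ is just the pointwise sum $x+y$. Once this identification is in place, the pcm laws reduce to the corresponding properties of addition of real-valued functions on $S$.

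First I will show that $x\bot y$ iff $|x|+|y|\le 1$.
\begin{itemize}
\item If $x=p\cdot x'$ and $y=q\cdot y'$ with $p+q\le 1$, then $|x|=p|x'|\le p$ and $|y|\le q$, so $|x|+|y|\le 1$.
\item Conversely, given $|x|+|y|\le 1$, take $p=|x|$ and $x'=x/|x|$ when $|x|\neq 0$. When $x=\star$, take $p=0$ and any $x'$. Choose $q$ and $y'$ symmetrically.
\end{itemize}
This converse direction is exactly where freeness is used, since it guarantees that every subdistribution can be rescaled. It is also what fails in the $[-1,1]$ example above.

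Second I will check that $\circledvee$ is well defined, i.e.\ independent of the chosen witnesses $p,q,x',y'$, and equal to $x+y$. For $p\neq1$ the definition gives
\[
x'+_p\bigl(y'+_{\frac{q}{1-p}}\star\bigr)=p x'+(1-p)\tfrac{q}{1-p}\,y'=p x'+q y'=x+y .
\]
For $p=1$ we have $q=0$, so $y=0\cdot y'=\star$, and then $x\circledvee y=x=x+y$. Since the result depends only on $x$ and $y$, the choice of witnesses does not matter.

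Finally I will verify the three laws, now reading $\bot$ as a mass bound and $\circledvee$ as $+$.
\begin{itemize}
\item \emph{A)} Suppose $|x|+|y|\le1$ and $|x+y|+|z|\le1$. Because masses add, $|y|+|z|\le1$ and $|x|+|y+z|\le1$, and associativity of pointwise addition gives $(x\circledvee y)\circledvee z=x\circledvee(y\circledvee z)$.
\item \emph{C)} The mass condition is symmetric in $x$ and $y$, and pointwise addition is commutative.
\item \emph{U)} We have $|\star|=0$, so $\star\bot x$ for every $x$, and $\star+x=x$.
\end{itemize}

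The only real obstacle is the well-definedness step together with the boundary case $p=1$. I expect no difficulty there, because the explicit computation in $\Dis(S)$ settles both at once.
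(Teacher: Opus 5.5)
Your proof is correct. Its core matches the paper's treatment of associativity. The paper also bounds the total mass of $p\cdot x'$, $q\cdot y'$, $u\cdot z'$ by $1$ and renormalizes the witnesses to exhibit $y\bot z$ and $x\bot(y\circledvee z)$. It then calls the final equality obvious, since $\circledvee$ is in effect pointwise sum.

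The two routes differ on the other laws. You establish up front that $x\bot y\iff|x|+|y|\le 1$ and $x\circledvee y=x+y$, which settles well-definedness, commutativity and unit at once. Your explicit zero-mass case also avoids the division by $\sum_s y'(s)$ that the paper's renormalization tacitly assumes nonzero. The paper instead derives well-definedness, commutativity and unit equationally from the pca axioms plus cancellativity. For example, it obtains $\frac{q}{1-p}\cdot y'=\frac{s}{1-p}\cdot y''$ by cancelling the factor $1-p$. Those three properties therefore hold in any cancellative pca, and freeness is used only for associativity. Your route is shorter and more transparent; the paper's shows which law genuinely needs freeness.

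One of your side remarks is wrong, though it does not affect the proof: the converse direction does not fail in the $[-1,1]$ example. There, with $|\cdot|$ the absolute value, $x\bot y$ holds iff $|x|+|y|\le 1$ (take $p=|x|$ and $x'=\pm 1$), and $x\circledvee y$ is ordinary addition. What fails is additivity of mass, $|x+y|=|x|+|y|$, which is exactly what your step A) relies on. For instance, $|-0.1+0.5|=0.4<0.6$. So the decisive consequence of freeness is that subdistributions have nonnegative entries, which makes total mass additive.
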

\begin{proof}


We focus on the associativity of $\circledvee$, as the other axioms follow from general properties of free pcas. In particular, any free pca is cancellative, and this directly implies that $\circledvee$ is well defined. Commutativity and unitality axioms follow from symmetry and idempotency of $+_p$, respectively. More details can be found in Appendix~\ref{app:app:eff}.

Associativity is the main point where we use the freeness of the pca. To prove it, consider $x,y,z\in X$ such that $x\bot y$ via $x',y'$ and $p,q\in [0,1]$ and $(x\circledvee y) \bot z$ via $w,z'$ and $t,u\in [0,1]$, we have to prove that $ y\bot z$ and $x \bot (y\circledvee z)$ and that $(x\circledvee y) \circledvee z = x\circledvee (y \circledvee z)$. Hence, we have that the sum of the subdistributions $p\cdot x'$, $q\cdot y'$ and $u\cdot z'$ is a subdistribution $(p\cdot x' + q\cdot y' ) + u\cdot z'= t\cdot w + u\cdot z'$. In particular, we have that $p\cdot\sum_{s\in S} x'(s) + q\cdot \sum_{s\in S} y'(s) + u\cdot \sum_{s\in S} z'(s) \leq 1$. Then, we have $y\bot z$ via $y'',z'$ and $q',u'$ where 
\[ q'\defeq q\cdot \sum_{s\in S} y'(s) \quad u' \defeq u\cdot \sum_{s\in S} z'(s)\]
and
\[y''\defeq \frac{y'}{\sum_{s\in S} y'(s)} \quad z''\defeq \frac{z'}{\sum_{s\in S} z'(s)}\text.\]
Indeed, we have that $q'+u' \leq 1$ and that $y=q'\cdot y''$ and $z=u'\cdot z''$. Similarly, we have that $x \bot (y\circledvee z)$ via $x'',w'$ and $p'',t'$, where
\[ p''\defeq p\cdot \sum_{s\in S} x'(s) \quad t' \defeq q\cdot \sum_{s\in S} y'(s) + u\cdot \sum_{s\in S} z'(s)\]
and
\[x''\defeq \frac{x'}{\sum_{s\in S} x'(s)} \quad w'\defeq \frac{q'}{t'}\cdot y'' + {\frac{u'}{t'}}\cdot z''\text.\]
Indeed, we have that $p''+t' \leq 1$ and that $x=p''\cdot x''$ and that $y\circledvee z = t' \cdot w'$. Finally, the equality $(x\circledvee y) \circledvee z = x\circledvee (y \circledvee z)$ is obvious.
\end{proof}

The above result allows us to prove that any $\Cat{PCA}$-enriched category, additionally satisfying the hypothesis of freeness, is pcm-enriched.

\begin{proposition}\label{prop:pca-pcm enrichment}
Let $\Cat{C}$ be a $\Cat{PCA}$-enriched category. If $\Cat{C}[X,Y]$ is a free pca for all objects $X,Y$, then $\Cat{C}$ is pcm-enriched.
\end{proposition}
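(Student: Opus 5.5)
Each hom-set $\Cat{C}[X,Y]$ is a free pca, so Lemma~\ref{lemma:pca-pcm} makes it a pcm, with $\circledvee$, $\zeropcm\defeq\star$ and $\bot$ as in \eqref{eq:defbot} and \eqref{eq:defcircledvee}. The plan is therefore to check that composition is a morphism of pcms in each variable. This means three things. Pre- and post-composition send $\star$ to $\star$. They preserve the relation $\bot$. They commute with $\circledvee$. The first is immediate from the last equation in \eqref{eq:enr}: $f;\star=\star=\star;f$.

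For compatibility, take $x\bot y$ in $\Cat{C}[A,X]$, witnessed by $x=p\cdot x'$, $y=q\cdot y'$ and $p+q\le 1$. For any $h\colon X\to Z$, Lemma~\ref{lemma:initialproperties2}.\ref{lemma:p;} gives $x;h=p\cdot(x';h)$ and $y;h=q\cdot(y';h)$. The same $p,q$ with the new witnesses $x';h$ and $y';h$ then show $x;h\bot y;h$. Precomposition $e;x$ and $e;y$ is handled symmetrically.

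For the sum, suppose first $p\neq 1$. Then \eqref{eq:enr}, used twice, gives
\[(x\circledvee y);h=\bigl(x'+_p(y'+_{\frac{q}{1-p}}\star)\bigr);h=(x';h)+_p\bigl((y';h)+_{\frac{q}{1-p}}\star\bigr).\]
This is exactly $(x;h)\circledvee(y;h)$ computed with the witnesses chosen above. If $p=1$, both sides reduce to $x;h$. Precomposition is again symmetric.

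The main obstacle is a well-definedness subtlety. The value of $\circledvee$ must not depend on the chosen witnesses. In the hom-set $\Cat{C}[A,Z]$ this is guaranteed by Lemma~\ref{lemma:pca-pcm}, which relies on freeness, hence cancellativity, of $\Cat{C}[A,Z]$. It is used here because we computed $(x;h)\circledvee(y;h)$ with the specific witnesses $x';h$ and $y';h$, and not with ones intrinsic to $x;h$ and $y;h$. Since the hypothesis says every hom-set is free, Lemma~\ref{lemma:pca-pcm} applies to the target hom-set as well. This independence is what lets the computation above identify the two sums, and it is why freeness is needed for all pairs $X,Y$ and not merely the source hom-set.
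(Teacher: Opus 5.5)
Your proposal is correct and follows the paper's proof. You invoke Lemma~\ref{lemma:pca-pcm} hom-set by hom-set, get $\star$-preservation from \eqref{eq:enr}, push the witnesses $x=p\cdot x'$, $y=q\cdot y'$ through composition via Lemma~\ref{lemma:initialproperties2}.\ref{lemma:p;}, and treat $p=1$ separately. You are in fact more explicit than the paper, which leaves implicit both the preservation of $\bot$ and the point that computing $(x;h)\circledvee(y;h)$ from the witnesses $x';h$, $y';h$ is legitimate only because $\circledvee$ is witness-independent on the free target hom-set.
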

\begin{proof}
By Lemma \ref{lemma:pca-pcm}, any homset $\Cat{C}[X,Y]$ is a pcm. To conclude that  $\Cat{C}$ is pcm-enriched it is enough to prove  for all properly typed arrows
\begin{itemize}
\item if $g\bot h$, then $f;(g \circledvee h) = (f;g) \circledvee (f;h) $;
\item if $f\bot g$, then  $(f \circledvee g); h = (f;h) \circledvee (g;h)$;
\item $f;\zeropcm=\zeropcm;f$.
\end{itemize}
The last point is obvious by the pca-enrichment: see \eqref{eq:enr}. Hereafter, we illustrate the proof of $(f \circledvee g); h = (f;h) \circledvee (g;h)$.
Assume that $f\bot g$, then by \eqref{eq:defbot}, there exist arrows $f',g'$ and  $p,q\in [0,1]$ such that
\[f=p\cdot f' \quad g=q\cdot g' \qquad p+q\leq 1\text{.}\]
Thus, for $p\neq 1$, we have that
\begin{align*}
(f\circledvee g); h & = (f' +_{p} (g'+_{\frac{q}{1-p}} \star));h \tag{\eqref{eq:defcircledvee}} \\
&  = (f';h +_{p} (g';h +_{\frac{q}{1-p}} \star)) \tag{$\Cat{C}$ is $\Cat{PCA}$-enriched} \\
&= f;h \circledvee g;h
\end{align*}
where the last step holds by \eqref{eq:defcircledvee} and the fact that,  by Lemma \ref{lemma:initialproperties2}.\ref{lemma:p;}, $p\cdot (f';h) = (p\cdot f'); h = f;h$ and $q\cdot (g';h) = (q\cdot g'); h = g;h$. In case $p=1$ and $q=0$, we have that $f\circledvee g = f$ and $g=0\cdot g' = \star=\zeropcm$. Hence $(f\circledvee g); h = f;h =f;h \circledvee \zeropcm = f;h \circledvee g;h$.
\end{proof}

We can finally state our main result about the relationship between convex bicategories and finitely partially additive categories.

\begin{proposition}\label{prop:fpac}
Let $\Cat{C}$ be a  $\Cat{PCA}$-enriched category such that $\Cat{C}[X,Y]$ is a free pca for all objects $X,Y$. If $\Cat{C}$  is a finitely partially additive category with zero object (with the pcm-enrichment given by Proposition~\ref{prop:pca-pcm enrichment}), then it is a convex biproduct category.
\end{proposition}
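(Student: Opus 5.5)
The plan is to check the clauses of Definition~\ref{def:convbicat} one at a time. The pcm structure on each homset is the one from Proposition~\ref{prop:pca-pcm enrichment}, so $\zeropcm=\star$, and $p\cdot f\bot q\cdot g$ whenever $p+q\le 1$ by \eqref{eq:defbot}.

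\textbf{Step 1: the non-universal clauses.} Most clauses are already given.
\begin{itemize}
\item $\Cat{C}$ is $\Cat{PCA}$-enriched by hypothesis.
\item $\Cat{C}$ has finite coproducts, and $\zero$ is initial and final, by Definition~\ref{dfn:finpac}.
\item I take $\pi_1,\pi_2$ to be the projections of that definition.
\end{itemize}
For \eqref{eq:delta}, the case $\iota_i;\pi_i=\id{X_i}$ follows from naturality of the coproduct injections together with the unitors. For $i\neq j$, $\iota_i;\pi_j$ factors through $\zero$. The homset $\Cat{C}[\zero,\zero]$ is a singleton, so $\id{\zero}=\star$. Then by \eqref{eq:enr} every arrow factoring through $\zero$ equals $\star$, which gives $\iota_i;\pi_j=\star_{X_i,X_j}$. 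The same argument shows that $\star_{A,Y}$ is the pcm zero $\zeropcm$ of every homset.

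\textbf{Step 2: existence of mediating arrows.} Let $f_i\colon A\to X_i$ and $p_1+p_2\le 1$. Then $p_1\cdot f_1$ and $p_2\cdot f_2$ are not directly summable, since they live in different homsets. I first compose with the injections. I will show that $(p_1\cdot f_1);\iota_1$ and $(p_2\cdot f_2);\iota_2$ are summable in $\Cat{C}[A,X_1\oplus X_2]$. By Lemma~\ref{lemma:initialproperties2}, they equal $p_1\cdot(f_1;\iota_1)$ and $p_2\cdot(f_2;\iota_2)$, so they are orthogonal by \eqref{eq:defbot}.

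Set $h\defeq (p_1\cdot f_1);\iota_1\circledvee (p_2\cdot f_2);\iota_2$. Distributivity of composition over $\circledvee$ (Proposition~\ref{prop:pca-pcm enrichment}) and Step~1 give
\[h;\pi_1=p_1\cdot f_1\circledvee \zeropcm=p_1\cdot f_1,\]
and symmetrically $h;\pi_2=p_2\cdot f_2$. I use the summability condition of Definition~\ref{dfn:finpac} only in its generalized form for $X_1\oplus X_2$; alternatively the direct pca argument above suffices here.

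\textbf{Step 3: uniqueness, the expected main obstacle.} By Proposition~\ref{prop:A}-style reasoning, it suffices to show that $\pi_1,\pi_2$ are jointly monic. I would do this by establishing
\[\id{X_1\oplus X_2}=\pi_1;\iota_1\circledvee\pi_2;\iota_2 .\]

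First I need $\pi_1;\iota_1$ and $\pi_2;\iota_2$ to be summable. They are compatible, witnessed by $\iota_1\oplus\iota_2\colon X_1\oplus X_2\to (X_1\oplus X_2)\oplus(X_1\oplus X_2)$. By naturality of the projections, $(\iota_1\oplus\iota_2);\pi_k=\pi_k;\iota_k$ for $k=1,2$. The first condition of Definition~\ref{dfn:finpac} then yields summability.

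Next I check the equality after precomposing with each $\iota_j$. Using Step~1 and pcm-enrichment,
\[\iota_1;(\pi_1;\iota_1\circledvee\pi_2;\iota_2)=\iota_1\circledvee\zeropcm=\iota_1,\]
and similarly for $\iota_2$. The coproduct property then gives the equality with the identity.

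Finally, for any $h\colon A\to X_1\oplus X_2$,
\[h=h;\pi_1;\iota_1\circledvee h;\pi_2;\iota_2,\]
so $h$ is determined by $h;\pi_1$ and $h;\pi_2$. This gives joint monicity, hence uniqueness.

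The delicate point is that compatibility in Definition~\ref{dfn:finpac} is phrased for $X\oplus X$. I would apply it with $X\defeq X_1\oplus X_2$, checking that the projections of $X\oplus X$ are the ones used above.
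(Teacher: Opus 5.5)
Your proof is correct and follows the paper's overall structure. For existence you use exactly the paper's mediating arrow, $(p_1\cdot f_1);\iota_1 \circledvee (p_2\cdot f_2);\iota_2$, whose summability comes straight from \eqref{eq:defbot}. The two routes differ only in how uniqueness is obtained.

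\textbf{How the paper handles uniqueness.} The paper gives no argument. It invokes \cite[Proposition 3.1.9]{chophd} for the fact that this sum is the unique arrow with the prescribed projections.

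\textbf{How you handle uniqueness.} You prove joint monicity of $\pi_1,\pi_2$ from scratch, via the partial-biproduct identity $\id{X_1\oplus X_2}=\pi_1;\iota_1\circledvee\pi_2;\iota_2$. This uses only two ingredients:
\begin{itemize}
\item the first axiom of Definition~\ref{dfn:finpac}, with compatibility witnessed by $\iota_1\oplus\iota_2$;
\item two-sided distributivity of composition over $\circledvee$ from Proposition~\ref{prop:pca-pcm enrichment}.
\end{itemize}

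\textbf{What each route buys.} Yours is self-contained and shows exactly where the finitely-partially-additive hypothesis enters. It is used only as ``compatible implies summable'', for this single pair, and the second axiom is never needed. This links naturally to Proposition~\ref{prop:A}. It also links to the paper's closing Remark: there, the failure of precisely this summability (projections compatible but not summable in $\CatTapeC$) is what separates the two notions. The paper's version is shorter but hides this dependence inside the citation. You also verify \eqref{eq:delta} and the identification $\star=\zeropcm$ explicitly, whereas the paper takes them as part of the given structure.

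\textbf{Cosmetic points.}
\begin{itemize}
\item The sentence at the end of Step~2 about using the summability condition ``in its generalized form'' is unnecessary, since there summability follows directly from \eqref{eq:defbot}.
\item You do not need Proposition~\ref{prop:A} at the end: joint monicity together with existence already gives uniqueness.
\end{itemize}
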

\begin{proof}
Since $\Cat{C}$ is a finitely partially additive category with zero object, then  it has finite coproducts, zero object and the equality in \eqref{eq:delta} holds.

Now we show that $X_1\oplus X_2$, with the projections $\pi_1$ and $\pi_2$,
 is a convex product in $\Cat{C}$. Take two arrows $f\colon A\to X_1$ and $g\colon A \to X_2$ and $p_1,p_2\in (0,1)$ such that $p_1+p_2\leq 1$.
Consider now the arrows $(p_1\cdot f) ; \iota_1$ and $(p_2 \cdot g) ;\iota_2$ in the pcm  $\Cat{C}[A , X_1\oplus X_2]$. Since by Lemma \ref{lemma:initialproperties2}.\ref{lemma:p;}  $(p_1\cdot f) ; \iota_1 = p_1 \cdot(f;\iota_1)$ and 
$(p_2\cdot g) ; \iota_2 = p_2 \cdot(g;\iota_2)$ and since $p_1+p_2\leq 1$, then by definition of $\bot$ in \eqref{eq:defbot}, it holds that 
\[(p_1\cdot f) ; \iota_1 \; \bot \; (p_2 \cdot g ) ;\iota_2 \text{.}\]
We thus can take $\langle f,g\rangle_{p_1,p_2}$ as $((p_1\cdot f) ; \iota_1)  \circledvee ((p_2 \cdot g ) ;\iota_2)\colon A \to X_1\oplus X_2$ which, by  \cite[Proposition 3.1.9]{chophd}, is the unique arrow such that \[\langle f,g\rangle_{p_1,p_2} ; \pi_1 = p_1\cdot f \quad \text{ and } \quad \langle f,g\rangle_{p_1,p_2} ; \pi_2 = p_2\cdot g\text{.}\]
\end{proof}

\begin{remark}
The reader may now wonder whether any convex biproduct category is also a finitely partially additive category. This is not the case in general. 

The definition of finitely partially additive category requires that compatible arrows are summable. In a convex biproduct category, there might exist compatible arrows $f_1$ and $f_2$ that are not summable in the sense of (\ref{eq:defcircledvee});  namely, there are no arrows $f_1',f_2'$ such that $f_1=p_1\cdot f_1'$ and $f_2=p_2\cdot f_2'$, for some scalar $p_1+p_2 \leq 1$. For instance, in $\CatTapeC$, the projections $\pi_1,\pi_2\colon X\oplus X\to X$ are compatible through the identity morphism, but they are not summable.

The essential difference can be understood as follows: summability in convex biproduct categories is fixed a priori by the structure of pca, while in partially additive categories is left to the structural properties of arrows. For instance, while in convex biproduct categories $\oplus$ behaves as a convex product, in partially additive categories $\oplus$ acts as a \emph{partial} biproduct: the product property only holds for compatible arrows.   

\end{remark}

%% file: sections/conclusion.tex
\section{Conclusion}\label{sec:conclusion} 

In this work, we introduced the notions of convex products (Def.~\ref{def:convprod}) and convex biproduct categories (Def.~\ref{def:convbicat}). We studied their associated monoidal algebra (Prop.~\ref{lemma: copca objects in convbicat}) and established a near analogue of Fox's theorem (Prop.~\ref{prop:A}). We then presented the construction $\stmat{\Cat{C}}$ of stochastic matrices over a $\Cat{PCA}$-enriched category $\Cat{C}$, and showed that $\stmat{\Cat{C}}$ is the free convex biproduct category generated by $\Cat{C}$ (Thm.~\ref{thm:matfree}).

We also developed a syntactic construction: for any category $\Cat{C}$, the category $\CatT{\Cat{C}}$ is obtained by freely adding a monoidal structure equipped with natural monoids and co-pointed convex algebras. We proved that $\CatT{\Cat{C}}$ is the free convex biproduct category over $\Cat{C}$ (Thm.~\ref{thm:syntacticadjunction}). Combining this result with a known theorem (Thm.~\ref{thm:freeenriched}), we derived the isomorphism
$\CatT{\Cat{C}} \cong \stmat{\Cat{C}^+},$
(Cor.~\ref{cor:isotapematrices}), where arrows in $\Cat{C}^+$ are subdistributions over arrows in $\Cat{C}$.

This construction becomes particularly meaningful when $\Cat{C} = \DiagS$, a category of string diagrams. In this case, $\CatT{\DiagS}$ yields the category of probabilistic tape diagrams introduced in~\cite{bonchi2025tapediagramsmonoidalmonads}. The isomorphism
$
\CatT{\DiagS} \cong \stmat{\DiagS^+}
$
thus provides a concrete interpretation of tape diagrams as stochastic matrices whose entries are probability subdistributions over string diagrams.

We applied this framework to probabilistic Boolean circuits with explicit conditioning from~\cite{piedeleu2025boolean}. As illustrated in~\cite[Ex.~30]{bonchi2025tapediagramsmonoidalmonads}, such circuits can be encoded as probabilistic tape diagrams over \emph{partial} Boolean circuits. To obtain a complete axiomatisation of these tapes, we first axiomatised partial Boolean circuits (Thm.~\ref{thm:completenesspartialcircuits}) and then introduced three additional axioms for tapes (Figure~\ref{ax:BooleanTAPES}). Thanks to the isomorphism
$\CatT{\DiagS} \cong \stmat{\DiagS^+}$, the resulting completeness theorem (Thm.~\ref{thm:completenessPBPtapes}) admits a conceptually simple and structurally transparent proof.

\paragraph{Related and future work.} The construction of stochastic matrices over $\Cat{PCA}$-enriched categories is conceptually akin to the classical construction of matrices over categories enriched in commutative monoids~\cite{mac_lane_categories_1978,coecke2017two}. However, while the latter is straightforward enough to be left as an exercise in~\cite{mac_lane_categories_1978}, the former involves significantly more work. In particular, finite biproduct categories enjoy a bijective correspondence between arrows \( f \colon A \oplus B \to C \oplus D \) and quadruples of morphisms \( (f_{A,C}, f_{B,C}, f_{A,D}, f_{B,D}) \), where each \( f_{X,Y} \colon X \to Y \). This correspondence breaks down in convex biproduct categories.

As illustrated in Section~\ref{sec:probboolcircuits}, our probabilistic tape diagrams are closely related to finitely partially additive categories~\cite{manes}, which are the basis of \emph{effectuses}~\cite{introductioneffectus,chophd}. The main difference is that in an arbitrary finitely partially additive category, one cannot define $\diagp{X}\colon X \to X\oplus X$ unless a notion of scalar is explicitly required. Scalars appear in {effectuses} as arrows of type \( 1 \to 1 \), where \( 1 \) is a distinguished object. As future work, we plan to explore the relationship between our framework and effectuses, which may require considering more structure for tapes, such as their \emph{copy-discard} variants introduced in~\cite{bonchi2025tapediagramsmonoidalmonads}.



In~\cite{villoria2024enriching}, a general framework is introduced to enrich string diagrams over arbitrary algebraic theories. This approach fundamentally differs from the tape-based construction in~\cite{bonchi2025tapediagramsmonoidalmonads}, in that it does not account for the structure induced by the coproduct \( \oplus \). As a consequence, the correspondence with matrix-based semantics 
does not arise in~\cite{villoria2024enriching}.

We plan to extend probabilistic tape diagrams with traces for \( \oplus \), following the approach developed in~\cite{bonchi2024diagrammatic}. That work showed that a specific property of monoidal traces--called \emph{uniformity}--corresponds to reasoning by invariants in Hoare-style program logics~\cite{hoare1969axiomatic}. Interestingly, as demonstrated in~\cite{jacobs2010coalgebraic}, uniformity holds for traces over \( \oplus \) in \( \KlD \). Preliminary results suggest that, in the probabilistic setting, invariants correspond to sub-martingales, thus offering a promising link between diagrammatic semantics and probabilistic program verification.

Moreover, probabilistic tape diagrams extended with traces are expressive enough to encode probabilistic regular expressions from~\cite{DBLP:conf/lics/RozowskiS24}, which provide a complete axiomatisation for the regular behaviours of generative probabilistic systems--that is, coalgebras in \( \KlD \)~\cite{hasuo2007generic}. We expect that tapes may again offer a more principled and modular axiomatisation, akin to Kleene algebra in~\cite{Kozen94acompleteness}. Interestingly, the completeness proof in~\cite{Kozen94acompleteness} relies heavily on matrix representations over languages; as we hint in Example~\ref{ex:matrices}, our tape formalism naturally supports stochastic matrices over probabilistic languages.

%% file: appendices/coherences.tex
\section{Additional Figures and Tables}\label{app:coherence axioms}

\begin{figure}[H]
    \begin{equation}\label{ax:Mon1}\tag{$\codiag{}$-as}
        \input{tikz-cd/monoid_assoc.tikz}
    \end{equation}
    \begin{minipage}[t]{0.50\textwidth}
        \begin{equation}\label{ax:Mon2}\tag{$\codiag{}$-un}
            \input{tikz-cd/monoid_unit.tikz}
        \end{equation}
    \end{minipage}
    \hfill
    \begin{minipage}[t]{0.46\textwidth}
        \begin{equation}\tag{$\codiag{}$-sym}
            \input{tikz-cd/monoid_comm.tikz}
        \end{equation}
    \end{minipage}
    \caption{Commutative monoid axioms}
    \label{fig:monoidax}
\end{figure}

 \begin{figure}[H]		
    \begin{equation}\label{eq:coherence codiag}\tag{$\codiag{}$-coh}
        \begin{tikzcd}[column sep=4.5em,baseline=(current  bounding  box.center)]
        (X \piu Y) \piu (X \piu Y) \ar[dd,"\assoc X Y {X \piu Y}"'] \ar[r,"\codiag{X \piu Y}"] & X \piu Y \\
        & (X \piu X) \piu (Y \piu Y) \ar[u,"\codiag X \piu \codiag Y"']\\
            X \piu (Y \piu (X \piu Y)) \ar[d,"\id X \piu \Iassoc Y X Y"'] & X \piu (X \piu (Y \piu Y)) \ar[u,"\Iassoc X X {Y \piu Y}"']  \\
        X \piu ((Y \piu X) \piu Y) \ar[r,"\id X \piu ( \symm{Y}{X}^{\piu} \piu \id Y)"] & X \piu (( X \piu Y) \piu Y) \ar[u,"\id X \piu \assoc X Y Y"']
        \end{tikzcd}
    \end{equation}
    \\
\begin{minipage}[b]{0.33\textwidth}
	\begin{equation}\tag{\,$\cobang{}$-coh}
        \begin{tikzcd}[baseline=(current  bounding  box.center)]
        \zero \ar[r,"\cobang{X \piu Y}"] \ar[d,"\Ilunit \zero"']  & X \piu Y \\
        \zero \piu \zero \ar[ur,"\cobang X \piu \cobang Y"']  
        \end{tikzcd}
	\end{equation}
\end{minipage}
\hfill
\begin{minipage}[b]{0.30\textwidth}
	\begin{equation}\tag{\,$\codiag{0}$-coh}
        \begin{tikzcd}
        \zero \piu \zero \ar[r,shift left=2,"\codiag \zero"] \ar[r,shift right=2,"\lunit \zero"'] &  \zero
        \end{tikzcd}
	\end{equation}
\end{minipage}
\hfill
\begin{minipage}[b]{0.26\textwidth}
	\begin{equation}\label{eq:cobang I = id I}\tag{\,$\cobang{0}$-coh}
        \begin{tikzcd}
        \zero \ar[r,shift left=2,"\cobang \zero"] \ar[r,shift right=2,"\id \zero"'] & \zero
        \end{tikzcd}
	\end{equation}
\end{minipage}
\begin{minipage}[b]{0.45\textwidth}
		\begin{equation}\label{eq:nat monoid1}\tag{\,$\codiag{}$-nat}
\begin{tikzcd}
	{X\piu X} & X \\
	{Y\piu Y} & Y
	\arrow["{\codiag{X}}", from=1-1, to=1-2]
	\arrow["{f\piu f}"', from=1-1, to=2-1]
	\arrow["f", from=1-2, to=2-2]
	\arrow["{\codiag{Y}}"', from=2-1, to=2-2]
\end{tikzcd}
		\end{equation}
	\end{minipage}
	\hfill
	\begin{minipage}[b]{0.45\textwidth}
		\begin{equation}\label{eq:nat monoid2}\tag{\,$\cobang{}$-nat}
\begin{tikzcd}
	\zero & Y & X
	\arrow["{\cobang{Y}}"', from=1-1, to=1-2]
	\arrow["{\cobang{X}}", bend left =30pt, from=1-1, to=1-3]
	\arrow["f"', from=1-2, to=1-3]
\end{tikzcd}
		\end{equation}
	\end{minipage}
\caption{Coherence and naturality axioms for commutative monoids}
\label{fig:fccoherence}
\end{figure}

\begin{figure}[H]
    \begin{equation}\label{ax:PCA1}\tag{$\diagp{}$-as}
\begin{tikzcd}[column sep=large]
	X && {X\piu X} \\
	{X\piu X} & {X\piu (X\piu X)} & {(X\piu X)\piu X}
	\arrow["{\diagp{}}", from=1-1, to=1-3]
	\arrow["{\diagptilde{}}"', from=1-1, to=2-1]
	\arrow["{\diagq{}\piu \id{X}}", from=1-3, to=2-3]
	\arrow["{\id{X}\piu\, \diagqtilde{}}"', from=2-1, to=2-2]
	\arrow["{\Iassoc{X}{X}{X}}"', from=2-2, to=2-3]
\end{tikzcd} \qquad \tilde{p} = pq, \quad \tilde{q} = \frac{p(1-q)}{1-pq}
    \end{equation}
    \begin{minipage}[t]{0.50\textwidth}
        \begin{equation}\label{ax:PCA2}\tag{$\diagp{}$-idem}
\begin{tikzcd}
	X & {X\piu X} \\
	& X
	\arrow["{\diagp{}}", from=1-1, to=1-2]
	\arrow["{\id{X}}"', from=1-1, to=2-2]
	\arrow["{\codiag{X}}", from=1-2, to=2-2]
\end{tikzcd}
        \end{equation}
    \end{minipage}
    \hfill
    \begin{minipage}[t]{0.46\textwidth}
        \begin{equation}\label{ax:PCA3}\tag{$\diagp{}$-sym}
\begin{tikzcd}
	& X \\
	{X\piu X} && {X\piu X}
	\arrow["{\diagp{}}"', from=1-2, to=2-1]
	\arrow["{\diagpbar{}}", from=1-2, to=2-3]
	\arrow["{\symm{X}{X}^{\piu}}"', from=2-1, to=2-3]
\end{tikzcd}
        \end{equation}
    \end{minipage}
    \caption{co-pca axioms}
    \label{fig:co-pca axioms}
\end{figure}

\begin{figure}[H]		
	\begin{equation}\label{eq:coherence conv sum}\tag{$\diagp{}$-coh}
		\begin{tikzcd}[column sep=4.5em,baseline=(current  bounding  box.center)]
				{X\piu Y} && {(X\piu Y)\piu (X\piu Y)} \\
				{(X\piu X)\piu (Y\piu Y)} && {X\piu (Y\piu (X\piu Y))} \\
				{X \piu (X \piu (Y\piu Y))} \\
				{X \piu ((X \piu Y)\piu Y)} && {X \piu ((Y \piu X)\piu Y)}
				\arrow["{\diagp{}}", from=1-1, to=1-3]
				\arrow["{\diagp{}\piu \diagp{}}"', from=1-1, to=2-1]
				\arrow["{\assoc{X}{X}{Y\piu Y}}"', from=2-1, to=3-1]
				\arrow["{\Iassoc{X}{Y}{X\piu Y}}"', from=2-3, to=1-3]
				\arrow["{\id{X}\piu \Iassoc{X}{Y}{Y}}"', from=3-1, to=4-1]
				\arrow["{\id X \piu ( \symm{X}{Y}^{\piu} \piu \id Y)}"', from=4-1, to=4-3]
				\arrow["{\id{X}\piu \assoc{Y}{X}{Y}}"', from=4-3, to=2-3]
		\end{tikzcd}
	\end{equation}
	\\
	\begin{minipage}[b]{0.33\textwidth}
		\begin{equation}\label{eq:coherence cobang}\tag{$\,\bangp{}$-coh}
			\begin{tikzcd}[baseline=(current  bounding  box.center)]
					{X\piu Y } & \zero \\
					& {\zero \piu \zero}
					\arrow["{\bangp{X\piu Y}}", from=1-1, to=1-2]
					\arrow["{\bangp{X}\piu \bangp{Y}}"', from=1-1, to=2-2]
					\arrow["{\lunit{0}}"', from=2-2, to=1-2]
			\end{tikzcd}
		\end{equation}
	\end{minipage}
	\hfill
	\begin{minipage}[b]{0.26\textwidth}
		\begin{equation}\label{ax:coh3}\tag{$\,\bangp{0}$-coh}
		\begin{tikzcd}
			\zero & \zero
			\arrow["{\bangp{\zero}}", shift left, from=1-1, to=1-2]
			\arrow["{\id{\zero}}"', shift right, from=1-1, to=1-2]
		\end{tikzcd}
		\end{equation}
	\end{minipage}
	\hfill
	\begin{minipage}[b]{0.30\textwidth}
		\begin{equation}\label{eq:star I = id I}\tag{$\diagp{0}$-coh}
			\begin{tikzcd}
				\zero & {\zero\piu \zero}
				\arrow["{\diagp{}}", shift left, from=1-1, to=1-2]
				\arrow["{\Ilunit{\zero}}"', shift right, from=1-1, to=1-2]
			\end{tikzcd}
		\end{equation}
	\end{minipage}
	\begin{minipage}[b]{0.45\textwidth}
		\begin{equation}\label{eq:nat copca1}\tag{$\diagp{}$-nat}
\begin{tikzcd}
	X & {X\piu X} \\
	Y & {Y\piu Y}
	\arrow["{\diagp{X}}", from=1-1, to=1-2]
	\arrow["f"', from=1-1, to=2-1]
	\arrow["{f\piu f}", from=1-2, to=2-2]
	\arrow["{\diagp{Y}}"', from=2-1, to=2-2]
\end{tikzcd}
		\end{equation}
	\end{minipage}
	\hfill
	\begin{minipage}[b]{0.45\textwidth}
		\begin{equation}\label{eq:nat copca2}\tag{$\,\bangp{}$-nat}
\begin{tikzcd}
	X & Y & \zero
	\arrow["f", from=1-1, to=1-2]
	\arrow["{\bangp{X}}"', curve={height=30pt}, from=1-1, to=1-3]
	\arrow["{\bangp{Y}}", from=1-2, to=1-3]
\end{tikzcd}
		\end{equation}
	\end{minipage}
	\caption{Coherence and naturality axioms for co-pca objects}
	\label{fig:copcacoherence}
\end{figure}

%% file: appendices/appendicerefusiprimaversione.tex
\section{Substochastic Markov kernels on standard Borel spaces}\label{app:examples}

Consider the category of \textit{standard Borel spaces} and measurable functions denoted with $\mathbf{BorelMeas}$.
Given a {standard Borel space} $(X,\Sigma_X)$, denote with $\mathcal{G}_{\le}(X)$ the set of subprobability measures on $X$, i.e. measurable functions $\mu:(X,\Sigma_X)\to ([0,1],\mathcal{B}([0,1]))$ such that $\mu(X)\le 1$, where $\mathcal{B}([0,1])$ is the Borel $\sigma$-algebra on $[0,1]$.  This assignment extends to a functor $\mathcal{G}_{\le}:\mathbf{BorelMeas}\to \mathbf{BorelMeas}$ which corresponds to the subdistribution version of the \textit{Giry monad}~\cite{giry2006categorical}. The functor $\mathcal{G}_{\le}$ is a symmetric monoidal monad and its Kleisli category, $\Kl({\mathcal{G}_{\le}})$, is the category of standard Borel spaces and substochastic Markov kernels. A substochastic Markov kernel from $(X,\Sigma_X)$ to $(Y,\Sigma_Y)$ is a function $f\colon \Sigma_Y\times X\to [0,1]$ such that for all $x\in X$, $f(-|x)\colon \Sigma_Y\to [0,1]$ is a subprobability measure on $Y$ and for all $U\in \Sigma_Y$, $f(U|-)\colon X\to [0,1]$ is a measurable function.
The identity arrow on $(X,\Sigma_X)$ is the Markov kernel $\delta_X$ such that for all $x\in X$ and $U\in \Sigma_X$, $\delta_X(U|x)=1$ if $x\in U$ and $0$ otherwise. The composition of arrows in $\Kl({\mathcal{G}_{\le}})$ is given by the Chapman-Kolmogorov equation: for all  substochastic Markov kernels $f\colon (X,\Sigma_X)\to (Y,\Sigma_Y)$ and $g\colon (Y,\Sigma_Y)\to (Z,\Sigma_Z)$, $f;g\colon (X,\Sigma_X)\to (Z,\Sigma_Z)$ is defined as the Markov kernel such that for all $x\in X$ and $U\in \Sigma_Z$,
\begin{equation*}
(f;g)(U|x)\coloneqq \int_Y g(U|y)\, f(dy|x)
\end{equation*}
   
\begin{proposition}
    \label{prop: SubStoch is convex biproduct}
    $\Kl({\mathcal{G}_{\le}})$ is a convex biproduct category.
\end{proposition}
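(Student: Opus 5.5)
I would verify Definition~\ref{def:convbicat} directly, mirroring the proof given for $\KlD$. First, the PCA-enrichment: on each homset $\Kl({\mathcal{G}_{\le}})[X,Y]$ define $(f+_p g)(U|x)\defeq p\cdot f(U|x)+(1-p)\cdot g(U|x)$ and $\star_{X,Y}(U|x)\defeq 0$. These are again substochastic kernels, since convex combinations of subprobability measures are subprobability measures and of measurable functions are measurable. The pca laws \eqref{eq:pca} hold pointwise, being identities of real numbers. That composition preserves $+_p$ and $\star$ on both sides follows from linearity of the integral: in the integrand on the left, and of integration against the measure $f(dy|x)$ on the right.

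Next, the zero object and coproducts. The empty space $\zero$ is initial, and it is final because the only subprobability measure on $\emptyset$ is $0$, so there is exactly one kernel $X\to\zero$. For $X_1\oplus X_2$ I take the disjoint union with the sum $\sigma$-algebra; it is again standard Borel. The coprojections are $\iota_i(U|x)\defeq\delta_{\iota_i(x)}(U)$. The copairing $[f_1,f_2](U|z)$ is $f_i(U|x)$ when $z=\iota_i(x)$. It is measurable in $z$ because measurability on a disjoint union can be checked summand-wise, and uniqueness follows by precomposing with $\iota_i$. The projections are $\pi_1(V|z)\defeq\delta_x(V)$ if $z=\iota_1(x)$ and $0$ otherwise, and symmetrically for $\pi_2$. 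Equation \eqref{eq:delta} is then a one-line computation: $\iota_i;\pi_j(V|x)=\pi_j(V|\iota_i(x))$.

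The main step is the convex product property. Given $f_i\colon A\to X_i$ and $p_1+p_2\le1$, define
\[h(W|a)\defeq p_1\, f_1(\iota_1^{-1}(W)|a)+p_2\, f_2(\iota_2^{-1}(W)|a).\]
This is a subprobability measure with total mass at most $p_1+p_2\le 1$, and it is measurable in $a$, so $h$ is a kernel. By the Chapman--Kolmogorov formula, $h;\pi_1(V|a)=\int\pi_1(V|z)\,h(dz|a)=h(\iota_1(V)|a)=p_1 f_1(V|a)$, and similarly for $\pi_2$.

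The delicate point, and the one I expect to need the most care, is uniqueness. If $h'$ also satisfies $h';\pi_i=p_i\cdot f_i$, the same computation gives $h'(\iota_i(V)|a)=p_i f_i(V|a)$ for every measurable $V\subseteq X_i$. Every measurable $W\subseteq X_1\oplus X_2$ decomposes as $\iota_1(V_1)\cup\iota_2(V_2)$ with $V_i$ measurable. Countable (in fact finite) additivity of $h'(-|a)$ then forces $h'(W|a)=h(W|a)$. Hence $h'=h$, so the projections are jointly monic and $\langle f_1,f_2\rangle_{p_1,p_2}=h$, which completes the verification.
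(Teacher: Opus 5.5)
Your proposal is correct and follows essentially the same route as the paper. Both use the pointwise pca-enrichment with linearity of the integral and the disjoint union as coproduct. Both define the mediating kernel $h$ as $p_1 f_1$ and $p_2 f_2$ on the two summands, and get uniqueness by pinning down $h'$ on the sets $\iota_i(V)$ and invoking additivity. You also spell out finality of the empty space and the check of \eqref{eq:delta}, which the paper leaves implicit.
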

\begin{proof}
Coproducts in $\Kl({\mathcal{G}_{\le}})$ are given by the coproducts in $\mathbf{BorelMeas}$, i.e. the disjoint union of standard Borel spaces. The initial and terminal object is the empty standard Borel space. The $\mathbf{PCA}$-enrichment is given by the pointwise convex combination of Markov kernels: for all Markov kernels $f,g\colon (X,\Sigma_X)\to (Y,\Sigma_Y)$ and $p\in [0,1]$, $f +_p g\colon (X,\Sigma_X)\to (Y,\Sigma_Y)$ is defined  for all $x\in X$ and $U\in \Sigma_Y$ as  $f+_p g(U|x)\coloneqq p\cdot f(U|x)+(1-p)\cdot g(U|x)$. The zero arrow $\star_{(X,\Sigma_X),(Y,\Sigma_Y)}\colon (X,\Sigma_X)\to (Y,\Sigma_Y)$ is the Markov kernel such that for all $x\in X$ and $U\in \Sigma_Y$, $\star_{(X,\Sigma_X),(Y,\Sigma_Y)}(U|x)=0$. The axioms of $\mathbf{PCA}$ are satisfied since they are so in $[0,1]$. A simple computation shows that $+_p$ and $\star$ are compatible with composition. For instance, 
 we obtain that $(f+_p g);h=f;h +_p g;h$ for all Markov kernels $f,g\colon (X,\Sigma_X)\to (Y,\Sigma_Y)$, $h\colon (Y,\Sigma_Y)\to (Z,\Sigma_Z)$ and $p\in [0,1]$ in the following way
\begin{align}
\big((f+_{p}g);h\big)(C\mid x)
&= \int_{Y} h(C\mid y)\,(f+_{p}g)(dy\mid x)
\tag*{} \\
&= \int_{Y} h(C\mid y)\,\big(p\,f(dy\mid x)+(1-p)\,g(dy\mid x)\big)
\tag*{} \\
&= p\int_{Y} h(C\mid y)\,f(dy\mid x) \;+\; (1-p)\int_{Y} h(C\mid y)\,g(dy\mid x)
\tag{linearity of the integral}\\
&= p\,(f;h)(C\mid x) \;+\; (1-p)\,(g;h)(C\mid x)
\tag*{} \\
&= \big((f;h)+_{p}(g;h)\big)(C\mid x).
\tag*{}
\end{align}

Now we prove that $(X_1,\Sigma_{X_1})\overset{\pi_1}{\leftarrow}(X_1,\Sigma_{X_1})\oplus (X_2,\Sigma_{X_2})\overset{\pi_2}{\rightarrow} (X_2,\Sigma_{X_2})$ is a binary convex product in $\Kl({\mathcal{G}_{\le}})$. The Markov kernels $\pi_1$ and $\pi_2$ are defined as follows: for all $z\in X_1\oplus X_2$ and $U\in \Sigma_{X_1}$, $\pi_1(U|z)=1$ if $z\in \iota_1(U)$ and $0$ otherwise; for all $z\in X\oplus Y$ and $V\in \Sigma_{X_2}$, $\pi_2(V|z)=1$ if $z\in \iota_2(V)$ and $0$ otherwise. Given $f\colon (A,\Sigma_A)\to (X_1,\Sigma_{X_1})$, $g\colon (A,\Sigma_A)\to (X_2,\Sigma_{X_2})$ and $p_1,p_2\in [0,1]$ such that $p_1+p_2\leq 1$, we prove that there is a unique Markov kernel $h\colon (A,\Sigma_A)\to (X_1,\Sigma_{X_1})\oplus (X_2,\Sigma_{X_2})$ such that $h;\pi_1 = p_1\cdot f$ and $h;\pi_2 = p_2\cdot g$. Existence is provided by the Markov kernel $h$ defined as follows: for all $a\in A$, $U\in \Sigma_{X_1}$ and $V\in \Sigma_{X_2}$,
\[ h(\iota_1(U)|a)\coloneqq p_1\cdot f(U|a) \quad h(\iota_2(V)|a)\coloneqq p_2\cdot g(V|a) \]
where for a generic element of $\Sigma_{X_1\oplus X_2}$ $h$ is extended by countable additivity. Uniqueness is proved as follows: suppose that there is also another $h'$ with the same property. Then, for all $a\in A$, $U\in \Sigma_{X_1}$ and $V\in \Sigma_{X_2}$, 
\begin{align}
    h'(\iota_1(U)\mid a)
    &= \int_{X_1 \oplus X_2} \mathbf{1}_{\iota_1(U)}(z)\, h'(dz\mid a)
    \tag{def. of integration w.r.t.\ $h'(\cdot\mid a)$}\\
    &= \int_{X_1 \oplus X_2} \pi_1(U\mid z)\, h'(dz\mid a)
    \tag{def. of indicator function}\\
    &= (h';\pi_1)(U\mid a)
    \tag{kernel composition}\\
    &= (p_1\cdot f)(U\mid a)
    \tag{hp. $h';\pi_1 = p_1\cdot f$}.
\end{align}

Similarly, $h'(\iota_2(V)|a)=(p_2\cdot g)(V|a)$. Since every element of $\Sigma_{X_1\oplus X_2}$ is obtained by countable unions of elements of the form $\iota_1(U)$ and $\iota_2(V)$, $h'$ is completely determined by the above equalities, hence $h'=h$.
\end{proof}

%% file: appendices/appconvbiprodcat.tex
\section{Appendix to Section \ref{sec:pca}}

\begin{proof}[Proof of Lemma~\ref{lemma:initialproperties2}]
    \begin{enumerate}
        \item  Follows easily by the enrichment. For instance, the first equality is proved as follows.
\begin{align*} 
(p\cdot f) ; g & = (f+_p\star_{X,Y});g \tag{def of $p\cdot(-)$} \\
&= (f;g) +_p(\star_{X,Y};g) \tag{\ref{eq:enr}}\\
&= (f;g) +_p(\star_{X,Z}) \tag{\ref{eq:enr}}\\
&= p\cdot (f ; g) \tag{def.\ of $p\cdot(-)$} \\
\end{align*}
\item Follows easily by the laws of PCAs.
\begin{align*}
p\cdot (q \cdot f) &= (f+_q \star )+_p \star \tag{def.\ of $p\cdot(-)$}\\
&= f+_{pq} (\star +_{\frac{p\cdot(1-q)}{1-pq}} \star) \tag{\ref{eq:pca}} \\
&= f+_{pq} \star \tag{\ref{eq:pca}} \\
&= pq \cdot f \tag{def.\ of $p\cdot(-)$}
\end{align*}
\item Follows by induction. The case $n=0$ follows from the laws of PCA: $q\cdot \star= \star +_q \star= \star$. For $n+1$ consider 
\begin{align}
    q\cdot\sum_{i=1}^{n+1}p_i\cdot f_i &= q\cdot ( f_{1} +_{p_{1}}\sum_{j=1}^{n}q_j\cdot f_j) \tag{Def. of $\sum_{i=1}^{n+1}{p_i}\cdot f_i$ in (\ref{eq:def somma n pca})}\\
    &= q\cdot f_{1} +_{p_{1}} \sum_{j=1}^{n}qq_j\cdot f_j \tag{PCA laws + Ind. Hp.}\\
    &= \sum_{i=1}^{n+1}qp_i\cdot f_i \tag{Def. of $\sum_{i=1}^{n+1}{p_i}\cdot f_i$ in (\ref{eq:def somma n pca})}
\end{align}
        \end{enumerate}
\end{proof}

\section{Appendix to Section \ref{sec:cbproducts}}\label{app:sec:cbproducts}

\subsection{Proofs of Section \ref{ssec:convexproducts}}
\begin{proof}[Proof of Lemma~\ref{lemma:naryconvexproduct}]
By induction on $n$, we show that there exists an object $\bigotimes_{i=1}^n X_i$ which is a convex product of $X_1, \dots, X_n$. For $n=0$, take the final object of $\Cat{C}$ which exists by hypothesis. 
For $n+1$ assume by induction hypothesis that $\bigotimes_{i=1}^n X_i$ is a convex product of $X_1, \dots, X_n$ with projection $\pi_i\colon \bigotimes_{i=1}^n X_i \to X_i$. Since $\Cat{C}$ has binary convex products we can take the convex product $(\bigotimes_{i=1}^n X_i) \otimes X_{n+1}$ with projections $\pi'_1\colon (\bigotimes_{i=1}^n X_i) \otimes X_{n+1} \to \bigotimes_{i=1}^n X_i$ and $\pi'_2\colon (\bigotimes_{i=1}^n X_i) \otimes X_{n+1} \to X_{n+1}$. We claim that $(\bigotimes_{i=1}^n X_i) \otimes X_{n+1}$ with projections $\pi''_i\colon (\bigotimes_{i=1}^n X_i) \otimes X_{n+1} \to X_i$ defined for all $i= 1,\dots, n+1$ as 
\[\pi''_i = \begin{cases} \pi_1'; \pi_i & i\in 1,\dots, n \\ \pi_2' & i=n+1 \end{cases}\]
is a convex product of $X_1, \dots, X_{n+1}$. 
To check the universal property, let $\vec{p}=p_1,\dots, p_{n+1}$ such that $\sum_{i=1}^{n+1}p_i\leq 1$ and $f_i\colon A \to X_i$. Take $\vec{q}=q_1,\dots, q_n$ with $q_i=\frac{p_i}{1-p_{n+1}}$ if $p_{n+1}\not= 1$ and $q_i=0$ otherwise. Since  $\bigotimes_{i=1}^n X_i$ is an $n$-ary convex product by induction hypothesis, there exists an arrow $\langle f_1, \dots ,f_n\rangle_{\vec{q}} \colon A \to \bigotimes_{i=1}^n X_i$ such that $q_i\cdot f_i = \langle f_1, \dots ,f_n\rangle_{\vec{q}} ; \pi_i$. For the $n+1$-ary, we construct the mediating morphism $h\colon A \to (\bigotimes_{i=1}^n X_i) \otimes X_{n+1}$ as $\langle \, \langle f_1, \dots ,f_n\rangle_{\vec{q}} \,,\, f_{n+1} \, \rangle_{1-p_{n+1},p_{n+1}}$.
 \end{proof}

\begin{proof}[Proof of Proposition \ref{prop:productvsconvex}]
Suppose that $(Z,\pi_1,\pi_2)$ is a convex product of $X_1$ and $X_2$. For all arrows $f_1\colon A \to X_1$ and $f_2\colon A \to X_2$, take the pairing $\langle f_1,f_2\rangle\colon A \to Z$ to be $\langle f_1,f_2\rangle_{p_1,p_2}$ for some $p_1,p_2\in (0,1)$ with $p_1+p_2\leq 1$. Since $p_i\cdot f_i = f_i$ for $i\in\{1,2\}$, then $\langle f_1,f_2\rangle ; \pi_i =  \langle f_1,f_2\rangle_{p_1,p_2};\pi_i= p_i\cdot f_i =f_i$. To show uniqueness, let $h\colon A \to Z$ be such that $h;\pi_i =f_i$. Since $f_i = p_i\cdot f_i $, then $h;\pi_i =p_i\cdot f_i$. By the uniqueness provided by the convex product, $h$ must be $\langle f_1,f_2\rangle_{p_1,p_2}$, that is $\langle f_1,f_2\rangle$. Hence $(Z,\pi_1,\pi_2)$ is a product.

Vice versa, suppose that $(Z,\pi_1,\pi_2)$ is a product of $X_1$ and $X_2$. For all arrows $f_1\colon A \to X_1$ and $f_2\colon A \to X_2$ and $p_1,p_2\in[0,1]$, take $\langle f_1,f_2 \rangle_{p_1,p_2}\colon A \to Z$ to be the pairing $\langle f_1',f_2' \rangle$ where 
\[
f'_i = \begin{cases} 
f_i &\text{if }p_i \neq 0 \\
\star_{A,X_i} &\text{if }p_i=0\end{cases}
\]
Since $p_i\cdot f_i=f_i$ for all $p_i\in (0,1]$ and $p_i \cdot f_i =\star_{A,X_i}$ for $p_i=0$, it holds that for all $p_i\in[0,1]$, $p_i \cdot f_i =f_i'$.
Thus $\langle f_1,f_2 \rangle_{p_1,p_2}; \pi_i = \langle f_1',f_2' \rangle;\pi_i =f_i' =p_i \cdot f_i$. To prove uniqueness, let $h\colon A \to Z$ be such that $h;\pi_i =p_i\cdot f_i$. Since $p_i\cdot f_i = f_i'$, then $h;\pi_i = f_i '$. 
By the uniqueness provided by the product, $h$ must be $\langle f_1',f_2'\rangle$, that is $\langle f_1,f_2 \rangle_{p_1,p_2}$. Hence, $(Z,\pi_1,\pi_2)$ is a convex product.
\end{proof}

\subsection{Proof of Section \ref{ssec:convexbiproduct}}

\begin{proof}[Proof of Lemma~\ref{lemma:initialproperties}]
We prove below item by item.
\begin{enumerate}
\item Follows easily by enrichment and finality of $\zero$: 
\begin{align*}
\star_{X,Y} & = \star_{X,\zero} ; \cobang{Y} \tag{\ref{eq:enr}}\\
&= \bang{X};\cobang{Y} \tag{$\zero$ is final}
\end{align*}
\item Consider the arrows $f;\iota_1$ and $g;\iota_2$ from $X$ to $Y\oplus Y$ and their sum $f;\iota_1 +_p g;\iota_2$. The following computation shows that 
    \begin{align}
        (f;\iota_1 +_p g;\iota_2);\pi_1 & = f;\iota_1;\pi_1 +_p g;\iota_2;\pi_1 \tag{\ref{eq:enr}}\\
        & = f +_p \star_{X,Y} \tag{by \eqref{eq:delta}}\\
        & = p \cdot f \tag{def. of $p\cdot -$}
    \end{align} 
    and similarly $(f;\iota_1 +_p g;\iota_2);\pi_2 = (1-p)\cdot g$. By the universal property of convex products, we conclude that $f;\iota_1 +_p g;\iota_2 = \langle f,g \rangle_{p,1-p}$. Postcomposing both sides with $[\id{Y},\id{Y}]$ we obtain 
    \begin{align}
        \langle f,g \rangle_{p,1-p};[\id{Y},\id{Y}] & = (f;\iota_1 +_p g;\iota_2);[\id{Y},\id{Y}] \notag\\
        &= (f;\iota_1 ;[\id{Y},\id{Y}] +_p g;\iota_2 ;[\id{Y},\id{Y}]) \tag{\ref{eq:enr}}\\
        &=(f ;\id{Y} +_p g ; \id{Y}) \tag{properties of $[-,-]$}\\
        & = f +_p g.   \tag{category}
    \end{align}
\item We prove that $ \id{X_1}\oplus \cobang{X_2}=\runit{X_1}; \iota_1$ by universal property of $X_1\oplus \zero$ using the injections $\iota_1^{X_1\oplus\zero} \colon X_1 \to X_1\oplus \zero$ and 
$\iota_2^{X_1\oplus\zero} \colon \zero \to X_1\oplus \zero$. First observe that, by initiality of $\zero$,
 \[\iota_2^{X_1\oplus \zero}; (\id{X_1}\oplus \cobang{X_2}) = \iota_2^{X_1\oplus \zero}; \runit{X_1}; \iota_1 \text{.} \]
Moreover
\begin{align*}
\iota_1^{X_1\oplus \zero} ; (\id{X_1}\oplus \cobang{X_2}) &= \iota_1^{X_1\oplus \zero} ; [\id{X_1}; \iota_{1},\cobang{X_2};\iota_2 ] \tag{def. of $\oplus$}\\
&=\id{X_1};\iota_1 \tag{coproduct}\\
&=\iota_1^{X_1\piu\zero}; \runit{X_1} ; \iota_1. \tag{def. of $\runit{}$} 
\end{align*}
Thus $ \id{X_1}\oplus \cobang{X_2}=\runit{X_1}; \iota_1$.
\item Analogous to the point above.
\item Observe that, for $\iota_i\colon X_i \to X_1\oplus X_2$, 
\[\iota_1 ; (\id{X_1}\oplus \bang{X_2});\runit{X_1}=\id{X_1} \qquad \iota_2 ;  (\id{X_1}\oplus \bang{X_2});\runit{X_1} = \bang{X_2}; \cobang{X_1}= \star_{X_2,X_1}\]
namely $(\id{X_1}\oplus \bang{X_2});\runit{X_1} = [\id{X_1}, \star_{X_2,X_1}]$. Now, by \eqref{eq:delta}, one has that $[\id{X_1}, \star_{X_2,X_1}]=\pi_1$.
\item Analogous to the point above.
\item By the following derivation.
\begin{align*}
p \cdot f &= f+_p\star_{X,Y} \tag{def. of $p\cdot(-)$}\\
&=\, \diagp{X} ; (f \oplus \star_{X,Y}) ; \codiag{Y} \tag{Lemma~\ref{lemma:initialproperties}.\ref{eq: assioma aggiuntivo}} \\ 
&=\, \diagp{X} ; (f \oplus (\bang{X};\cobang{Y})) ; \codiag{Y} \tag{Lemma \ref{lemma:initialproperties}.\ref{lemma:starxy}} \\ 
&=\, \diagp{X} ; (f \oplus \bang{X}) ;(\id{Y} \oplus \cobang{Y}) ; \codiag{Y} \tag{Symmetric Monoidal Category} \\ 
&=\, \diagp{X} ; (f \oplus \bang{X}) ;\runit{Y}. \tag{\ref{ax:Mon2}} \\ 
 \end{align*}
 \item By the following derivation.
\begin{align*}
(1-p) \cdot f &= f+_{1-p}\star_{X,Y} \tag{def. of $p\cdot(-)$}\\
&= \star_{X,Y} +_{p} f \tag{\ref{eq:pca}}\\
&=\, \diagp{X} ; (\star_{X,Y} \oplus f) ; \codiag{Y} \tag{Lemma~\ref{lemma:initialproperties}.\ref{eq: assioma aggiuntivo}} \\ 
&=\, \diagp{X} ; ((\bang{X};\cobang{Y}) \oplus f) ; \codiag{Y} \tag{Lemma \ref{lemma:initialproperties}.\ref{lemma:starxy}} \\ 
&=\, \diagp{X} ; (\bang{X} \oplus f) ;(\id{Y} \oplus \cobang{Y}) ; \codiag{Y} \tag{Symmetric Monoidal Category} \\ 
&=\, \diagp{X} ; ( \bang{X} \oplus f) ;\lunit{Y}. \tag{\ref{ax:Mon2}} \\ 
 \end{align*} 
 
\end{enumerate}
\end{proof}

\begin{lemma}\label{lemma:naturality}
In a convex biproduct category, $\bang{}$ and $\diagp{}$ form natural transformations, namely 
$f;\bang{Y}=\bang{X}$  and $f;\diagp{Y} =\, \diagp{X};(f\oplus f)$ for all $f\colon X\to Y$.
\end{lemma}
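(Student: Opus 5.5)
Both naturality claims follow from the universal properties of the zero object and of convex products; the monoidal axioms are not needed.

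For $\bang{}$, the argument is immediate. Since $\zero$ is final in a convex biproduct category (Definition~\ref{def:convbicat}), both $f;\bang{Y}$ and $\bang{X}$ are arrows $X\to\zero$, so they coincide.

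For $\diagp{}$, I use the uniqueness of mediating arrows for the convex product $Y\oplus Y$. By definition $\diagp{Y}=\langle \id{Y},\id{Y}\rangle_{p,1-p}$. The plan is to show that both $f;\diagp{Y}$ and $\diagp{X};(f\oplus f)$ satisfy the defining equations of $\langle f,f\rangle_{p,1-p}\colon X\to Y\oplus Y$, namely $h;\pi_1=p\cdot f$ and $h;\pi_2=(1-p)\cdot f$.

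For the left-hand side, $f;\diagp{Y};\pi_1 = f;(p\cdot\id{Y}) = p\cdot f$ by Lemma~\ref{lemma:initialproperties2}.\ref{lemma:p;}, and similarly with $1-p$ for $\pi_2$.

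For the right-hand side, I need $(f\oplus f);\pi_1=\pi_1;f$. I would get this from Lemma~\ref{lemma:initialproperties}.\ref{lemma:pi1}, i.e.\ $\pi_1=(\id{}\oplus\bang{})$ followed by the unitor, together with functoriality of $\oplus$, naturality of the unitor, and the already proved naturality of $\bang{}$ (which gives $f;\bang{Y}=\bang{X}$). Alternatively, one can check it on coproduct injections using \eqref{eq:delta}: we have $\iota_1;(f\oplus f);\pi_1=f;\iota_1;\pi_1=f$ and $\iota_2;(f\oplus f);\pi_1=f;\star=\star=\iota_2;\pi_1;f$. Either route gives $\diagp{X};(f\oplus f);\pi_1=\diagp{X};\pi_1;f=(p\cdot\id{X});f=p\cdot f$, and the same computation for $\pi_2$ gives $(1-p)\cdot f$.

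Uniqueness in the convex product then yields $f;\diagp{Y}=\diagp{X};(f\oplus f)$. The boundary cases $p\in\{0,1\}$ follow from their definitions via $\cobang{}$ and the naturality of $\cobang{}$, which holds by initiality.

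The only delicate step is the identity $(f\oplus f);\pi_i=\pi_i;f$. I plan to settle it with the injection check, because that argument uses only the coproduct property and \eqref{eq:delta}.
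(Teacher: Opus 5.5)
Your proof is correct and follows the paper's argument. You use finality of $\zero$ for $\bang{}$, and for $\diagp{}$ you use uniqueness of the mediating arrow $\langle f,f\rangle_{p,1-p}$, showing that both $f;\diagp{Y}$ and $\diagp{X};(f\oplus f)$ have projections $p\cdot f$ and $(1-p)\cdot f$. The only cosmetic difference is that you first isolate the naturality of the projections, $(f\oplus f);\pi_i=\pi_i;f$ (your injection check via \eqref{eq:delta} is valid), whereas the paper pushes $\bang{}$ through $f$ and then invokes Lemma~\ref{lemma:initialproperties}.\ref{lemma:pf} and \ref{lemma:1-pf} directly.
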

\begin{proof}
The equation $f;\bang{Y}=\bang{X}$ for $f\colon X \to Y$ follows from the fact that $\zero$ is final. The equation
\[f;\diagp{Y} =\, \diagp{X};(f\oplus f)\]
follows from the universal property of convex products. Indeed, observe that
\begin{align*}
    \diagp{X};(f\oplus f);\pi_1 &=\, \diagp{X};(f\oplus f);(\id{Y}\piu \bang{Y});\runit{X} \tag{Lemma \ref{lemma:initialproperties}.\ref{lemma:pi1}}\\
    &=\, \diagp{X};(f\oplus (f;\bang{Y}));\runit{X}  \tag{Symmetric Monoidal Category}\\
    &=\, \diagp{X};(f\oplus \bang{X});\runit{X} \tag{Naturality of $\bang{X}$}\\
    &= p\cdot f \tag{Lemma \ref{lemma:initialproperties}.\ref{lemma:pf}}
\end{align*}
\begin{align*}
    \diagp{X};(f\oplus f);\pi_2 &=\, \diagp{X};(f\oplus f);(\bang{Y} \piu \id{Y} );\lunit{X} \tag{Lemma \ref{lemma:initialproperties}.\ref{lemma:pi2}}\\
    &=\, \diagp{X};( (f;\bang{Y})\oplus f);\lunit{X}  \tag{Symmetric Monoidal Category}\\
    &=\, \diagp{X};( \bang{X} \oplus f);\lunit{X} \tag{Naturality of $\bang{X}$}\\
    &= (1-p)\cdot f \tag{Lemma \ref{lemma:initialproperties}.\ref{lemma:1-pf}}
\end{align*}
and that
\begin{align*}
f;\diagp{Y};\pi_1 &= f ; \langle \id{X}, \id{X} \rangle_{p,(1-p)};\pi_1 \tag{def. of $\diagp{}$}\\
&= f; (p\cdot \id{X}) \tag{convex product} \\
&= p\cdot(f;\id{X}) \tag{Lemma \ref{lemma:initialproperties2}.\ref{lemma:p;}}\\
&=p\cdot f \tag{Category}
\end{align*}
\begin{align*}
f;\diagp{Y};\pi_2 &= f ; \langle \id{X}, \id{X} \rangle_{p,(1-p)};\pi_2 \tag{def. of $\diagp{}$}\\
&= f; (\,(1-p)\cdot \id{X}\,) \tag{convex product} \\
&= (1-p)\cdot(f;\id{X}) \tag{Lemma \ref{lemma:initialproperties2}.\ref{lemma:p;}}\\
&=(1-p)\cdot f \tag{Category}
\end{align*}
\end{proof}

\begin{lemma}\label{lemma:coherence}
In a convex biproduct category, the coherence axioms in Figure~\ref{fig:copcacoherence} hold.
\end{lemma}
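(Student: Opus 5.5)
The plan is to verify each coherence axiom of Figure~\ref{fig:copcacoherence} by the universal property of convex products, using the projections characterised in Lemma~\ref{lemma:initialproperties}. The naturality axioms (\,$\bangp{}$-nat) and ($\diagp{}$-nat) are already Lemma~\ref{lemma:naturality}. The remaining axioms are (\,$\bangp{0}$-coh), (\,$\bangp{}$-coh), ($\diagp{0}$-coh) and ($\diagp{}$-coh).

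The zero-object axioms are immediate from finality of $\zero$. First, $\bangp{\zero}$ and $\id{\zero}$ are both arrows $\zero\to\zero$, hence equal. Second, both $\bangp{X\piu Y}$ and $(\bangp{X}\piu\bangp{Y});\lunit{\zero}$ are arrows $X\piu Y\to\zero$, hence equal. For ($\diagp{0}$-coh), note that $\zero\piu\zero$ is also a zero object, since it is both a coproduct of initial objects and a convex product of final objects. So any two arrows $\zero\to\zero\piu\zero$ coincide.

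The main work is ($\diagp{}$-coh). Working up to the structural isomorphisms, I would show that both sides are arrows $X\piu Y\to (X\piu Y)\piu(X\piu Y)$ with equal post-compositions with $\pi_1$ and $\pi_2$. Uniqueness in Definition~\ref{def:convprod} then gives equality, because both sides will be shown to be mediating arrows for $p\cdot\id{X\piu Y}$ and $(1-p)\cdot\id{X\piu Y}$.

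\begin{itemize}
\item For the left side $\diagp{X\piu Y}=\langle \id{},\id{}\rangle_{p,1-p}$, this holds by definition.
\item For the right side, I would precompose with the coprojections $\iota_1,\iota_2$ of $X\piu Y$ and use the coproduct property. Composing with $\iota_1$ yields $\diagp{X}$ followed by the rearranging isomorphism landing in the $X$-components. Then projecting with $\pi_1$ via Lemma~\ref{lemma:initialproperties}.\ref{lemma:pi1} (and naturality of $\bangp{}$) gives $\diagp{X};(\id{X}\piu\bangp{X});\iota_1=(p\cdot\id{X});\iota_1$ by Lemma~\ref{lemma:initialproperties}.\ref{lemma:pf}. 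The same computation gives $(1-p)\cdot\iota_1$ for $\pi_2$ via item~\ref{lemma:1-pf}, and symmetrically for $\iota_2$.
\item Since $(p\cdot\id{});\iota_i=p\cdot\iota_i$ by Lemma~\ref{lemma:initialproperties2}.\ref{lemma:p;}, the right side composed with $\pi_1$ equals $[p\cdot\iota_1,p\cdot\iota_2]$. By enrichment this equals $p\cdot\id{X\piu Y}$, using that $[-,-]$ commutes with $p\cdot(-)$ (checked again by precomposition with the coprojections).
\end{itemize}

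The expected obstacle is bookkeeping with the associators and the symmetry inside ($\diagp{}$-coh), specifically identifying exactly which composite of projections corresponds to the outer $\pi_1$ on $(X\piu Y)\piu(X\piu Y)$. I would handle this by computing everything componentwise on coprojections, expressed with $\iota$'s and $\bangp{}$'s. In that form the rearranging isomorphism just permutes coprojections and erases the unused summands.
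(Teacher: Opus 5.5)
Your proof is correct, but for ($\diagp{}$-coh) it closes the argument with a different universal property than the paper does.

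\textbf{The paper's route.} It tests both sides against the coprojections. After checking $\iota_k^{(X\piu X)\piu(Y\piu Y)};s=\iota_k\piu\iota_k$ for the rearranging isomorphism $s$, it obtains $\iota_1;(\diagp{X}\piu\diagp{Y});s=\diagp{X};(\iota_1\piu\iota_1)=\iota_1;\diagp{X\piu Y}$ directly from naturality of $\diagp{}$ (Lemma~\ref{lemma:naturality}). The same holds for $\iota_2$, so coproduct uniqueness finishes the proof.

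\textbf{Your route.} You test against the projections and conclude by uniqueness of the mediating arrow $\langle\id{},\id{}\rangle_{p,1-p}$ in the convex product. Coprojections then appear only inside the computation of $R;\pi_1=[p\cdot\iota_1,p\cdot\iota_2]=p\cdot\id{X\piu Y}$. This costs you two things:
\begin{itemize}
\item the extra identity $[p\cdot\iota_1,p\cdot\iota_2]=p\cdot\id{X\piu Y}$;
\item a recomputation, via Lemma~\ref{lemma:initialproperties}.\ref{lemma:pf} and~\ref{lemma:1-pf}, of what is really naturality of $\diagp{}$ instantiated at $f=\iota_k$.
\end{itemize}

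\textbf{Comparison.} The paper's argument is shorter because that work is already packaged in Lemma~\ref{lemma:naturality}. Yours is more self-contained and makes explicit that both sides are the same mediating arrow. Both rest on the same bookkeeping fact about $s$.

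A minor remark on ($\diagp{0}$-coh): you do not need $\zero\piu\zero$ to be a zero object. Initiality of $\zero$ alone already forces any two arrows $\zero\to\zero\piu\zero$ to coincide.
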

\begin{proof}
(Coh2), (Coh3) and (Coh4) follow from the fact that $\zero$ is both initial and final. (Coh1) follows from the naturality and the universal property of coproducts. 

Let $s\defeq \assoc{X}{X}{Y\piu Y}; ( {\id{X}\piu \Iassoc{X}{Y}{Y}} ); ({\id X \piu ( \symm{X}{Y}^{\piu} \piu \id Y)} ); (\id{X}\piu \assoc{Y}{X}{Y}); \Iassoc{X}{Y}{X\oplus Y}$. Let
\[\iota_1^{(X\oplus X)\oplus(Y\oplus Y)}\colon X\piu X \to (X\piu X)\piu (Y\piu Y) \text{ and }\iota_2^{(X\oplus X)\oplus(Y\oplus Y)}\colon Y\piu Y \to (X\piu X)\piu (Y\piu Y)\] be the injections. By using the definitions of the structural isomorphisms induced by the coproducts, one can easily check that
\begin{equation}\label{eq:local}
\iota_1^{(X\oplus X)\oplus(Y\oplus Y)}; s= \iota_1 \oplus \iota_1 \qquad \text{and} \qquad \iota_2^{(X\oplus X)\oplus(Y\oplus Y)}; s= \iota_2 \oplus \iota_2\text{.}
\end{equation}
Then
\begin{align*}
\iota_1; (\diagp{X}\piu\, \diagp{Y}) ;s &= \iota_1; [\diagp{X}; \iota_1^{(X\oplus X)\oplus(Y\oplus Y)} \, , \, \diagp{Y}; \iota_2^{(X\oplus X)\oplus(Y\oplus Y)}] ;s \tag{def. of $\oplus$}\\
& =\, \diagp{X};\iota_1^{(X\oplus X)\oplus(Y\oplus Y)};s  \tag{coproduct}\\
&=\, \diagp{X}; (\iota_1 \oplus \iota_1) \tag{\ref{eq:local}} \\
&= \iota_1;\,\diagp{X\piu Y} \tag{Lemma \ref{lemma:naturality}}
\end{align*}
and 
\begin{align*}
\iota_2; (\diagp{X}\piu\, \diagp{Y}) ;s &= \iota_2; [\diagp{X}; \iota_1^{(X\oplus X)\oplus(Y\oplus Y)} \, , \, \diagp{Y}; \iota_2^{(X\oplus X)\oplus(Y\oplus Y)}] ;s \tag{def. of $\oplus$}\\
& =\, \diagp{Y};\iota_2^{(X\oplus X)\oplus(Y\oplus Y)};s  \tag{coproduct}\\
&=\, \diagp{Y}; (\iota_2 \oplus \iota_2) \tag{\ref{eq:local}} \\
&= \iota_2;\,\diagp{X\piu Y} \tag{Lemma \ref{lemma:naturality}}
\end{align*}
Hence, the universal property of coproducts implies that $\diagp{X\piu Y} = \diagp{X}\piu \diagp{Y};s$.
\end{proof}

\begin{lemma}\label{lemma:pdiag} \label{lemma:diagp}
In a convex biproduct category the following hold:
\begin{enumerate}
\item $p \cdot \diagq{X} = \langle \id{X},\id{X}\rangle_{pq, p(1-q)} $.
\item $\diagp{X}; (\id{X} \oplus q\cdot \id{X})=\langle \id{X},\id{X}\rangle_{p,(1-p)q}$ 
\end{enumerate}
\end{lemma}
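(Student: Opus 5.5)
Both items follow from uniqueness of mediating arrows for the convex product $X\oplus X$: we compute the composite of each left-hand side with $\pi_1$ and $\pi_2$, and compare with the defining property of $\langle \id{X},\id{X}\rangle_{r,s}$.

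For item (1), first check that $pq+p(1-q)=p\le 1$, so the right-hand side is well defined. By Lemma~\ref{lemma:initialproperties2}.\ref{lemma:p;}, $(p\cdot \diagq{X});\pi_i = p\cdot(\diagq{X};\pi_i)$. Since $\diagq{X}=\langle \id{X},\id{X}\rangle_{q,1-q}$, we get $\diagq{X};\pi_1=q\cdot\id{X}$ and $\diagq{X};\pi_2=(1-q)\cdot \id{X}$. Lemma~\ref{lemma:initialproperties2}.\ref{lemma:pq} then gives $pq\cdot\id{X}$ and $p(1-q)\cdot \id{X}$ respectively, and uniqueness concludes. The boundary cases $q\in\{0,1\}$ use the extended definitions $\diagpX{0\;\;}{X}=\cobang{X}\oplus\id{X}$ (suitably composed with unitors) and $\diagpX{1\;\;}{X}$. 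For these we need $\diagpX{1\;\;}{X}=\langle\id{X},\id{X}\rangle_{1,0}$, which follows from Lemma~\ref{lemma:initialproperties}.(5), (6) and \eqref{eq:delta}: indeed $\iota_1;\pi_1=\id{X}$ and $\iota_1;\pi_2=\star=0\cdot \id{X}$.

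For item (2), we first check $p+(1-p)q\le 1$. We then compute $\diagp{X};(\id{X}\oplus q\cdot\id{X});\pi_1$. Writing $\pi_1=(\id{X}\oplus\bang{X});\runit{X}$ (Lemma~\ref{lemma:initialproperties}.(3)), finality of $\zero$ absorbs $q\cdot\id{X};\bang{X}=\bang{X}$, so this equals $\diagp{X};\pi_1=p\cdot\id{X}$. For $\pi_2=(\bang{X}\oplus\id{X});\lunit{X}$, functoriality of $\oplus$ yields $\diagp{X};(\bang{X}\oplus (q\cdot \id{X}));\lunit{X}$. Lemma~\ref{lemma:initialproperties}.(8), applied with $f=q\cdot\id{X}$, turns this into $(1-p)\cdot(q\cdot \id{X})$, which is $(1-p)q\cdot\id{X}$ by Lemma~\ref{lemma:initialproperties2}.\ref{lemma:pq}. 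Uniqueness of the mediating arrow concludes.

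The only delicate point I expect is the bookkeeping at the endpoint values of $p$ and $q$, where the extended diagonals and the convention $0\cdot f=\star$ must be matched. The cleanest way to handle this is to verify once that $\diagpX{r\;\;}{X};\pi_1=r\cdot\id{X}$ and $\diagpX{r\;\;}{X};\pi_2=(1-r)\cdot\id{X}$ for all $r\in[0,1]$, after which the argument above goes through uniformly.
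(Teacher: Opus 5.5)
Your proof is correct and follows the paper's own argument. In both items you compute the composites with $\pi_1$ and $\pi_2$ using Lemma~\ref{lemma:initialproperties2} and the descriptions of projections and scalar multiplication from Lemma~\ref{lemma:initialproperties}, then conclude by uniqueness of the mediating arrow of the convex product. Your extra check of the endpoint values of $p$ and $q$, identifying the extended diagonals with $\iota_1$ and $\iota_2$, is a harmless refinement that the paper leaves implicit.
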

\begin{proof}
We prove below item by item.
\begin{enumerate}
\item By the uniqueness of $\langle \id{X},\id{X}\rangle_{pq, p(1-q)}$ and the following two derivations.
\begin{align*}
(p \cdot \diagq{X})  ; \pi_1 &= p \cdot (\diagq{X};\pi_1) \tag{Lemma \ref{lemma:initialproperties}.4}\\
&=p\cdot (\langle \id{X},\id{X}\rangle_{q,1-q} ; \pi_1) \tag{def. of $\diagq{}$}\\
&= p\cdot (q \cdot \id{X}) \tag{convex product}\\
&= (p q) \cdot \id{X} \tag{Lemma~\ref{lemma:initialproperties2}.\ref{lemma:pq}}
\end{align*}
\begin{align*}
(p \cdot \diagq{X})  ; \pi_2 &= p \cdot (\diagq{X};\pi_2) \tag{Lemma \ref{lemma:initialproperties}.4}\\
&=p\cdot (\langle \id{X},\id{X}\rangle_{q,1-q} ; \pi_2) \tag{def. of $\diagq{}$}\\
&= p\cdot ((1-q) \cdot \id{X}) \tag{convex product}\\
&= (p (1-q)) \cdot \id{X} \tag{Lemma~\ref{lemma:initialproperties2}.\ref{lemma:pq}}
\end{align*}
\item By uniqueness of $\langle \id{X},\id{X}\rangle_{p,(1-p)q}$ and the following two derivations. 
\begin{align*}
\diagp{X}; (\id{X} \oplus q\cdot \id{X}) ; \pi_1 &=\, \diagp{X}; (\id{X} \oplus q\cdot \id{X}) ; (\id{X}\oplus \bang{X});\runit{X} \tag{Lemma \ref{lemma:initialproperties}.\ref{lemma:pi1}}\\
&=\, \diagp{X}; (\id{X} \oplus (q\cdot \id{X} ; \bang{X})); \runit{X} \tag{Symmetric Monoidal Category}\\
&=\, \diagp{X}; (\id{X} \oplus  \bang{X}); \runit{X} \tag{Naturality of $\bang{}$} \\
&= p\cdot \id{X} \tag{Lemma \ref{lemma:initialproperties}.\ref{lemma:pf}}
\end{align*}
\begin{align*}
\diagp{X}; (\id{X} \oplus q\cdot \id{X}) ; \pi_2 &=\, \diagp{X}; (\id{X} \oplus q\cdot \id{X}) ; (\bang{X}\oplus \id{X});\lunit{X} \tag{Lemma \ref{lemma:initialproperties}.\ref{lemma:pi2}}\\
&=\, \diagp{X}; ( \bang{X}\oplus( q\cdot \id{X}) ) ; \lunit{X} \tag{Symmetric Monoidal Category}\\
&= (1-p)\cdot (q \cdot \id{X}) \tag{Lemma \ref{lemma:initialproperties}.\ref{lemma:1-pf}}\\
&= ((1-p)q) \cdot \id{X}. \tag{Lemma \ref{lemma:initialproperties2}.\ref{lemma:pq}}
\end{align*}
\end{enumerate}
\end{proof}

\begin{proof}[Proof of Proposition \ref{lemma: copca objects in convbicat}]
The first point follows immediately by the Fox theorem \cite{fox1976coalgebras}.

For the second point, we proved naturality and coherence in Lemmas~\ref{lemma:naturality} and \ref{lemma:coherence}. Below we prove the axioms in Figure~\ref{fig:co-pca axioms}.

    (PCA1) in Figure~\ref{fig:co-pca axioms} follows from the universal property of convex products. Indeed, observe that 
    \begin{align*}
        \diagp{X} ; (\diagq{X} \oplus \id{X}) ; \pi_1 &=\, \diagp{X} ; (\diagq{X} \oplus \id{X}) ;( \id{X \oplus X} \oplus \bang{X}) ; \runit{X\oplus X} \tag{Lemma \ref{lemma:initialproperties}.\ref{lemma:pi1}}\\
        &=\, \diagp{X} ; (\diagq{X} \oplus \bang{X}) ; \runit{X\oplus X} \tag{Symmetric Monoidal Category}\\
        &= p \cdot \diagq{} \tag{Lemma \ref{lemma:initialproperties}.\ref{lemma:pf}}\\
        &= \langle \id{X}, \id{X} \rangle_{pq,\,p(1-q)}\tag{Lemma \ref{lemma:pdiag}}
    \end{align*}
    
        \begin{align*}
        \diagp{X} ; (\diagq{X} \oplus \id{X}) ; \pi_2 &=\, \diagp{X} ; (\diagq{X} \oplus \id{X}) ;( \bang{X \oplus X} \oplus \id{X}) ; \lunit{X} \tag{Lemma \ref{lemma:initialproperties}.\ref{lemma:pi2}}\\
        & =\, \diagp{X} ; ( \, (\diagq{X} ; (\bang{X} \oplus \bang{X}) ) \oplus \id{X} \,)  ; \lunit{X} \tag{Symmetric Monoidal Category}\\
        &=\, \diagp{X} ;  (\bang{X}  \oplus \id{X} )  ; \lunit{X} \tag{Lemma \ref{lemma:naturality}} \\
        &=  (1-p)\cdot \id{X} \tag{Lemma \ref{lemma:initialproperties}.\ref{lemma:1-pf}}
    \end{align*}
    
Similarly, by fixing  $\tilde{p}\defeq  pq$ and $\tilde{q} \defeq \frac{p(1-q)}{1-pq}$, it holds  that
\begin{align}
    \diagptilde{}; (\id{X}\oplus\, \diagqtilde{}) ; \Iassoc{X}{X}{X}; \pi_1 &=\, \diagptilde{}; (\id{X}\oplus\,  \diagqtilde{}) ; \Iassoc{X}{X}{X}; ( \id{X \oplus X} \oplus \bang{X}) ; \runit{X\oplus X} \tag{Lemma \ref{lemma:initialproperties}.\ref{lemma:pi1}}\\
     &=\, \diagptilde{}; (\id{X}\oplus\, \diagqtilde{}) ; ( \id{X} \oplus (\id{X} \oplus \bang{X})) ; \runit{X} \tag{Symmetric Monoidal Category}\\
 &=\, \diagptilde{}; (\,\id{X}\oplus ( \diagqtilde{} ;(\id{X} \oplus \bang{X})) \,); \runit{X} \tag{Symmetric Monoidal Category}\\
&=\, \diagptilde{}; (\,\id{X}\oplus \tilde{q}\cdot \id{X} \,) \tag{Lemma \ref{lemma:initialproperties}.\ref{lemma:pf}}\\
    &=\langle \id{X}, \id{X} \rangle_{\tilde{p},(1-\tilde{p})\tilde{q}} \tag{Lemma \ref{lemma:diagp}}\\
    &= \langle \id{X}, \id{X} \rangle_{pq,p(1-q)} \tag{def.\ of $\tilde{p}$ and $\tilde{q}$}
\end{align} 
    \begin{align}
    \diagptilde{}; (\id{X}\oplus\, \diagqtilde{}); \Iassoc{X}{X}{X};\pi_2&=\, \diagptilde{};( \id{X}\oplus\, \diagqtilde{}); \Iassoc{X}{X}{X}; (\bang{X\piu X}\piu \id{X});\lunit{X}  \tag{Lemma \ref{lemma:initialproperties}.\ref{lemma:pi2}}\\
    &=\, \diagptilde{};( \id{X}\oplus\, \diagqtilde{}); \Iassoc{X}{X}{X}; ((\bang{X}\piu \bang{X})\piu \id{X});\lunit{X}  \tag{Lemma \ref{lemma:coherence}}\\
    &=\, \diagptilde{}; (\bang{X} \oplus \id{X}); \lunit{X} ; \diagqtilde{X}; (\bang{X}\piu \id{X}) ;\lunit{X}  \tag{Symmetric Monoidal Category}\\
    &=\, \diagptilde{}; (\bang{X} \oplus \id{X}); \lunit{X} ; (1-\tilde{q})\cdot \id{X}  \tag{Lemma \ref{lemma:initialproperties}.\ref{lemma:1-pf}}\\
    &=\, \diagptilde{}; (\bang{X} \oplus ( (1-\tilde{q})\cdot \id{X} )) ;  \lunit{X} \tag{Symmetric Monoidal Category}\\
    &= (1-\tilde{p}) \cdot ((1-\tilde{q})\cdot \id{X} )  \tag{Lemma \ref{lemma:initialproperties}.\ref{lemma:1-pf}}\\
    &=  (1-\tilde{p}) (1-\tilde{q}) \cdot \id{X} \tag{Lemma \ref{lemma:initialproperties2}.\ref{lemma:pq}}\\
    &= (1-p) \cdot \id{X} \tag{def. $\tilde{p}$ and $\tilde{q}$}
\end{align}
    Hence, $\diagp{} ; (\diagq{} \oplus \id{X}) =\, \diagptilde{} ; \id{X}\oplus\, \diagqtilde{}; \Iassoc{X}{X}{X}$.\\

    (PCA2) in Figure~\ref{fig:co-pca axioms} follows from the coherence of the PCA-enrichment (i.e.,  Lemma~\ref{lemma:initialproperties}.\ref{eq: assioma aggiuntivo}):
  \begin{align*}
    \diagp{X}; \codiag{X} &=\,   \diagp{X}; (\id{X} \oplus \id{X}) \codiag{X}\\
    &= \id{X}+_p \id{X} \tag{Lemma~\ref{lemma:initialproperties}.\ref{eq: assioma aggiuntivo}}\\
    &=\id{X} \tag{\ref{eq:pca}}
    \end{align*}
    \\

    (PCA3) in Figure~\ref{fig:co-pca axioms} follows from the universal property of convex products. Indeed, observe that
    \begin{align*}
        \diagp{X};\symm{X}{X};\pi_1 &=\, \diagp{X};\symm{X}{X};(\id{X}\piu \bang{X});\runit{X} \tag{Lemma \ref{lemma:initialproperties}.\ref{lemma:pi1}}\\
        &=\, \diagp{X};(\bang{X} \piu \id{X});\lunit{X} \tag{Symmetric Monoidal Category}\\
        &= (1-p)\cdot \id{X}. \tag{Lemma \ref{lemma:initialproperties}.\ref{lemma:1-pf}}
    \end{align*}
    And similarly for $\pi_2$ we have that $\diagp{X};\symm{X}{X};\pi_2 = p\cdot \id{X}$. Hence, the universal property of convex products implies that $\diagp{X};\symm{X}{X} =\, \diagpbar{X}$.
 \end{proof}

\begin{proposition}\label{prop: functor1}
    Let $F\colon \Cat{C}\to \Cat{D}$ be a morphism between convex biproduct categories. 
     $F$ preserves convex products.
   \end{proposition}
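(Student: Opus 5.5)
We have to show that if $(X_1\oplus X_2,\pi_1,\pi_2)$ is a convex product in $\Cat{C}$, then $(F(X_1\oplus X_2),F\pi_1,F\pi_2)$ is a convex product of $FX_1,FX_2$ in $\Cat{D}$. The plan is to transport the problem to the canonical convex product $FX_1\oplus FX_2$ of $\Cat{D}$ along the comparison isomorphism coming from coproduct preservation. Since $F$ preserves finite coproducts, the arrow $\phi\defeq[F\iota_1,F\iota_2]\colon FX_1\oplus FX_2\to F(X_1\oplus X_2)$ is an isomorphism in $\Cat{D}$, with $\iota_i^{\Cat{D}};\phi=F\iota_i$.

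The key step is to show that $\phi;F\pi_j=\pi_j^{\Cat{D}}$ for $j=1,2$. By the universal property of the coproduct $FX_1\oplus FX_2$, it suffices to check this after precomposing with each injection $\iota_i^{\Cat{D}}$. On the left we get $F\iota_i;F\pi_j=F(\iota_i;\pi_j)$, which by \eqref{eq:delta} equals $F(\id{X_i})=\id{FX_i}$ when $i=j$, and $F(\star_{X_i,X_j})$ otherwise. Since $F$ is $\Cat{PCA}$-enriched, it preserves $\star$, so $F(\star_{X_i,X_j})=\star_{FX_i,FX_j}$. On the right, $\iota_i^{\Cat{D}};\pi_j^{\Cat{D}}$ takes exactly the same values by \eqref{eq:delta} in $\Cat{D}$. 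Hence $F\pi_j=\phi^{-1};\pi_j^{\Cat{D}}$.

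Next we transfer the universal property. Given $p_1+p_2\le 1$ and $f_i\colon A\to FX_i$, put $h\defeq\langle f_1,f_2\rangle_{p_1,p_2};\phi$. Then $h;F\pi_j=\langle f_1,f_2\rangle_{p_1,p_2};\pi_j^{\Cat{D}}=p_j\cdot f_j$, so $h$ has the required property. For uniqueness, suppose $h'$ also satisfies $h';F\pi_j=p_j\cdot f_j$. Then $h';\phi^{-1}$ satisfies the defining equations of $\langle f_1,f_2\rangle_{p_1,p_2}$ in $\Cat{D}$, so $h';\phi^{-1}=\langle f_1,f_2\rangle_{p_1,p_2}$, and therefore $h'=h$.

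The final object is handled in the same way: $F\zero$ is initial, because $F$ preserves the empty coproduct, hence $F\zero\cong\zero$ and it is also final. For the $n$-ary case (Proposition~\ref{prop:functor 1 e mezzo}) we either iterate the construction of Lemma~\ref{lemma:naryconvexproduct} or repeat the same argument with $n$ injections. The only genuinely delicate point is the identification $\phi;F\pi_j=\pi_j^{\Cat{D}}$, and the preservation of $\star$ by enriched functors is exactly what makes it work.
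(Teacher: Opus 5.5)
Your proof is correct and follows the paper's argument: invert the comparison map $[F\iota_1,F\iota_2]$, show that $F\pi_j$ equals this inverse followed by the canonical projection of $FX_1\oplus FX_2$, and then transport existence and uniqueness of the mediating arrow. The only difference is in checking that identification: you test it on the injections using \eqref{eq:delta} and the fact that the $\Cat{PCA}$-enriched functor preserves $\star$, whereas the paper computes it from $\pi_1=(\id{X_1}\oplus\bang{X_2});\runit{X_1}$ using that $F$ preserves the initial (hence zero) object; both routes work.
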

   \begin{proof}[Proof of Proposition \ref{prop: functor1}]
    Preservation of finite coproducts is equivalent to preservation of binary coproducts and initial object. Hence, assume that $F(X)\overset{F(\iota_1)}{\to}F(X\piu Y)\overset{F(\iota_2)}{\leftarrow}$ is a coproduct of $F(X)$ and $F(Y)$, which is equivalent to requiring that the arrow $[F(\iota_1),F(\iota_2)]\colon F(X)\piu F(Y)\to F(X\piu Y)$ is an isomorphism (call $k$ the inverse). And assume also that $F(\zero_\Cat{C})$ is initial in $\Cat{D}$ (hence $\cobang{F(\zero_{\Cat{C}})}$ is an isomorphism). Now observe that
    \begin{align}
        F(\pi_1^{X\piu Y})&= F(\id{X}\piu\, \bangp{Y};\rho_X) \tag{Lemma \ref{lemma:initialproperties}.\ref{lemma:pi1}}\\
        &=F(\id{X}\piu\, \bangp{Y});F([\id{X}, \cobang{X}]) \tag{def. $\rho$}\\
        &= F([\id{X}, \bangp{Y};\cobang{X}]) \tag{Coproducts}\\
        &=k;[\id{F(X)}, F(\,\bangp{Y};\cobang{X})] \tag{coproducts and $k$ iso}\\
        &= k;(\id{F(X)}\oplus\, \bangp{F(Y)});[\id{F(X)},\cobang{F(X)}] \tag{$F(\zero_\Cat{C})\cong\zero_\Cat{D}$}\\
        &= k;\pi_1^{F(X)\piu F(Y)} \tag{def.\ $\rho$}
    \end{align}
    and similarly $F(\pi_2^{X\piu Y})=k;\pi_2^{F(X)\piu F(Y)}$. We can now prove that $F(X\piu Y)$ with projections $F(\pi_1^{X\piu Y})$ and $F(\pi_2^{X\piu Y})$ is a convex binary product of $F(X)$ and $F(Y)$. Let $f\colon A\to F(X)$ and $g\colon A\to F(Y)$ and $p\in (0,1)$ and take the arrow $\langle f,g \rangle_{p,1-p};[F(\iota_1), F(\iota_2)]\colon A \to F(X \piu Y)$, where $\langle f,g \rangle_{p,1-p}\colon A \to F(X)\piu F(Y)$ is the unique arrow induced by convex binary products in $\Cat{D}$. Now, since $F(\pi_1^{X\piu Y})= k; \pi_1^{F(X\piu F(Y))}$ it follows that 
    \begin{align}
        \langle f,g \rangle_{p,1-p};[F(\iota_1), F(\iota_2)];F(\pi_1^{X\piu Y})&= \langle f,g \rangle_{p,1-p};[F(\iota_1), F(\iota_2)];k;\pi_1^{F(X)\piu F(Y)}  \tag*{}\\
        &= \langle f,g \rangle_{p,1-p};\pi_1^{F(X)\piu F(Y)} \tag{$k$ inverse}\\
        &=p\cdot f \tag{convex products}
    \end{align}
    and similarly $\langle f,g \rangle_{p,1-p};[F(\iota_1), F(\iota_2)];F(\pi_2^{X\piu Y})=(1-p)\cdot g$. For the uniqueness, if $h:A \to F(X\piu Y)$ is such that $h;F(\pi_1^{X\piu Y})=p\cdot f$ and $h;F(\pi_2^{X\piu Y})=(1-p)\cdot g$ then $h;k= \langle f,g \rangle_{p,1-p}$. Post-composition with $[F(\iota_1),F(\iota_2)]$ gives $h= \langle f,g \rangle_{p,1-p};[F(\iota_1), F(\iota_2)]$.
   \end{proof}
   
   \begin{proof}[Proof of Proposition \ref{prop:functor 1 e mezzo}]
    It follows from Proposition~\ref{prop: functor1} and Lemma~\ref{lemma:naryconvexproduct} which shows how to produce n-ary convex products from binary ones.
\end{proof}

\begin{proof}[Proof of Proposition \ref{prop: monoidal functors}]\label{proof: prop monoidal functors}
By Fox's theorem we know that every finite coproduct category $\Cat{C}$ (hence any convex biproduct category) correspond to  monoidal category $(\Cat{C},\oplus_\Cat{C},\zero_\Cat{C})$ in which every object $X$ is equipped with a coherent and natural monoid structure $(X,\codiag{X},\cobang{X})$. Moreover, a functor $F\colon \Cat{C}\to \Cat{D}$ between finite coproduct  categories preserves finite coproducts if and only if seen as a monoidal functor $F\colon(\Cat{C},\oplus_\Cat{C},\zero_\Cat{C})\to (\Cat{D},\oplus_\Cat{D},\zero_\Cat{D})$ it is strong monoidal  and preserves monoids
    \begin{equation*}
        \psi_{X,X};F(\codiag{X})= \codiag{F(X)} \qquad \psi_\zero;F(\cobang{X})=\cobang{F(X)}
    \end{equation*}
    where  $\psi\colon  \piu_\Cat{C}\circ (F\times F) \overset{\cong}{\Rightarrow} F\circ \piu_{\Cat{D}}$ and $\psi_\zero\colon \zero_\Cat{D}\overset{\cong}{\to} F(\zero_\Cat{C})$ are given by the strong monoidal structure. For any convex biproduct category,  Proposition~\ref{lemma: copca objects in convbicat} implies that every object $X$ is equipped with a natural and coherent co-pca structure $(X,\diagp{X}, \bangp{X})$. In this case, a strong monoidal functor that preserves monoids between monoidal categories, in which every object has the structure of a coherent and natural monoid and co-pca, also preserves $\,\bangp{X}$. Indeed,
     \begin{align}
        \psi_0^{-1}&= \psi_0^{-1};\,\bangp{0} \tag{\ref{ax:coh3} in Figure~\ref{fig:copcacoherence} }\\
        &=\bangp{F(0)} \tag{\ref{eq:nat copca1}}
    \end{align}
    Hence, $\,\bangp{X}$ is preserved: $F(\,\bangp{X});\psi_\zero^{-1}=F(\,\bangp{X});\bangp{F(0)}=\bangp{F(X)}$. 
    
    Moreover, observe that a convex biproduct category $\Cat{C}$ corresponds to a monoidal category 
    $(\Cat{C},\oplus_\Cat{C},\zero_\Cat{C})$ in which every object is equipped with coherent and natural monoid and co-pca structures and moreover it holds that given $f:X\to A$ and $g:X\to B$ then for every $h\colon X\to A\piu B$
    \begin{equation}\label{eq:ax aggiuntivo per fox convex biprod}
         \begin{cases}
    h;(\id{A}\piu\, \bangp{B})=\, \diagp{X};(f\piu \,\bangp{B})\\
    h;(\,\bangp{A}\piu \id{B})=\,\diagp{X};(\,\bangp{A}\piu g)
\end{cases}
\qquad
\Rightarrow 
\qquad 
h=\,\diagp{X};(f\piu g)
    \end{equation}
    since $\Cat{C}$ has convex binary products.
    Now if $F$ is PCA-enriched, then 
    \begin{align}
        F(\diagp{X});\psi_{X,X}^{-1};(\id{F(X)}\piu\, \bangp{F(X)})&=F(\diagp{X});\psi_{X,X}^{-1};(\id{F(X)}\piu\, \bangp{F(X)}); \rho_{F(X)};\rho^{-1}_{F(X)} \tag*{}\\
        &=F(\diagp{X});\psi_{X,X}^{-1};(\id{F(X)}\piu\, \bangp{F(X)}); (\id{F(X)}\piu \cobang{F(X)});\codiag{F(X)};\rho_{F(X)}^{-1} \tag{Def. $\rho$}\\
        &=F(\diagp{X});F(\id{X}\piu (\,\bangp{X};\cobang{X})); \psi_{X,X}^{-1};\codiag{F(X)};\rho^{-1}_{F(X)}\tag{Nat. $\psi$}\\
        &=F(\diagp{X});F(\id{X}\piu (\,\bangp{X};\cobang{X}));F(\codiag{X});\rho^{-1}_{F(X)} \tag{$\psi_{X,X};F(\codiag{X})= \codiag{F(X)} $}\\
        &= F(p\cdot \id{X});\rho^{-1}_{F(X)}  \tag{Lemma \ref{lemma:initialproperties}.\ref{eq: assioma aggiuntivo}}\\
        &= p\cdot \id{F(X)};\rho^{-1}_{F(X)}\tag{PCA-enrichment}\\
        &=\, \diagp{F(X)};(\id{F(X)}\piu\,\bangp{F(X)});\rho_{F(X)};\rho^{-1}_{F(X)}\tag{Lemma \ref{lemma:initialproperties}.\ref{eq: assioma aggiuntivo} + Def. $\rho$ + Lemma \ref{lemma:initialproperties}.\ref{lemma:starxy}}\\
        &=\, \diagp{F(X)};(\id{F(X)}\piu\,\bangp{F(X)}) \tag*{}
    \end{align}
and similarly one can prove that $F(\diagp{X});\psi_{X,X}^{-1};(\,\bangp{F(X)}\piu\id{F(X)})= \, \diagp{F(X)};(\,\bangp{F(X)}\piu \id{F(X)})$. Hence, by property (\ref{eq:ax aggiuntivo per fox convex biprod}) it follows that $F(\diagp{X});\psi_{X,X}^{-1}=\, \diagp{F(X)}$.
Vice versa, if $F\colon(\Cat{C},\oplus_\Cat{C},\zero_\Cat{C})\to (\Cat{D},\oplus_\Cat{D},\zero_\Cat{D})$ also preserves diagonals $\diagp{}$ then it is PCA-enriched since
\begin{align}
    F(f+_pg)&= F(\diagp{X};(f\piu g);\codiag{X}) \tag{Lemma \ref{lemma:initialproperties}.\ref{eq: assioma aggiuntivo}}\\
    &=\, \diagp{F(X)};\psi_{X,X};F(f\piu g; \codiag{X}) \tag*{}\\
&=\, \diagp{F(X)};F(f)\piu F(g);\codiag{F(X)}\tag*{}\\
&= F(f)+_p F(g) \tag{Lemma \ref{lemma:initialproperties}.\ref{eq: assioma aggiuntivo}}
\end{align}
 Similarly, the preservation of $\,\bangp{X}$ and $\cobang{Y}$ implies that $ F(\star_{X,Y})= \star_{F(X),F(Y)}$.

\end{proof}

   \subsection{Proofs of Section \ref{ssec:almost}}
   
   \begin{proof}[Proof of Lemma \ref{lemma:enrichmentpca}]
We first prove that for all objects $X,Y$ in $\Cat{C}$, the structure defined above is a pca, namely that the equalities in \eqref{eq:pca} hold.
   For idempotency, we have the following derivation. 
   \begin{align}
        f +_p f & =\, \diagp{X};(f \oplus f);\codiag{Y} \tag{def. $+_p$}\\
        & =\, \diagp{Y};\codiag{Y};f \tag{\ref{eq:nat monoid1}}\\
        & = f. \tag{\ref{ax:PCA2}}
   \end{align} 
For commutativity, we have the following derivation.
    \begin{align}
        f +_p g & =\, \diagp{X};(f \oplus g);\codiag{Y} \tag{def. $+_p$}\\
        & =\, \diagp{X};\sigma_{X,X};(g \oplus f);\codiag{Y} \tag{SMC}\\
        & =\, \diaggen{1-p}{X} ;(g \oplus f);\codiag{Y} \tag{\ref{ax:PCA3}}\\
        & = g +_{1-p} f. \tag{def. $+_{1-p}$}
    \end{align}
    For associativity, consider $f,g,h\colon X \to Y$ and $p,q\in[0,1]$. We have:
    \begin{align}
        (f +_q g) +_p h & =\, \diagp{X};((f +_q g) \oplus h);\codiag{Y} \tag{def. $+_p$}\\
        & =\, \diagp{X};(\diagq{X} \oplus \id{X});((f \oplus g) \oplus h);(\codiag{Y}\oplus \id{Y});\codiag{Y} \tag{def. $+_q$}\\
        & =\, \diagptilde{X};(\id{X} \oplus \diagqtilde{X});(f \oplus (g \oplus h));( \id{X}\oplus\codiag{Y} );\codiag{Y} \tag{\ref{ax:PCA1}+\ref{ax:Mon1}}\\
        & =\, \diagptilde{X};(f \oplus (g +_{\tilde{q}} h));\codiag{Y} \tag{def. $+_{\tilde{q}}$}\\
        & = f +_{\tilde{p}} (g +_{\tilde{q}} h). \tag{def. $+_{\tilde{p}}$}
    \end{align}

    We pass now to the equations in \eqref{eq:enr}. For every $e\colon Z \to X$ and $f,g\colon X \to Y$, we have:  
    \begin{align}
        e;(f +_p g) & = e;\,\diagp{X};(f \oplus g);\codiag{Y} \tag{def. $+_p$}\\
        & =\, \diagp{X};(e \oplus e);(f \oplus g);\codiag{Y} \tag{\ref{eq:nat copca1}}\\
        & =\, \diagp{X};((e;f) \oplus (e;g));\codiag{Y} \tag{SMC}\\
        & = (e;f) +_p (e;g). \tag{def. $+_{p}$}
    \end{align}
The second equation is proved similarly. The third equation is proved below for $f\colon Z\to X$.
    \begin{align}
        f;\star_{X,Y} & = f;\,\bangp{X};\cobang{Y} \tag{def. $\star$}\\
        & =\, \bangp{Z};\cobang{Y} \tag{\ref{eq:nat copca2}}\\
        & = \star_{Z,Y}. \tag{def. $\star$}
    \end{align}
\end{proof}

\begin{proof}[Proof of Lemma \ref{lemma:coproducts}]
The first two points follow immediately by the Fox theorem \cite{fox1976coalgebras}. Naturality of $\,\bangp{X}$ implies that $0$ is also a terminal object, hence a zero object. 
  To prove that axiom \eqref{eq:delta} holds, consider the following equalities:
    \begin{align}
        \iota_1;\pi_1 & = \Irunit{X_1};(\id{X_1}\oplus \cobang{X_2});(\id{X_1}\oplus \bang{X_1});\runit{X_1} \tag{def. of $\iota_1$ and $\pi_1$}\\
        & = \Irunit{X_1};(\id{X_1}\oplus (\cobang{X_2};\bang{X_1}));\runit{X_1} \tag{SMC}\\
        & = \Irunit{X_1};(\id{X_1}\oplus \id{0});\runit{X_1} \tag{\ref{eq:nat copca2} + \ref{ax:coh3}}\\
        & = \id{X_1} \tag{SMC}
    \end{align}
    Similarly, one can prove that $\iota_2;\pi_2 = \id{X_2}$. Now, for $i\neq j$, we have: 
    \begin{align}
        \iota_1;\pi_2 & = \Irunit{X_1};(\id{X_1}\oplus \cobang{X_2});(\bang{X_1}\oplus \id{X_1});\lunit{X_1} \tag{def. of $\iota_1$ and $\pi_2$}\\
        & = \Irunit{X_1};((\id{X_1};\bang{X_1})\oplus (\cobang{X_2};\id{X_1}));\lunit{X_1} \tag{SMC}\\
        & = \Irunit{X_1};(\bang{X_1}\oplus \cobang{X_2});\lunit{X_1} \tag{SMC}\\
        & = \Irunit{X_1};(\,\bangp{X_1}\oplus \id{0});(\id{0}\oplus \cobang{X_2});\lunit{X_1}\tag{SMC}\\  
        & = \bang{X_1};\Irunit{0};\lunit{0};\cobang{X_2}    \tag{Nat. $\Irunit{},\lunit{}$}     \\
        & = \bang{X_1};\cobang{X_2} \tag{$\runit{0}=\lunit{0}$} \\
        & =\star_{X_1,X_2} \tag{Lemma~\ref{lemma:enrichmentpca}}
    \end{align}
    and similarly $\iota_2;\pi_1 = \bang{X_2};\cobang{X_1}$.
\end{proof}

\begin{proof}[Proof of Lemma \ref{lemma:diagpq}] 
We prove the statement only for $p,q\in(0,1)$ such that $p+q<1$, the other cases are similar and left to the reader.
Point \eqref{lemma:diagpq1} is proved by the following derivation.

    \begin{align}
        \diaggen{p,q\,}{X};\pi_1 
        &=\, \diaggen{p,q\,}{X};(\id{X}\oplus \bang{X});\runit{X} \tag{\eqref{eq:defpi}}\\
         & =\, \diagp{X};(\id{X}\oplus \diaggen{\frac{q}{1-p}}{X});(\id{X}\oplus (\id{X} \oplus\bang{X}));(\id{X}\oplus \runit{X});(\id{X}\oplus \bang{X});\runit{X} \tag{\eqref{diagpq generalizzata}}\\
         &= \, \diagp{X};(\id{X}\oplus \diaggen{\frac{q}{1-p}}{X});(\id{X}\oplus (\id{X} \oplus\bang{X}));(\id{X}\oplus (\runit{X};\,\bangp{X}));\runit{X} \tag{SMC}\\
         &= \, \diagp{X};(\id{X}\oplus \diaggen{\frac{q}{1-p}}{X});(\id{X}\oplus (\id{X} \oplus\bang{X}));(\id{X}\oplus ((\,\bangp{X}\oplus \id{0});\runit{0}));\runit{X} \tag{Nat. $\runit{}$}\\
         &= \, \diagp{X};(\id{X}\oplus \diaggen{\frac{q}{1-p}}{X});(\id{X}\oplus (\id{X} \oplus\bang{X}));(\id{X}\oplus ((\,\bangp{X}\oplus\, \bangp{0});\runit{0}));\runit{X} \tag{\ref{ax:coh3}}\\
        & =\, \diagp{X};(\id{X}\oplus \diaggen{\frac{q}{1-p}}{X});(\id{X}\oplus (\id{X} \oplus\bang{X}));(\id{X}\oplus \bang{X\oplus 0});\runit{X} \tag{\ref{eq:coherence cobang}}\\
        & =\, \diagp{X};(\id{X}\oplus \diaggen{\frac{q}{1-p}}{X});(\id{X}\oplus ((\id{X} \oplus\bang{X});\bang{X\oplus 0}));\runit{X} \tag{SMC}\\
        & =\, \diagp{X};(\id{X}\oplus \diaggen{\frac{q}{1-p}}{X});(\id{X}\oplus \bang{X\oplus X}); \runit{X} \tag{\ref{eq:nat copca2}}\\   
        & =\, \diagp{X};(\id{X}\oplus \diaggen{\frac{q}{1-p}}{X});(\id{X}\oplus ((\bang{X} \oplus \bang{X});\Ilunit{0})); \runit{X} \tag{\ref{eq:coherence cobang}}\\
        & =\, \diagp{X};(\id{X}\oplus (\diaggen{\frac{q}{1-p}}{X};(\bang{X} \oplus \bang{X});\Ilunit{0})); \runit{X} \tag{SMC}\\
        & =\, \diagp{X};(\id{X}\oplus (\,\bangp{X};\diaggen{\frac{q}{1-p}}{0};\Ilunit{0})); \runit{X} \tag{\ref{eq:nat copca1}}\\
        & =\, \diagp{X};(\id{X}\oplus\, \bang{X});\runit{X} \tag{\ref{eq:star I = id I} }\\
        & = \,\diagp{X};(\id{X}\oplus \bang{X});(\id{X}\oplus\cobang{X});\codiag{X} \tag{\ref{ax:Mon2}}\\
        & =\, \diagp{X};(\id{X}\oplus \bang{X};\cobang{X});\codiag{X} \tag{SMC}\\
        & =\, \diagp{X};(\id{X}\oplus \star_{X,X});\codiag{X} \tag{\eqref{eq:enrichment}}\\
        & = \id{X}+_p \star \tag{\eqref{eq:enrichment}}\\
        & = p\cdot \id{X}. \tag{def. of $p\cdot-$}
    \end{align}
    Point \eqref{lemma:diagpq2} is proved similarly. Point~\eqref{lemma:diagpq4} is proved by the following derivation.
    \begin{align}
         \diaggen{p,q\,}{X};(f_1\oplus f_2);(\id{Y_1}\oplus \bang{Y_1});\runit{Y_1} 
        & =\, \diaggen{p,q\,}{X};(\id{X}\oplus \bang{X}); (f_1 \oplus \id{0});\runit{Y_1} \tag{\ref{eq:nat copca2}}\\
        & =\, \diaggen{p,q\,}{X};(\id{X}\oplus \bang{X});\runit{X};f_1 \tag{Nat. $\rho$}\\
        & =\, (p\cdot\id{X});f_1 \tag{Lemma~\ref{lemma:diagpq}.\ref{lemma:diagpq1}}\\
        & = p\cdot f_1 \text{.} \tag{pca-enrichment}
    \end{align}
    Similarly, one can prove point~\eqref{lemma:diagpq5}. 
    \end{proof}

\begin{proof}[Proof of Lemma \ref{lemma:diagpq3}]
By Proposition~\ref{lemma: copca objects in convbicat}, every object in $\Cat{C}$ is equipped with natural and coherent monoid and co-pca structures. By the last two items of Lemma \ref{lemma:diagpq}, $\diaggen{p,q\,}{Z};(f_1\oplus f_2); \pi_1= p\cdot f_1$ and $\diaggen{p,q\,}{Z};(f_1\oplus f_2); \pi_2= q\cdot f_2$. Since $\Cat{C}$ has convex products, the statement holds by uniqueness.
\end{proof}

%% file: appendices/appcmatrices.tex
\section{Appendix to Section \ref{sec:cmatrix}}\label{app:sec:cmatrix}


\begin{proof}[Proof of Lemma~\ref{lemma:stmat coproduct}]
We proceed by illustrating that, for every object $P$, $(\codiag{P},\cobang{P})$ defined as in \eqref{eq:matmonpca}, is a natural and coherent comonoid.


Observe that, by the definitions in  \eqref{eq:matmonpca}, $\codiag{P}$ where $P=\bigoplus_{k=1}^{n} U_k$ is the matrix
\[\begin{pNiceMatrix}
1\cdot{\id{U_1}}  	& \emptyset & \Cdots & \emptyset & 1\cdot{\id{U_1}}  	& \emptyset & \Cdots & \emptyset \\
\emptyset  &   & \Ddots & \Vdots & \emptyset  &   & \Ddots & \Vdots \\	
\Vdots & \Ddots &   & \emptyset & \Vdots & \Ddots &   & \emptyset \\
\emptyset  & \Cdots & \emptyset  & 1\cdot{\id{U_{n}}} & \emptyset  & \Cdots & \emptyset  & 1\cdot{\id{U_{n}}} 
\CodeAfter
\line{1-1}{4-4}
\line{1-5}{4-8}
\end{pNiceMatrix}\]
while $\cobang{U}$ is the unique $n\times 0$ matrix.

Axioms in Table~\ref{fig:freestrictfccat} follow by matrix multiplication. For instance, the associativity axiom \ref{eq:codiag assoc} is obtained by the following computation:
\[ (\id{ P}\piu \codiag{ P}) ; \codiag{ P}=  \begin{pmatrix}
 	 \id{P} &  \id{P} &   \id{P}
 \end{pmatrix}= (\codiag{ P}\piu \id{P});\codiag{P}
 \]
and the naturality  \ref{eq:codiag nat} by:
\[ (M\piu M);\codiag{Q} 
=\begin{pmatrix}
    M & M\end{pmatrix}=
\codiag{P};M\]
where $M\colon P\to Q$. Finally, 
\ref{eq:cobang nat} is obvious since the multiplication of an $m\times n$ matrix with the $n\times 0$ empty matrix is the empty $m\times 0$ matrix. 

We can collect the above observations and conclude that $\stmat{\Cat{C}}$ is a strict symmetric monoidal category in which every object $U$ is equipped with a coherent and natural monoid structure $(U,\codiag{U},\cobang{U})$, hence, by Fox's theorem $\stmat{\Cat{C}}$ is a finite coproduct category.
\end{proof}

\begin{proof}[Proof of Lemma~\ref{lemma:stmat copca}]
    Recall from \eqref{eq:matmonpca} the structure $(\diagp{P},\bang{P})$. 
For $P=\bigoplus_{k=1}^{n} U_k$, $\diagp{P}$ is the $(2n\times n)$ matrix illustrated below
\[\begin{pNiceMatrix}
p\cdot{\id{U_1}}  	& \emptyset & \Cdots & \emptyset   \\
\emptyset  &   & \Ddots & \Vdots   \\	
\Vdots & \Ddots &   & \emptyset   \\
\emptyset  & \Cdots & \emptyset  & p\cdot{\id{U_{n}}}   \\
(1-p)\cdot{\id{U_1}}  	& \emptyset & \Cdots & \emptyset \\
\emptyset  &   & \Ddots & \Vdots \\
\Vdots & \Ddots &   & \emptyset\\
\emptyset  & \Cdots & \emptyset  & (1-p)\cdot{\id{U_{n}}}
\CodeAfter
\line{1-1}{4-4}
\line{5-1}{8-4}
\end{pNiceMatrix}\]
while $\,\bangp{P}$ is the empty $0\times n$ matrix.
Axioms in Table~\ref{fig:freecopcacat} follow by matrix multiplication. For instance, in the case $U\in\Cat{C}$, the associativity axiom \ref{eq:diagp assoc} which states that  $\diagp{U};(\diagq{U}\piu \id{U})=\, \diagptilde{U};(\id{U}\piu \diagqtilde{U})$ where $\tilde{p}= pq$ and $ \tilde{q}= \frac{p(1-q)}{1-pq}$
follows, for $U\in\Cat{C}$, by the following computation

\[ \diagp{U};(\diagq{U}\piu \id{U}) = \begin{pmatrix}
    q\cdot \id{U} &  \emptyset \\
    (1-q)\cdot \id{U} &  \emptyset\\
    \emptyset & 1\cdot \id{U}
\end{pmatrix}
\begin{pmatrix}
    p\cdot \id{U} \\
    (1-p)\cdot \id{U}
\end{pmatrix}= \begin{pmatrix}
    pq\cdot \id{U} \\
    p(1-q)\cdot \id{U} \\
    (1-p)\cdot \id{U}
\end{pmatrix} = \begin{pmatrix}
    \tilde{p}\cdot \id{U} \\
    \tilde{q}(1-\tilde{p})\cdot \id{U} \\
    (1-\tilde{p})(1-\tilde{p})\cdot \id{U}
    \end{pmatrix} \]
which is equal to
\[ \diagptilde{U};(\id{U}\piu\, \diagqtilde{U}) = \begin{pmatrix}
    1\cdot \id{U} &  \emptyset\\
    \emptyset & \tilde{q}\cdot \id{U} \\
    \emptyset & (1-\tilde{q})\cdot \id{U}
\end{pmatrix}
\begin{pmatrix}
    \tilde{p}\cdot \id{U} \\
    (1-\tilde{p})\cdot \id{U}
\end{pmatrix}= \begin{pmatrix}
    \tilde{p}\cdot \id{U} \\
    \tilde{q}(1-\tilde{p})\cdot \id{U} \\
    (1-\tilde{p})(1-\tilde{p})\cdot \id{U}
\end{pmatrix} \]
The general case $P=\bigoplus_{k=1}^{n} U_k$ is obtained similarly. Axiom~\ref{eq:diagp idempotency} is obvious since $p\cdot \id{U_i}+ (1-p)\cdot \id{U_i}= \id{U_i}$ by PCA axiom~\ref{eq:pca}. 
Axiom~\ref{eq:diagp nat} is obtained as follows: for $M\colon U\to V$

\[M;\diagp{V} = \begin{pmatrix}
    p\cdot M\\
    (1-p)\cdot M
\end{pmatrix} 
M=\, \diagp{U};M\piu M\]
where $p\cdot M$ denotes the matrix whose $(j,i)$-entries are given by $p\cdot M_{ji}$. 
Finally, axiom \ref{eq:bangp nat} is obvious since the multiplication of an $m\times n$ matrix with the $n\times 0$ empty matrix is the empty $m\times 0$ matrix. Hence, we can collect the above observations and conclude that every object $U\in\stmat{\Cat{C}}$ is equipped with a co-pca structure $(U,\diagp{U},\bangp{U})$ which satisfies the axioms in Table~\ref{fig:freecopcacat}.
    \end{proof}

\begin{proof}[Proof of Proposition \ref{prop:stmatfun}]
    The functor $\stmat{F}$ is strict monoidal and preserves monoids and co-pca objects, hence by Proposition~\ref{prop: monoidal functors} it is a morphism of convex biproduct categories.
\end{proof}

\begin{proof}[Proof of Theorem \ref{thm:matfree}]
    Let $\Cat{C}$ be a $\Cat{PCA}$-enriched category and $\Cat{D}$ a convex biproduct category, and consider a $\Cat{PCA}$-enriched functor $F\colon \Cat{C}\to U(\Cat{D})$. The unit $\eta:\Cat{C}\to \stmat{\Cat{C}}$ is the identity-on-objects functor mapping an arrow $f\in\Cat{C}[U,V]$ into the $(1\times 1)$ matrix with entry $1\cdot f$. 
    
   The functor $F^\sharp:\stmat{\Cat{C}}\to \Cat{D}$ sends an object $\bigoplus_{k=1}^n U_k$ to $\bigoplus_{k=1}^n F(U_k)$ and an arrow $M\colon\bigoplus_{k=1}^n U_k\to \bigoplus_{k=1}^m V_k$, which corresponds to a matrix with entries $M_{ji}= p_{ji}\cdot f_{ji}$, into the arrow of $\Cat{D}$ obtained as follows:
     we first define the image of a column of the matrix, then we observe that $M$ is the copairing of its columns, denoted with $\column{M}{i}$ for $i=1,\dots,n$, and define $F^\sharp(M)$ as the copairing of the images of the columns. The interpretation of the $i$-column $\column{M}{i}$ is defined as the arrow induced by the $m$-ary convex product through the vector $\vec{p}=(p_{1i},\dots,p_{mi})$ and the arrows $F(f_{ji})$ in $\Cat{D}$, which exists by Lemma~\ref{lemma:naryconvexproduct}. $F^\sharp$ is well-defined since, for $M\equiv M'$, it holds that $M_{ji}=p_{ji}\cdot f_{ji}= p'_{ji}\cdot f'_{ji}= M'_{ji}$. Now, for $k=1,\dots, m$
    \begin{align} 
        F^\sharp(\column{M'}{i});\pi_k&=p'_{ki}\cdot F(f'_{ki})\tag{Def. $F^\sharp(\column{M'}{i})$ }\\
        &=F(p'_{ki}\cdot f'_{ki}) \tag{$F$ $\Cat{PCA}$-enriched}\\
        &= F(p_{ki}\cdot f_{ki}) \tag{$M\equiv M'$}\\
        &=p_{ki}\cdot F(f_{ki}) \tag{$F$ $\Cat{PCA}$-enriched}
        \end{align}   
        and the universal property of convex products in $\Cat{D}$ implies that $F^\sharp(\column{M'}{i})= F^\sharp(\column{M}{i})$ and, hence, $F^\sharp(M')=F^\sharp(M)$ by the universal property of coproducts in $\Cat{D}$. Moreover, a simple computation shows that for every object $U\in\stmat{\Cat{C}}$, $F^\sharp$ preserves monoids and co-pca structures, i.e.\ it holds: 
    
    \begin{itemize}
    \item $F^\sharp(\,\diagp{U})=\, \diagp{F(U)}$ and $F^\sharp(\,\bangp{U})=\, \bangp{F(U)}$;
    \item $F^\sharp(\,\codiag{U})= \codiag{F(U)}$ and $F^\sharp(\,\cobang{U})= \cobang{F(U)}$;
    \item $F^\sharp(1\cdot f)= F(f)$, for every $f\colon U\to V$ in $\Cat{C}$;
    \end{itemize}
     Hence, by Proposition~\ref{prop: monoidal functors} it follows that $F^\sharp$ is a morphism of convex biproduct categories. It is also the unique morphism such that $\eta;F^\sharp = F$ since any other morphism $G\colon\stmat{\Cat{C}}\to \Cat{D}$ such that $\eta;G=F$ must preserve coproducts and convex products by Proposition~\ref{prop: functor1} and \ref{prop:functor 1 e mezzo}. Hence, $F^\sharp= G$ since every column of $M$ is the arrow induced by the m-ary convex product through $\vec{p}=(p_{1i},\dots,p_{mi})$ and the arrows $f_{ji}$ for a fixed $i$. 
    \end{proof}

\begin{proof}[Proof of Proposition \ref{prop:counit fullfaithful}]\label{proof:counit fullfaithful}
    Recall that the counit $\epsilon$ is obtained as $\id{\Cat{C}}^\sharp$, following the notation in the proof of Theorem \ref{thm:matfree}. Fullness follows from the fact that every arrow $f\colon A\to B$ in $\Cat{C}$ is equal to $\epsilon(\eta(f))$ by construction. Faithfulness is obtained by observing that if $\epsilon(M)=\epsilon(M')$ for two matrices in $\stmat{U(\Cat{C})}$, $M,M'\colon\bigoplus_{k=1}^n U_k\to \bigoplus_{k=1}^m V_k$ with entries $M_{ji}= p_{ji}\cdot f_{ji}$ and $M'_{ji}= p'_{ji}\cdot f'_{ji}$ , then the copairing of the image through $\epsilon$ of the columns of $M$ is equal to the copairing of the images through $\epsilon$ of the columns of $M'$. The universal property of the coproducts in $\Cat{C}$ then ensures that for every $i$-th column of $M$ and $M'$, denoted with $\column{M}{i}$ and $\column{M'}{i}$, for $i=1,\dots,n$, it holds $\epsilon(\column{M}{i})=\epsilon(\column{M'}{i})$. Now, since $\epsilon(\column{M}{i})$ is given by the arrow induced by the $m$-ary convex product through $f_{1i},\dots,f_{mi}$, with $\vec{p}=(p_{1i},\dots,p_{mi})$, and $\epsilon(\column{M'}{i})$ is given by the arrow induced by the $m$-ary convex product through $f'_{1i},\dots,f'_{mi}$, with $\vec{p}'=(p'_{1i},\dots,p'_{mi})$, it follows that $\epsilon(\column{M}{i});\pi_j=p_{ji}\cdot f_{ji}=p'_{ji}\cdot f'_{ji} =\epsilon(\column{M'}{i});\pi_{j}$, for $j=1,\dots,m$. Hence, $M_{ji}\equiv M'_{ji}$ and therefore they are equal in $\stmat{U(\Cat{C})}$.  
    \end{proof} 

%% file: appendices/apptc.tex
\section{Appendix to Section \ref{sec:syntactic}}\label{app:sec:syntactic}

\subsection{Preliminary Results}
We collect below several simple results that we will frequently use.

\begin{lemma}\label{lemma:pcdot}
Let $\t\colon P \to Q$ be an arrow of $\CatTapeC$ . Then, for all $p\in(0,1)$, $p\cdot \t =\, \diagp{P};(\t \oplus \bang{P})$.
\end{lemma}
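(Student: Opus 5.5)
The statement asks us to express $p\cdot \t$ in $\CatTapeC$ through the co-pca structure. The plan is to unfold the definition of the enrichment in \eqref{eq:enrichment} and then simplify with the monoid and co-pca axioms of Tables~\ref{fig:freestrictfccat} and~\ref{fig:freecopcacat}, which hold in $\CatTapeC$ by construction. We use the strict convention, so unitors are identities and $P\oplus \zero = P$. The last $\bang{P}$ in the statement is typed $P\to\zero$, so $\t\oplus\bang{P}\colon P\oplus P\to Q\oplus \zero = Q$.

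First, by the definition of scalar multiplication and of $+_p$ in \eqref{eq:enrichment},
\[p\cdot\t = \t+_p\star_{P,Q} = \diagp{P};(\t\oplus \star_{P,Q});\codiag{Q} = \diagp{P};(\t\oplus(\bang{P};\cobang{Q}));\codiag{Q}.\]

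Next, functoriality of $\oplus$ (the interchange law in Table~\ref{fig:freestricmmoncatax}) splits the middle factor:
\[\t\oplus(\bang{P};\cobang{Q}) = (\t\oplus\bang{P});(\id{Q}\oplus\cobang{Q}).\]

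Then, by the unit law (\ref{eq:codiag unital}) together with commutativity (\ref{eq: codiag symmetry}) and naturality of symmetries, we get $(\id{Q}\oplus\cobang{Q});\codiag{Q}=\id{Q}$. Substituting gives $\diagp{P};(\t\oplus\bang{P})$, as required.

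Alternatively, one can invoke Lemma~\ref{lemma:enrichmentpca} and repeat the argument of Lemma~\ref{lemma:initialproperties}.\ref{lemma:pf}, since that argument only used items (1) and (2) together with the monoid unit law.

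The only mildly delicate step is the right-unit version of (\ref{eq:codiag unital}). The table states only the left-unit form $(\cobang{P}\oplus\id{P});\codiag{P}=\id{P}$. The right-unit form is obtained by precomposing with the symmetry $\sigma_{Q,\zero}$, which is the identity in the strict setting, then using naturality of $\sigma$ and (\ref{eq: codiag symmetry}).

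The cases $p\in\{0,1\}$ are not required, since the lemma is stated for $p\in(0,1)$.
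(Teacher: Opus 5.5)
Your proof is correct and matches the paper's argument step by step. Both unfold $p\cdot\t=\t+_p\star_{P,Q}$ via the definition of the enrichment, rewrite $\star_{P,Q}$ as $\bang{P};\cobang{Q}$, split the middle factor by interchange, and cancel $(\id{Q}\oplus\cobang{Q});\codiag{Q}$ with the monoid unit law. The only difference is that you derive the right-unit form of that law from the left-unit axiom via the symmetry, whereas the paper cites the unit axiom directly.
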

\begin{proof}
\begin{align*}
p\cdot \t &= \t+_p \star_{P,Q} \tag{def. of $p\cdot -$}\\
&=\, \diagp{P} ; (\t \oplus \star_{P,Q}) ; \codiag{Q} \tag{\ref{eq:enrichment}}\\
&=\, \diagp{P} ; (\t \oplus (\bang{P}; \cobang{Q})); \codiag{Q} \tag{\ref{eq:enrichment}}\\
&=\, \diagp{P} ; (\t \oplus  \bang{P}) ;(\id{Q} \oplus \cobang{Q}) ; \codiag{Q} \tag{Symmetric Monoidal Category}\\
&=\, \diagp{P} ; (\t \oplus  \bang{P}) ;\id{Q} \tag{\ref{eq:codiag unital}} \\
&=\, \diagp{P} ; (\t \oplus  \bang{P})  \tag{Category} 
\end{align*}
\end{proof}

\begin{lemma}\label{lemma:pcopairing}
Let $\s = (\s_1 \oplus \s_2) ; \codiag{}$ be an arrow of $\CatTapeC$ . 
Then $r \cdot \s = (r\cdot \s_1 \oplus r\cdot \s_2); \codiag{}$. 
\end{lemma}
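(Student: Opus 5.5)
The plan is to reduce $r\cdot \s$ to a form where the scalar can be pushed through the copairing, using Lemma~\ref{lemma:pcdot} together with the naturality and coherence of the co-pca and monoid structures. Let $\s_1\colon P_1\to Q$ and $\s_2\colon P_2\to Q$, so that $\s\colon P_1\oplus P_2\to Q$.

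First, by Lemma~\ref{lemma:pcdot} (or directly by the definition of $+_r$ and $\star$ in \eqref{eq:enrichment}), we have
\[
r\cdot \s = \s +_r \star_{P_1\oplus P_2,Q} = \diagp{P_1\oplus P_2}[r]\,;\,(\s\oplus \star)\,;\,\codiag{Q}.
\]
Writing $\diagp{}[r]$ here is notation I would avoid. Instead I would work with the equivalent form $r\cdot \s = (\id{}\oplus\star)$-style expressions, using only $+_r$ and the enrichment equations \eqref{eq:enr}.

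A cleaner route is as follows. The right-hand side $(r\cdot\s_1\oplus r\cdot\s_2);\codiag{Q}$ equals $[r\cdot \s_1, r\cdot \s_2]$ in the coproduct sense, and $\s=[\s_1,\s_2]$. By Lemma~\ref{lemma:coproducts}, $\oplus$ is a coproduct in $\CatTapeC$. It therefore suffices to check that precomposing both sides with each injection $\iota_i$ gives the same arrow.

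For the left-hand side, Lemma~\ref{lemma:initialproperties2}.\ref{lemma:p;}, which holds here because $\CatTapeC$ is $\Cat{PCA}$-enriched by Lemma~\ref{lemma:enrichmentpca}, gives $\iota_i;(r\cdot\s)=r\cdot(\iota_i;\s)=r\cdot \s_i$. The last equality uses $\iota_i;(\s_1\oplus\s_2);\codiag{Q}=\s_i$, which follows from the coproduct structure (Fox) and unitality \eqref{eq:codiag unital}.

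For the right-hand side, the same computation yields $\iota_i;(r\cdot\s_1\oplus r\cdot\s_2);\codiag{Q}=r\cdot\s_i$.

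The uniqueness part of the coproduct universal property then concludes the argument.

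The only mildly delicate point is verifying $\iota_1;(\t_1\oplus\t_2);\codiag{Q}=\t_1$ with $\iota_1=\Irunit{};(\id{}\oplus\cobang{})$ from Definition~\ref{definition:structural}. This is a routine calculation: functoriality of $\oplus$, then naturality of $\cobang{}$ \eqref{eq:cobang nat} to turn $\cobang{P_2};\t_2$ into $\cobang{Q}$, then \eqref{eq:codiag unital}.
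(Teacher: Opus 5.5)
Your argument is correct, but it takes a different route from the paper's. The paper proves the lemma by pure equational rewriting:
\begin{itemize}
\item it writes $r\cdot\s$ via Lemma~\ref{lemma:pcdot} as an $r$-split on $P_1\oplus P_2$ followed by $\s$ and a discard;
\item it unfolds that split and discard with the coherence axioms \eqref{eq:diagp coherence} and \eqref{eq:bangp coherence};
\item it substitutes $\s=(\s_1\oplus\s_2);\codiag{Q}$ and rearranges with the symmetric monoidal laws;
\item it recognises the two components as $r\cdot\s_1$ and $r\cdot\s_2$ by Lemma~\ref{lemma:pcdot} again.
\end{itemize}

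You instead treat both sides as arrows out of the coproduct $P_1\oplus P_2$ (Lemma~\ref{lemma:coproducts}) and compare them after precomposing with the injections. The left-hand side is handled by the fact that precomposition preserves $+_r$ and $\star$ (Lemma~\ref{lemma:enrichmentpca} with Lemma~\ref{lemma:initialproperties2}). So you rely on naturality of the co-pca structure and on Fox's theorem, not on the coherence axiom for the split.

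Your version is shorter and shows the lemma is an instance of a general fact: $r\cdot[f,g]=[r\cdot f,r\cdot g]$ in any $\Cat{PCA}$-enriched category with binary coproducts. The paper's version stays inside the equational calculus of tapes, so it reads directly as a diagrammatic derivation and needs no universal property.

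Your route is also free of circularity, and you should say so explicitly. Lemmas~\ref{lemma:enrichmentpca} and~\ref{lemma:coproducts} hold for any category as in Definition~\ref{definition:structural}. They use neither cancellativity nor joint monicity of projections, and those are exactly the facts (Lemmas~\ref{cancellativity} and~\ref{lemma:TC jmono}) that this lemma is later used to establish.

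Two cosmetic remarks. First, delete the abandoned attempt at the start. Second, \eqref{eq:codiag unital} is the left unit law, so the right unit law you need for $\iota_1$ also uses the commutativity axiom of the monoid.
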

\begin{proof}
Assume that $\s$ has type $P_1\oplus P_2 \to Q$ and observe that
\begin{align*}
r \cdot \s &=\, \diagpX{r}{P_1\oplus P_2} ; (\s \oplus \bang{}) \tag{Lemma \ref{lemma:pcdot}}\\
&= (\diagpX{r}{P_1} \oplus\, \diagpX{r}{P_2} ) ; (\id{P_1}\oplus \symm{P_1}{P_2} \oplus \id{P_2}) ; (\s \oplus \bang{P_1}\oplus \bang{P_2}) \tag{\ref{eq:diagp coherence}}\\
&= (\diagpX{r}{P_1} \oplus\, \diagpX{r}{ P_2} ) ; (\id{P_1}\oplus \symm{P_1}{P_2} \oplus \id{P_2}) ; (\, ((\s_1 \oplus \s_2) ; \codiag{Q}) \oplus \bang{P_1}\oplus \bang{P_2}\,) \tag{Hypothesis}\\
&= (\diagpX{r}{P_1} \oplus\, \diagpX{r}{P_2} ) ; (\id{P_1}\oplus \symm{P_1}{P_2} \oplus \id{P_2}) ; ( \s_1 \oplus \s_2  \oplus \bang{P_1}\oplus \bang{P_2}); \codiag{Q} \tag{Sym. Mon. Cat.}\\
&= (\diagpX{r}{P_1} \oplus\, \diagpX{r}{ P_2} ) ;( \s_1 \oplus \bang{P_1} \oplus \s_2   \oplus \bang{P_2}); \codiag{Q} \tag{Symmetric Monoidal Category}\\
&= (\diagpX{r}{P_1} ; ( \s_1 \oplus \bang{P_1})) \oplus (\diagpX{r}{ P_2}  ;( \s_2   \oplus \bang{P_2})); \codiag{Q} \tag{Symmetric Monoidal Category}\\
&= (r\cdot \s_1 ) \oplus (r\cdot  \s_2 ); \codiag{Q} \tag{Lemma \ref{lemma:pcdot}}
\end{align*}
\end{proof}

\begin{lemma}\label{lemma:basicTC}
Let $\t=\,\diagp{} ; (\t_1 \oplus \t_2)$ be an arrow of $\CatTapeC$. Then:
\begin{enumerate}
\item $\t; (\id{} \oplus \bang{}) = p\cdot \t_1$;
\item $\t ; (\bang{} \oplus \id{}) = (1-p) \cdot \t_2$;
\item $r \cdot \t =\, \diagp{} ; (r\cdot \t_1 \oplus r\cdot \t_2)$ for all $r\in (0,1)$.
\end{enumerate}
\end{lemma}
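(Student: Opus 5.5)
The plan is to argue purely equationally inside $\CatTapeC$, using the structural laws of Tables~\ref{fig:freestrictfccat}, \ref{fig:freecopcacat} and \ref{fig:freestricmmoncatax} together with Lemma~\ref{lemma:pcdot}. Fix $\t=\,\diagp{P};(\t_1\oplus\t_2)$ with $\t_i\colon P\to Q_i$.

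For item (1), rewrite $\t;(\id{Q_1}\oplus\bang{Q_2})$ by functoriality of $\oplus$ as $\diagp{P};(\t_1\oplus(\t_2;\bang{Q_2}))$. Naturality of $\bang{}$ (\ref{eq:bangp nat}) then gives $\diagp{P};(\t_1\oplus\bang{P})$. By Lemma~\ref{lemma:pcdot} this is exactly $p\cdot\t_1$; strictness absorbs the unitor.

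For item (2), proceed symmetrically. First obtain $\diagp{P};(\bang{P}\oplus\t_2)$ in the same way. Then precompose with the symmetry: using \ref{eq:diagp symmetry}, $\diagp{P}=\,\diaggen{1-p}{P};\sigma_{P,P}$, and naturality of $\sigma$ moves it past $\bang{P}\oplus\t_2$. The result is $\diaggen{1-p}{P};(\t_2\oplus\bang{P});\sigma_{Q_2,\zero}$. Since $\sigma_{Q_2,\zero}$ is the identity in the strict setting (coherence of the symmetric monoidal structure), Lemma~\ref{lemma:pcdot} yields $(1-p)\cdot\t_2$.

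For item (3), first apply Lemma~\ref{lemma:pcdot} to get $r\cdot\t=\,\diagpX{r}{P};(\t\oplus\bang{P})$. Expand $\t$ and use \ref{eq:bangp nat} in the form $\bang{P}=\,\diagp{P};(\bang{P}\oplus\bang{P})$ (with $\bang{P\oplus P}=\bang{P}\oplus\bang{P}$ by coherence). Then write the whole thing as $\diagpX{r}{P};(\diagp{P}\oplus\diagp{P});(\t_1\oplus\t_2\oplus\bang{P}\oplus\bang{P})$.

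The key step, and the main obstacle, is the interchange law
\[\diagpX{r}{P};(\diagp{P}\oplus\diagp{P}) \;=\; \diagp{P};(\diagpX{r}{P}\oplus\diagpX{r}{P});(\id{P}\oplus\sigma_{P,P}\oplus\id{P}),\]
which is a medial law for the co-pca. I would derive it from \ref{eq:diagp nat} applied to the arrow $\diagpX{r}{P}$ together with \ref{eq:diagp coherence} for $\diagp{P\oplus P}$: naturality gives $\diagpX{r}{P};\diagp{P\oplus P}=\,\diagp{P};(\diagpX{r}{P}\oplus\diagpX{r}{P})$, and coherence rewrites $\diagp{P\oplus P}$ as $(\diagp{P}\oplus\diagp{P});(\id{}\oplus\sigma\oplus\id{})$. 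Postcomposing with the inverse symmetry gives the interchange law.

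Substituting this and using naturality of $\sigma$ to reorder the factors produces $\diagp{P};\bigl((\diagpX{r}{P};(\t_1\oplus\bang{P}))\oplus(\diagpX{r}{P};(\t_2\oplus\bang{P}))\bigr)$. Lemma~\ref{lemma:pcdot} applied to each component then gives $\diagp{P};(r\cdot\t_1\oplus r\cdot\t_2)$. The boundary cases $p\in\{0,1\}$ follow directly from the definitions of $\diaggen{0}{}$ and $\diaggen{1}{}$.
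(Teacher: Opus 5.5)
Your proposal is correct and matches the paper's proof. For item 3 the paper uses the same chain: Lemma~\ref{lemma:pcdot}, splitting the discard through the co-pca, the interchange of the two splits obtained from naturality (and coherence) of the co-pca, rearranging by symmetric monoidal laws, and Lemma~\ref{lemma:pcdot} again componentwise. For items 1 and 2 the paper just says they follow from the definition of the enrichment, and your derivations via naturality of the discard, the co-pca symmetry axiom and Lemma~\ref{lemma:pcdot} spell this out.
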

\begin{proof}
Points 1 and 2 follow immediately from the definition of the enrichment.

We prove below point 3 for $\t_1\colon:P\to Q_1$ and $\t_2\colon P\to Q_2$.
\begin{align*}
r \cdot \t &=\, \diagpX{r}{} ; (\t \oplus \bang{Q_1\piu Q_2}) \tag{Lemma \ref{lemma:pcdot}}\\
&=\, \diagpX{r}{} ; (\diagp{} ; (\t_1 \oplus \t_2) \oplus \bang{{Q_1\piu Q_2}}) \tag{Hypothesis}\\
&=\, \diagpX{r}{} ; ( \, (\diagp{} ; (\t_1 \oplus \t_2)) \oplus (\diagp{}; (\bang{Q_1} \oplus \bang{Q_2}))\, ) \tag{\ref{eq:bangp coherence} + \ref{eq:diagp idempotency}}\\
&=\, \diagpX{r}{} ; (\diagp{} \oplus \diagp{}) ; (\,  (\t_1 \oplus \t_2) \oplus (\bang{Q_1} \oplus \bang{Q_2}) \, ) \tag{Symmetric Monoidal Category}\\
&=\, \diagp{} ; (\diagpX{r}{} \oplus \diagpX{r}{}) ; (\id{} \oplus \symm \oplus \id{}) ; (\,  (\t_1 \oplus \t_2) \oplus (\bang{Q_1} \oplus \bang{Q_2}) \, ) \tag{\ref{eq:diagp nat} + Sym. Mon. Cat.}\\
&=\, \diagp{} ; (\diagpX{r}{} \oplus \diagpX{r}{}) ;  (\,  (\t_1 \oplus \bang{Q_1}) \oplus (\t_2 \oplus \bang{Q_2}) \, ) \tag{Symmetric Monoidal Category}\\
&=\, \diagp{} ; ( \,(\diagpX{r}{} ; (\t_1 \oplus \bang{Q_1})) \oplus (\diagpX{r}{} ;(\t_2 \oplus \bang{Q_2})) \, ) \tag{Symmetric Monoidal Category}\\
&=\, \diagp{} ; (r\cdot \t_1 \oplus r\cdot \t_2)  \tag{Lemma \ref{lemma:pcdot}}
\end{align*}
 
\end{proof}

\begin{lemma}\label{lemma:fractions}
Let $\t=\,\diagp{} ; (\t_1 \oplus \t_2)$ and $\s=\, \diagq{} ; (\s_1 \oplus \s_2)$. If $\t_1=\frac{q}{p}\cdot \s_1$ and $\frac{1-p}{1-q} \cdot \t_2 = \s_2$, then $\t = \s$.
\end{lemma}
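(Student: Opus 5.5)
The plan is a direct equational computation in $\CatTapeC$. It does not use joint monicity of projections, since this lemma is presumably one of the tools for proving Lemma~\ref{lemma:TC jmono}. I will assume the implicit side conditions that make the hypotheses meaningful: $0<q\le p<1$, so that $\frac{q}{p}\in(0,1]$ and $\frac{1-p}{1-q}\in(0,1]$. The degenerate values of $p$ and $q$, and the boundary case $q=p$ (where both scalars are $1$ and the claim is trivial), will be handled separately at the end using the conventions $\diagpX{0\;\;}{X}$ and $\diagpX{1\;\;}{X}$.

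First I rewrite $\t_1$ using Lemma~\ref{lemma:pcdot}: $\t_1=\frac{q}{p}\cdot \s_1 = \diagpX{q/p}{P};(\s_1\oplus \bang{P})$. Substituting, and using functoriality of $\oplus$, gives
\[
\t = \diagp{P};(\diagpX{q/p}{P}\oplus \id{P});(\s_1\oplus \bang{P}\oplus \t_2).
\]
Next I apply \eqref{eq:diagp assoc} read right to left, with the roles $p$ and $q/p$. This gives $\tilde p = p\cdot\frac{q}{p}=q$ and $\tilde q = \frac{p(1-q/p)}{1-q}=\frac{p-q}{1-q}=:r$. Hence
\[
\diagp{P};(\diagpX{q/p}{P}\oplus \id{P}) = \diagq{P};(\id{P}\oplus \diagpX{r}{P}),
\]
and therefore $\t=\diagq{P};\bigl(\s_1 \oplus \diagpX{r}{P};(\bang{P}\oplus \t_2)\bigr)$.

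It remains to identify $\diagpX{r}{P};(\bang{P}\oplus\t_2)$ with $(1-r)\cdot \t_2$. To do this, I insert a symmetry and use \eqref{eq:diagp symmetry}: $\diagpX{r}{P} = \diagpX{1-r}{P};\sigma$. By naturality of $\sigma$ this turns the expression into $\diagpX{1-r}{P};(\t_2\oplus\bang{P})$, which equals $(1-r)\cdot\t_2$ by Lemma~\ref{lemma:pcdot}. A one-line computation gives $1-r=\frac{1-p}{1-q}$, so by hypothesis this is $\s_2$. Hence $\t=\diagq{P};(\s_1\oplus\s_2)=\s$.

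The main obstacle is bookkeeping rather than conceptual. I must check that the parameters produced by \eqref{eq:diagp assoc} are exactly $q$ and $r=\frac{p-q}{1-q}$, and that the associators and unitors (trivial under strictness) line up. I also have to dispatch the edge cases where a parameter hits $0$ or $1$, for instance $q=p$ (so $r=0$) or $p=q=0$. In those cases the lemma's identities reduce, via $\diagpX{0\;\;}{X}=\cobang{X}\oplus\id{X}$, $\diagpX{1\;\;}{X}=\id{X}\oplus\cobang{X}$ and naturality of $\cobang{}$ and $\bang{}$, to trivial equalities.
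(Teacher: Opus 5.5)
Your proof is correct and essentially reproduces the paper's argument. You expand $\t_1=\frac{q}{p}\cdot\s_1$ via Lemma~\ref{lemma:pcdot}, re-associate the nested splits with the associativity axiom (giving parameters $q$ and $\frac{p-q}{1-q}$), flip the inner split with the symmetry axiom, and recognise $\frac{1-p}{1-q}\cdot\t_2=\s_2$. Your extra care with degenerate parameter values is harmless, since the paper only uses the lemma with $p,q\in(0,1)$.
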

\begin{proof}
\begin{align}
\t &=\,  \diagp{};( \frac{q}{p} \cdot \s_1 \oplus \t_2  )\tag{Lemma \ref{lemma:basicTC}.1}\\
&=\, \diagp{}; (\diagpX{\frac{q}{p}\,\,}{} \oplus \id{}) ; (\s_1 \oplus \bang{} \oplus \t_2) \tag{Lemma \ref{lemma:pcdot}}\\
&=\, \diagq; (\id{} \oplus\, \diagpX{\frac{p-q}{1-q}\,\,}{}) ; (\s_1 \oplus \bang{} \oplus \t_2) \tag{\ref{eq:diagp assoc}} \\
&=\, \diagq; (\id{} \oplus ( \diagpX{\frac{1-p}{1-q}\,\,}{}; \symm)) ; (\s_1 \oplus (\bang{} \oplus \t_2)) \tag{\ref{eq:diagp symmetry}} \\
&=\, \diagq; (\id{} \oplus ( \diagpX{\frac{1-p}{1-q}\,\,}{})) ; (\s_1 \oplus  \t_2 \oplus \bang{}) \tag{Symmetric Monoidal Category}\\
&=\,  \diagq{};( \s_1 \oplus \frac{1-p}{1-q} \cdot \t_2  ) \tag{Lemma \ref{lemma:basicTC}.2}\\
&=\s \tag{Hypothesis}
\end{align}
\end{proof}

\subsection{Normal forms}\label{app:normal forms}
We are now going to prove several normal form results. For these results we need to fix some notation.



%
%
For all natural numbers $n \in \mathbb{N}$ one can define $\codiag{U}^n\coloneqq \bigoplus_{i=1}^n U\to U$ as follows:
\begin{equation}\label{eq:codiagn in appendice}
    (n=0):\  \codiag{U}^0\defeq\cobang{U}\qquad (n+1):\ \codiag{U}^{n+1}\defeq(\id{U} \oplus \codiag{U}^n  );\codiag{U}.
\end{equation}

\begin{lemma}[Pre normal form]\label{lemma:prenormalform}
Every arrow $\t\colon \oplus_{i=1}^n U_i\to \oplus_{i=1}^m V_i$ in $\CatTapeC$ can be written as $\t_1;\t_2;\t_3;\t_4$ where:
\begin{itemize}
    \item $\t_1$ is obtained through the fragment \setlength{\tabcolsep}{4pt}\begin{tabular}{rc ccccccccccccccccccccc}
   $\!\!\! \mid \!\!\!$ & $\diagp{U}$  & $\!\!\! \mid \!\!\!$  & $\id{\zero}$ & $\!\!\! \mid \!\!\!$ &  $\id{U}$ & $\!\!\! \mid \!\!\!$   & $   \t ; \t   $ & $\!\!\! \mid \!\!\!$  & $  \t \piu \t  $ & $\!\!\! \mid \!\!\!$  &  $\sigma_{U,V}^{\piu}$
\end{tabular}
\item $\t_2$ is obtained through the fragment \setlength{\tabcolsep}{4pt}\begin{tabular}{rc ccccccccccccccccccccc}
           $\!\!\! \mid \!\!\!$  &  $\bangp{U}$& $\!\!\! \mid \!\!\!$&  $\id{\zero}$ & $\!\!\! \mid \!\!\!$ & $\id{U}$    & $\!\!\! \mid \!\!\!$   & $   \t ; \t   $ & $\!\!\! \mid \!\!\!$  & $  \t \piu \t  $ & $\!\!\! \mid \!\!\!$  &  $\sigma_{U,V}^{\piu}$
    \end{tabular}
    \item $\t_3$ is obtained through the fragment \begin{tabular}{rc ccccccccccccccccccccc}\setlength{\tabcolsep}{0.0pt}
        $\!\!\! \mid \!\!\!$ & $\tapeFunct{c}$  & $\!\!\! \mid \!\!\!$  &  $\id{\zero}$ & $\!\!\! \mid \!\!\!$ & $\id{U}$&  $\!\!\! \mid \!\!\!$   & $   \t ; \t   $ & $\!\!\! \mid \!\!\!$  & $  \t \piu \t  $ & $\!\!\! \mid \!\!\!$  &  $\sigma_{U,V}^{\piu}$
    \end{tabular}
    \item $\t_4$ is obtained through the fragment \begin{tabular}{rc ccccccccccccccccccccc}
        $\!\!\! \mid \!\!\!$ & $\codiag{U}$ &$\!\!\! \mid \!\!\!$ & $\cobang{U}$ & $\!\!\! \mid \!\!\!$  &  $\id{\zero}$ & $\!\!\! \mid \!\!\!$ & $\id{U}$ & $\!\!\! \mid \!\!\!$   & $   \t ; \t   $ & $\!\!\! \mid \!\!\!$  & $  \t \piu \t  $ & $\!\!\! \mid \!\!\!$  &  $\sigma_{U,V}^{\piu}$
    \end{tabular}
\end{itemize}
\end{lemma}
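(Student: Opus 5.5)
We prove the statement by structural induction on the term $\t$ (any representative of the class), in the shape of a classical "sorting" argument: co-pca generators are pushed leftmost, counits next, boxes $\tapeFunct{c}$ in the middle, and monoid generators rightmost. Call an arrow \emph{in pre normal form} if it factors as $\t_1;\t_2;\t_3;\t_4$ with each $\t_k$ in the prescribed fragment. Each fragment is closed under $;$ and $\oplus$, and contains all identities and symmetries; in particular, by \eqref{eq:indmoncopca} it contains $\diagp{P}$, $\bang{P}$, $\codiag{P}$, $\cobang{P}$ for arbitrary polynomials $P$.

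\emph{Base cases.} Every generator is already in pre normal form by padding with identities. For instance, $\diagp{U}=\diagp{U};\id{};\id{};\id{}$ and $\tapeFunct{c}=\id{U};\id{U};\tapeFunct{c};\id{V}$.

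\emph{Monoidal product.} If $\t=\t'\oplus\t''$ with both factors in pre normal form, then by functoriality of $\oplus$ (interchange law in Table~\ref{fig:freestricmmoncatax}) we get $\t=(\t'_1\oplus\t''_1);(\t'_2\oplus\t''_2);(\t'_3\oplus\t''_3);(\t'_4\oplus\t''_4)$. Each factor stays in its fragment.

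\emph{Sequential composition.} This is the main obstacle. Given $\t'_1;\t'_2;\t'_3;\t'_4;\t''_1;\t''_2;\t''_3;\t''_4$, we must commute the middle block $\t'_4;\t''_1;\t''_2;\t''_3$ into the order (co-pca, counit, box, monoid). We establish a series of "swap" lemmas, each proved by induction on the fragment and relying only on naturality and coherence:
\begin{itemize}
\item[(i)] Moving a $\t_2$- or $\t_3$-fragment arrow left past a $\diagp{}$ or $\bang{}$ uses \eqref{eq:diagp nat} and \eqref{eq:bangp nat}, which hold for \emph{arbitrary} arrows $f$. Concretely, $f;\diagp{}=\diagp{};(f\oplus f)$ and $f;\bang{}=\bang{}$.
\item[(ii)] Moving a $\codiag{}$ or $\cobang{}$ right past anything uses \eqref{eq:codiag nat} and \eqref{eq:cobang nat} in the reverse direction: $\codiag{};f=(f\oplus f);\codiag{}$ and $\cobang{};f=\cobang{}$.
\item[(iii)] Moving boxes $\t_3$ left past counits $\t_2$ is again $\bang{}$-naturality.
\end{itemize}
Using these, we first push $\t'_4$ rightwards through $\t''_1;\t''_2;\t''_3$. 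The result is $\t''_1$-, $\t''_2$-, $\t''_3$-type arrows (duplicated or deleted by naturality) followed by monoid arrows. We then push the remaining co-pca arrows leftwards through $\t'_2;\t'_3$ and the counits leftwards through $\t'_3$.

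\emph{The delicate case.} The only genuinely non-trivial swap is $\codiag{U};\diagp{U}$, i.e.\ a monoid structure immediately followed by a co-pca structure. Here $\codiag{}$-naturality applied with $f=\diagp{}$ gives $\codiag{U};\diagp{U}=(\diagp{U}\oplus\diagp{U});\codiag{U\oplus U}$. Expanding $\codiag{U\oplus U}$ via \eqref{eq:codiag coherence} produces only symmetries and $\codiag{U}$'s, so we obtain the correct order.

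\emph{Termination.} Similarly, $\cobang{};\diagp{}$ becomes $\cobang{}$, $\codiag{};\bang{}$ becomes $\bang{}\oplus\bang{}$ by \eqref{eq:bangp coherence}, and $\cobang{};\bang{}$ becomes $\id{\zero}$ by \eqref{eq:bangp0 coherence}. Termination is argued by a measure counting, for each generator, the number of generators of "earlier" kind lying to its right. Each swap decreases this measure, even though it may duplicate generators of later kind, as in a standard multiset ordering.
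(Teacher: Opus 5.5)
Your plan is essentially the paper's: use naturality to sort generators so that the $\diagp{}$'s end up on the left, the $\codiag{}$'s and $\cobang{}$'s on the right, and counits before boxes. The paper does this in global phases on a single term, while you organise it as a structural induction whose only real case is sequential composition. The local rewrites you list are correct, including the delicate one $\codiag{U};\diagp{U}=(\diagp{U}\oplus\diagp{U});\codiag{U\oplus U}$ followed by coherence. The wording of (i) and (iii) has the direction reversed: it is the $\diagp{}$'s and the counits that move left, e.g.\ $\tapeFunct{c};\bang{V}=\bang{U}$. The equations you actually write are the right ones, though.

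The termination paragraph, however, is false as stated. Your swaps do \emph{not} decrease the measure, because they duplicate generators of earlier kind than generators sitting further to the left. Take $(\codiag{U}\oplus\id{U});\codiag{U};\diagp{U}$, which arises in your own first phase. The counts of earlier-kind generators to the right are $\{1,1,0\}$. After rewriting the inner redex to $(\diagp{U}\oplus\diagp{U});(\id{U}\oplus\sigma_{U,U}\oplus\id{U});(\codiag{U}\oplus\codiag{U})$, the outer $\codiag{U}$ sees two $\diagp{}$'s, so the counts become $\{2,0,0,0,0\}$. Neither the sum nor the multiset decreases; the multiset actually increases. The simple swap $\tapeFunct{c};\diagp{V}\mapsto\diagp{U};(\tapeFunct{c}\oplus\tapeFunct{c})$ has the same defect whenever a $\codiag{}$ lies to its left.

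The repair is to drop the global inversion count and argue blockwise, which is what the paper's phase-by-phase description implicitly does. Push the monoid generators of $t'_4$ one at a time across the \emph{whole} block $t''_1$ (respectively $t''_2$, $t''_3$). The block gets duplicated or absorbed but never contains monoid generators, so the number of monoid generators to its left strictly drops. The same works for pushing $\diagp{}$'s left across $t'_2;t'_3$ and counits left across $t'_3$.

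Cleaner still, observe that by symmetry-naturality and the interchange law every fragment arrow is a permutation composed with a direct sum of one-wire pieces:
\begin{itemize}
\item a $\diagp{}$-tree;
\item $\bang{U}$ or $\id{U}$;
\item a single box;
\item a merge tree $\codiag{V}^k$, where $k=0$ gives $\cobang{V}$.
\end{itemize}
These pieces are natural, so each block-level swap is a single application of naturality and no termination argument is needed. This also fixes a point your sketch skips: the naturality axioms are stated for $\diagp{P}$ and $\codiag{P}$ on a whole object, so generators sitting inside $\oplus$-contexts must first be isolated in this way.
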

\begin{proof}
Using the naturality, one can first move all the occurrences of $\diagp{}$ to the left. Then move all the occurrences of $\cobang{}$ to the left until an occurrence of $\diagp{}$ is met (note that since $\cobang{}$ is not the unit of $\diagp{}$, $\cobang{}$ remains at the right of $\diagp{}$). Similarly, one can move all occurrences of $\codiag{}$ and $\cobang{}$ on the right.
\end{proof}

\begin{lemma}\label{lemmaABform}
Let $\t\colon U \to V$ be an arrow of $\CatTapeC$. Then one of the following holds: 
\begin{itemize}
\item $\t=\star_{U,V}$;
\item there exist  $c \in \Cat{C}[U,V]$ such that $\t=\tapeFunct{c}$;
\item there exist distinct $c_1, \dots, c_n \in \Cat{C}[U,V]$ and $\vec{p}=p_1, \dots, p_{n}\in (0,1)$ such that
\[\t=\, \diagpn{\vec{p}\;}{U}{n}
;  (\bigoplus_{i=1}^n  \tapeFunct{c_i})  ; \codiag{V}^n\text{.}\]
\end{itemize}
\end{lemma}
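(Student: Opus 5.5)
We will start from the pre normal form of Lemma~\ref{lemma:prenormalform}: write $\t=\t_1;\t_2;\t_3;\t_4$ with $\t_1$ built only from $\diagp{}$'s, identities, symmetries and $\oplus$; $\t_2$ only from $\bang{}$'s; $\t_3$ a direct sum of tapes $\tapeFunct{c}$ and identities; and $\t_4$ only from $\codiag{}$ and $\cobang{}$. Since the source is the single monomial $U$, $\t_1\colon U\to\bigoplus_{i=1}^k U$. Using \eqref{eq:diagp assoc}, \eqref{eq:diagp symmetry} and naturality of symmetries, we will rewrite $\t_1$ as $\diagpn{\vec{r}\;}{U}{k};\sigma$ with $\sigma$ a permutation. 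We then absorb $\sigma$ into the later layers by naturality of symmetries. The key point is that after $\t_1$ there are no further $\diagp{}$'s, so each of the $k$ branches is independent of the others.

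Next we analyse the remaining layers branch by branch. $\t_2$ kills some branches with $\bang{U}$ and leaves the others untouched. Each surviving branch then goes through $\t_3$, where it becomes a composite of $\tapeFunct{c}$'s, and hence a single $\tapeFunct{c_i}$ by the axioms \eqref{eq:TapeFunctAxioms}. Then $\t_4$, of type $\bigoplus V\to V$ with all objects monomials, must be a merge: by the monoid axioms and coherence it equals $\codiag{V}^{m}$ with $\cobang{}$'s inserted, and those reduce away by \eqref{eq:codiag unital}.

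The killed branches are handled by Lemma~\ref{lemma:pcdot}: a branch ending in $\bang{U}$ only rescales. Concretely, $\diagp{};(\s\oplus\bang{})=p\cdot\s$, and the rescaling can be redistributed among the surviving branches using the generalised $\diagpn{\vec{p}\;}{U}{n}$ with $\sum p_i\le 1$. This is the reason the coefficients form a substochastic rather than stochastic vector. Likewise, a $\cobang{V}$ input to the merge contributes nothing.

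We then sort into cases. If no branch survives, we get $\diagpn{}{}{};\bang{}$, that is $\bang{U};\cobang{V}=\star_{U,V}$. If all surviving branches merge with total weight $1$ and carry a single $c$, we get $\tapeFunct{c}$. The reduction for repeated arrows uses \eqref{eq:diagp idempotency}: if $c_i=c_j$, then naturality of $\codiag{}$ together with \eqref{eq:diagp assoc} and \eqref{eq:diagp idempotency} merges the two branches into one with weight $p_i+p_j$. Iterating this yields distinct $c_i$ with $p_i\in(0,1)$, giving the third form when the total weight is below $1$ or there are at least two distinct arrows; when $n=1$ and $p_1=1$ we are in the second form. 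Weights equal to $0$ are dropped via $\diagpX{0\;\;}{U}=\cobang{U}\oplus\id{U}$ and \eqref{eq:codiag unital}.

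The main obstacle is the bookkeeping in the first and third steps. One must show that an arbitrary $\diagp{}$-tree with symmetries normalises to a right-comb $\diagpn{}{}{}$ with permuted leaves. One must also manage the reindexing of the probabilities $\vec{q}$ when deleting or merging leaves; this is essentially Lemma~\ref{lemma:fractions} combined with \eqref{eq:diagp assoc}. We would prove the comb normalisation by induction on the number of $\diagp{}$ nodes, handling the renormalised parameters by appeal to Lemma~\ref{lemma:fractions}.
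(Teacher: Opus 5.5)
Your proposal is correct and follows essentially the same route as the paper. Both start from the pre normal form of Lemma~\ref{lemma:prenormalform}, push symmetries to the right, use associativity and symmetry of $\diagp{}$ to arrange the splits into the comb $\diagpn{\vec{p}\;}{U}{n}$, and collapse repeated arrows via idempotency and the codiagonal axioms; yours is simply more explicit about the bookkeeping, notably in absorbing the branches killed by $\bang{U}$ into substochastic weights via Lemma~\ref{lemma:pcdot}, where the paper's sketch instead collects and collapses the occurrences of $\star_{U,V}$.
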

\begin{proof}
By Lemma~\ref{lemma:prenormalform}, $\t= \t_1;\t_2;\t_3;\t_4$.
Using the naturality of $\sigma_{U,V}^{\piu}$ (see Table~\ref{fig:freestricmmoncatax}), one can move all the occurrences of $\sigma_{U,V}^{\piu}$ to the right until an occurrence of $\codiag{V}$ is met. 
Using the associativity and the symmetry of $\diagp{U}$,  all multiple occurrences of every $c_i$ and $\star_{U,V}$ can be collected in the form $\diagp{U};(c_i\otimes c_i)$ and then collapsed to one occurrence of $c_i$ by using the idempotency axiom and axioms for $\codiag{U}$. Then again using the associativity and the symmetry of $\diagp{U}$ one can move all the occurrences of $\diagp{U}$ in the form of $\diagpn{\vec{p}\;}{U}{n}$ for some vector $\vec{p}=(p_1,\dots,p_n)$, and similarly all occurrences of $\codiag{V}$ in the form of $\codiag{V}^n$.
\end{proof}

\begin{lemma}\label{lemma: arrows A-B subdistributions}
    Arrows in $\CatTapeC$ of the form $\t\colon U\to V$ are in bijection with $\mathcal{D}_{\le}(\Cat{C}[U,V])$.
\end{lemma}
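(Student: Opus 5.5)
We construct two maps between $\CatTapeC[U,V]$ and $\mathcal{D}_{\le}(\Cat{C}[U,V])$ and show that they are mutually inverse.

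\emph{From subdistributions to tapes.} For $d\in\mathcal{D}_{\le}(\Cat{C}[U,V])$ with support $\{c_1,\dots,c_n\}$ (listed in some fixed order), we set
\[\Phi(d)\defeq\, \diagpn{\vec{p}\;}{U}{n};(\textstyle\bigoplus_{i=1}^n \tapeFunct{c_i});\codiag{V}^n, \qquad p_i=d(c_i).\]
When $n=0$ this is $\bang{U};\cobang{V}=\star_{U,V}$. By unfolding the definitions in \eqref{eq: diagpn cap equational presentation} and \eqref{eq:enrichment}, $\Phi(d)$ is the pca-sum $\sum_i p_i\cdot \tapeFunct{c_i}$ computed in the enrichment of Lemma~\ref{lemma:enrichmentpca}. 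Because of \eqref{ax:tapes:diagpsym} and \eqref{ax:tapes:diagpas}, permuting the summands does not change the tape, so $\Phi$ is well defined.

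\emph{Surjectivity} is exactly Lemma~\ref{lemmaABform}. Every $\t\colon U\to V$ is either $\star_{U,V}$, which is $\Phi$ of the null subdistribution; or $\tapeFunct{c}$, which is $\Phi(\delta_c)$ once we use $\diagpn{1}{U}{1}=\diagpX{1\;\;}{U};(\id{U}\oplus\bang{U})=\id{U}$; or the third normal form, which has distinct $c_i$ and hence is $\Phi$ of the subdistribution $c_i\mapsto p_i$. For that last identification the weights must be read as the absolute probabilities $p_i$ rather than the iterated conditional ones. I will check that the bookkeeping in \eqref{eq: diagpn cap equational presentation} does this.

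\emph{Injectivity.} I will not try to show that the normal form is unique syntactically. Instead I will build a semantic inverse $\Psi\colon \CatTapeC[U,V]\to \mathcal{D}_{\le}(\Cat{C}[U,V])$. By Theorem~\ref{thm:freeenriched} the identity-on-objects unit $\Cat{C}\to\Cat{C}^+$ is a functor, and by Proposition~\ref{prop:cmatrixcb} the category $\stmat{\Cat{C}^+}$ is a convex biproduct category. Theorem~\ref{thm:syntacticadjunction} then gives a morphism of convex biproduct categories $G\colon\CatTapeC\to\stmat{\Cat{C}^+}$ sending $\tapeFunct{c}\mapsto(1\cdot\delta_c)$. (Corollary~\ref{cor:isotapematrices} could be quoted here, but its own proof may rely on this lemma, so I build $G$ directly.) A $1\times 1$ matrix $U\to V$ is a pair $(r,e)$ with $e\in\Dis(\Cat{C}[U,V])$, taken modulo $r\cdot e=r'\cdot e'$. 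Multiplication by a scalar in $\Dis(X)$ is literal rescaling, so this class is determined by the single subdistribution $r\cdot e$. We therefore define $\Psi(\t)$ to be that subdistribution. Since $G$ preserves the enrichment (Proposition~\ref{prop: monoidal functors}), we get $\Psi(\Phi(d))=\sum_i d(c_i)\cdot\delta_{c_i}=d$. Hence $\Phi$ is injective, and together with surjectivity $\Phi$ and $\Psi$ are inverse bijections.

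The main obstacle is making sure that $\Psi$ is computed correctly on $\Phi(d)$. This needs $G(\sum_i p_i\cdot\tapeFunct{c_i})$ to be the $1\times1$ matrix $\sum_i p_i\cdot(1\cdot\delta_{c_i})$, and under $\equiv$ that matrix must correspond to the subdistribution $\sum_i p_i\delta_{c_i}$. To check this, I will compute $G(\diagpn{\vec{p}}{U}{n})$ as the column vector $(p_1\cdot\id{U},\dots,p_n\cdot\id{U})^{T}$ by induction on $n$ using \eqref{eq:matrixmult}. I will then multiply this column by the row $\codiag{V}^n$, whose entries are all $1\cdot\id{V}$. The only remaining point is that $\Dis$ is cancellative, so the equivalence $\equiv$ on $1\times1$ matrices loses no information beyond the product $r\cdot e$.
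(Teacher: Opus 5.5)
Your argument is correct, but on the injectivity side it takes a genuinely different route from the paper. The paper reads a subdistribution off the normal form of Lemma~\ref{lemmaABform}. It then concludes by noting that two tapes sent to the same subdistribution have the same normal-form data. In doing so it takes for granted that the assignment tape $\mapsto$ subdistribution is well defined. That is, it assumes a tape cannot be equal in $\CatTapeC$ to two normal forms with different weights or supports, and this is never argued syntactically. You reverse the direction. Your $\Phi$ goes from subdistributions to tapes and is onto by Lemma~\ref{lemmaABform}. It is injective because interpreting in $\stmat{\Cat{C}^+}$ yields an invariant $\Psi$ with $\Psi\circ\Phi=\mathrm{id}$, using that $1\times 1$ matrices $U\to V$ modulo $\equiv$ are exactly elements of $\Dis(\Cat{C}[U,V])$. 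This soundness-style argument supplies precisely the uniqueness of normal forms that the paper leaves implicit. The price is a detour through the matrix model, whereas the paper's argument is short and purely syntactic.

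One citation needs fixing. You rightly avoid Corollary~\ref{cor:isotapematrices}, but Theorem~\ref{thm:syntacticadjunction} and Proposition~\ref{prop: monoidal functors} are circular here in the same way. Both presuppose that $\CatTapeC$ is a convex biproduct category. The paper obtains this from Theorem~\ref{thm:TCconvexbiproductcategory}, hence from Lemma~\ref{lemma:TC jmono} and cancellativity (Lemma~\ref{cancellativity}), whose base case Lemma~\ref{lemma:cancellativity11} is derived from the very lemma you are proving.

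What you actually use is only the inductive definition of $F^\sharp$ from the proof of Theorem~\ref{thm:syntacticadjunction}. Its well-definedness needs only that $\stmat{\Cat{C}^+}$ satisfies the axioms presenting $\CatTapeC$:
\begin{itemize}
\item the strict symmetric monoidal laws for direct sums;
\item Lemmas~\ref{lemma:stmat coproduct} and~\ref{lemma:stmat copca};
\item functoriality of $c\mapsto(1\cdot\delta_c)$, for \eqref{eq:TapeFunctAxioms}.
\end{itemize}
None of these depends on the present lemma. You also do not need preservation of the enrichment, since your direct matrix computation of $G(\Phi(d))$ already gives the entry $\sum_i p_i\cdot\delta_{c_i}=d$. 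Finally, cancellativity of $\Dis$ plays no role. The class of $(r,e)$ determines, and is determined by, $r\cdot e$ by the very definition of $\equiv$, so $\Psi$ is well defined without it.
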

\begin{proof}
    By Lemma~\ref{lemmaABform} every arrow $\t\colon U\to V$ is in one of the following forms:
\begin{itemize}
\item $\t=\star_{U,V}$;
\item there exists $c \in \Cat{C}[U,V]$ such that $\t=\tapeFunct{c}$;
\item there exist distinct $c_1, \dots, c_n \in \Cat{C}[U,V]$ and $\vec{p}=p_1, \dots, p_{n}\in(0,1)$ such that 
\[\t=\,\diagpn{\vec{p}\;}{U}{n}
;  (\bigoplus_{i=1}^n  \tapeFunct{c_i})  ; \codiag{V}^n\text{.}\]
\end{itemize}
The first two cases correspond respectively to the zero subdistribution and the Dirac distribution on $c$. The third case corresponds to the subdistribution $\vec{p}$ on the set of arrows $\{c_1, \dots, c_n\}$, where $p_i$ is the probability of $c_i$. 
 If $\t$ and $\t'=\, \diagpn{\vec{q}\;}{U}{n}; (\bigoplus_{i=1}^{n'}  \tapeFunct{c'_i}) ; \codiag{V}^{n'}$ are sent to the same subdistribution, then one has that $n=n'$, $p_i=q_i$ and $c_i=c'_i$ for all $i=1,\dots,n$. 
\end{proof}

\begin{lemma}\label{lemma:division}
Let $\t \colon U \to \bigoplus_{i=1}^n V_i$ be an arrow of $\CatTapeC$. Then, 
\begin{itemize}
\item either $\t= \bang{U}; \cobang{\bigoplus_{i=1}^n V_i}$,
\item or there exists $j\in \{1,\dots, n\}$ and $\t_j \colon U \to V_j$ such that $\t = \cobang{\bigoplus_{i=1}^{j-1} V_i} \oplus \t_j \oplus \cobang{\bigoplus_{i=j+1}^{n} V_i}$,
\item or for all $j\in \{1,\dots, n\}$, there exists $\s_1\colon U \to \bigoplus_{i=1}^{j} V_i$ and $\s_2 \colon U \to \bigoplus_{i=j+1}^{n} V_i$ such that $\t=\,\diagp; (\s_1\oplus \s_2)$.
\end{itemize}
\end{lemma}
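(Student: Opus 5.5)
The plan is to argue by structural induction on the normal form given by Lemma~\ref{lemma:prenormalform}. Since the source $U$ is a single monomial, I write $\t=\t_1;\t_2;\t_3;\t_4$ and observe that $\t_1$, built only from $\diagp{}$, identities and symmetries, is a binary tree of probabilistic splits starting from $U$. (The third case is read for every cut index $j$ with $1\le j<n$.)

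First, if $\t_1$ is the identity, then no split occurs. In that case $\t_2$ either discards the unique wire, so that $\t$ factors through $\bang{U}$, or keeps it. If the wire is discarded, $\t=\bang{U};\s$ for some $\s\colon\zero\to\bigoplus V_i$. The unique such arrow is $\cobang{}$, by \eqref{eq:cobang nat} and coherence, so we are in the first case. If the wire is kept, then after $\t_3$ we have a single tape $U\to W$ followed by $\t_4$, which uses only $\codiag{}$, $\cobang{}$ and symmetries. Such a $\t_4\colon W\to\bigoplus V_i$, with a single input monomial, must route that monomial to exactly one output position $j$ and fill the remaining positions with $\cobang{}$. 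I would justify this by the same bookkeeping used for Lemma~\ref{lemmaABform}: $\codiag{}$ needs two inputs, so with one input only symmetries and $\cobang{}$ can occur nontrivially. This gives the second case.

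If $\t_1$ contains at least one $\diagp{}$, I would use \eqref{eq:diagp assoc}, \eqref{eq:diagp symmetry} and naturality of $\diagp{}$ to bring $\t$ to the form $\diagpn{\vec r\;}{U}{k};\bigoplus_{l=1}^k \u_l;\t_4$. Each $\u_l\colon U\to W_l$ is either $\bang{}$ or a single $\tapeFunct{c}$, and $\t_4$ is a monoid-and-symmetry routing. By coherence of $\codiag{}$ (axiom \eqref{eq:codiag coherence}), $\t_4$ is the same as sending each branch $l$ to some output index $j_l$ (or to nothing) and then merging with $\codiag{}$ in each output slot. Given a cut $j$, I partition the branches into those with $j_l\le j$ and the rest, and use symmetry of $\diagp{}$ to reorder $\diagpn{\vec r\;}{U}{k}$ so that the first group comes first. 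Associativity then lets me rewrite the split as a single $\diagp{}$ followed by two sub-splits, one for each group. Naturality of $\codiag{}$ and $\oplus$-functoriality then separate the whole term as $\diagp;(\s_1\oplus\s_2)$.

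The main obstacle is the degenerate situation where every branch lands on the same side of the cut, or where some probabilities are $0$ or $1$, so that the required $p\in(0,1)$ does not directly exist. My first attempt is to use the extended $\diagp{}$ for $p\in\{0,1\}$, defined via $\cobang{}$, and absorb the empty side with $\cobang{}$. Alternatively, I would show that such a term already falls into the first or second case by \eqref{eq:diagp idempotency}: merging branches that go to a single side collapses the split. In that situation the trichotomy is not exclusive, and the first two cases cover it. I expect the careful handling of the routing $\t_4$ via coherence to be the tedious part, and the degenerate-probability bookkeeping to be the subtle part.
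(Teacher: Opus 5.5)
Your proposal is correct and follows essentially the paper's own route: the pre-normal form of Lemma~\ref{lemma:prenormalform}, a case split on whether $\t_1$ is the identity, and, when a split occurs, partitioning the branches across the cut and regrouping the tree of splits via \eqref{eq:diagp assoc} and \eqref{eq:diagp symmetry}. For degenerate cuts, keep only your first option, namely the extended splits $\diagpX{0\;\;}{U}=\cobang{U}\oplus\id{U}$ and $\diagpX{1\;\;}{U}=\id{U}\oplus\cobang{U}$ (the paper's proof also relies on these implicitly when one side of the cut receives no branch), because the fallback is false: $\cobang{V_1}\oplus(\diagp{U};(\tapeFunct{c}\oplus\tapeFunct{d}))$ with $c\colon U\to V_2$ and $d\colon U\to V_3$ is in neither of the first two forms, yet for $j=1$ it admits the third form only with split parameter $0$.
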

\begin{proof}
We have that $\t= \t_1;\t_2;\t_3;\t_4$ where the $\t_i$ are prescribed by Lemma~\ref{lemma:prenormalform}.
We consider two different cases: either $\t_1=\id{U}$ or $\t_1 \neq \id{U}$.

Assume that $\t_1=\id{U}\colon U \to U$. Then $\t_2$ is an arrow with domain $U$ and thus either $\t_2 = \bang{U}$ or $\t_2= \id{U}$. 

\begin{itemize}
\item If $\t_2 = \bang{U} \colon U \to \zero$, then $\t_3$ is an arrow with domain $\zero$. This entails that $\t_3$ should be $\id{\zero}\colon \zero \to \zero$ and $\t_4$ an arrow of type $\zero \to \bigoplus_{i=1}^n V_i$. The latter fact entails that $\t_4 = \cobang{\bigoplus_{i=1}^n V_i}$. Thus $\t= \bang{U}; \cobang{\bigoplus_{i=1}^n V_i}$.
\item If $\t_2 = \id{U} \colon U \to U$, then $\t_3$ should have type $U\to V_j$ and thus it must be either $\id{U}$ or $\tape{c}$. We proceed with the case for $\tape{c}$, the one for $\id{U}$ is identical. Since $\tape{c}\colon U\to V_j$ then $\t_4$ should have type $V_j \to \bigoplus_{i=1}^n V_i$. This entails that $\t_4= \cobang{\bigoplus_{i=1}^{j-1} V_i} \oplus \id{V_j} \oplus \cobang{\bigoplus_{i=j+1}^{n} V_i}$. Thus $\t= \tape{c};\cobang{\bigoplus_{i=1}^{j-1} V_i} \oplus \id{V_j} \oplus \cobang{\bigoplus_{i=j+1}^{n} V_i}$ which, by the laws of symmetric monoidal categories, is equal to $\cobang{\bigoplus_{i=1}^{j-1} V_i} \oplus \tape{c} \oplus \cobang{\bigoplus_{i=j+1}^{n} V_i}$. Thus $\t$ is in the second shape.
\end{itemize}

Let us consider now the case $\t_1 \neq \id{U}$. For any $j\in \{1,\dots, n\}$, $\t_4$ can be written as $\codiag{P}^x \oplus \codiag{Q}^{y}$ where 
$x,y\in \mathbb{N}$, $P=\bigoplus_{i=1}^{j} V_i$ and $Q=\bigoplus_{i=j+1}^{n} V_i$
and  $\codiag{P}^x\colon \bigoplus_{k=1}^x P \to P$ and $\codiag{Q}^y \colon  \bigoplus_{k=1}^yQ\to Q$ are the arrows defined in (\ref{eq:codiagn in appendice}).

Now $\t_3$ should have as codomain $ \bigoplus_{k=1}^x P \oplus \bigoplus_{k=1}^yQ$. Since $\t$ has domain $U$, $\t_1; \t_2$ has type $U \to \bigoplus_{i=1}^z U$ for some $z\in \mathbb{N}$. Since the arrows in $\t_3$ preserve the size of domain and codomains we have that, the domain of $\t_3$ can be written as $ \bigoplus_{k=1}^x(\bigoplus_{i=1}^{j} U) \oplus \bigoplus_{k=1}^y(\bigoplus_{i=j+1}^{n} U)$. This fact entails that $\t_3= \t_3^1 \oplus \t_3^2$ for some $\t_3^1 \colon  \bigoplus_{k=1}^x(\bigoplus_{i=1}^{j} U) \to  \bigoplus_{k=1}^x(\bigoplus_{i=1}^{j} V_i)$ and  $\t_3^2 \colon  \bigoplus_{k=1}^y(\bigoplus_{i=j+1}^{n} U) \to  \bigoplus_{k=1}^y(\bigoplus_{i=j+1}^{n} V_i)$. 

Now $\t_2$ should have as codomain $ \bigoplus_{k=1}^x(\bigoplus_{i=1}^{j} U) \oplus \bigoplus_{k=1}^y(\bigoplus_{i=j+1}^{n} U)$. Since the arrows in the shape of $\t_2$ can only decrease the size of codomain, $\t_2$ has domain $ \bigoplus_{k=1}^x(\bigoplus_{i=1}^{x'} U) \oplus \bigoplus_{k=1}^y(\bigoplus_{i=1}^{y'} U)$ for some $x'\geq j$ and $y'\geq n-j+1$.
This fact entails that $\t_2=\t_2^1 \oplus \t_2^2$ for some $\t_2^1 \colon \bigoplus_{k=1}^x(\bigoplus_{i=1}^{x'} U) \to  \bigoplus_{k=1}^x(\bigoplus_{i=1}^{j} U)$ and $\t_2^2 \colon \bigoplus_{k=1}^y(\bigoplus_{i=1}^{y'} U) \to  \bigoplus_{k=1}^y(\bigoplus_{i=j+1}^{n} U)$.

Now $\t_1$ has type $U \to \bigoplus_{k=1}^x(\bigoplus_{i=1}^{x'} U) \oplus \bigoplus_{k=1}^y(\bigoplus_{i=1}^{y'} U)$. Note that the latter object is different from $U$ as $\t_1\neq \id{U}$. By using \eqref{eq:diagp assoc} and \eqref{eq:diagp symmetry}, $\t_1$ can be written in the form $\diagp{}; (\t_1^1\oplus \t_1^2)$ for some $\t_1^1\colon U \to \bigoplus_{k=1}^x(\bigoplus_{i=1}^{x'} U)$ and $\t_1^2\colon U \to \bigoplus_{k=1}^y(\bigoplus_{i=1}^{y'} U)$.

In summary, $\t=\diagp{};(\, (\t_1^1;\t_2^1; \t_3^1; \codiag{P}^x) \oplus (\t_1^2;\t_2^2; \t_3^2; \codiag{Q}^y) \,)$ is in the third form.

\end{proof}
\begin{proof}[Proof of Lemma~\ref{decomposition}]
Let $\t\colon U \to \bigoplus_{i=1}^n Q_i$ be an arrow of $\CatTapeC$ and observe that every $Q_i=\bigoplus_{j=1}^{m_i} U_j^i$ for some $m_i\in \mathbb{N}$ and $U_j^i$ objects of $\Cat{C}$. Hence, $\bigoplus_{i=1}^n Q_i = \bigoplus_{i=1}^n \bigoplus_{j=1}^{m_i} U_j^i= \bigoplus_{l=1}^s V_l$ for $s=m_1+\dots + m_n$ and $V_l = U_j^i$ if $l=m_1+\dots + m_{i-1}+j$ and  $j\in \{1,\dots,m_i\}$. Now we proceed by induction on $n$. The case $n=0$ is trivial since $\bigoplus_{i=1}^n Q_i$ is the final object $\zero$ and $\t= \bangp{U}=\,\diagpn{\vec{p}\;}{U}{0}$. For the induction case $n+1$, consider $\t\colon U \to Q_1 \oplus \bigoplus_{j=1}^n R_j$ for $R_j=Q_{j+1}$. By Lemma~\ref{lemma:division}, $\t$ is in one of the forms indicated in the statement of Lemma~\ref{lemma:division}. If $\t= \bang{U}; \cobang{\bigoplus_{l=1}^s V_l}= \bang{U}; \cobang{\bigoplus_{i=1}^{n+1} Q_i}$, then taking all $p_i=0$ it follows that $\t=\diagpn{\;\vec{p}}{U}{n+1} ; \bigoplus_{i=1}^{n+1} \star_{U,Q_i}$. Now consider the case in which there exists $k\in \{1,\dots,s\}$ and $\t_k\colon U \to V_k$ such that $\t = \cobang{\bigoplus_{l=1}^{k-1} V_l} \oplus \t_k \oplus \cobang{\bigoplus_{l=k+1}^{s} V_l}$. Assume that $k=m_1+\dots + m_{h-1}+j$ for $j\in \{1,\dots,m_h\}$ (then $V_k= U_j^h$). Hence, $\t= \cobang{\bigoplus_{i=1}^{h-1} Q_i} \oplus \t' \oplus \cobang{\bigoplus_{i=h+1}^{n+1} Q_i}$ where $\t' = \cobang{\bigoplus_{i=1}^{j-1} U_i^h} \oplus \t_k \oplus \cobang{\bigoplus_{i=j+1}^{m_h} U_i^h}$. Taking $p_i=0$ for $i\neq h$ and $p_h=1$, it follows that $\t=\diagpn{\;\vec{p}}{U}{n+1} ; \bigoplus_{i=1}^{h-1} \star_{U,Q_i} \oplus \t' \oplus \bigoplus_{i=h+1}^{n+1} \star_{U,Q_i}$. Now consider the last case and assume that $\t=\,\diagp{U};(\s_1\oplus \s_2)$, with $\s_1\colon U \to Q_1$ and $\s_2\colon U \to \bigoplus_{j=1}^n R_j$. By induction hypothesis, $\s_2$ can be written as $\s_2 = \diagpn{\;\vec{p}}{U}{n} ; \bigoplus_{i=1}^n \s_2^j$ where $\s_2^j\colon U \to Q_{j+1}$ for $j=1,\dots, n$. Hence, $\t =\, \diagp{U};(\s_1\oplus \s_2) = \diagp{U};(\id{U} \oplus \diagpn{\;\vec{p}}{U}{n}) ; (\s_1 \oplus \bigoplus_{i=1}^n \s_2^j)$. Taking $t_1=p$ and $t_i = (1-p)\cdot p_{i-1}$ for $i=2,\dots, n+1$, it follows that $\t=\diagpn{\;\vec{t}}{U}{n+1} ; \bigoplus_{i=1}^{n+1}\s'_i$, where $\s'_1 = \s_1$ and $\s'_i = \s_2^{i-1}$ for $i=2,\dots, n+1$.

%
%
\end{proof}

\subsection{Cancellativity}\label{app:cancellativity}
In order to prove cancellativity, we first consider the case of arrows of type $U \to V$ where $U,V$ are objects of $\Cat{C}$.
\begin{lemma}\label{lemma:cancellativity11}
For all $r\in (0,1)$, for all $\s,\t \colon U \to V$, 
if $r\cdot \s = r\cdot \t$ then $\s = \t$.
\end{lemma}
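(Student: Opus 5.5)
The plan is to reduce cancellativity for arrows $U \to V$ to cancellativity of the free pca $\Dis(\Cat{C}[U,V])$, using the bijection of Lemma~\ref{lemma: arrows A-B subdistributions}. Write $\phi\colon \CatTapeC[U,V] \to \Dis(\Cat{C}[U,V])$ for that bijection. It sends $\star_{U,V}$ to the null subdistribution and $\tapeFunct{c}$ to $\delta_c$. It sends the normal form $\diagpn{\vec{p}\;}{U}{n};(\bigoplus_i \tapeFunct{c_i});\codiag{V}^n$ to the subdistribution $\sum_i p_i\,\delta_{c_i}$. The key intermediate claim is that $\phi$ commutes with scalar multiplication:
\[
\phi(r\cdot \t) = r\cdot \phi(\t) \qquad \text{for all } r\in(0,1),
\]
where the right-hand side is scalar multiplication in $\Dis(\Cat{C}[U,V])$, namely pointwise multiplication by $r$.

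Granting the claim, the conclusion is immediate. Suppose $r\cdot\s = r\cdot \t$. Then $r\cdot\phi(\s) = r\cdot\phi(\t)$ as functions $\Cat{C}[U,V]\to[0,1]$. Dividing pointwise by $r>0$ gives $\phi(\s)=\phi(\t)$, and injectivity of $\phi$ yields $\s=\t$.

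To prove the claim I would compute $r\cdot\t$ in each of the three cases of Lemma~\ref{lemmaABform}, using Lemma~\ref{lemma:pcdot}, i.e.\ $r\cdot \t = \diagpX{r}{U};(\t\oplus\bang{U})$.
\begin{itemize}
\item If $\t=\star_{U,V}$, then $r\cdot\star = \star$ by the pca laws, so both sides of the claim are the null subdistribution.
\item If $\t=\tapeFunct{c}$, then $r\cdot \tapeFunct{c} = \diagpX{r}{U};(\tapeFunct{c}\oplus\star_{U,V});\codiag{V}$. Rewriting this into the normal form of Lemma~\ref{lemmaABform} identifies it with the subdistribution $r\,\delta_c$.
\item If $\t$ is a genuine normal form, then $r\cdot\t = \diagpX{r}{U};\bigl((\diagpn{\vec{p}\;}{U}{n};\bigoplus_i\tapeFunct{c_i};\codiag{V}^n)\oplus \bang{U}\bigr)$. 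Using \eqref{eq:diagp assoc}, \eqref{eq:diagp symmetry} and Lemma~\ref{lemma:basicTC}.3, this should become the normal form with weights $rp_1,\dots,rp_n$ on the same $c_i$.
\end{itemize}

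I expect the main obstacle to be the third case. The decomposition of Lemma~\ref{lemmaABform} as written treats $\star$ and the weights somewhat informally, so one must check carefully that the residual $\bang{}$ branch is absorbed as the missing mass $1-r\sum_i p_i$ rather than producing a new summand. The weights $r p_i$ then lie in $(0,1)$, so the result is again a valid normal form.

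An alternative worth keeping in mind avoids these normal-form manipulations altogether. The claim follows directly if $\phi$ is shown to be a pca morphism. Equivalently, one can construct the inverse map $\Dis(\Cat{C}[U,V])\to\CatTapeC[U,V]$ as the $\Cat{PCA}$-enriched extension $\eta^\sharp$ from Theorem~\ref{thm:freeenriched}: by Lemma~\ref{lemma:enrichmentpca} the hom-sets of $\CatTapeC$ are pcas, and $\eta^\sharp$ preserves $+_p$ and $\star$ by construction. With this route the work shifts to checking that $\eta^\sharp$ agrees with the inverse of $\phi$ on normal forms. That check is essentially the content of Lemma~\ref{lemma: arrows A-B subdistributions}, so I would first try this route and fall back on the direct computation above if the identification is not clean.
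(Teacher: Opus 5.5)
Your proposal is correct and takes the same route as the paper: its proof identifies the hom-set of arrows $U \to V$ with the pca of subdistributions over $\Cat{C}[U,V]$ via Lemma~\ref{lemma: arrows A-B subdistributions}, and then cites the standard cancellativity of subdistributions. Your explicit check that this bijection commutes with scalar multiplication, with the residual discard absorbed as missing mass, is a step the paper's two-line proof leaves implicit, and it goes through as you describe.
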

\begin{proof}
By Lemma \ref{lemma: arrows A-B subdistributions}, we know $\CatTapeC[U,V]= \subdistr(\Cat{C}[U,V])$. It is well known (see e.g. \cite{sokolova2018termination}) that, for all sets $X$, $\subdistr(X)$ is a cancellative pca.
\end{proof}

Then, we consider the case of arrows of type $U \to \bigoplus_{i=1}^n V_i$ where $U,V_i$ are objects of $\Cat{C}$.
\begin{lemma}\label{lemma:cancellativityhalf}
For all $n \in \mathbb{N}$, for all $r\in (0,1)$, for all $\s,\t \colon U \to \bigoplus_{i=1}^n V_i$, 
if $r\cdot \s = r\cdot \t$ then $\s = \t$.
\end{lemma}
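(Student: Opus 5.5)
We argue by induction on $n$, using Lemma~\ref{decomposition} to reduce to componentwise statements handled by Lemma~\ref{lemma:cancellativity11}. The base case $n=0$ is immediate: the codomain is $\zero$, so $\s=\t=\bang{U}$. For $n=1$ the statement is exactly Lemma~\ref{lemma:cancellativity11}. So assume $n\geq 2$.

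First, we put both arrows in normal form. By Lemma~\ref{lemma:division} (or directly Lemma~\ref{decomposition}), write $\s=\diagp{U};(\s_1\oplus\s_2)$ and $\t=\diagq{U};(\t_1\oplus\t_2)$, with $\s_1,\t_1\colon U\to V_1$ and $\s_2,\t_2\colon U\to\bigoplus_{i=2}^n V_i$. Here we allow $p,q\in[0,1]$ through the extended $\diagp{}$, which also covers the degenerate shapes of Lemma~\ref{lemma:division}. By Lemma~\ref{lemma:basicTC}.3, $r\cdot\s=\diagp{};(r\cdot\s_1\oplus r\cdot\s_2)$, and similarly for $\t$.

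Next, we project. Postcomposing with $(\id{}\oplus\bang{})$ and using Lemma~\ref{lemma:basicTC}.1 together with Lemma~\ref{lemma:initialproperties2}.2 gives
\[
rp\cdot\s_1 = rq\cdot\t_1 .
\]
Postcomposing with $(\bang{}\oplus\id{})$ gives
\[
r(1-p)\cdot\s_2 = r(1-q)\cdot\t_2 .
\]
Cancelling $r$ is legitimate: in the first equation by Lemma~\ref{lemma:cancellativity11}, and in the second by the induction hypothesis on $n-1$, since $p\cdot x$ is itself an arrow of the same type. This leaves
\[
p\cdot\s_1=q\cdot\t_1 \qquad\text{and}\qquad (1-p)\cdot\s_2=(1-q)\cdot\t_2 .
\]

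Finally, we reassemble. Without loss of generality, using symmetry \eqref{eq:diagp symmetry}, take $q\le p$. If $p=0$, or more generally if one of the weights vanishes, the corresponding component is $\star$ and the claim reduces to a single component. Otherwise, write $q\cdot\t_1=p\cdot(\tfrac{q}{p}\cdot\t_1)$. Cancelling $p$ by Lemma~\ref{lemma:cancellativity11} yields $\s_1=\tfrac{q}{p}\cdot\t_1$. Symmetrically, $\tfrac{1-p}{1-q}\cdot\s_2=\t_2$ by the induction hypothesis. Lemma~\ref{lemma:fractions} (with the roles of $\s$ and $\t$ as there) then gives $\s=\t$.

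The main obstacle is this reassembly step. The difficulty is that the split probabilities $p$ and $q$ in the two normal forms need not agree, so the components cannot be compared directly. Lemma~\ref{lemma:fractions} exists precisely to bridge this gap. The remaining care goes into the boundary cases $p,q\in\{0,1\}$, which we handle by reducing to the smaller codomain.
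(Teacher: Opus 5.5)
Your proof is correct and essentially reproduces the paper's argument. It follows the same steps: induction on $n$, the split form from Lemma~\ref{lemma:division}, Lemma~\ref{lemma:basicTC} to push $r$ inside and to project, cancellation via Lemma~\ref{lemma:cancellativity11} and the induction hypothesis, and reassembly via Lemma~\ref{lemma:fractions} after comparing the two split weights. The differences are immaterial: you split off the first summand rather than the last, you cancel $r$ and then the split weight rather than $pr$ at once, and you spell out the degenerate weights $0,1$ that the paper dismisses as trivial.
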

\begin{proof}
We proceed by induction on $n$. 

The base case $n=0$ is trivial since $\bigoplus_{i=1}^n V_i$ is the final object $\zero$.

For the induction case $n+1$, we exploit Lemma~\ref{lemma:division} and we consider the case where
\begin{equation}\label{eq:st}
\s=\, \diagq{} ; (\s_1 \oplus \s_2)   \qquad \text{ and } \qquad \t =\, \diagp{}; (\t_1 \oplus \t_2)
\end{equation}
for $\s_1, \t_1 \colon U \to \bigoplus_{i=1}^n V_i$ and $\s_2,\t_2 \colon U \to V_{n+1}$. The other cases are trivial.

By Lemma \ref{lemma:basicTC}.3, we have that
\begin{equation}\label{eq:simple}
r\cdot \s=\, \diagq{} ; (r\cdot \s_1 \oplus r\cdot \s_2)   \qquad \text{ and } \qquad r\cdot  \t =\, \diagp{}; (r\cdot \t_1 \oplus r\cdot \t_2)
\end{equation}
Since by hypothesis $r\cdot \s = r \cdot \t$, it holds both  
\[r\cdot \s ; (\id{} \oplus \bang{}) = r \cdot \t ; (\id{} \oplus \bang{}) \qquad \text{ and } \qquad r\cdot \s ; ( \bang{} \oplus \id{}) = r \cdot \t ;( \bang{} \oplus \id{})\text{.}\]
Thus, by \eqref{eq:simple} and Lemma~\ref{lemma:basicTC}.1 and 2, one obtains that
\begin{equation}\label{eq:twoeq}q\cdot r \cdot \s_1 = p \cdot r \cdot \t_1 \qquad \text{ and } \qquad (1-q) \cdot r \cdot \s_2 = (1-p) \cdot r \cdot \t_2\text{.}\end{equation}

There are now three cases: $p=q$, $p>q$ and $p<q$.

If $p=q$, then  $\s_1=\t_1$ by induction hypothesis and  $\s_2=\t_2$ by Lemma~\ref{lemma:cancellativity11}. Then, by \eqref{eq:st}, $\s=\t$.

If $p>q$, we can rewrite \eqref{eq:twoeq} as
\begin{equation*}p\cdot r \cdot \frac{q}{p} \cdot \s_1 = p \cdot r \cdot \t_1 \qquad \text{ and } \qquad (1-q) \cdot r \cdot \s_2 = (1-q) \cdot r \cdot \frac{1-p}{1-q} \cdot \t_2\text{.}\end{equation*}
Thus, by induction hypothesis $ \frac{q}{p} \cdot \s_1 = \t_1$ and, by Lemma~\ref{lemma:cancellativity11}, $\s_2 =\frac{1-p}{1-q} \cdot \t_2$. By \eqref{eq:st} and Lemma~\ref{lemma:fractions}, we have that $\s=\t$.
The case $p<q$ is symmetrical.
\end{proof}

We can finally consider the general case.

\begin{proof}[Proof of Lemma~\ref{cancellativity}]
Let $P=\bigoplus_{i=1}^n U_i$. We proceed by induction on $n$. The base case $n=0$ is trivial since $\bigoplus_{i=1}^n U_i$ is the initial object $\zero$.

For the induction case $n+1$, since $\CatTapeC$ has finite coproducts, one can assume without loss of generality that
\begin{equation}\label{eq:s1s2coproduct}
\s = (\s_1 \oplus \s_2) ; \codiag{Q} \qquad \text{ and } \t = (\t_1 \oplus \t_2) ; \codiag{Q}
\end{equation}
for some $\s_2,\t_2 \colon U_{n+1} \to Q$ and  $\s_1,\t_1\colon P' \to Q$ where $P'=\bigoplus_{i=1}^n U_i$. Note moreover that the coproduct universal property entails that
\begin{equation}\label{eq:ifflocal}
\s=\t \text{ iff } \s_1=\t_1 \text{ and } \s_2=\t_2\text{.}
\end{equation}

By Lemma~\ref{lemma:pcopairing}, we have that
$r \cdot \s = (r\cdot \s_1 ) \oplus (r\cdot  \s_2 ); \codiag{Q}$ and $r \cdot \t = (r\cdot \t_1 ) \oplus (r\cdot  \t_2 ); \codiag{Q} $. Thus, by \eqref{eq:ifflocal} we have that
 \[r\cdot \s_1= r\cdot \t_1 \qquad \text{ and }\qquad r\cdot \s_2= r\cdot \t_2 \]
 From the rightmost equality above and Lemma~\ref{lemma:cancellativityhalf}, we derive that $\s_2=\t_2$. From the leftmost equality above and induction hypothesis, we derive that $\s_1=\t_1$. By \eqref{eq:ifflocal}, it holds that $\s=\t$.
 \end{proof}

\subsection{Jointly Monic Projections}

\begin{lemma}\label{keyLemma001}
Let $\t\colon U \to Q \oplus R$ be an arrow in $\CatTapeC$. 
\begin{enumerate} 
\item If $\t; (\id{Q} \oplus\, \bangp{R}) =\s$
and $\t; (\,\bang{Q} \oplus \id{R}) = \bang{U}; \cobang{R}$ then $\t =\s \oplus \cobang{R}$;
\item If $\t; (\id{Q} \oplus\, \bangp{R}) =\bang{U}; \cobang{Q}$  
and $\t; (\,\bang{Q} \oplus \id{R}) = \s$ then $\t = \cobang{Q} \oplus \s$;
\end{enumerate}
\end{lemma}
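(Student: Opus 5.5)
Since $\CatTapeC$ is not yet known to be a convex biproduct category at this stage (we are building towards Lemma~\ref{lemma:TC jmono}), I would avoid any appeal to the universal property of convex products. Instead I would work with the normal forms of Lemma~\ref{decomposition}. I treat item~1; item~2 is symmetric, obtained by swapping summands via $\symmp{Q}{R}$ and \eqref{eq:diagp symmetry}. Write $Q=\bigoplus_{i} U_i$ and $R=\bigoplus_j V_j$. By Lemma~\ref{decomposition} applied to the decomposition $Q\oplus R$ into two blocks, $\t = \diagp{U};(\t_1\oplus \t_2)$ for some $p\in[0,1]$, $\t_1\colon U\to Q$ and $\t_2\colon U\to R$. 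Here $p\in\{0,1\}$ is allowed via the extended convention $\diagpX{1\;\;}{U}=\id{U}\oplus\cobang{U}$.

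By Lemma~\ref{lemma:basicTC} (items 1 and 2, or the analogous direct computation when $p\in\{0,1\}$), the hypotheses become
\[ p\cdot \t_1 = \s \qquad\text{and}\qquad (1-p)\cdot \t_2 = \star_{U,R}=\bang{U};\cobang{R}. \]

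\textbf{Case $p=1$.} Then $\t = \t_1\oplus\cobang{U}$ composed appropriately, i.e.\ $\t=\t_1\oplus\cobang{R}$ up to the unit laws. Using \eqref{eq:cobang nat} to absorb $\t_2$, one gets $\t = \t_1 \oplus \cobang{R}$ with $\t_1 = 1\cdot\t_1=\s$, and we are done.

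\textbf{Case $p<1$.} From $(1-p)\cdot\t_2=\star_{U,R}=(1-p)\cdot\star_{U,R}$ and Cancellativity (Lemma~\ref{cancellativity}) I get $\t_2=\star_{U,R}=\bang{U};\cobang{R}$. Then
\[ \t=\diagp{U};(\t_1\oplus (\bang{U};\cobang{R})) = \diagp{U};(\t_1\oplus\bang{U});(\id{Q}\oplus\cobang{R}), \]
and Lemma~\ref{lemma:pcdot} rewrites $\diagp{U};(\t_1\oplus\bang{U})$ as $p\cdot\t_1=\s$. Hence $\t=\s\oplus\cobang{R}$ in the strict setting, since $\id{Q}\oplus\cobang{R}$ is exactly the whiskering. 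The case $p=0$ is covered as well: there $\s=\star$, and $\t=\cobang{U}\oplus\t_2$ with $\t_2=\star$, which collapses to $\star\oplus\cobang{R}$ by naturality of $\cobang{}$ and $\bang{}$.

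The main obstacle is the first step, namely ensuring that Lemma~\ref{decomposition} really yields a binary split $\diagp{U};(\t_1\oplus\t_2)$ along the boundary between $Q$ and $R$, including the degenerate cases. I would obtain this from the $n=2$ instance of Lemma~\ref{decomposition} with $Q_1=Q$ and $Q_2=R$, noting that $\diagpn{p_1,p_2}{U}{2}$ equals $\diagp{U};(\id{U}\oplus\, q\cdot\id{U})$ with $q = p_2/(1-p_1)$. I would then absorb the $q\cdot$ factor into $\t_2$ using Lemma~\ref{lemma:pcdot}. The cancellativity step is where the real content lies, and it is supplied by Lemma~\ref{cancellativity}.
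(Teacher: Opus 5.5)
Your argument is correct and is essentially the paper's. You reduce to a split $\t=\,\diagp{U};(\t_1\oplus\t_2)$, turn the second hypothesis into $(1-p)\cdot\t_2=\bang{U};\cobang{R}$, cancel via Lemma~\ref{cancellativity} to get $\t_2=\bang{U};\cobang{R}$, and reassemble $\t$ as $\s\oplus\cobang{R}$ using Lemma~\ref{lemma:pcdot}.

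The only differences are that you get the split from the $n=2$ instance of Lemma~\ref{decomposition}, absorbing the degenerate shapes as $p\in\{0,1\}$, whereas the paper cases on the three forms of Lemma~\ref{lemma:division} and dismisses the first two as trivial. One small slip: in your $p=0$ case, $\t$ should read $\cobang{Q}\oplus\t_2$, not $\cobang{U}\oplus\t_2$.
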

\begin{proof}
We prove the first point: the second is symmetrical.

First, observe that $\t$ should be in one of the forms of Lemma~\ref{lemma:division}. The only challenging case is the third form (iii) $\t=\,\diagp{U}; (\t_1 \oplus \t_2)$ for some $\t_1\colon U \to Q$ and $\t_2\colon U \to R$. Since
\begin{align*}
\bang{U}; \cobang{R} &= \t; (\,\bang{Q} \oplus \id{R}) \tag{Hypothesis}\\
&=\, \diagp{U}; (\t_1 \oplus \t_2) ; (\,\bang{Q} \oplus \id{R}) \tag{iii}\\
&=\, \diagp{U}; ((\t_1 ; \bang{Q}) \oplus \t_2) \tag{Symmetric Monoidal Category}\\
&=\, \diagp{U} ; (\bang{U} \oplus \t_2) \tag{\ref{eq:bangp nat}}\\
&= (1-p)\cdot \t_2 \tag{Lemma \ref{lemma:basicTC}.2}
\end{align*}
since also $(1-p)\cdot (\bang{U}; \cobang{R})=\bang{U}; \cobang{R}$, by cancellativity (Lemma~\ref{cancellativity}), we have that $\t_2 = \bang{U}; \cobang{R}$.
Thus
\begin{align*}
\s \oplus \cobang{R} & = (\t; (\id{Q} \oplus\, \bangp{R})) \oplus \cobang{R} \tag{Hypothesis}\\
&= (\diagp{U}; (\t_1 \oplus \t_2) ; (\id{Q} \oplus\, \bangp{R})) \oplus \cobang{R} \tag{iii}\\
&= (\diagp{U}; (\t_1 \oplus (\t_2 ; \bang{R})))  \oplus\, \cobang{R} \tag{Symmetric Monoidal Category}\\
&= (\diagp{U}; (\t_1 \oplus ( \bang{U}; \cobang{R} ; \bang{R})))  \oplus\, \cobang{R} \tag{$\t_2 = \bang{U}; \cobang{R}$}\\
&= (\diagp{U}; (\t_1 \oplus  \bang{U}) ) \oplus \cobang{R} \tag{\ref{eq:bangp nat}}\\
&=\, \diagp{U}; (\t_1 \oplus  (\bang{U} ; \cobang{R}))\tag{Symmetric Monoidal Category}\\
&=\, \diagp{U}; (\t_1 \oplus  \t_2)\tag{$\t_2 = \bang{U}; \cobang{R}$}\\
&= \t \tag{iii}\\
\end{align*}
\end{proof}

\begin{lemma}\label{lemma:comvexproductpreliminary}
Let $\t_1\colon U\to P$ and $\t_2\colon U \to Q$ be arrows of $\CatTapeC$ and $p\in (0,1)$. Then $\diagp{U}; (\t_1 \oplus \t_2) \colon U \to P\oplus Q$ is the unique arrow $\t$ such that
$p\cdot \t_1 = \t; (\id{P} \oplus \bang{Q})$ and $(1-p)\cdot \t_2 = \t; (\bang{P} \oplus \id{Q})$.
\end{lemma}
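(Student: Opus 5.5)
Existence is immediate from Lemma~\ref{lemma:basicTC}.1 and .2: for $\t \defeq \diagp{U};(\t_1\oplus\t_2)$ we get $\t;(\id{P}\oplus\bang{Q}) = p\cdot\t_1$ and $\t;(\bang{P}\oplus\id{Q}) = (1-p)\cdot\t_2$. The real content is uniqueness, for which we argue structurally on an arbitrary $\s\colon U\to P\oplus Q$ with $\s;(\id{P}\oplus\bang{Q}) = p\cdot\t_1$ and $\s;(\bang{P}\oplus\id{Q}) = (1-p)\cdot\t_2$.

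Write $P\oplus Q = \bigoplus_{i=1}^n V_i$ and apply Lemma~\ref{lemma:division} to $\s$, choosing the split index $j$ at the boundary between $P$ and $Q$. There are three cases.

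In the first case $\s = \bang{U};\cobang{P\oplus Q}$. Then both projections of $\s$ are $\star$, so $p\cdot\t_1 = \star = p\cdot(\bang{U};\cobang{P})$. By cancellativity (Lemma~\ref{cancellativity}) we get $\t_1 = \star$, and likewise $\t_2=\star$. Hence $\diagp{U};(\t_1\oplus\t_2) = \diagp{U};(\bang{U}\oplus\bang{U});\cobang{P\oplus Q}$, which by the coherence and naturality axioms of Table~\ref{fig:freecopcacat} equals $\bang{U};\cobang{P\oplus Q} = \s$.

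In the second case $\s$ factors through a single summand $V_k$, lying say in $P$. Then $\s = \s'\oplus\cobang{Q}$ with $\s'\colon U\to P$, so $\s;(\bang{P}\oplus\id{Q}) = \star$. Cancellativity gives $\t_2 = \star$, and $\s;(\id{P}\oplus\bang{Q}) = \s'$ gives $\s' = p\cdot\t_1$. We then need $\diagp{U};(\t_1\oplus \bang{U};\cobang{Q}) = (p\cdot\t_1)\oplus\cobang{Q}$. This follows from Lemma~\ref{lemma:pcdot} together with symmetric monoidal bookkeeping; Lemma~\ref{keyLemma001}.1 packages it directly.

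In the third case $\s = \diagq{U};(\s_1\oplus\s_2)$ with $\s_1\colon U\to P$ and $\s_2\colon U\to Q$. Lemma~\ref{lemma:basicTC} gives $q\cdot\s_1 = p\cdot\t_1$ and $(1-q)\cdot\s_2 = (1-p)\cdot\t_2$.
\begin{itemize}
\item If $p=q$, cancellativity yields $\s_1=\t_1$ and $\s_2=\t_2$, so $\s=\t$.
\item If $q<p$, rewrite the first equation as $p\cdot(\frac{q}{p}\cdot\s_1) = p\cdot\t_1$ using Lemma~\ref{lemma:initialproperties2}.\ref{lemma:pq}, giving $\t_1 = \frac{q}{p}\cdot\s_1$. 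Similarly the second gives $\s_2 = \frac{1-p}{1-q}\cdot\t_2$. Lemma~\ref{lemma:fractions} then yields $\t=\s$.
\item The case $q>p$ is symmetric, using Lemma~\ref{lemma:fractions} with the roles exchanged.
\end{itemize}
This mirrors the argument in the proof of Lemma~\ref{lemma:cancellativityhalf}.

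The main obstacle is applying Lemma~\ref{lemma:division} so that the split matches $P\oplus Q$, in particular when $P$ or $Q$ is $\zero$ or $n$ is small. We handle these degenerate boundaries separately: if $P=\zero$ then $\s;\bang{}$-type arguments collapse and everything reduces to $Q$, where the claim follows from $\diagp{U};(\bang{U}\oplus\t_2) = (1-p)\cdot\t_2$ and cancellativity-free identification. We also have to check that the scalar side conditions of Lemma~\ref{lemma:fractions} (nonzero denominators, values in $(0,1)$) hold, which they do since $p,q\in(0,1)$.
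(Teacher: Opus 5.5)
Your proposal is correct and takes the same route as the paper: existence via Lemma~\ref{lemma:basicTC}.1--2, and uniqueness by writing the competing arrow in the third form of Lemma~\ref{lemma:division} and splitting on $p=q$, $q<p$, $q>p$, using cancellativity and Lemma~\ref{lemma:fractions}. The only difference is that you also dispatch, correctly, the first two forms of Lemma~\ref{lemma:division} and the cases where $P$ or $Q$ is empty, which the paper dismisses as ``not relevant.''
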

\begin{proof}
First observe that by Lemma~\ref{lemma:basicTC}.1 and 2 we have that $\diagp{U}; (\t_1 \oplus \t_2) ; (\id{P} \oplus \bang{Q}) = p\cdot \t_1$ and that $\diagp{U}; (\t_1 \oplus \t_2) ; (\bang{P} \oplus \id{Q}) = (1-p) \cdot \t_2$.

We now need to prove that $\diagp{U}; (\t_1 \oplus \t_2)$ is the unique arrow with such property.  We prove that if (a) $p\cdot \t_1 = \t; (\id{P} \oplus \bang{Q})$ and (b) $(1-p)\cdot \t_2 = \t; (\bang{P} \oplus \id{Q})$ then $\t =\, \diagp{U}; (\t_1 \oplus \t_2)$.

Note that $\t$ can be in one of the forms of Lemma~\ref{lemma:division}. The only relevant case is the last one: $\t =\, \diagpX{p'}{U} ; (\t_1' \oplus 	\t_2')$. By (a), (b) Lemma~\ref{lemma:basicTC}.1 and 2, it holds that 
\begin{center}$p\cdot \t_1 = p' \cdot \t_1'$ and  $(1-p)\cdot \t_2 = (1-p') \cdot \t_2'$. \end{center}

There are now three cases:
\begin{itemize}
\item If $p=p'$, then by Lemma~\ref{cancellativity}, $\t_1=\t_1'$ and $\t_2=\t_2'$. Thus, $\t =\, \diagp{U}; (\t_1 \oplus \t_2) $.
\item If $p>p'$, then we can rewrite the above equalities as  $p\cdot \t_1 = p \cdot \frac{p'}{p} \cdot \t_1'$ and $(1-p')\cdot\frac{(1-p)}{1-p'}\cdot \t_2 = (1-p') \cdot \t_2'$. By Lemma~\ref{cancellativity}, we have that $\t_1=\frac{p'}{p}\cdot\t_1'$ and $\frac{1-p}{1-p'} \cdot \t_2 =\t_2'$. Thus, by Lemma~\ref{lemma:fractions}, $\t=\, \diagp{U}; (\t_1 \oplus \t_2)$.
\item The case $p<p'$ is symmetrical to the one above.
\end{itemize}
\end{proof}

\begin{corollary}\label{cor:finaleTC}
    Let $\t_1\colon U\to P$ and $\t_2\colon U \to Q$ be arrows of $\CatTapeC$ and $p,q\in [0,1]$ such that $p+q\le 1$. Then $\diaggen{p,q}{U};(\t_1 \oplus \t_2) \colon U \to P\oplus Q$ is the unique arrow $\t$ such that  $p\cdot \t_1 = \t; (\id{P} \oplus \bang{Q})$ and $q\cdot \t_2 = \t; (\bang{P} \oplus \id{Q})$.
\end{corollary}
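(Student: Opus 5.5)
We prove the statement by reducing it to Lemma~\ref{lemma:comvexproductpreliminary}, with a separate treatment of the boundary values of $p,q$.

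\emph{Existence.} Take $\t \defeq \diaggen{p,q}{U};(\t_1\oplus\t_2)$. In the quotient $\CatTapeC$, Lemma~\ref{lemma:diagpq} gives $\t;\pi_1 = p\cdot \t_1$ and $\t;\pi_2 = q\cdot \t_2$, because every object carries natural and coherent monoid and co-pca structures. In the strict setting, $\pi_1$ is $\id{P}\oplus \bang{Q}$ and $\pi_2$ is $\bang{P}\oplus\id{Q}$, so $\t$ has the required property.

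\emph{Uniqueness, generic case.} Assume $p\neq 1$ and $p+q>0$. By \eqref{diagpq generalizzata}, $\diaggen{p,q}{U}=\diagp{U};(\id{U}\oplus (q'\cdot\id{U}))$ with $q'=\frac{q}{1-p}$; this rewriting uses Lemma~\ref{lemma:pcdot}. Hence
\[\diaggen{p,q}{U};(\t_1\oplus\t_2)=\diagp{U};(\t_1\oplus q'\cdot\t_2).\]
Now let $\s$ be any arrow with $\s;(\id{P}\oplus\bang{Q})=p\cdot\t_1$ and $\s;(\bang{P}\oplus\id{Q})=q\cdot\t_2=(1-p)\cdot(q'\cdot\t_2)$, using Lemma~\ref{lemma:initialproperties2}.\ref{lemma:pq}. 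Applying Lemma~\ref{lemma:comvexproductpreliminary} with $\t_2$ replaced by $q'\cdot\t_2$ gives $\s=\diagp{U};(\t_1\oplus q'\cdot \t_2)$, which is the arrow above. This holds whenever $p\in(0,1)$.

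\emph{Uniqueness, boundary cases.} These are the delicate part.
\begin{itemize}
\item If $p=1$ (so $q=0$), then $\s;(\bang{P}\oplus\id{Q})=\star=\bang{U};\cobang{Q}$. Lemma~\ref{keyLemma001}.1 then gives $\s=(\s;(\id{P}\oplus\bang{Q}))\oplus\cobang{Q}=\t_1\oplus\cobang{Q}$ (up to the unitor). This equals $\diagpX{1}{U};(\t_1\oplus\t_2)$ by the extended definition of $\diagp{}$ and by $\cobang{}$-naturality.
\item If $p=0$ and $q\in(0,1)$, swap the roles of the two components using \eqref{eq:diagp symmetry}, the symmetry $\sigma$, and the case already handled.
\item If $p=0$ and $q=1$, use Lemma~\ref{keyLemma001}.2 symmetrically.
\item If $p=q=0$, both composites are $\star$. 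Apply Lemma~\ref{lemma:division} to $\s$:
  \begin{itemize}
  \item In the first form, $\s$ equals $\bang{U};\cobang{}$, which coincides with $\diaggen{0,0}{U};(\t_1\oplus\t_2)$ by $\bangp{}$-naturality.
  \item In the third form, $\s=\diagpX{r}{U};(\s_1\oplus\s_2)$. Then $r\cdot\s_1=\star=r\cdot\star$, and cancellativity (Lemma~\ref{cancellativity}) gives $\s_1=\star$; likewise $\s_2=\star$. So $\s$ again reduces to $\bang{U};\cobang{}$ via $\diagp{}$-idempotency.
  \item The second form is handled like the first two cases, through Lemma~\ref{keyLemma001}.
  \end{itemize}
\end{itemize}

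The main obstacle is bookkeeping these degenerate cases, in particular matching the extended definitions $\diagpX{0}{}$ and $\diagpX{1}{}$ against the normal forms. The generic case is immediate from the preceding lemma.
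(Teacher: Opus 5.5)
Your proposal is correct and follows essentially the paper's route: the case $p\in(0,1)$ is reduced to Lemma~\ref{lemma:comvexproductpreliminary} by absorbing the factor $\frac{q}{1-p}$ into the second component, and the cases where $p$ or $q$ equals $1$ go through Lemma~\ref{keyLemma001}. You are in fact more thorough than the paper, whose proof does not address $p=0<q<1$, $q=0<p<1$ or $p=q=0$; note, however, that Lemma~\ref{keyLemma001} alone settles every case with $p$ or $q$ in $\{0,1\}$ (for $p=q=0$ its first item applies directly), so your detour through Lemma~\ref{lemma:division} there is unnecessary.
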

\begin{proof}
    The case where $p$ or $q$ is equal to $1$ follows from Lemma~\ref{keyLemma001} and the fact that $\diaggen{1,0}{U} =\, \diaggen{1}{U}= \id{U}\oplus \cobang{U}$ and $\diaggen{0,1}{U} =\, \diaggen{0}{U} = \cobang{U}\oplus \id{U}$. The case $p,q\in (0,1)$ and $p+q=1$ follows from Lemma~\ref{lemma:comvexproductpreliminary} and the fact that $\diaggen{p,q}{U} =\, \diagp{U}$. The case where $p,q\in (0,1)$ and $p+q<1$ follows from the previous case and the fact that $\diaggen{p,q}{U};(\t_1\oplus \t_2) =\, \diaggen{p,1-p}{U}; (\t_1 \oplus \frac{q}{1-p}\cdot \t_2)$.
\end{proof}

\begin{proof}[Proof of Lemma~\ref{lemma:TC jmono}]
     First, we consider two arrows $\t,\s\colon U \to Q\oplus R$ where $U$ is an object of $\Cat{C}$ and assume that $\t;\pi_1 = \s;\pi_1$ and $\t;\pi_2 = \s;\pi_2$. By Lemma~\ref{decomposition}, there exist $\t_1,\s_1\colon U \to Q$, $\t_2,\s_2\colon U \to R$ and $p,q,p',q'\in [0,1]$ where $p+q\le 1$ and $p'+q'\le 1$ such that
     \[\t=\, \diaggen{p,q}{};(\t_1 \oplus \t_2) \text{ and }\s=\,\diaggen{p',q'}{};(\s_1 \oplus \s_2)\text{.}\]
The last two items of Lemma~\ref{lemma:diagpq} guarantee that $\t;\pi_1 = p\cdot \t_1$ and $\t;\pi_2 = q\cdot \t_2$ and $\s;\pi_1 = p'\cdot \s_1$ and $\s;\pi_2 = q'\cdot \s_2$. Since, by hypothesis, $\t;\pi_1 = \s;\pi_1$ and $\t;\pi_2 = \s;\pi_2$, we have that $p\cdot \t_1 = p'\cdot \s_1$ and $q\cdot \t_2 = q'\cdot \s_2$. 
    Finally, Corollary~\ref{cor:finaleTC} implies that $\t=\s$.

    For the general case $\t,\s\colon P\to Q\oplus R$ where $P=\bigoplus_{i=1}^n U_i$, we rely on the fact that $\CatTapeC$ has coproducts. Indeed, by the universal property of coproducts, we can write $\t = [\t_1,\dots,\t_n]$ and $\s = [\s_1,\dots,\s_n]$ for some $\t_i,\s_i\colon U_i \to Q\oplus R$. Now, from $\t;\pi_1 = \s;\pi_1$ and $\t;\pi_2 = \s;\pi_2$ it follows that $\t_i;\pi_1 = \s_i;\pi_1$ and $\t_i;\pi_2 = \s_i;\pi_2$ for all $i=1,\dots,n$. By the previous case, we have that $\t_i=\s_i$ for all $i=1,\dots,n$ and hence $\t=\s$.
\end{proof}

\subsection{Proof of Proposition~\ref{cor: quotient category is convex biproduct }}
We conclude this appendix with a proof of Proposition~\ref{cor: quotient category is convex biproduct } that turns out to be useful when considering diagrammatic languages (see Section~\ref{sec:probbooltapes}). First, we need the following result.

\begin{lemma}\label{lemma: convex products quoziente}
Let $\Cat{C}$ be a category and let $\sim$  be a congruence relation on $\CatTapeC$ (w.r.t. $;$ and $\oplus$) that is cancellative: for every $p\in (0,1)$, if $p\cdot f\sim p\cdot g$ then $f\sim g$. For all $f_1,g_1\colon Z\to X_1$, $f_2,g_2\colon Z\to X_2$ and $p_1,p_2,q_1,q_2\in[0,1]$ such that $p_1+p_2\le 1$ and $q_1+q_2\le 1$
\begin{center}if $p_1\cdot f_1 \sim q_1\cdot g_1$ and $p_2\cdot f_2\sim q_2\cdot g_2$  then $\langle f_1,f_2\rangle_{(p_1,p_2)}\sim\langle g_1,g_2\rangle_{(q_1,q_2)}$.\end{center}
\end{lemma}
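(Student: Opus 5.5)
We work directly in $\CatTapeC$, where by Theorem~\ref{thm:TCconvexbiproductcategory} and Lemma~\ref{lemma:diagpq3} the mediating arrows are $\langle f_1,f_2\rangle_{(p_1,p_2)}=\diaggen{p_1,p_2\,}{Z};(f_1\oplus f_2)$ and $\langle g_1,g_2\rangle_{(q_1,q_2)}=\diaggen{q_1,q_2\,}{Z};(g_1\oplus g_2)$. The goal is to show these two terms are $\sim$-related, using only the hypotheses $p_1\cdot f_1\sim q_1\cdot g_1$, $p_2\cdot f_2\sim q_2\cdot g_2$, the fact that $\sim$ is a congruence for $;$ and $\oplus$, and cancellativity. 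Since $\sim$ is a congruence containing the equations of $\CatTapeC$, the quotient inherits the monoid and co-pca structure, so all identities of Section~\ref{ssec:almost} hold modulo $\sim$.

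\textbf{Step 1: reduce to a single scalar.} Choose $r\in(0,1)$ small, e.g.\ $r=\tfrac12$, and consider $r\cdot\langle f_1,f_2\rangle_{(p_1,p_2)}$. I claim that in $\CatTapeC$
\[ r\cdot \langle f_1,f_2\rangle_{(p_1,p_2)} = \diaggen{\frac12,\frac12\,}{Z};\bigl((r\cdot p_1\cdot f_1)\cdot 2 \ldots\bigr)\]
is awkward. A cleaner route: express the pairing as a convex sum of injections, namely
\[ \langle f_1,f_2\rangle_{(p_1,p_2)} = \bigl((p_1\cdot f_1);\iota_1\bigr) \circledast \bigl((p_2\cdot f_2);\iota_2\bigr),\]
where, since $p_1\cdot f_1$ and $p_2 \cdot f_2$ have ``total weight'' at most $1$, one can write $\tfrac12\cdot\langle f_1,f_2\rangle_{(p_1,p_2)} = \bigl((p_1\cdot f_1);\iota_1\bigr)+_{\frac12}\bigl((p_2\cdot f_2);\iota_2\bigr)$. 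I would verify this identity in $\CatTapeC$ using jointly monic projections (Lemma~\ref{lemma:TC jmono}): postcomposing both sides with $\pi_1$ gives $\tfrac12 p_1\cdot f_1$ on each side (using \eqref{eq:delta}, enrichment \eqref{eq:enr} and Lemma~\ref{lemma:initialproperties2}), and similarly for $\pi_2$.

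\textbf{Step 2: transfer along $\sim$.} The right-hand side is built from $p_i\cdot f_i$ using $;$, $\oplus$, $\diagp{}$ and $\codiag{}$ (since $+_{\frac12}$ is defined by \eqref{eq:enrichment}), so congruence of $\sim$ and the hypotheses yield $\bigl((p_1\cdot f_1);\iota_1\bigr)+_{\frac12}\bigl((p_2\cdot f_2);\iota_2\bigr)\sim\bigl((q_1\cdot g_1);\iota_1\bigr)+_{\frac12}\bigl((q_2\cdot g_2);\iota_2\bigr)$. Applying Step 1 to the $g$'s gives $\tfrac12\cdot\langle f_1,f_2\rangle_{(p_1,p_2)}\sim\tfrac12\cdot\langle g_1,g_2\rangle_{(q_1,q_2)}$, and cancellativity of $\sim$ concludes.

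The main obstacle is Step 1: checking the factor-$\tfrac12$ identity, in particular the edge cases $p_i\in\{0,1\}$ and confirming that $\tfrac12\cdot\langle f_1,f_2\rangle_{(p_1,p_2)}$ is itself a mediating arrow with weights $(\tfrac{p_1}{2},\tfrac{p_2}{2})$, so that uniqueness via jointly monic projections in $\CatTapeC$ applies.
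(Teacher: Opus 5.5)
Your argument is correct, and it takes a different route from the paper's proof.

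\textbf{The paper's route.} In the generic case $p_i,q_i\in(0,1)$ the paper assumes without loss of generality that $p_1\le q_1$ and $q_2\le p_2$. It applies cancellativity twice to obtain $\frac{p_2}{1-p_1}\cdot f_2\sim\frac{q_2}{1-p_1}\cdot g_2$ and $\frac{p_1}{1-q_2}\cdot f_1\sim\frac{q_1}{1-q_2}\cdot g_1$. It then writes both pairings as nested splits via Lemma~\ref{lemma:diagpq3}, and transforms one into the other using the associativity and symmetry axioms of the co-pca. The boundary cases ($p_1=1$, and then $q_1=1$ or $q_1\neq1$, and so on) are handled separately, partly with a further use of cancellativity.

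\textbf{Your route.} You note that $\frac12\cdot\langle f_1,f_2\rangle_{(p_1,p_2)}=((p_1\cdot f_1);\iota_1)+_{\frac12}((p_2\cdot f_2);\iota_2)$ holds already in $\CatTapeC$. After postcomposing with $\pi_i$, both sides become $\frac{p_i}{2}\cdot f_i$, by \eqref{eq:delta}, \eqref{eq:enr}, symmetry of $+_{\frac12}$ and Lemma~\ref{lemma:initialproperties2}, so Lemma~\ref{lemma:TC jmono} gives the equality. The right-hand side depends on $f_i$ only through $p_i\cdot f_i$, and it is built with composition, $\oplus$, injections, and the split and codiagonal generators (via \eqref{eq:enrichment}). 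Hence the congruence transports the hypotheses to it, and a single application of cancellativity with $p=\frac12$ finishes.

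\textbf{Comparison.} Your proof is shorter and uniform. The edge cases you flag as the main obstacle are in fact automatic, because the projection computation is valid for all $p_i\in[0,1]$ (recall $0\cdot f=\star$ and $1\cdot f=f$). The cost is that you replace an explicit equational derivation with an appeal to joint monicity of projections in $\CatTapeC$. This assumes nothing extra: the paper's proof also depends on $\CatTapeC$ being a convex biproduct category, through Lemma~\ref{lemma:diagpq3} and Theorem~\ref{thm:TCconvexbiproductcategory}. Neither of these uses the present lemma, so there is no circularity.

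Before finalising, delete the abandoned display and the undefined operator $\circledast$ at the start of Step~1.
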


\begin{proof}\label{proof-lemma: convex products quoziente}
We first consider the case where $p_1,p_2,q_1,q_2\in(0,1)$. We assume that $p_1\leq q_1$  and $q_2\leq p_2$.  The cases of the other inequalities are proved similarly.

First observe that
\begin{align*}
(1-p_1)\cdot (\frac{p_2}{1-p_1}\cdot f_2) &= p_2\cdot f_2 \notag \\
& \sim q_2\cdot g_2 \tag{Hypothesis}\\
&= (1-p_1)\cdot (\frac{q_2}{1-p_1}\cdot g_2) \notag
\end{align*}
Thus, by cancellativity of $\sim$, it holds that $\frac{p_2}{1-p_1}\cdot f_2 \sim \frac{q_2}{1-p_1}\cdot g_2$ which, by Lemma \ref{lemma:initialproperties}, means
\begin{equation}\tag{$\dagger$}\label{eq:internal1}
\diagpX{\frac{p_2}{1-p_1}\,}{};(f_2\piu\, \bangp{}) \sim \diagpX{\frac{q_2}{1-p_1}\,}{};(g_2\piu\, \bangp{}) 
\end{equation}
One can similarly prove that 
\begin{equation}\tag{$\ddagger$}\label{eq:internal2}
\diagpX{\frac{p_1}{1-q_2}\,}{};(f_1\piu\, \bangp{}) \sim \diagpX{\frac{q_1}{1-q_2}\,}{};(g_1\piu\, \bangp{}) 
\end{equation}

We conclude with the following derivation. 
  \begin{align*}
        \langle f_1,f_2\rangle_{(p_1,p_2)}&=\, \diagpX{p_1}{}; (f_1\piu\, (\diagpX{\frac{p_2}{1-p_1}\,}{};(f_2\piu\, \bangp{})) ) \tag{Lemma~\ref{lemma:diagpq3}}\\
        &\sim \, \diagpX{p_1}{}; (f_1\piu\, (\diagpX{\frac{q_2}{1-p_1}\,}{};(g_2\piu\, \bangp{})) ) \tag{\eqref{eq:internal1}}\\
& = \, \diagpX{q_2}{};(g_2\piu\, (\diagpX{\frac{p_1}{1-q_2}\,}{};(f_1\piu\, \bangp{})) ) \tag{\ref{eq:diagp assoc} + \ref{eq:diagp symmetry}}  \\
&\sim \, \diagpX{q_2}{};(g_2\piu\, (\diagpX{\frac{q_1}{1-q_2}\,}{};(g_1\piu\, \bangp{})) ) \tag{\eqref{eq:internal2}}\\
 &=\,  \diagpX{q_1}{}; (g_1\piu\, (\diagpX{\frac{q_2}{1-q_1}\,}{};(g_2\piu\, \bangp{})) ) \tag{\ref{eq:diagp assoc} + \ref{eq:diagp symmetry}}\\
 &=  \langle g_1,g_2\rangle_{(q_1,q_2)}  \tag{Lemma~\ref{lemma:diagpq3}}
  \end{align*}

We now prove the case where $p_1=1$ and $p_2=0$. We have two sub-cases: either $q_1=1$ or $q_1\neq 1$. 
 If $q_1=1$, then by Lemma~\ref{lemma:diagpq3} easily follows that $\langle f_1,f_2\rangle_{(1,0)}=f_1;\iota_1$ and $\langle g_1,g_2\rangle_{(1,0)}=g_1;\iota_1$. Since by hypothesis 
 $f_1 = 1\cdot f_1= p_1\cdot f_1 \sim q_1\cdot g_1 = 1\cdot g_1 =g_1$, then $f_1;\iota_1 \sim g_1;\iota_1$.

If $q_1\not=1$, by hypothesis, we have that $q_2\cdot g_2\sim p_2\cdot f_2=0 \cdot f_2 = \star = q_2\cdot\star$. If $q_2\not=0$, by cancellativity we have, 
\begin{equation}\tag{$\heartsuit$}\label{eq:heart}
g_2\sim\star \text{.}\end{equation}
The following derivation
    \begin{align}
        \langle g_1,g_2\rangle_{(q_1,q_2)}&= \, \diagpX{q_1}{}; (g_1\piu\, (\diagpX{\frac{q_2}{1-q_1}\,}{};(g_2\piu\, \bangp{})) ) \tag{Lemma~\ref{lemma:diagpq3}}\\
        &\sim \, \diagpX{q_1}{}; (g_1\piu\, (\diagpX{\frac{q_2}{1-q_1}\,}{};(\star \piu\, \bangp{})) ) \tag{\eqref{eq:heart}}\\
        &= \, \diagpX{q_1}{}; (g_1\piu\, (\diagpX{\frac{q_2}{1-q_1}\,}{};(\,\bangp{};\cobang{} \piu\, \bangp{})) ) \tag{Def. $\star$}\\
        &=\, \diagpX{q_1}{}; (g_1\piu\, (\,\bangp{};(\,\cobang{}\piu \id{0}) ))\tag{\ref{eq:diagp nat}}\\
        &= \, (\diagpX{q_1}{}; (g_1\piu\, \,\bangp{})); (\id{} \oplus \cobang{} )\tag{SMC} \\
        &= (q_1\cdot g_1) ; \iota_1 \tag{Lemma~\ref{lemma:initialproperties}}
    \end{align}
proves that $\langle g_1,g_2\rangle_{(q_1,q_2)}\sim q_1\cdot g_1;\iota_1$. By means of Lemma~\ref{lemma:diagpq3} and Lemma~\ref{lemma:initialproperties}, one can easily prove that $\langle f_1,f_2\rangle_{(p_1,p_2)}=\iota_1^{Z\oplus Z};(f_1\oplus f_2) =f_1;\iota_1 $. Now, by hypothesis, $f_1= 1\cdot f_1 = p_1\cdot f_1 \sim q_1\cdot g_1$. We can thus conclude that $\langle g_1,g_2\rangle_{(q_1,q_2)}\sim \langle f_1,f_2\rangle_{(p_1,p_2)}$. The remaining cases are proved similarly.
\end{proof}

\begin{proof}[Proof of Proposition~\ref{cor: quotient category is convex biproduct }]
   By construction, every object in $\CatTapeC$ carries a commutative and coherent monoid and co-pca structure. Since $\sim$ is a congruence relation, the equational axioms in Table~\ref{fig:freestrictfccat} and \ref{fig:freecopcacat}  hold in $\CatTapeC_\sim$. Thus, thanks to Proposition~\ref{prop:A}, to prove that $\CatTapeC_\sim$ is a convex biproduct category, it is enough to prove that it has jointly monic projections.

First, we consider two arrows of the form $[\t]_\sim,[\s]_\sim\colon U\to Q\oplus R$ where $U$ is an object of $\Cat{C}$. We assume that $[\t]_\sim;[\pi_1]_\sim = [\s]_\sim;[\pi_1]_\sim$ and $[\t]_\sim;[\pi_2]_\sim = [\s]_\sim;[\pi_2]_\sim$. By Lemma~\ref{decomposition}, $\t=\,\diaggen{p,q}{};(\t_1\oplus \t_2)$ and $\s=\,\diaggen{p',q'}{};(\s_1\oplus \s_2)$. By the assumption and Lemma~\ref{lemma:diagpq}, it holds that $p\cdot \t_1 \sim p'\cdot \s_1$ and $q\cdot \t_2 \sim q'\cdot \s_2$. Hence, by Lemma~\ref{lemma: convex products quoziente}, $\langle \t_1,\t_2\rangle_{(p,q)}\sim\langle \s_1,\s_2\rangle_{(p',q')}$ and thus $[\t]_\sim=[\s]_\sim$.
    
 The general case $[\t]_\sim,[\s]_\sim\colon P\to Q\oplus R$ where $P=\bigoplus_{i=1}^n U_i$ is proved similarly by relying on the fact that $\CatTapeC_\sim$  also has coproducts thanks to Fox's Theorem. Indeed, by the universal property of coproducts, we can write $\t = [\t_1,\dots,\t_n]$ and $\s = [\s_1,\dots,\s_n]$ for some $\t_i,\s_i\colon U_i \to Q\oplus R$. Now, from $[\t]_\sim;[\pi_1]_\sim = [\s]_\sim;[\pi_1]_\sim$ and $[\t]_\sim;[\pi_2]_\sim = [\s]_\sim;[\pi_2]_\sim$ it follows that $[\t_i]_\sim;[\pi_1]_\sim = [\s_i]_\sim;[\pi_1]_\sim$ and $[\t_i]_\sim;[\pi_2]_\sim = [\s_i]_\sim;[\pi_2]_\sim$ for all $i=1,\dots,n$. By the previous case, we have that $[\t_i]_\sim=[\s_i]_\sim$ for all $i=1,\dots,n$ and hence $[\t]_\sim=[\s]_\sim$, given that $[[\t_1,\dots,\t_n]]_\sim=[[\t_1]_\sim,\dots,[\t_n]_\sim]$ and $[[\s_1,\dots,\s_n]]_\sim=[[\s_1]_\sim,\dots,[\s_n]_\sim]$.

 The functor $Q_\sim\colon \CatTapeC\to \CatTapeC_\sim$ is a morphism of convex biproduct categories since it preserves finite coproducts.
\end{proof}

%% file: appendices/appprobtapes.tex
\section{Appendix to Section \ref{sec:tapediagrams}}\label{app:sec:tapediagrams}

\begin{proof}[Proof of Proposition~\ref{prop:quotienttheory}]
Let $\CatT{\Diag{\Sigma}}_{\tapeeqT}$ be the quotient of $\CatT{\Diag{\Sigma}}$ by $\tapeeqT$: objects are those of  $\CatT{\Diag{\Sigma}}$, arrows are $\tapeeqT$-equivalence classes of arrows of $\CatT{\Diag{\Sigma}}$, hereafter denoted by $[\t]_{\tapeeqT}$. We prove that $\CatT{\Diag{\Sigma}}_{\tapeeqT}$ is isomorphic to $\CatT{\Diag{\mathbb{T}}}$.

We first define the functor $F \colon \CatT{\Diag{\Sigma}}_{\tapeeqT} \to \CatT{\Diag{\mathbb{T}}}$ inductively as
\[\begin{array}{ccccc}
F(\diagp{U})\defeq \; \diagp{U} &  F(\,\bangp{U})\defeq \bangp{U} & F(\tapeFunct{c})\defeq \tapeFunct{[c]_{\mathbb{T}}} &  F(\cobang{U})\defeq \cobang{U} &
F(\codiag{U})\defeq\codiag{U}  \\
 F(\id{U})\defeq \id{U} & F(\id{\zero})\defeq\id{\zero} & F(\sigma_{U,V}^{\piu})\defeq \sigma_{U,V}^{\piu}& F(\t_1 ; \t_2)\defeq F(\t_1);F(\t_2) & F(\t_1 \piu \t_2) \defeq F(\t_1)\piu F(\t_2)
\end{array}\]
Since arrows in $\CatT{\Diag{\Sigma}}_{\tapeeqT}$ are $\tapeeqT$-equivalence classes, we need to prove that the functor is well-defined, namely that if $\s \tapeeqT  \t$ then $F(\s)=F(\t)$. This is done by induction on the rules in \eqref{eq:congr2}. For the case of the rule ($\mathbb{T}$), observe that $\s=\tape{c}$, $\t=\tape{d}$ and $c=_{\mathbb{T}}d$. By the definition of $F$, $F(\s)=F(\t)$. All the other cases are trivial, with the only exception of the rule $(\otimes)$ where one observes that, since  $=_{\mathbb{T}}$ is a congruence w.r.t. $\otimes$, then $F$ is a morphism of rig categories.

The functor  $G \colon  \CatT{\Diag{\mathbb{T}}} \to \CatT{\Diag{\Sigma}}_{\tapeeqT} $ is defined on $\tapeFunct{[c]_{\mathbb{T}}}$ as $G(\tapeFunct{[c]_{\mathbb{T}}})\defeq [\tapeFunct{c}]_{\tapeeqT}$ and then inductively in the same style of the functor $F$ above. Observe that $F(G(\tapeFunct{[c]_{\mathbb{T}}}))=\tapeFunct{[c]_{\mathbb{T}}}$ and $G(F([\tapeFunct{c}]_{\tapeeqT})) = [\tapeFunct{c}]_{\tapeeqT}$. A simple inductive argument confirms that $F$ and $G$ are inverses of each other.
\end{proof}

%% file: appendices/appprobboolcircuits.tex
\section{Appendix to Section \ref{sec:probboolcircuits}}\label{app:sec:probbooleancircuits}

\begin{proof}[Proof of Lemma~\ref{lemma:multiplexer}] 
  \eqref{ax:M1} and \eqref{ax:M2} can be proved by a simple inductive argument on $m$. \eqref{ax:M3} can be proved by a direct computation with the AND and OR gates. \eqref{ax:N1} and \eqref{ax:N2} follow by induction on the structure of $c$. For the base case, if $c$ is a Boolean gate or one of the structural maps $\id{1},\id{A},\symmt{A}{A}$, then the statement follows from axioms \eqref{ax:C1}, \eqref{ax:C2}, \eqref{ax:C3}  and \eqref{ax:D1}, \eqref{ax:D2}, \eqref{ax:D3} in Figure~\ref{tab:booleanalgebra}. In the case $c;d$, then assuming that the statement holds for $c$ and $d$, we have
  \begin{center}
    
    \InputIfFileExists{tapes/cipriano/booleanscopyable.tikz}{}{\input{./tikz/tapes/cipriano/booleanscopyable.tikz}}

  \end{center}
  \begin{center}
    
    \InputIfFileExists{tapes/cipriano/booleansdiscardable.tikz}{}{\input{./tikz/tapes/cipriano/booleansdiscardable.tikz}}

  \end{center}
  Finally, in the case $c\otimes d$, assuming that the statement holds for $c$ and $d$, we have
  \begin{center}
    
    \InputIfFileExists{tapes/cipriano/booleanscopyable2.tikz}{}{\input{./tikz/tapes/cipriano/booleanscopyable2.tikz}}

  \end{center}
  \begin{center}
    
    \InputIfFileExists{tapes/cipriano/booleansdiscardable2.tikz}{}{\input{./tikz/tapes/cipriano/booleansdiscardable2.tikz}}

   \end{center}
\end{proof}

\begin{proof}[Proof of Proposition~\ref{prop:soundnesspartialb}]
It is enough to check that the statement holds for the axioms in Figure~\ref{tab:partialbooleanalgebra}. Those in the first two lines are standard. For the third row, simple computations confirm that $\osemPB{-}$ maps the left hand side of \eqref{ax:F6} into $\osemPB{\CBcocopier}$ as defined in \eqref{eq:semcocopier}. For \eqref{ax:F7}, one readily checks that $\osemPB{-}$ maps both the left and the right hand side to the partial function $2\times 2 \to 1$ defined as
\[(x,y)\mapsto \begin{cases}\bullet &\text{if }x=y=1\\ \bot & \text{else}\end{cases}\]
\end{proof}

\begin{lemma}\label{lemma:complete2}
  Let $b\in \Diag{\SigB}[1,A^n]$ and $c\in \Diag{PB}[A^n,A^m]$.
  \begin{itemize}
    \item[i.] If $\osemPB{b;c}=\bot$ , then $\osemPB{b;D_c} = 0$;
    \item[ii.] If $\osemPB{b;c}=\osemPB{b'}$ for some $b'\in \Diag{\SigB}[1,A^m]$, then $\osemPB{b;D_c}=1$ and $\osemPB{b;T_c}=\osemPB{b'}$.
  \end{itemize}
\end{lemma}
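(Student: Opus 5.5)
We proceed by structural induction on $c$, proving (i) and (ii) simultaneously, since the inductive step for (ii) in the sequential case needs (i) for the second component and vice versa. Throughout we use that $b$ is $\eqB$-equal to a Boolean vector (Theorem~\ref{thm:completenessbooleancircuits}), so $\osemPB{b}$ is a point of $2^n$. We also use that $D_c$ and $T_c$ are total, so $\osemPB{b;D_c}\in\{0,1\}$. Note that (i) and (ii) are exhaustive, because $\osemPB{b;c}$ is either $\bot$ or a point of $2^m$, and every point is represented by a Boolean vector $b'$.

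\emph{Base cases.} For $c\in\SigB\cup\{\id{1},\id{A},\symmt{A}{A}\}$ we have $\osemPB{b;c}$ total, so (i) is vacuous. For (ii), $D_c$ is the discard followed by $\Flip{1}$, hence $\osemPB{b;D_c}=1$, and $T_c=c$, hence $\osemPB{b;T_c}=\osemPB{b;c}=\osemPB{b'}$.

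For $c$ the cocopier, write $\osemPB{b}=(x,y)$. If $x\neq y$, the output is $\bot$ and the xnor gate gives $0$, which is (i). If $x=y$, the output is $x$, the xnor gate gives $1$, and $\Andgate$ gives $x\wedge x=x$, which is (ii).

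\emph{Sequential case $c;d$.} We compute semantically using the defining diagrams of $D_{c;d}$ and $T_{c;d}$ in \eqref{eq:defTD}. From their shape, $D_{c;d}$ computes $D_c\wedge (T_c;D_d)$, and $T_{c;d}$ feeds $T_c$ into $T_d$, guarding the result with the combined domain via a multiplexer with a default $\Flipm{0}$.

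If $\osemPB{b;c}=\bot$, the IH(i) for $c$ gives $D_c=0$ at $b$, so the conjunction is $0$.

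Otherwise $\osemPB{b;c}=\osemPB{b''}$ with $b''$ Boolean. Then IH(ii) gives $D_c=1$ and $T_c=b''$ at $b$, and we apply the IH to $d$ with input $b''$. If $\osemPB{b'';d}=\bot$, then $D_d=0$ at $b''$, so $D_{c;d}=0$. Otherwise $D_d=1$ and $T_d=b'$ at $b''$, so $D_{c;d}=1$ and the multiplexer selects $T_d(T_c(b))=b'$.

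\emph{Monoidal case $c\otimes d$.} Split $b\eqPB b_1\otimes b_2$, as in the proof of Lemma~\ref{lemma:complete1}. By \eqref{eq:productpar}, $\osemPB{b;(c\otimes d)}$ is $\bot$ iff one of the two factors is $\bot$. Since $D_{c\otimes d}$ is the conjunction of $D_c$ and $D_d$, the IH gives (i). For (ii), both domains evaluate to $1$, so $T_{c\otimes d}$ outputs $T_c(b_1)\otimes T_d(b_2)=b'_1\otimes b'_2$.

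The main obstacle is reading off the exact semantics of the pictured definitions of $D_{c;d}$, $T_{c;d}$ and $T_{c\otimes d}$, in particular how the multiplexer and copies route the domain bit. Once that is pinned down, everything is a pointwise Boolean computation. We will fix that semantics first, by evaluating the diagrams in $\Sets$ via $\osemB{-}$, and then run the induction above.
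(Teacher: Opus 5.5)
Your argument is correct, but it takes a different route from the paper.

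You establish, by simultaneous structural induction on $c$, that $D_c$ and $T_c$ compute pointwise on Boolean inputs the domain and the total part of $c$. The paper does no induction here. It starts from the normal form of Proposition~\ref{prop:decompositionpartialcircuits}: a copier followed by $D_c$ composed with the test for $1$, in parallel with $T_c$. It then moves to semantics by soundness, pushes the Boolean state $b$ through the copier with \eqref{ax:N2}, and uses \eqref{ax:F10l} to obtain a dichotomy:
\begin{itemize}
\item if $\osemPB{b;D_c}=1$, then $\osemPB{b;c}=\osemPB{b;T_c}$;
\item if $\osemPB{b;D_c}=0$, then $\osemPB{b;c}=\bot$.
\end{itemize}
Items (i) and (ii) then follow by short contradiction arguments that use only the totality of $b;D_c$ and $b;T_c$.

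The paper's route reuses the structural work already done syntactically in the decomposition. As a result, it never needs to know what the pictured clauses for $D_{c;d}$, $T_{c;d}$ and $T_{c\otimes d}$ compute.

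Your route does not depend on the decomposition, on soundness, or on the derived laws, and it gives a direct semantic justification of the domain/total-part reading. Its price is exactly the obstacle you flag: each inductive step relies on the precise semantics of those pictured clauses. For example, you need that the guard in $T_{c;d}$ lets $T_d(T_c(b))$ through when the domain bit is $1$, and similarly for $T_{c\otimes d}$. Your readings agree with the paper's informal description. With them your case analysis is complete, including the sequential case where $c$ is defined at $b$ but $d$ is undefined at the resulting point.
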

\begin{proof}
Since $D_c \in \Diag{\SigB}[A^n,1]$, then $\osemPB{b;D_c} = \osemB{b;D_c}$, i.e., it is a (total) function of type $1 \to 2$. Hence $\osemPB{b;D_c}$ is either $0$ or $1$. 

Suppose that $\osem{b;D_c}=1$, or, equivalently, $b;D_c\eqPB\Flip{1}$ thanks to Theorem~\ref{thm:completenessbooleancircuits}. Then
  \begin{align}
    \osemPB{b;c}&=  \osemPB{b;\ncopier;((D_c;\coflip{1})\otimes T_c)} \tag{Prop. \ref{prop:decompositionpartialcircuits}}\\
    &= \osemPB{(b\otimes b);((D_c;\coflip{1})\otimes T_c)} \tag{\eqref{ax:N2}}\\
    &= \osemPB{(b;D_c;\coflip{1})\otimes (b;T_c)} \tag{SMC}\\
    &= \osemPB{(\Flip{1};\coflip{1})\otimes (b;T_c)} \tag{$b;D_c\eqPB\Flip{1}$}\\
    &= \osemPB{b;T_c}. \tag{\eqref{ax:F10l}}
\end{align}
Instead, suppose that $\osem{b;D_c}=0$, or, equivalently, $b;D_c\eqPB\Flip{0}$ thanks to Theorem~\ref{thm:completenessbooleancircuits}. Then, 
\begin{align}
    \osemPB{b;c}  &= \osemPB{b;\ncopier;((D_c;\coflip{1})\otimes T_c)} \tag{Prop. \ref{prop:decompositionpartialcircuits}}\\
    &= \osemPB{(b\otimes b);((D_c;\coflip{1})\otimes T_c)} \tag{\eqref{ax:N2}}\\
    &= \osemPB{(b;D_c;\coflip{1})\otimes (b;T_c)} \tag{SMC}\\
    &= \osemPB{(\Flip{0};\coflip{1})\otimes (b;T_c)} \tag{$b;D_c\eqPB\Flip{0}$}\\
    &= \bot \notag
\end{align}

To prove i.,  we assume that $\osemPB{b;c}=\bot$ and $\osem{b;D_c}=1$ and  immediately obtain a contradiction: by the first derivation above $\osemPB{b;T_c}=\osemPB{b;c}=\bot$, but this is not possible since $b;T_c \in \Diag{\SigB}[1,A^m]$, i.e., it denotes a total Boolean function. Hence, if  $\osemPB{b;c}=\bot$, then $\osem{b;D_c}$ should be $0$.

Now, to prove ii., we first assume that $\osemPB{b;c}=\osemPB{b'}$ and $\osem{b;D_c}=0$ and immediately obtain a contradiction: by the second derivation above $\osemPB{b;c}=\bot$. Hence, if  $\osemPB{b;c}=\osemPB{b'}$, then $\osem{b;D_c}$ should be $1$. By the first derivation $\osemPB{b;T_c}=\osemPB{b;c}=\osemPB{b'}$.

\end{proof}

\begin{proof}[Proof of Lemma~\ref{lemma:complete3}]
There are two cases: either $\osemPB{b;c}=\osemPB{b;d}=\bot$, or $\osemPB{b;c}=\osemPB{b;d}\neq \bot$. In the first case, by Lemma~\ref{lemma:complete2} (i), we have $\osemPB{b;D_c}=0=\osemPB{b;D_d}$, and by Theorem~\ref{thm:completenessbooleancircuits} we have $b;D_c\eqPB b;D_d\eqPB\Flip{0}$. Hence, Lemma~\ref{lemma:complete1} implies ${b;T_c}\eqPB\Flipm{0}={b;T_d}$. In the second case, by Lemma~\ref{lemma:complete2} (ii), we have $\osemPB{b;D_c}=1=\osemPB{b;D_d}$ and $\osemPB{b;T_c}=\osemPB{b'}=\osemPB{b;T_d}$ for some $b'\in \Diag{\SigB}[1,A^m]$.
  \end{proof}

  \begin{proof}[Proof of Proposition~\ref{prop:isopartialboolean}]
Recall that $\osemPB{-}\colon \Diag{{\SigPB}} \to \Cat{Par}$ is defined on objects as $\osemPB{A^n}=2^n$. Hence for all $c\in \Diag{{\SigPB}}[A^n,A^m]$, $\osemPB{c}\in \Cat{Par}[2^n,2^m]$, namely 
 $\osemPB{c}$ is an arrow of $\Cat{Par}_2$. Thus  $\osemPB{-}\colon \Diag{{\SigPB}} \to \Cat{Par}$ factors as
 \[\xymatrix{ \Diag{\SigPB} \ar[r] & \Cat{Par}_2 \ar@{^{(}->}[r]& \Cat{Par} }\text{.}\]
 Moreover, by soundness (Proposition~\ref{prop:soundnesspartialb}),  it factors through $\Diag{{\ThPB}}$: 
\[\xymatrix{ \Diag{\SigPB} \ar@{->>}[r]& \Diag{\ThPB} \ar[r]& \Cat{Par}_2 \ar@{^{(}->}[r]& \Cat{Par} }\]
The central functor is faithful by completeness. To conclude that it is an isomorphism, it is enough to prove fullness, i.e., that for any partial function $f\colon 2^n\to2^m$, there exists some diagram $d\colon A^n\to A^m$ such that $\osemPB{d}=f$. But this is trivial by first observing that for any such partial function, there exist total functions $D_f\colon2^n \to 2 $ and $T_f\colon 2^n \to 2^m$ that decompose $f$ as in Proposition~\ref{prop:decompositionpartialcircuits} and then make use of Proposition~\ref{prop:foolboolean}.
\end{proof}

%% file: appendices/appprobpartialbooleantapes-new.tex
\section{Appendix to Section \ref{sec:probbooltapes}}\label{app:sec:probbooltapes}

\begin{proof}[Proof of Proposition \ref{prop:encoding}] 
By induction on $c$. The base case $c=\Flip{p}$ is proved before the statement of the proposition. For the base case $c\in \SigPB\cup\{\id{A},\id{1},\symmt{A}{A}\}$,  $\CBdsem{\encoding{c}} = \CBdsem{\tapeFunct{c} }$ by the definition in Section~\ref{ssec:probbooltapes:encoding} and $\CBdsem{\tapeFunct{c} } = \ParJ(\osemPB{c})$ by Table~\ref{eq:SEMANTICA}. The latter is exactly $\osem{c}$.
The inductive cases follow immediately from the fact that $\CBdsem{-}$ is a morphism of rig categories and hence preserve $\per$.
\end{proof}

\begin{proof}[Proof of Proposition \ref{prop:soundnesstapes}]
Since the axioms in $\ThPB$ are sounds with respect to $\osem{-}$ and since $\osem{-}=\dsem{\encoding{-}}$, these axioms are also sound for $\dsem{-}$.
For \eqref{ax:AXPBP1} and \eqref{ax:AXPBP2}, it enough to check that the left and the right hand sides are mapped into the same $\KlD$-arrows by $\dsem{-}$.
For \eqref{ax:canc}, it is enough to observe that cancellativity holds in $\KlD$.
\end{proof}

\begin{proof}[Proof of Proposition \ref{prop:twoequivalence}]
  We first prove by induction on the rules in \eqref{eq:congr3} that if $\s \dot{\sim} \t$ then $\CatT{Q_{\ThPB}}(\s) \eqsyn \CatT{Q_{\ThPB}}(\t)$.
Consider the case of the rule $(\ThPB)$: in this case $\s=\tapeFunct{c}$, $\t=\tapeFunct{d}$ and $c\eqPB d$. Thus $\CatT{Q_{\ThPB}}(\s)= \CatT{Q_{\ThPB}}(\t)$.
All the other cases trivially follow from the fact that $\CatT{Q_{\ThPB}}$ is a morphism of rig categories: see \cite{bonchi2025tapediagramsmonoidalmonads}.

Again, by induction on the rules, we prove that if $\CatT{Q_{\ThPB}}(\s) \eqsyn \CatT{Q_{\ThPB}}(\t)$ then $\s \dot{\sim} \t$.
Here the only non-trivial cases are $(R)$, $\eqref{ax:AXPBP1}$ and $\eqref{ax:AXPBP2}$. For $(R)$, if $\CatT{Q_{\ThPB}}(\s) = \CatT{Q_{\ThPB}}(\t)$, then by Proposition \ref{prop:quotienttheory}, $\s \tapeeqPB \t$. Hence, $\s \dot{\sim} \t$. For $\eqref{ax:AXPBP1}$, if $\CatT{Q_{\ThPB}}(\s) \overset{\eqref{ax:AXPBP1}}{=} \CatT{Q_{\ThPB}}(\t)$, then $\s$ is $\tapeeqPB$-equivalent to the left hand side of \eqref{ax:AXPBP1} and $\t$ is $\tapeeqPB$-equivalent to the right hand side of \eqref{ax:AXPBP1}. Hence, they are $\dot{\sim}$-equivalent, and applying transitivity of $\dot{\sim}$, we conclude that $\s \dot{\sim} \t$. The case of $\eqref{ax:AXPBP2}$ is analogous.
\end{proof}

\begin{proof}[Proof of Lemma \ref{lemma:bottomstar2}]
 \begin{align}
  \Flipn{\bot}[t]&=\Flip{\bot}[t] ;\ncopierbis[t]  \tag{Def.\ of $\Flipn{\bot}$}\\
  &\sim \Tcounitempty{};\Tunit{}; \ncopierbis[t]  \tag{Axiom \ref{ax:AXPBP2}}\\
  &=  \Tcounitempty{};\Tunitm{n} \tag{\eqref{ax:tapes:cobangnat}}\\
  &= \star_{A^0,A^n}. \tag{Def. of $\star$}
 \end{align}
\end{proof}

Denote with $G^\flat: \CatT{\Cat{C}}\to \Cat{D}$ the arrow obtained by a functor $G\colon \Cat{C}\to U(\Cat{D})$ through the adjunction $\CatT{-}\dashv U$ in Theorem~\ref{thm:syntacticadjunction}. By construction, it is given by the following inductive definition on the structure of arrows in $\CatT{\Cat{C}}$:
\begin{equation}\label{eq:bemolle}
\begin{aligned}
G^\flat(\id{U}) &\defeq \id{G(U)} 
&\qquad G^\flat(\tapeFunct{c}) &\defeq G(c) \\
G^\flat(\diagp{P}) &\defeq\, \diagp{G(P)}  
&\qquad G^\flat(\bang{P}) &\defeq \bang{G(P)} \\
G^\flat(\codiag{P}) &\defeq \codiag{G(P)} 
&\qquad G^\flat(\cobang{P}) &\defeq \cobang{G(P)} \\
G^\flat(\t;\s) &\defeq\, G^\flat(\t);G^\flat(\s)
&\qquad G^\flat(\t \oplus \s) &\defeq\, G^\flat(\t)\oplus G^\flat(\s) \\
G^\flat(\symmp{P}{Q}) &\defeq\, \symmp{G(P)}{G(Q)}
&\qquad G^\flat(\id{\zero}) &\defeq \id{G(\zero)}
\end{aligned}
\end{equation}

Denote with $\jcompiso$ the composition of the isomorphism in Proposition~\ref{prop:isopartialboolean} and the functor $\ParJ_2$:
\[\xymatrix{  \Diag{\mathbb{\SigPB }} \ar[r]^{\cong}& \Cat{Par}_2 \ar@{->}[r]^{\ParJ_2}& \KlD }\]

\begin{lemma}\label{lemma:J_2}
$\comp=\jcompiso^\flat$.
\end{lemma}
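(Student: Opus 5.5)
The statement asks us to show that the composite $\comp$ (the left vertical isomorphisms $\CatT{\Diag{\ThPB}}\cong\stmat{\Diag{\ThPB}^+}\cong\stmat{\Cat{Par}_2^+}$ followed by $\ParJ_2'$) coincides with $\jcompiso^\flat$, the morphism of convex biproduct categories induced by $\jcompiso\colon\Diag{\ThPB}\to U(\KlD)$ through the adjunction of Theorem~\ref{thm:syntacticadjunction}. The plan is to use uniqueness in that adjunction. By the universal property, $\jcompiso^\flat$ is the \emph{unique} morphism of convex biproduct categories $H\colon\CatT{\Diag{\ThPB}}\to\KlD$ with $\eta_{\Diag{\ThPB}};H=\jcompiso$, i.e.\ $H(\tapeFunct{c})=\jcompiso(c)$ for every $c$. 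It therefore suffices to check two things: that $\comp$ is a morphism of convex biproduct categories, and that $\comp(\tapeFunct{c})=\jcompiso(c)$.

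\emph{First step: $\comp$ is a morphism of convex biproduct categories.} Morphisms in $\Cat{CBCat}$ compose, so we treat the factors separately.
\begin{itemize}
\item The isomorphism $\CatT{\Diag{\ThPB}}\cong\stmat{\Diag{\ThPB}^+}$ of Corollary~\ref{cor:isotapematrices} is an isomorphism in $\Cat{CBCat}$.
\item The isomorphism $\stmat{\Diag{\ThPB}^+}\cong\stmat{\Cat{Par}_2^+}$ is $\stmat{\phi^+}$, where $\phi$ is the isomorphism of Proposition~\ref{prop:isopartialboolean}. It is a morphism by Proposition~\ref{prop:stmatfun}, since $\phi^+$ is $\Cat{PCA}$-enriched by Theorem~\ref{thm:freeenriched}.
\item $\ParJ_2'$ is obtained as the transpose along $\stmat{-}\dashv U$ (Theorem~\ref{thm:matfree}) of the $\Cat{PCA}$-enriched functor $\ParJ_2^\sharp\colon\Cat{Par}_2^+\to\KlD$. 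Hence it is a morphism of convex biproduct categories by construction.
\end{itemize}

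\emph{Second step: agreement on generators.} We trace $\tapeFunct{c}$ through the chain.
\begin{itemize}
\item The first isomorphism sends $\tapeFunct{c}$ to the $1\times1$ matrix $(1\cdot\delta_{[c]})$, as described after Corollary~\ref{cor:isotapematrices}.
\item $\stmat{\phi^+}$ sends this matrix to $(1\cdot\delta_{\phi(c)})$.
\item By the formula for $F^\sharp$ in the proof of Theorem~\ref{thm:matfree}, $\ParJ_2'$ sends a $1\times1$ matrix $(1\cdot d)$ to $1\cdot\ParJ_2^\sharp(d)$. For $d=\delta_{\phi(c)}$ this is $\ParJ_2^\sharp(\delta_{\phi(c)})=\ParJ_2(\phi(c))$, using the definition of $F^\sharp$ in Theorem~\ref{thm:freeenriched} together with $1\cdot f=f$.
\end{itemize}
The result is exactly $\jcompiso(c)$, so uniqueness yields $\comp=\jcompiso^\flat$.

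The main obstacle is making precise that the isomorphism of Corollary~\ref{cor:isotapematrices} acts on $\tapeFunct{c}$ as claimed. That isomorphism is obtained abstractly from composing adjunctions and uniqueness of adjoints. I would unfold it as the transpose of the unit $\Diag{\ThPB}\to\Diag{\ThPB}^+\to\stmat{\Diag{\ThPB}^+}$, which sends $c$ to $(1\cdot\delta_c)$. With that description, the generator computation above goes through, and the rest is bookkeeping.
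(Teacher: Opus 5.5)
Your proposal is correct and takes the same route as the paper. The paper's proof is a one-line appeal to the adjunctions of Theorems~\ref{thm:freeenriched}, \ref{thm:matfree} and \ref{thm:syntacticadjunction} together with Proposition~\ref{prop:isopartialboolean}; you make its uniqueness argument explicit by checking that $\comp$ is a composite of morphisms in $\Cat{CBCat}$ and that $\comp(\tapeFunct{c})=\jcompiso(c)$, and by identifying the isomorphism of Corollary~\ref{cor:isotapematrices} as the transpose of the composite unit $c\mapsto(1\cdot\delta_c)$, which is what makes the generator computation rigorous.
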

\begin{proof}
  It follows from the adjunctions in Theorems~\ref{thm:freeenriched} and \ref{thm:matfree}, and applying Theorem~\ref{thm:syntacticadjunction}  and Proposition~\ref{prop:isopartialboolean}.
\end{proof}

\begin{proof}[Proof of Lemma \ref{lemma:diagramma-commutativo}]
    In order to prove the statement we exploit Proposition~\ref{prop:quotienttheory} which provides an isomorphism $F\colon\CatT{\Diag{\SigPB}}_{\tapeeqPB} \to \CatT{\Diag{\ThPB}}$, such that $Q_{\tapeeqPB};F= {\CatT{Q_\ThPB}}$, where $Q_{\tapeeqPB}\colon \CatT{\Diag{\SigPB}} \to \CatT{\Diag{\SigPB}}_{\tapeeqPB}$ is the quotient functor.
     Then, since $\dsem{-}$ is clearly sound with respect to ${\tapeeqPB}$, it is enough to prove that $\comp\circ F\circ Q_{\tapeeqPB} = \dsem{-}$. Since $\dsem{-}$ is defined inductively, it is enough to check that $\comp\circ F\circ Q_{\tapeeqPB}$ and $\dsem{-}$ coincide on generators.
    For instance consider the generator $\diagp{A^n}$:
    \begin{align*}
      \comp\circ F\circ Q_{\tapeeqPB}(\diagp{A^n})&= \comp\circ F(\diagp{A^n}) \tag{Def. $Q_{\tapeeqPB}$}\\
      &= \comp(\diagp{A^n}) \tag{Def. $F$}\\
      &=\jcompiso^\flat(\diagp{A^n}) \tag{Lemma \ref{lemma:J_2}}\\
      &=\, \diagp{2^n} \tag{\ref{eq:bemolle}}\\
      &=\dsem{\diagp{A^n}}. \tag{Def. $\dsem{-}$}
    \end{align*}
    The cases for the generators \,$\bangp{A^n}$, $\cobang{A^n}$ and $\codiag{A^n}$ are analogous. For the generator $\tapeFunct{c}$, we have
    \begin{align}
      \comp\circ F\circ Q_{\tapeeqPB}(\tapeFunct{c})&= \comp\circ F([\tapeFunct{c}]_{\tapeeqPB}) \tag{Def. $Q_{\tapeeqPB}$}\\
      &= \comp(\tapeFunct{[c]_{\ThPB}}) \tag{Def. $F$}\\
      &=\jcompiso^\flat(\tapeFunct{[c]_{\ThPB}}) \tag{Lemma \ref{lemma:J_2}}\\
      &= \jcompiso([c]_{\ThPB}) \tag{\ref{eq:bemolle}}\\
      &=\ParJ_2(\osemPB{c}) \tag{Proposition \ref{prop:isopartialboolean}}\\
      &= \dsem{\tapeFunct{c}}. \tag{Def. $\dsem{-}$}
\end{align}
The remaining cases are trivial.
Hence, by the universal property of
$\CatT{\Diag{\SigPB}}_{\tapeeqPB}$ we obtain that $\comp\circ F= I\circ \eqsynq{Q}\circ F$, and since $F$ is an isomorphism, we obtain the statement.
\end{proof}

\begin{proof}[Proof of Lemma \ref{lemma:Ffaith}]
   Denote with $\ParJ_2^\sharp\colon \Cat{Par}_2^+\to \KlD$ the arrow obtained by $\ParJ_2$ through the adjunction $(-)^+\dashv U$ in Theorem~\ref{thm:freeenriched}. By construction, it is the identity-on-objects functor sending a subdistribution $d\in \Cat{Par}_2^+[1,2^m]$ into the subdistribution $\ParJ_2^\sharp(d)= \sum_{f\in \Cat{Par}_2[1,2^m]} d(f)\cdot{\ParJ_2(f)}$. Hence, if $d\in \Sets_2^+[1,2^m]$, then $\ParJ_2^\sharp(d)= \sum_{f\in \Sets_2[1,2^m]} d(f)\cdot{\ParJ_2(f)}=\sum_{f\in \Sets_2[1,2^m]} d(f)\cdot{\delta_{f(\bullet)}}$, where $f(\bullet)\in 2^m$. Then, $\ParJ_2^\sharp$ restricted to $\Sets_2^+[1,2^m]$ provides the obvious bijection $$\Dis(\Sets_2[1,2^m])\cong \Dis(2^m) \cong   \KlD_2[1,2^m]\text{.}$$ 

The statement now follows from the fact that by construction $\eta_{\Cat{Par}_2^+};\ParJ_2'=\ParJ_2^\sharp$, where $\eta$ is the unit of the adjunction $\stmat{-}\dashv U$ in Theorem~\ref{thm:matfree}, and that fact that the isomorphisms 
\[\CatT{\Diag{\ThPB}} \cong \stmat{\Diag{\ThPB}^+}\cong \stmat{\Cat{Par}_2^+}\] 
restricted to $[A^0,A^m]$ factor through the composition
\[\CatT{\Diag{\ThPB}}[A^0,A^m] \cong \Diag{\ThPB}^+[A^0,A^m]\cong \Cat{Par}_2^+[2^0,2^m]\overset{\eta_{\Cat{Par}_2^+}}{\to }\stmat{\Cat{Par}_2^+}[2^0,2^m],\]
where the first isomorphism is given by Corollary~\ref{cor:isotapematrices} and the obvious isomorphism between $\stmat{\Diag{\ThPB}^+}[A^0,A^m]$ and $ \Diag{\ThPB}^+[A^0,A^m]$, and the second isomorphism is induced by Proposition~\ref{prop:isopartialboolean}.
\end{proof}

\begin{proof}[Proof of Lemma~\ref{lemma:PBP0}]
Thanks to Lemma~\ref{lemma: arrows A-B subdistributions},  $\t$ corresponds to  a subdistribution on $\Diag{\ThPB}[A^0,A^m]$. By Lemma~\ref{lemma:caratterizzazionecircuiti0n}, each partial Boolean circuit in $\Diag{\ThPB}[A^0,A^m]$ is either a Boolean circuit in $\Diag{\SigB}[1,A^m]$ or it is of the form $\Flipm{\bot}$. Hence, we can rearrange the subdistribution corresponding to $\t$ into a tape of the desired form.
\end{proof}

\begin{proof}[Proof of Lemma \ref{lemma:PBP1}]
  By Lemma~\ref{lemma:PBP0}, we can write 
  \begin{equation}\label{eq:BLABLA}
  \t=(\sum_{i=1}^{n}p_i\cdot \overrightarrow{b}_i)+_p(\star_{A^0,A^m} +_q \Flipm{\bot}[t]) \text{ and }\t'=(\sum_{j=1}^{n'}p'_j\cdot \overrightarrow{b}'_j)+_{p'}(\star_{A^0,A^m} +_{q'} \Flipm{\bot}[t])\end{equation} 
  for some $p_i,p'_j\in (0,1)$, for $i=1,\dots,n$ and $j=1,\dots,n'$, $p,p',q,q'\in [0,1]$  such that $\sum_{i=1}^{n}p_i= 1$ and $\sum_{j=1}^{n'}p'_j= 1$, where $\overrightarrow{b}_i\colon A^0 \to A^m\in \Cat{B}[1,A^m]$ and $\overrightarrow{b}'_j\colon A^0 \to A^m\in \Cat{B}[1,A^m]$. Then,
  \begin{align}
    \comp(\t)&= \comp((\sum_{i=1}^{n}p_i\cdot \overrightarrow{b}_i) +_p(\star_{A^0,A^m} +_q \Flipm{\bot}[t])) \tag{\ref{eq:BLABLA}}\\
    &= \comp((\sum_{i=1}^{n}p_i\cdot \overrightarrow{b}_i)) +_p(\comp(\star_{A^0,A^m}) +_q \comp(\Flipm{\bot}[t])) \tag{$\comp$ pca-enriched}\\
    &= \comp((\sum_{i=1}^{n}p_i\cdot \overrightarrow{b}_i)) +_p (\star_{1,2^m} +_q \star_{1,2^m}) \tag{$\comp(\star_{A^0,A^m})=\star_{1,2^m}=\comp(\Flipm{\bot}[t])$}\\
    &= p\cdot \comp(\sum_{i=1}^{n}p_i\cdot \overrightarrow{b}_i) \tag{Idemp.\ in (\ref{eq:pca})}\\
    &=\comp(p\cdot \sum_{i=1}^{n}p_i\cdot \overrightarrow{b}_i)\tag{$\comp$ pca-enriched}
  \end{align}
  and similarly $\comp(\t')=\comp(p'\cdot \sum_{j=1}^{n'}p'_j\cdot \overrightarrow{b}'_j)$. Hence, $\comp(p\cdot \sum_{i=1}^{n}p_i\cdot \overrightarrow{b}_i)=\comp(p'\cdot \sum_{j=1}^{n'}p'_j\cdot \overrightarrow{b}'_j)$, and Lemma~\ref{lemma:Ffaith} implies that  
  \begin{equation}\label{BLA2}p\cdot \sum_{i=1}^{n}p_i\cdot \overrightarrow{b}_i=p'\cdot \sum_{j=1}^{n'}p'_j\cdot \overrightarrow{b}'_j\text{.}\end{equation} Thus,
  \begin{align}
    \t&=(\sum_{i=1}^{n}p_i\cdot \overrightarrow{b}_i) +_p(\star_{A^0,A^m} +_q \Flipm{\bot}[t]) \tag{\eqref{eq:BLABLA}}\\
    &\sim  \sum_{i=1}^{n}p_i\cdot \overrightarrow{b}_i +_p(\star_{A^0,A^m} +_q \star_{A^0,A^m}) \tag{Lemma \ref{lemma:bottomstar2}}\\
    &= \sum_{i=1}^{n}p_i\cdot \overrightarrow{b}_i +_p \star_{A^0,A^m} \tag{Idemp.\ in (\ref{eq:pca})}\\
    &= p\cdot \sum_{i=1}^{n}p_i\cdot \overrightarrow{b}_i\tag{Def. $p\cdot-$}\\
    &= p'\cdot \sum_{j=1}^{n'}p'_j\cdot \overrightarrow{b}'_j \tag{\eqref{BLA2}}\\
    &= \sum_{j=1}^{n'}p'_j\cdot \overrightarrow{b}'_j +_{p'} \star_{A^0,A^m} \tag{Def. $p\cdot-$}\\
    &\sim \sum_{j=1}^{n'}p'_j\cdot \overrightarrow{b}'_j +_{p'}(\star_{A^0,A^m} +_{q'} \star_{A^0,A^m}) \tag{Idemp.\ in (\ref{eq:pca})}\\
    &= \sum_{j=1}^{n'}p'_j\cdot \overrightarrow{b}'_j +_{p'}(\star_{A^0,A^m} +_{q'} \Flipm{\bot}[t]) \tag{Lemma \ref{lemma:bottomstar2}}\\
    &= \t'. \tag{\eqref{eq:BLABLA}}
  \end{align}
\end{proof}

%% file: appendices/appfinpac.tex
\section{Appendix to Section \ref{app:eff}}\label{app:app:eff}

\begin{proof}[Additional details for the proof of Lemma \ref{lemma:pca-pcm}]
    Assume that $X$ is the free pca generated by a set $S$ and recall that free pcas are cancellative. We first prove that for the structure $(X,\circledvee,\zeropcm)$ defined above $\circledvee$ is well-defined, then we prove that the axioms of pcm are satisfied.
    
    Hence, assume that $x\bot y$, which means that there exist $x',y'\in X$ and $p,q\in [0,1]$ such that $x=p\cdot x'$, $y=q\cdot y'$ and $p+q\leq 1$. Assume also that $x\bot y$ via $x'',y''\in X$ and $r,s\in [0,1]$ such that $x=r\cdot x''$, $y=s\cdot y''$ and $r+s\leq 1$. First consider the case where $p,q,r,s\in(0,1)$, then we need to prove the equality
\begin{equation}\label{eq:pcmbendefinita}
    x'+_p(y'+_{\frac{q}{1-p}}\star) = x''+_r(y''+_{\frac{s}{1-r}}\star)\text{.}\end{equation}
In order to do that, we assume that $p\leq r$ and that $s\leq q$ (the other cases can be treated similarly).
First observe that 
\begin{align*}
    (1-p)\cdot (\frac{q}{1-p}\cdot y') & = q\cdot y' \\
    &= s\cdot y'' \tag{Hypothesis}\\
    & = (1-p)\cdot (\frac{s}{1-p}\cdot y'')\text{.}
    \end{align*}
    Thus, by cancellativity, we obtain that 
    \begin{equation}\tag{$\dagger$}\label{pcmtemp1}
    \frac{q}{1-p}\cdot y' = \frac{s}{1-p}\cdot y''.
    \end{equation}
     One can similarly show that 
     \begin{equation}\tag{$\ddagger$}\label{pcmtemp2}
     \frac{p}{1-s}\cdot x'=\frac{r}{1-s}\cdot x''.
     \end{equation}
    Now we show that both sides of \eqref{eq:pcmbendefinita} are equal to $x'+_p(y''+_{\frac{s}{1-p}}\star)$. We have 
\begin{align*}
    x'+_p(y'+_{\frac{q}{1-p}}\star) & = x'+_p((\frac{q}{1-p}\cdot y')) \tag{Def.\ of $p\cdot x$}\\
    & = x'+_p((\frac{s}{1-p}\cdot y'')) \tag{\ref{pcmtemp1}}\\
    & = x'+_p(y''+_{\frac{s}{1-p}}\star) \tag{Def.\ of $p\cdot x$}
    \end{align*}
    Similarly,
    \begin{align*}
 x''+_r(y''+_{\frac{s}{1-r}}\star) & = y''+_s (x''+_{\frac{r}{1-s}}\star) \tag{using axioms in \ref{eq:pca}}\\
    & = y''+_s(\frac{r}{1-s}\cdot x'') \tag{Def.\ of $p\cdot x$}\\
    & = y''+_s(\frac{p}{1-s}\cdot x') \tag{\ref{pcmtemp2}}\\
    & = y''+_s(x'+_{\frac{p}{1-s}}\star) \tag{Def.\ of $p\cdot x$}\\
    & = x'+_p(y''+_{\frac{s}{1-p}}\star ). \tag{using axioms in \ref{eq:pca}}
\end{align*}
Now assume that $p=1$ and $q=0$, then $x\circledvee y = x$. We distinguish two subcases: either $r=1$ or $r\not=1$. In the first case, the claim holds trivially. In the second case, we have that $q=0$ and hence, if $s\not= 0$, we have $s\cdot y''= q\cdot y'= 0 \cdot y' = \star = s\cdot \star$ and then, by cancellativity, $y''=\star$. Thus, $x''+_r(y''+_{\frac{s}{1-r}}\star) = x''+_r \star = r\cdot x'' = x$. The remaining cases can be treated similarly. Hence the equality in \eqref{eq:pcmbendefinita} holds and $\circledvee$ is well defined.

Now we prove the axioms of pcm. To prove commutativity, assume that $x\bot y$, and hence there exist $x',y'\in X$ and $p,q\in [0,1]$ such that $x=p\cdot x'$, $y=q\cdot y'$, $p+q\leq 1$. Then $y\bot x$ via $y',x'$ and $q,p$. 
Moreover, if $p,q\not=1$, then
\begin{align*}
    x\circledvee y & = x'+_p(y'+_{\frac{q}{1-p}}\star) \tag{\eqref{eq:defcircledvee}}\\
    & = (y'+_{\frac{q}{1-p}}\star) +_{1-p} x' \tag{symmetry in \eqref{eq:pca}}\\
    & = y'+_q(\star +_{\frac{1-p-q}{1-q}} x') \tag{associativity in \eqref{eq:pca}}\\
    & = y'+_q(x'+_{\frac{p}{1-q}}\star) \tag{symmetry in \eqref{eq:pca}}\\
& = y\circledvee x. \tag{\eqref{eq:defcircledvee}}
\end{align*}
If $p=1$ and $q=0$, then $x\circledvee y = x = y\circledvee x$. The case $p=0$ and $q=1$ is similar.

To prove that $\zeropcm \bot x$ and $\zeropcm \circledvee x =x$, we observe that $\zeropcm=\star = 0\cdot \star$ and $x=1\cdot x$, and hence by \eqref{eq:defbot}, $\zeropcm \bot x$. Obviously, $\zeropcm \circledvee x = x +_1 \star = x$. 

Associativity has been already proved in the main text.
\end{proof}

%% file: references.bib
@inproceedings{completenessprobbooltaperCONCUR26,
  author    = {Filippo Bonchi and Cipriano Junior Cioffo},
  title     = {Completeness for Probabilistic Boolean Tapes},
  booktitle = {Proceedings of the 37th International Conference on Concurrency Theory (CONCUR 2026)},
  year       = {2026},
  note       = {To appear}
}

@InProceedings{probbooltapesfossacs,
author="Bonchi, Filippo
and Cioffo, Cipriano Junior",
editor="Bertrand, Nathalie
and Milius, Stefan",
title="Tapes as Stochastic Matrices of String Diagrams",
booktitle="Foundations of Software Science and Computation Structures",
year="2026",
publisher="Springer Nature Switzerland",
address="Cham",
pages="88--109",
isbn="978-3-032-22730-0"
}

@article{manes,
  title={Partially additive categories and flow-diagram semantics},
  author={Arbib, Michael A. and Manes, Ernest G.},
  journal={Journal of Algebra},
  volume={62},
  number={1},
  pages={203--227},
  year={1980},
  publisher={Elsevier}
}

@article{chophd,
  title={Effectuses in categorical quantum foundations},
  author={Cho, Kenta},
  journal={arXiv preprint arXiv:1910.12198},
  year={2019}
}

@inproceedings{giry2006categorical,
  title={A categorical approach to probability theory},
  author={Giry, Michele},
  booktitle={Categorical Aspects of Topology and Analysis: Proceedings of an International Conference Held at Carleton University, Ottawa, August 11--15, 1981},
  pages={68--85},
  year={2006},
  organization={Springer}
}

@inproceedings{DBLP:conf/lics/RozowskiS24,
  author       = {Wojciech Rozowski and
                  Alexandra Silva},
  editor       = {Pawel Sobocinski and
                  Ugo Dal Lago and
                  Javier Esparza},
  title        = {A Completeness Theorem for Probabilistic Regular Expressions},
  booktitle    = {Proceedings of the 39th Annual {ACM/IEEE} Symposium on Logic in Computer
                  Science, {LICS} 2024, Tallinn, Estonia, July 8-11, 2024},
  pages        = {66:1--66:14},
  publisher    = {{ACM}},
  year         = {2024},
  url          = {https://doi.org/10.1145/3661814.3662084},
  doi          = {10.1145/3661814.3662084},
  bibsource    = {dblp computer science bibliography, https://dblp.org}
}

@article{DBLP:journals/lmcs/BonchiSV22,
  author       = {Filippo Bonchi and
                  Ana Sokolova and
                  Valeria Vignudelli},
  title        = {The Theory of Traces for Systems with Nondeterminism, Probability,
                  and Termination},
  journal      = {Log. Methods Comput. Sci.},
  volume       = {18},
  number       = {2},
  year         = {2022},
  url          = {https://doi.org/10.46298/lmcs-18(2:21)2022},
  doi          = {10.46298/LMCS-18(2:21)2022},
  bibsource    = {dblp computer science bibliography, https://dblp.org}
}

@inproceedings{DBLP:conf/lics/MioSV21,
  author       = {Matteo Mio and
                  Ralph Sarkis and
                  Valeria Vignudelli},
  title        = {Combining Nondeterminism, Probability, and Termination: Equational
                  and Metric Reasoning},
  booktitle    = {36th Annual {ACM/IEEE} Symposium on Logic in Computer Science, {LICS}
                  2021, Rome, Italy, June 29 - July 2, 2021},
  pages        = {1--14},
  publisher    = {{IEEE}},
  year         = {2021},
  url          = {https://doi.org/10.1109/LICS52264.2021.9470717},
  doi          = {10.1109/LICS52264.2021.9470717},
  bibsource    = {dblp computer science bibliography, https://dblp.org}
}

@inproceedings{fritz2023weakly,
title={Weakly Markov categories and weakly affine monads},
author={Fritz, Tobias and Gadducci, Fabio and Perrone, Paolo and Trotta, Davide},
editor       = {Paolo Baldan and
                  Valeria de Paiva},
  booktitle    = {10th Conference on Algebra and Coalgebra in Computer Science, {CALCO}
                  2023, Indiana University Bloomington, IN, USA, June 19-21, 2023},
  series       = {LIPIcs},
  volume       = {270},
  pages        = {16:1--16:17},
  publisher    = {Schloss Dagstuhl - Leibniz-Zentrum f{\"{u}}r Informatik},
  year         = {2023},
  url          = {https://doi.org/10.4230/LIPIcs.CALCO.2023.16},
  doi          = {10.4230/LIPICS.CALCO.2023.16},
  bibsource    = {dblp computer science bibliography, https://dblp.org}
}

@article{fritz2023d,
title={The d-separation criterion in categorical probability},
author={Fritz, Tobias and Klingler, Andreas},
journal={Journal of Machine Learning Research},
volume={24},
number={46},
pages={1--49},
year={2023}
}

@article{perrone2023markov,
title={Markov categories and entropy},
author={Perrone, Paolo},
journal={IEEE Transactions on Information Theory},
volume={70},
number={3},
pages={1671--1692},
year={2023},
publisher={IEEE}
}

@article{stein2024probabilistic,
  title={Probabilistic programming with exact conditions},
  author={Stein, Dario and Staton, Sam},
  journal={Journal of the ACM},
  volume={71},
  number={1},
  pages={1--53},
  year={2024},
  publisher={ACM New York, NY}
}

@article{fritz2018bimonoidal,
  title={Bimonoidal structure of probability monads},
  author={Fritz, Tobias and Perrone, Paolo},
  journal={Electronic notes in theoretical computer science},
  volume={341},
  pages={121--149},
  year={2018},
  publisher={Elsevier}
}

@article{fritz2023dilations,
  title={Dilations and information flow axioms in categorical probability},
  author={Fritz, Tobias and Gonda, Tom{\'a}{\v{s}} and Houghton-Larsen, Nicholas Gauguin and Lorenzin, Antonio and Perrone, Paolo and Stein, Dario},
  journal={Mathematical Structures in Computer Science},
  volume={33},
  number={10},
  pages={913--957},
  year={2023},
  publisher={Cambridge University Press}
}

@article{fritz2021finetti,
  title={De {F}inetti's Theorem in Categorical Probability},
  author={Fritz, Tobias and Gonda, Tom{\'a}{\v{s}} and Perrone, Paolo},
  journal={arXiv preprint arXiv:2105.02639},
  year={2021}
}

@article{moss2023category,
  title={A category-theoretic proof of the ergodic decomposition theorem},
  author={Moss, Sean and Perrone, Paolo},
  journal={Ergodic Theory and Dynamical Systems},
  volume={43},
  number={12},
  pages={4166--4192},
  year={2023},
  publisher={Cambridge University Press}
}

@book{jacobs2021logical,
  title={The logical essentials of Bayesian reasoning},
  author={Jacobs, Bart and Zanasi, Fabio and others},
  year={2021},
  publisher={Cambridge University Press}
}

@inproceedings{jacobs2019causal,
  title={Causal inference by string diagram surgery},
  author={Jacobs, Bart and Kissinger, Aleks and Zanasi, Fabio},
  booktitle={International conference on foundations of software science and computation structures},
  pages={313--329},
  year={2019},
  organization={Springer}
}

@article{cho2019disintegration,
  title={Disintegration and Bayesian inversion via string diagrams},
  author={Cho, Kenta and Jacobs, Bart},
  journal={Mathematical Structures in Computer Science},
  volume={29},
  number={7},
  pages={938--971},
  year={2019},
  publisher={Cambridge University Press}
}

@article{fritz2009presentation,
  title={A presentation of the category of stochastic matrices},
  author={Fritz, Tobias},
  journal={arXiv preprint arXiv:0902.2554},
  year={2009}
}

@article{DBLP:journals/corr/abs-2301-12989,
  author       = {Di Lavore, Elena and Rom{\'{a}}n, Mario},
  title        = {Evidential Decision Theory via Partial Markov Categories},
  journal      = {CoRR},
  volume       = {abs/2301.12989},
  year         = {2023},
  url          = {https://doi.org/10.48550/arXiv.2301.12989},
  doi          = {10.48550/ARXIV.2301.12989},
  eprinttype    = {arXiv},
  eprint       = {2301.12989},
  bibsource    = {dblp computer science bibliography, https://dblp.org}
}

@article{DBLP:journals/corr/abs-2502-03477,
  author       = {Di Lavore, Elena and Rom{\'{a}}n, Mario and Sobocinski, Pawel},
  title        = {Partial Markov Categories},
  journal      = {CoRR},
  volume       = {abs/2502.03477},
  year         = {2025},
  url          = {https://doi.org/10.48550/arXiv.2502.03477},
  doi          = {10.48550/ARXIV.2502.03477},
  eprinttype    = {arXiv},
  eprint       = {2502.03477},
  bibsource    = {dblp computer science bibliography, https://dblp.org}
}

@inproceedings{DBLP:journals/corr/abs-2410-10627,
  author       = {Bonchi, Filippo and Di Lavore, Elena and Rom{\'{a}}n, Mario},
  title        = {Effectful Mealy Machines: Bisimulation and Trace},
 booktitle    = {40th Annual {ACM/IEEE} Symposium on Logic in Computer Science, {LICS}
                  2025, Singapore, June 23-26, 2025},
  pages        = {541--554},
  publisher    = {{IEEE}},
  year         = {2025},
  url          = {https://doi.org/10.1109/LICS65433.2025.00047},
  doi          = {10.1109/LICS65433.2025.00047},
  bibsource    = {dblp computer science bibliography, https://dblp.org}
}

@article{stone1949postulates,
  title={Postulates for the barycentric calculus},
  author={Stone, Marshall Harvey},
  journal={Annali di Matematica Pura ed Applicata},
  volume={29},
  number={1},
  pages={25--30},
  year={1949},
  publisher={Springer}
}

@article{sokolova2018termination,
  title={Termination in convex sets of distributions},
  author={Sokolova, Ana and Woracek, Harald},
  journal={Logical Methods in Computer Science},
  volume={14},
  year={2018},
  publisher={Episciences. org}
}

@inproceedings{bonchi2017power,
  title={The power of convex algebras},
  author={Bonchi, Filippo and Silva, Alexandra and Sokolova, Ana},
  booktitle={28th International Conference on Concurrency Theory (CONCUR 2017)},
  pages={23--1},
  year={2017},
  organization={Schloss Dagstuhl--Leibniz-Zentrum f{\"u}r Informatik}
}

@misc{bonchi2025tapediagramsmonoidalmonads,
      title={Tape Diagrams for Monoidal Monads}, 
      author={Bonchi, Filippo and Cioffo, Cipriano Junior and Di Giorgio, Alessandro and Di Lavore, Elena},
      booktitle =	{11th Conference on Algebra and Coalgebra in Computer Science (CALCO 2025)},
  pages =	{11:1--11:24},
  series =	{Leibniz International Proceedings in Informatics (LIPIcs)},
  ISBN =	{978-3-95977-383-6},
  ISSN =	{1868-8969},
  year =	{2025},
  volume =	{342},
  editor =	{C\^{i}rstea, Corina and Knapp, Alexander},
  publisher =	{Schloss Dagstuhl -- Leibniz-Zentrum f{\"u}r Informatik},
  address =	{Dagstuhl, Germany},
  URL =		{https://drops.dagstuhl.de/entities/document/10.4230/LIPIcs.CALCO.2025.11},
  URN =		{urn:nbn:de:0030-drops-235703},
  doi =		{10.4230/LIPIcs.CALCO.2025.11} 
}

@inproceedings{bonchi2024diagrammatic,
  author={Bonchi, Filippo and Di Giorgio, Alessandro and Di Lavore, Elena},
  title={A Diagrammatic Algebra for Program Logics},
  editor="Abdulla, Parosh Aziz
and Kesner, Delia",
booktitle="Foundations of Software Science and Computation Structures",
year="2025",
publisher="Springer Nature Switzerland",
address="Cham",
pages="308--330",
isbn="978-3-031-90897-2"
}

@article{hasuo2007generic,
  title={Generic trace semantics via coinduction},
  author={Hasuo, Ichiro and Jacobs, Bart and Sokolova, Ana},
  journal={Logical Methods in Computer Science},
  volume={3},
  year={2007},
  publisher={Episciences. org}
}

@article{jacobs2010coalgebraic,
  title={From coalgebraic to monoidal traces},
  author={Jacobs, Bart},
  journal={Electronic Notes in Theoretical Computer Science},
  volume={264},
  number={2},
  pages={125--140},
  year={2010},
  publisher={Elsevier}
}

@article{introductioneffectus,
  title={An Introduction to Effectus Theory},
  author={Kenta Cho and Bart Jacobs and Bas Westerbaan and Abraham Westerbaan},
  journal={ArXiv},
  year={2015},
  volume={abs/1512.05813},
  url={https://api.semanticscholar.org/CorpusID:14013233}
}

@inproceedings{villoria2024enriching,
  title={Enriching diagrams with algebraic operations},
  author={Villoria, Alejandro and Basold, Henning and Laarman, Alfons},
  booktitle={International Conference on Foundations of Software Science and Computation Structures},
  pages={121--143},
  year={2024},
  organization={Springer}
}

@article{DBLP:journals/pacmpl/LiellCockS25,
  author       = {Jack Liell{-}Cock and
                  Sam Staton},
  title        = {Compositional Imprecise Probability: {A} Solution from Graded Monads
                  and Markov Categories},
  journal      = {Proc. {ACM} Program. Lang.},
  volume       = {9},
  number       = {{POPL}},
  pages        = {1596--1626},
  year         = {2025},
  url          = {https://doi.org/10.1145/3704890},
  doi          = {10.1145/3704890},
  bibsource    = {dblp computer science bibliography, https://dblp.org}
}

@inproceedings{laplaza_coherence_1972,
	title = {Coherence for Distributivity},
	booktitle = {Coherence in {{Categories}}},
	author = {Laplaza, Miguel L.},
	editor = {Kelly, G. M. and Laplaza, M. and Lewis, G. and Mac Lane, Saunders},
	year = {1972},
	series = {Lecture {{Notes}} in {{Mathematics}}},
	pages = {29--65},
	publisher = {{Springer}},
	address = {{Berlin, Heidelberg}},
	doi = {10.1007/BFb0059555},
	isbn = {978-3-540-37958-4},
	langid = {english}
}

@book{mac_lane_categories_1978,
	title = {Categories for the {{Working Mathematician}}},
	author = {Mac Lane, Saunders},
	year = {1978},
	series = {Graduate {{Texts}} in {{Mathematics}}},
	edition = {Second},
	volume = {5},
	publisher = {{Springer-Verlag}},
	address = {{New York}},
	url = {//www.springer.com/gb/book/9780387984032},
	isbn = {978-0-387-98403-2},
	langid = {english}
}

@article{johnson2021bimonoidal,
  title={Bimonoidal Categories, $ E\_n $-Monoidal Categories, and Algebraic $ K $-Theory},
  author={Johnson, Niles and Yau, Donald},
  journal={arXiv preprint arXiv:2107.10526},
  year={2021}
}

@article{joyal1991geometry,
  title = {The Geometry of Tensor Calculus, {{I}}},
  author = {Joyal, Andr{\'e} and Street, Ross},
  year = {1991},
  month = jul,
  journal = {Advances in Mathematics},
  volume = {88},
  number = {1},
  pages = {55--112},
  issn = {0001-8708},
  doi = {10.1016/0001-8708(91)90003-P},
  langid = {english}
}

@incollection{selinger2010survey,
  title={A survey of graphical languages for monoidal categories},
  author={Selinger, Peter},
  booktitle={New structures for physics},
  pages={289--355},
  year={2010},
  publisher={Springer}
}

@article{fox1976coalgebras,
  title = {Coalgebras and Cartesian Categories},
  author = {Fox, Thomas},
  year = {1976},
  journal = {Communications in Algebra},
  volume = {4},
  number = {7},
  pages = {665--667},
  issn = {0092-7872},
  doi = {10.1080/00927877608822127}
}

@article{hoare1969axiomatic,
  title={An axiomatic basis for computer programming},
  author={Hoare, Charles Antony Richard},
  journal={Communications of the ACM},
  volume={12},
  number={10},
  pages={576--580},
  year={1969},
  publisher={ACM New York, NY, USA}
}

@article{fritz_2020,
  title={A Synthetic Approach to {{Markov}} Kernels, Conditional Independence and Theorems on Sufficient Statistics},
  author={Fritz, Tobias},
  journal={Advances in Mathematics},
  volume={370},
  pages={107239},
  year={2020},
  eprinttype = {arXiv},
  eprint    = {1908.07021},
  publisher={Elsevier}
}

@inproceedings{piedeleu2025boolean,
  author       = {Piedeleu, Robin and
                  Torres{-}Ruiz, Mateo and
                  Silva, Alexandra and
                  Zanasi, Fabio},
  editor       = {Viktor Vafeiadis},
  title        = {A Complete Axiomatisation of Equivalence for Discrete Probabilistic
                  Programming},
  booktitle    = {Programming Languages and Systems - 34th European Symposium on Programming,
                  {ESOP} 2025, Held as Part of the International Joint Conferences on
                  Theory and Practice of Software, {ETAPS} 2025, Hamilton, ON, Canada,
                  May 3-8, 2025, Proceedings, Part {II}},
  series       = {Lecture Notes in Computer Science},
  volume       = {15695},
  pages        = {202--229},
  publisher    = {Springer},
  year         = {2025},
  url          = {https://doi.org/10.1007/978-3-031-91121-7\_9},
  doi          = {10.1007/978-3-031-91121-7\_9},
  bibsource    = {dblp computer science bibliography, https://dblp.org}
}

@article{Kozen94acompleteness,
	Author = {Dexter Kozen},
	Journal = {Information and Computation},
	Pages = {366--390},
	Title = {A Completeness Theorem for Kleene Algebras and the Algebra of Regular Events},
	Volume = {110},
	Year = {1994}}

@article{coecke2017two,
	title = {Two {{Roads}} to {{Classicality}}},
	booktitle = {Electronic {{Proceedings}} in {{Theoretical Computer Science}}},
	author = {Coecke, Bob and Selby, John and Tull, Sean},
  journal={Electronic Proceedings in Theoretical Computer Science},
	year = {2018},
	month = feb,
	volume = {266},
	eprint = {1701.07400},
	eprinttype = {arxiv},
	pages = {104--118},
	address = {{Nijmegen, The Netherlands}},
	doi = {10.4204/EPTCS.266.7},
}

@article{bonchi2023deconstructing,
  title={Deconstructing the calculus of relations with tape diagrams},
  author={Bonchi, Filippo and Di Giorgio, Alessandro and Santamaria, Alessio},
  journal={Proceedings of the ACM on Programming Languages},
  volume={7},
  number={POPL},
  pages={1864--1894},
  year={2023},
  publisher={ACM New York, NY, USA}
}

@book {borceux2,
    AUTHOR = {Borceux, Francis},
     TITLE = {Handbook of categorical algebra. 2},
    SERIES = {Encyclopedia of Mathematics and its Applications},
    VOLUME = {51},
      NOTE = {Categories and structures},
 PUBLISHER = {Cambridge University Press, Cambridge},
      YEAR = {1994},
     PAGES = {xviii+443},
      ISBN = {0-521-44179-X},
   MRCLASS = {18-02 (18Exx)},
  MRNUMBER = {1313497},
MRREVIEWER = {Martin\ Hyland},
}
